\newtcbox{\mymath}[1][]{%
    nobeforeafter, math upper, tcbox raise base,
    enhanced, colframe=blue!30!black,
    colback=blue!30, boxrule=1pt,
    #1}
\newcommand\blankpage{%
    \null
    \thispagestyle{empty}%
    \newpage}   
\newtheorem*{theorem}{Theorem}
\newtheorem*{lemma}{Lemma}
\newcommand{\newc}{\newcommand}
\def\eq$#1${\begin{equation}#1\end{equation}}
\def\eqarr$#1${\begin{eqnarray}#1\end{eqnarray}}
\newc{\pa}{\partial}
\newc{\alp}{\alpha}
\newc{\gam}{\gamma}
\newc{\Gam}{\Gamma}
\newc{\del}{\delta}
\newc{\eps}{\epsilon}
\newc{\lam}{\lambda}
\newc{\sig}{\sigma}
\newc{\ups}{\upsilon}
\newc{\ome}{\omega}
\newc{\pphi}{\varphi}
\newc{\nonum}{\nonumber}
\newc{\hami}{\text{\textbf{\lat{H}}}}
\newc{\gren}{\mathcal{G}}
\newc{\lagr}{\mathcal{L}}
\newc{\timor}{\mathcal{T}}
\newc{\prop}{\mathcal{K}}
\newc{\zcal}{\mathcal{Z}}
\newc{\operx}{\text{\textbf{\lat{x}}}}
\newc{\opera}{\text{\textbf{\lat{a}}}}
\newc{\operp}{\text{\textbf{\lat{p}}}}
\newc{\operl}{\text{\textbf{\lat{L}}}}
\newc{\gfv}{g^{(5)}}
\newc{\kfv}{\kappa_{(5)}}
\newc{\dist}{\displaystyle}
\newc{\ra}{\rightarrow}
\newc{\Ra}{\Rightarrow}
 \numberwithin{equation}{chapter}
\def\lat#1{\textlatin{#1}}
\newcommand{\mathsym}[1]{{}}
\newcommand{\unicode}[1]{{}}
\begin{document}
\begin{titlepage}
\begin{center}

\large{\textbf{UNIVERSITY OF IOANNINA}}

\vspace{10em}

\textbf{\Huge{Searching for Localized Black-Hole\vspace{0.2em} solutions in Brane-World models}}

\vspace{8em}

\large{by}\\
\textbf{\Large{Theodoros Nakas}}

\vspace{8em}

\large{A thesis submitted in partial fulfilment for the \\
\textbf{Master's degree}}

\vspace{3em}

in the

\vspace{1em}

\includegraphics[scale=0.4]{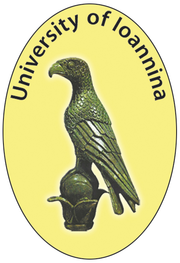}

\vspace{2em}

\large{\textbf{PHYSICS DEPARTMENT\\ SCHOOL OF NATURAL SCIENCES\\ UNIVERSITY OF IOANNINA\\ GREECE\\ JULY 2017}}

\end{center}
\end{titlepage}

\blankpage

\pagenumbering{roman}

\newpage
\blankpage

\newpage
\blankpage

\newpage
\begin{center}
\vspace*{18em}
\large{\emph{``Every atom in your body came from a star that exploded. And, the atoms in your left hand probably came from a different star than your right hand. It really is the most poetic thing I know about physics: You are all stardust. You couldn't be here if stars hadn't exploded, because the elements - the carbon, nitrogen, oxygen, iron, all the things that matter for evolution and for life - weren't created at the beginning of time. They were created in the nuclear furnaces of stars, and the only way for them to get into your body is if those stars were kind enough to explode. So, forget Jesus. The stars died so that you could be here today."}}\\ 
\vspace{1em}
\textbf{\large{Lawrence M. Krauss}}
\vspace*{\fill}
\end{center}
\blankpage

\newpage
\blankpage

\begin{center}
\vspace*{5em}
\Huge{\textbf{Abstract}}
\end{center}
\addcontentsline{toc}{chapter}{\numberline{}Abstract}
\vspace{2em}

\large{\par In the context of this thesis, the question that is going to occupy us, is the existence of a 5-dimensional braneworld black hole solution that is localized close to the 3-brane and has the properties of a regular 4-dimensional one. For this purpose, the 4-dimensional part of the complete 5-dimensional spacetime is considered to be a generalized Vaidya metric, in the context of which, the mass parameter $m$ is allowed to vary with respect to time, while it is also allowed to have both $y$ and $r$ dependence. The dependence on the $r$-coordinate essentially means that our black hole solution can deviate from the conventional Schwarzschild solution. Additionally, the dependence on the $y$-coordinate leads to a non-trivial profile of the black hole along the extra dimension. In order to justify physically the existence of such general mass parameter, we consider the case of two scalar fields $\phi(v,r,y)$, $\chi(v,r,y)$ which interact with each other and they are also non-minimally coupled to gravity via a general coupling function $f(\phi,\chi)$. In all the cases that were investigated in the context of this particular scenario, the result for the existence of a viable 5-dimensional localized black hole solution was negative, a result that causes concern about the compatibility of brane-world models with basic predictions of General Theory of Relativity.}

\thispagestyle{plain}
\newpage
\blankpage

\begin{center}
\vspace*{5em}
\Huge{\textbf{Acknowledgements}}
\end{center}
\addcontentsline{toc}{chapter}{\numberline{}Acknowledgements}\vspace{2em}

\large{\par First and foremost, I would like to thank my parents; without their love and support this thesis would never have been started.
\par I would also like to thank my supervisor, Panagiota Kanti, not only for the guidance, help and support, but also for providing me the opportunity to work with her and learn more about the enchanting field of General Relativity.
\par I am very thankful to Charalambos Kolasis as well. His passion for mathematics and his clear proofs in every theorem or every problem that he mentioned, helped me understand the way of reasoning as an undergraduate student.
\par Last but not least, I would like to say a big thank you to Ilias for letting me evaluate my mathematica codes on his ``monster" computer and Georgia, Konstantinos and my friends for helping me maintain my sanity. This journey would not have been so enjoyable without you (clich\'{e} but also true).}

\thispagestyle{plain}
\newpage
\blankpage

\large{\tableofcontents}
\addcontentsline{toc}{chapter}{\numberline{}Contents}

\blankpage

\pagenumbering{arabic}

\chapter*{Basic Notation}
\addcontentsline{toc}{chapter}{\numberline{}Basic Notation}

\begin{enumerate}[(i)]
\item \normalsize{The} signature of the metric tensor $(g_{\mu\nu})$ that is going to be used in the context of this thesis is chosen to be: $(-,+,+,\ldots,+)$. Therefore, a flat and 4-D (four-dimensional) space-time has the following line-element.
$$ds^2=-(cdt)^2+(dx)^2+(dy)^2+(dz)^2$$
\item Upper-case Latin indices $M,N,\ldots$ will denote bulk coordinates, thus for a 6-D space-time they will take the values $0,1,2,\ldots,5$. Greek indices $\mu,\nu,\ldots$ will be used for brane coordinates, hence they will take the values $0,1,2,3$. Finally, lower-case Latin indices $a,b,\ldots$ will denote the three spatial coordinates taking the values $1,2,3$.
\item Most of the times, we are going to use natural units or Planck units in order mathematical expressions to be less complicated, i.e. $c=\hbar=1$ or $c=\hbar=G=1$ respectively.
\item The partial derivative of a function $f=f(x^0,x^1,x^2,x^3)$ with respect to $x^\mu$ can be expressed as
$$\frac{\pa f}{\pa x^\mu}=\pa_\mu f=f_{,\mu}$$
\item The covariant derivative of a tensor quantity with respect to $x^\alp$ is denoted as
$$T^{\mu\nu}{}_{;\alp}\ \ \ \ or \ \ \ \ \nabla_\alp T^{\mu \nu}$$
where we arbitrarily chose the contravariant tensor $T^{\mu\nu}$ to present the notation. Of course, the same notation holds for any kind of tensor.
\end{enumerate}

\newpage
\blankpage
\newpage

\chapter{Introduction}
\par \normalsize{The} fundamental similarity of people throughout the history of humankind is curiosity. Since ancient times, this special characteristic has made humans wonder about nature, the origin of the universe and consequently the origin of life. The questions that immediately come to mind are: How did the universe begin? Which are the fundamental building blocks of the universe? Are there extra dimensions?  These questions are still open and they may remain open for many more years.
\par Our understanding about the universe is limited. It is worth mentioning that everything that is possible to be observed in the night sky with a naked eye or by using telescopes adds up to about $4.9\%$ of the entire universe, the other $95.1\%$ of the universe is made of \emph{dark matter} \cite{Bertone:2016nfn} ($26.8\%$) and \emph{dark energy} \cite{RevModPhys.75.559} ($68.3\%$)\footnote{The name \emph{dark} is related to the fact that they do not interact electromagnetically. Therefore, they are invisible to the entire electromagnetic spectrum.}. Almost nothing is known about dark matter and dark energy, but it is certainly known that both of them interact gravitationally. Dark matter behaves like ordinary matter (in terms of gravitation) but it is not luminous, its existence and properties are deduced by astrophysical and cosmological measurements i.e. galaxy rotation curves, gravitational lensing, cosmic microwave background (CMB), etc. On the other hand, dark energy is associated with the accelerated expansion of the universe, hence it is needed to act repulsively. The most popular method and historically the first one that was formulated in order to describe dark energy, is via the \emph{cosmological constant}, which can be identified to a perfect fluid of constant energy density and negative pressure that fills the entire universe homogeneously\footnote{An alternative method to describe dark energy is via \emph{quintessence} \cite{PhysRevD.37.3406,PhysRevLett.80.1582}, a scalar field that it is allowed to be both space and time varying.}. The evidence of the existence of dark energy is indicated also by astrophysical and cosmological observations, i.e. high redshifted observations from supernovae, CMB, Large-scale structures, observational Hubble constant data (OHD), etc.
\par So far we have discussed about what we do not know, thus, it is now reasonable to discuss about what we actually know about the universe. Everything that is known about the universe nowadays can be provided by two theories. The first one in chronological order is General Theory of Relativity (GTR or GR) \cite{Groen:2007zz,Carroll:2004st} and the second one is the Standard Model (SM) \cite{Schwartz:2013pla} of particle physics.
\par GR constitutes the modern theory of gravity and it was formulated by Albert Einstein in 1915 \cite{einstein1,einstein2,einstein3}. The mathematical framework of the theory is Differential Geometry and, in this context, gravity emerges as a geometric property of spacetime\footnote{More precisely, the curvature of the spacetime and the gravitational field are the two sides of the same coin.}, which is a four dimensional (4-D) manifold. GR provides a clear description and explanation for a number of astronomical and cosmological observations. However, only a year after its publication, Karl Schwarzschild showed that GR breaks down in high energies by proving that the spherically symmetric vacuum solution of the gravitational field equations\footnote{In 1923, George David Birkhoff proved that any spherically symmetric solution of the vacuum gravitational field equations must be static and asymptotically flat, which means that the spacetime outside of a spherical, non-rotating body must be described by the Schwarzschild metric.} leads to a singularity at $r=0$ \cite{schwarzschild}. For obvious reasons, infinities (or singularities) are undesirable features for a physical theory. Consequently, GR cannot be considered as the ultimate theory of gravity. When quantum effects become important GR fails to describe them.
\begin{wrapfigure}{R}{0.4\textwidth}
\centering
\includegraphics[width=0.4\textwidth]{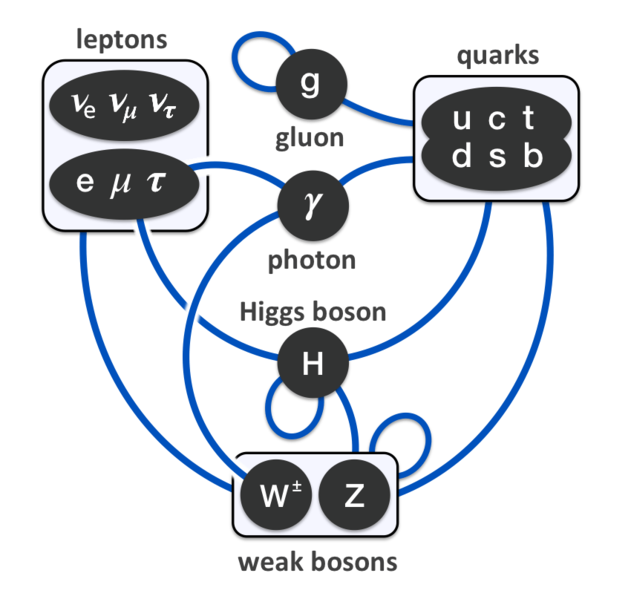}
\caption{\label{SM}Particle interactions as described by Standard Model.}
\end{wrapfigure}
\par SM on the other hand, constitutes the theory of the quantum world. It successfully describes the remaining three of the four known fundamental forces of nature (electromagnetism, weak interaction and strong interaction) and additionally accommodates all the known elementary particles (see Figure 1.1). SM is the most experimentally tested theory and its predictions are confirmed with extreme accuracy. Although its phenomenal success to explain the vast majority of processes in particle physics, SM is still not capable of being the fundamental theory that describes all four fundamental forces (including gravity). Moreover, SM does not provide any insight for the nature of dark energy and dark matter, which as we already mentioned are the biggest mysteries nowadays.
\par It is crystal clear from the aforementioned problems that a new and more fundamental theory is needed, or a deeper understanding of the already established ideas. Such a theory should encompass both GR and SM, namely a theory of everything, but it is not necessary to be based on either of them. However, this hypothetical theory should reproduce both GR and SM in the appropriate limit. Currently, the most promising theory of unification is \emph{string theory}, which replaces elementary particles with one-dimensional strings, instead of being accounted for as point-like objects. The fatal flaw of string theory is its complexity; it requires at least 10 dimensions in order to be consistent and also it is beyond experimental verification because of the extremely high energies that are needed for such a purpose.
\begin{wrapfigure}{R}{0.4\textwidth}
\centering
\includegraphics[width=0.4\textwidth]{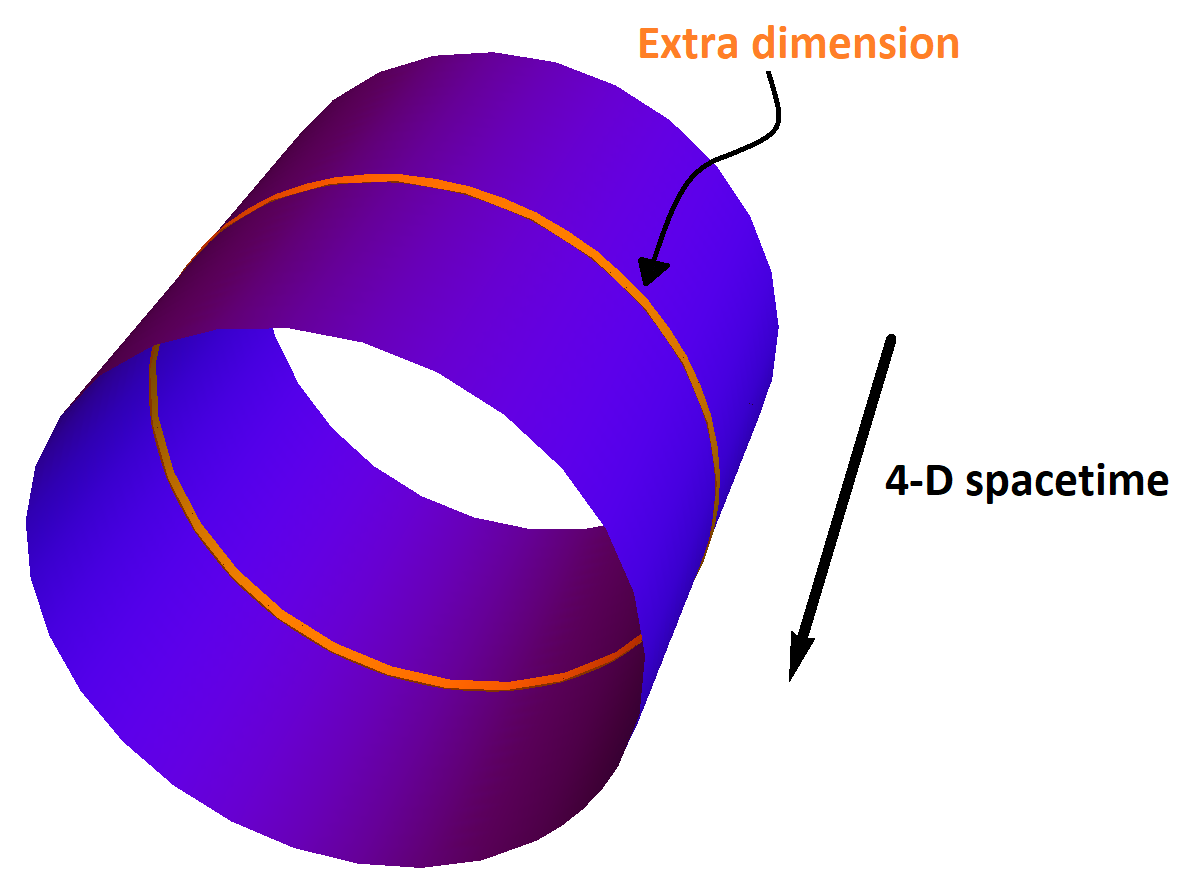}
\caption{\label{KK}The 5-D spacetime in Kaluza Klein theory.}
\end{wrapfigure}
\par Historically, the first theory that introduced extra dimensions was \emph{Kaluza-Klein theory}. The primary idea came from Theodor Kaluza (1921) \cite{Kaluza:1921tu}, who extended general theory of relativity by allowing the existence of another spatial dimension in addition to the four dimensional spacetime fabric of GR. Furthermore, he constrained the five-dimensional metric tensor by demanding none of its components to be dependent on the extra dimension. This condition is also known as the ``cylinder condition". His motivation for a higher dimensional spacetime was the unification of electromagnetism and gravity. The contribution of Oscar Klein (1926) \cite{Klein1926,Klein:1926fj} was his quantum interpretation of Kaluza's theory. He hypothesized that the extra dimension is curled up and tiny (see Figure 1.2) in order to explain the cylinder condition and also evaluated the scale of the extra dimension by taking into consideration the quantum nature of the electric charge.

\thispagestyle{plain}

\par In the last twenty years, two additional extra dimensional theories was added to the literature, \emph{ADD model} (1998) \cite{ARKANIHAMED1998263,ANTONIADIS1998257,PhysRevD.59.086004}  and \emph{RS models} (1999) \cite{PhysRevLett.83.3370,PhysRevLett.83.4690}. Both models \cite{ARKANIHAMED1998263,PhysRevLett.83.3370} were motivated by the \emph{Hierarchy Problem}\footnote{Hierarchy problem is the vast discrepancy between the electroweak scale $m_{EW}\sim 1\  TeV$ and Planck scale $m_{P}\sim 10^{19} GeV$. ($m_{EW}/m_P\sim 10^{-16}$. $m_P=\sqrt{\frac{\hbar c}{G}}$, while $M_P=\frac{m_P}{\sqrt{8\pi}}\sim 10^{18}\ GeV$).} and also both of them were based on the concept of \emph{braneworld}, which simply indicates that Standard Model particles are confined on our 4-D spacetime (\emph{3-brane}) while gravity can freely propagate in the \emph{bulk}, i.e the entire $(4+n)$-dimensional spacetime, where $n$ is the number of the extra dimensions.
\par Particularly, the ADD model allows the existence of $n$ compact extra spatial dimensions of the same radius $R$. Thus, the total number of dimensions that gravity encounters are $(4+n)$. Subsequently, the fundamental Planck scale corresponds to the higher dimensional one, namely $M_{P(4+n)}$, while the Planck scale $M_P$ which we experience is an effective one. Hence, it is possible the fundamental Planck scale $M_{P(4+n)}$ to be equal to the electroweak scale $m_{EW}$ and simultaneously the effective Planck scale to maintain its huge value. As it is indicated in \cite{ARKANIHAMED1998263}, the gravitational potential between two masses $m_1$ and $m_2$ which are separated by a distance $r\ll R$ in $(4+n)$-dimensions is given by the following expression
\eq$\label{in1}
V(r)\sim \frac{m_1 m_2}{M_{P(4+n)}^{n+2}}\frac{1}{r^{n+1}}$
Equation \eqref{in1} can be derived by the gravitational Gauss's law in $(4+n)$ dimensions \cite{SatheeshKumar:2006it,Kehagias:1999my}. Assuming now that $r\gg R$ we obtain
\eq$\label{in2}
V(r)\sim\frac{m_1m_2}{M^{n+2}_{P(4+n)}R^n}\frac{1}{r}$
Therefore, the effective Planck scale should fulfil the following equation 
\eq$\label{in3}
M_P^2\sim M_{P(4+n)}^{n+2}R^n$
Solving the above equation with respect to $R$ and setting $M_{P(4+n)}=m_{EW}$, we obtain
$$R\sim 10^{\frac{30}{n}-19}\left(\frac{1\ TeV}{m_{EW}}\right)^{1+\frac{2}{n}} m\ \xRightarrow{m_{EW}\sim 1\ TeV}$$
\eq$\label{in4}
R\sim 10^{\frac{30}{n}-19}\ m$
Equation \eqref{in4} associates the number of the extra dimensions $n$ with the size of their radius $R$. It is obvious that for $n=1$ and $R\sim 10^{11}\ m$ Newton's gravitational law would differ from the law that we are all used to. Deviations from the conventional Newton's law have not been measured, thus, $n\geq 2$. The upper bound for the size of the extra dimensions is $R\sim 10^{-4}\ m$ \cite{PhysRevLett.86.1418} and it results for $n=2$. The figures below depict schematically the two different types of topologies of the extra dimensions that was taken into consideration in \cite{ARKANIHAMED1998263}.
\begin{figure}[htp]
\centering
\subfigure[toroidal compactification]{\label{add1}
\includegraphics[width=3.25in]{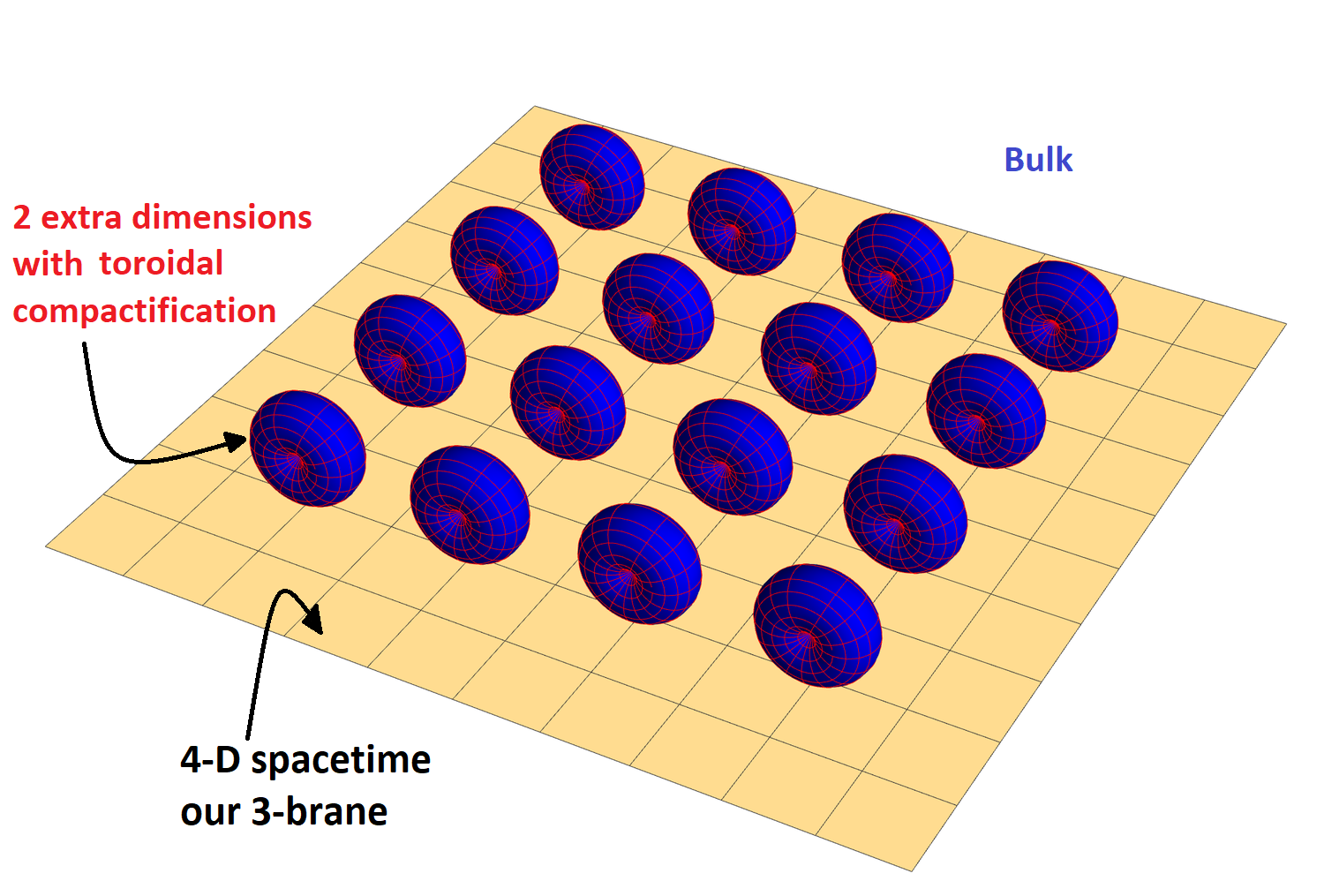}}
\subfigure[spherical compactification]{\label{add2}
\includegraphics[width=3.25in]{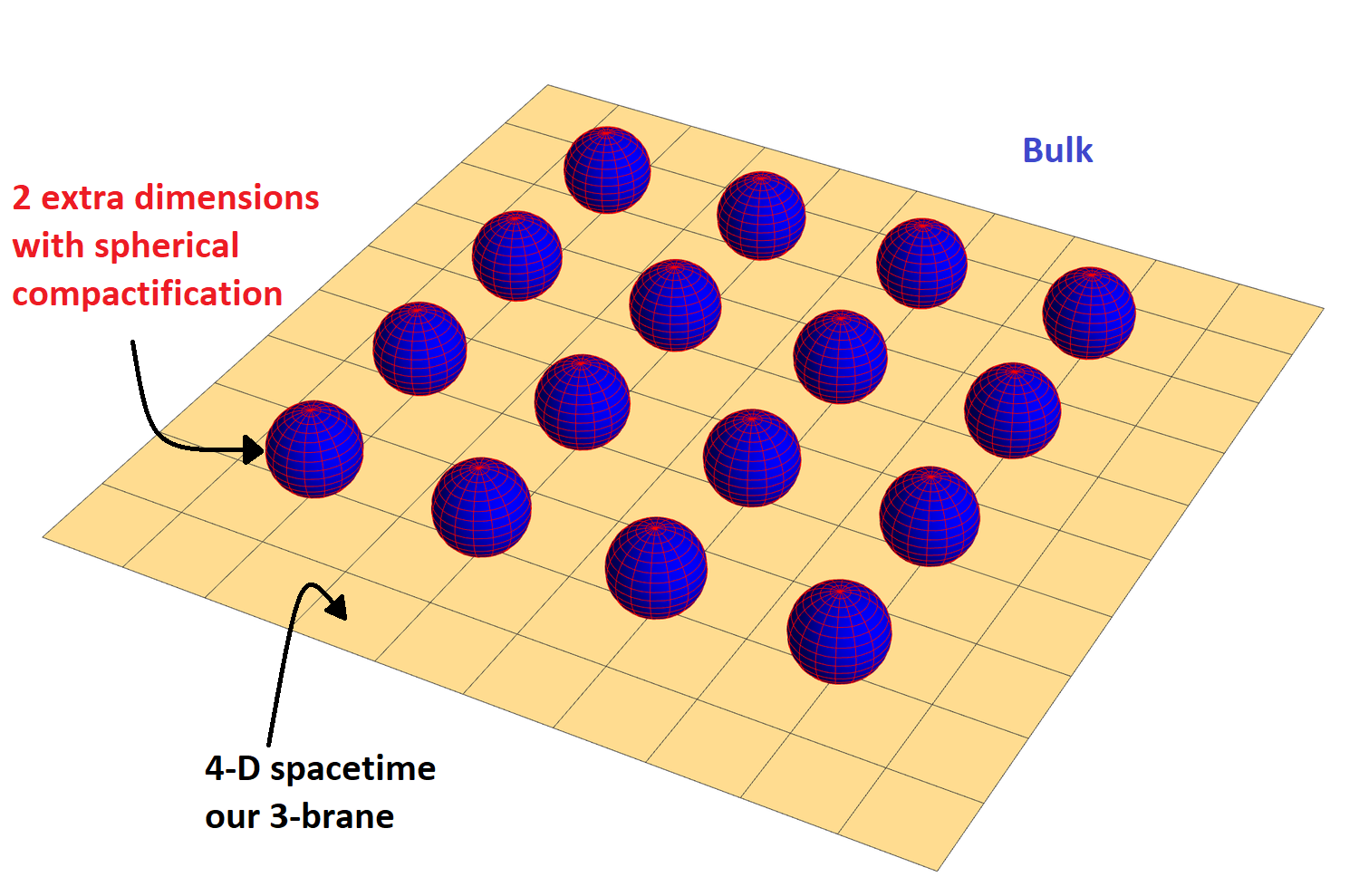}}
\caption{The 6-D spacetime in ADD model for n=2.}
\end{figure}

\thispagestyle{plain}

\par ADD model is elegant from a mathematical point of view, but only seemingly solves the Hierarchy Problem. In particular, ADD model simply replaces the problem of hierarchy with that of the size of the extra dimensions $R$, because now it is requisite to explain the reason why $R$ can be so much larger than the length $10^{-19}\ m$, which is associated with the fundamental Planck scale $M_{P(4+n)}=m_{EW}\sim 1\ TeV$. Although it is difficult to measure gravitational deviations of the Newton's law in sub-millimeter distances, the fact that the fundamental gravitational scale $M_{P(4+n)}$ can be equal to $1\ TeV$ gives us the opportunity to detect (indirectly) the existence of extra dimensions in collider experiments through the formation of tiny black holes from highly energetic particles \cite{Kowalczyk03constraintson, Banks:1999gd,PhysRevD.65.056010,ARGYRES199896,PhysRevLett.87.161602,doi:10.1142/S0217751X04018324,Kanti2009}.

\thispagestyle{plain}

\par On the other extreme, the RS models \cite{PhysRevLett.83.3370,PhysRevLett.83.4690} -which were published only one year after the ADD model- examine the case of only but one curved extra dimension in the bulk. The special characteristic of these models is that now the 3-brane itself possesses tension, therefore it interacts gravitationally with the bulk. Particularly, the first RS model (RS1) assumes the existence of two 3-branes in the bulk and achieves to generate the electroweak scale from the Planck scale through an exponential hierarchy, which arises purely from the geometry of the 5-D spacetime. RS2 model assumes the existence of only one 3-brane embedded in an infinitely long extra dimension but nevertheless manages to reproduce 4-dimensional gravity on the brane. Both RS models will be discussed in Chapter 2, thus, any further detail will be postponed for later.

\section{Motivation and Thesis outline}
\par In the context of GR, all the information that is needed to describe a black hole comes only from three classical parameters: mass $(M)$, electric charge $(Q)$ and angular momentum $(J)$. It is impossible to distinguish two black holes if they are characterized by the same aforementioned parameters and these parameters have the same value. This is an incredible characteristic of the 4-D black holes and it is rarely encountered in other objects in nature. Given these parameters, there are only four different black hole solutions\footnote{Schwarzschild metric $(M\neq 0,\ Q=0,\ J=0)$ \cite{schwarzschild}, Reissner-Nordstr\"{o}m metric $(M\neq 0,\ Q\neq 0,\ J=0)$ \cite{ANDP:ANDP19163550905}, Kerr metric $(M\neq 0,\ Q=0,\ J\neq 0)$ \cite{PhysRevLett.11.237}, Kerr-Newman metric $(M\neq 0,\ Q\neq 0,\ J\neq 0)$ \cite{doi:10.1063/1.1704350} (See also \hyperref[bhs]{Appendix A}).} \cite{schwarzschild,ANDP:ANDP19163550905,PhysRevLett.11.237,doi:10.1063/1.1704350} that can be derived by the gravitational field equations of GR; this particular statement is also known as the \emph{no-hair theorem} \cite{PhysRev.164.1776,Israel1968,PhysRevLett.26.331}. On the other hand, braneworld black holes are not so easily manageable and most importantly there is not a corresponding no-hair theorem for higher dimensional spacetimes. Therefore, the ``families" of higher dimensional black holes solutions are not yet known. This thesis is entirely focused on braneworld black holes and more specifically on the existence of localized black hole solutions in an RS2-type braneworld model. 
\par The outline of the thesis is as follows. Chapter 2 -as it was mentioned previously- constitutes a detailed analysis of the RS models. The existence of black string solutions in the context of RS models and the difficulties in finding a localized 5-D black hole solution on our 3-brane are reviewed in Chapter 3. In the same Chapter, the theoretical framework of the thesis, namely the geometrical background that the thesis is based on, and the scalar field theory (two non-minimally-coupled and interacting scalar fields with a general coupling to the Ricci scalar) are also presented. This scalar field theory constitutes a more general theory compared to the scalar field theory that is presented in \cite{0264-9381-33-1-015003}. This extra degree of freedom -that the additional scalar field offers- may lead to a solution to the problem of localizing a 5-dimensional black hole on a 3-brane. Finally, in Chapter 3, the field equations of this particular ansatz are derived. Subsequently, the various cases\footnote{Different cases vary on the spacetime coordinates from which the scalar fields and the coupling function depend on.} which are studied in this particular scenario are presented in a series of paragraphs of increasing complexity in Chapter 4. Finally, our results and our conclusions are discussed in Chapter 5.
\thispagestyle{plain}

\newpage

\chapter{Randall-Sundrum Brane-World Models}
\par In their first model \cite{PhysRevLett.83.3370} (RS1), Lisa Randall and Raman Sundrum introduced a compact extra dimension, which is finite and bounded by two 3-branes. This specific type of spacetime enriched with some additional properties, which are going to be discussed in the following section, manages to address the hierarchy problem. Amazingly, they proved that the 4-D gravity can also be recovered on the brane even if the extra dimension has infinite size, this constitutes the RS2 model \cite{PhysRevLett.83.4690}. Let us now proceed to the analysis of these models.

\section{RS1 Model}
\par The RS model postulates the existence of one extra dimension which is compactified on a circle $S^1$ (one-dimensional sphere) and also possesses a $\mathbb{Z}_2$ symmetry, which simply means that the points $(x^\mu,y)$ and $(x^\mu,-y)$ are identified. Hence, the extra dimension is an $S^1/\mathbb{Z}_2$ orbifold. As it is illustrated in the figure below, this type of compactification contains two fixed points, at $y=0$ and $y=\pi r_c\equiv L$. The range of $y$ is from $-L$ to $L$, but the metric is completely specified by the values in the range $0\leq y\leq L$. These fixed points of the extra dimension ($y=0$ and $y=L$) host the two 3-branes, which essentially are two separated 4-D worlds.\vspace{2em}
\begin{figure}[htp]
\centering
\includegraphics[width=6in]{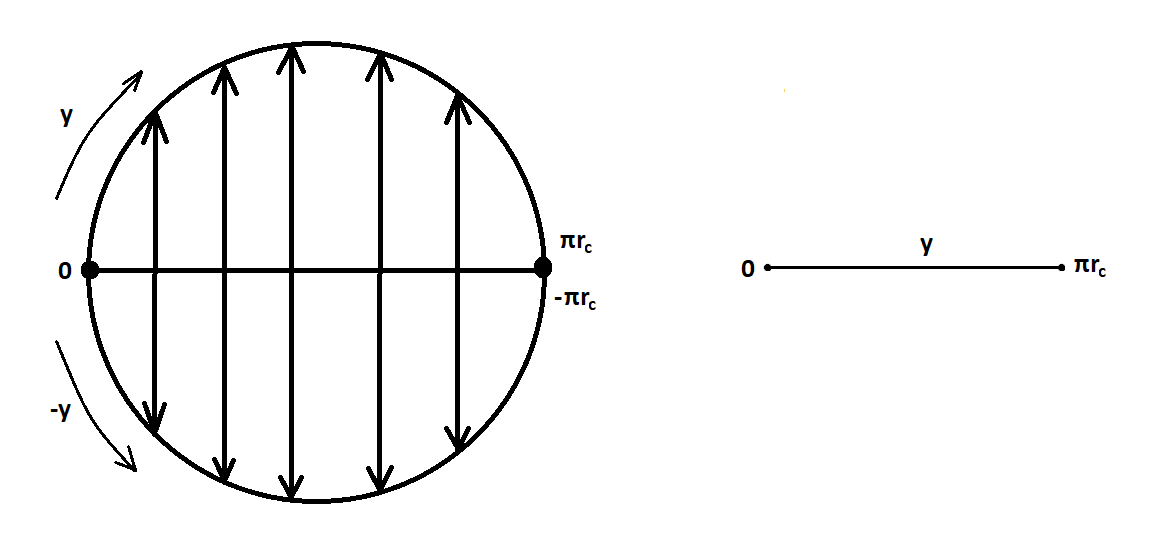}
\caption{$S^1/\mathbb{Z}_2$ orbifold}
\end{figure}

\par The action of the model is as follows:
\begin{gather}
\label{rs1}
S=S_{grav}+S_{1}+S_{2}\\ \nonum\\
\label{rs2}
S_{grav}=\int d^4x\int_{-L}^{L}dy\ \sqrt{-g^{(5)}}\left(\frac{R}{2\kappa_{(5)}}-\Lambda_5\right)\\ \nonum\\
\label{rs3}
S_1=\int d^4x\sqrt{-g_1}\ (\lagr_1-\sig_1)\\ \nonum\\
\label{rs4}
S_2=\int d^4x\sqrt{-g_2}\ (\lagr_2-\sig_2)
\end{gather}
where
\begin{gather}
\label{rs5}
\gfv=\det\left[g_{MN}(x^\lam,y)\right]\\ \nonum\\
\label{rs6}
g_1=\det\left[g^1_{\mu\nu}(x^\lam)\right]\\ \nonum\\
\label{rs7}
g_2=\det\left[g^2_{\mu\nu}(x^\lam)\right]
\end{gather}
The determinants $g_1$ and $g_2$ are derived by the metrics $g^1_{\mu\nu}$ and $g_{\mu\nu}^2$ of the two 3-branes which are located at $y=0$ and $y=L$ respectively. $R$ is the 5-D Ricci scalar, $\Lambda_5$ is the higher dimensional cosmological constant and $\kappa_{(5)}=8\pi G_{(5)}=M_{P(5)}^{-3}$ where $M_{P(5)}$ is the fundamental Planck scale of the 5-D spacetime. The quantities $\sig_1$ and $\sig_2$ represent the energy densities of the 3-branes, while $\lagr_1$ and $\lagr_2$ are the Lagrangians on each 3-brane. 
\par The variation of Eq.\eqref{rs1} with respect to the components of the 5-D metric tensor $g_{MN}$ provides us with the field equations (see \hyperref[var]{Appendix F} for more details about the variation of a general action). The field equations, which are depicted below, were deduced under the assumption that $\lagr_1=\lagr_2=0$. This particular assumption was made in order to determine the geometrical background of the model. The field equations have the following form:
\begin{align}\label{rs8}
\sqrt{-\gfv}\ G_{MN}=-\kappa_{(5)}\left[\sqrt{-\gfv}\ g_{MN}\  \Lambda_5\right.&+\sqrt{-g_1}\ \sig_1\ g^1_{\mu\nu}\ \del^\mu_{M}\ \del^\nu_N\ \del(y)\nonum\\ 
&\left.+\sqrt{-g_2}\ \sig_2\ g^2_{\mu\nu}\ \del^\mu_{M}\ \del^\nu_N\ \del(y-L)\right]
\end{align}
\par In order to continue the analysis of the model it is necessary to introduce an appropriate metric for such a setup. A property that we need to impose on the 5-D metric is to respect the Poincar\'{e} symmetry on the two 3-branes. This property comes naturally from the fact that the 4-D induced metrics on the 3-branes should describe the real world and therefore they should respect the same symmetries as the physical world. The general form of a five-dimensional line-element that includes an extra dimension with $S^1/\mathbb{Z}_2$ compactification and also respects Poincar\'{e} symmetry has the following form\vspace{1em}
\eq$\label{rs9}
ds^2=e^{2A(y)}\eta_{\mu\nu}dx^\mu dx^\nu+dy^2$
The function $A(y)$ is called \emph{warp factor} and it will be evaluated subsequently by the gravitational field equations of this theory. It is obvious that the metric tensor that results from this particular line-element is of the form
\begin{gather}
\label{rs10}
(g_{MN})=\left(\begin{array}{ccccc}
-e^{2A(y)} & 0 & 0 & 0 & 0\\
0 & e^{2A(y)} & 0 & 0 & 0\\
0 & 0 & e^{2A(y)} & 0 & 0\\
0 & 0 & 0 & e^{2A(y)} & 0\\
0 & 0 & 0 & 0 & 1
\end{array}\right)\\ \nonum\\
\label{rs11}
g_{MN}=\underbrace{e^{2A(y)}\eta_{\mu\nu}}_{g_{\mu\nu}}\del^{\mu}_M\del^{\nu}_N+\del^4_M\del^4_N=g_{\mu\nu}\del^{\mu}_M\del^{\nu}_N+\del^4_M\del^4_N\\ \nonum\\
\label{rs12}
(g^{MN})=\left(
\begin{array}{ccccc}
 -e^{-2 A(y)} & 0 & 0 & 0 & 0 \\
 0 & e^{-2 A(y)} & 0 & 0 & 0 \\
 0 & 0 & e^{-2 A(y)} & 0 & 0 \\
 0 & 0 & 0 & e^{-2 A(y)} & 0 \\
 0 & 0 & 0 & 0 & 1 \\
\end{array}
\right)\\ \nonum\\
\label{rs13}
g^{MN}=\underbrace{e^{-2A(y)}\eta^{\mu\nu}}_{g^{\mu\nu}}\del_{\mu}^M\del_{\nu}^N+\del_4^M\del_4^N=g^{\mu\nu}\del_{\mu}^M\del_{\nu}^N+\del_4^M\del_4^N
\end{gather}
\par Consequently, the induced metric tensors on the 3-branes at the points $y=0$ and $y=L$ are given by
\begin{gather}
\label{rs14}
g^1_{\mu\nu}=e^{2A(0)}\eta_{\mu\nu}\\ \nonum\\
\label{rs15}
g^2_{\mu\nu}=e^{2A(L)}\eta_{\mu\nu}
\end{gather}
\par The non-zero and two-times covariant components of the Einstein tensor $G_{MN}$ of the aforementioned ansatz are depicted below:\\
\eq$\label{rs16}
\left\{\begin{array}{c}
G_{00}=G_{tt}=-3e^{2A(y)}\left[2A'^2(y)+A''(y)\right]\\ \\
G_{11}=G_{22}=G_{33}=3e^{2A(y)}\left[2A'^2(y)+A''(y)\right]\\ \\
G_{44}=G_{yy}=6A'^2(y)
\end{array}\right\}\Ra \left\{\begin{array}{c}
G_{\mu\nu}=3\left[2A'^2(y)+A''(y)\right]g_{\mu\nu}\\ \\
G_{44}=6A'^2(y)
\end{array}\right\}$\\
\par Combining equations \eqref{rs8}, \eqref{rs11} and \eqref{rs16} for $M=N=4$ we get:\\
\eq$\label{rs17}
6A'^2(y)=-\kappa_{(5)}\Lambda_5\Ra A'^2(y)=-\frac{\kfv\Lambda_5}{6}$\\
while for $M=\mu$ and $N=\nu$ we obtain:
\begin{gather}
6A'^2(y)+3A''(y)=-\kfv\Lambda_5-\kfv\ \sig_1\ \del(y)-\kfv\ \sig_2\ \del(y-L)\xRightarrow{\eqref{rs17}}\nonum\\ \nonum\\
3A''(y)=-\kfv\left[\sig_1\ \del(y)+\sig_2\ \del(y-L)\right]\Ra\nonum\\ \nonum\\
\label{rs18}
A''(y)=-\frac{\kfv}{3}\left[\sig_1\ \del(y)+\sig_2\ \del(y-L)\right]
\end{gather}
\par The five-dimensional cosmological constant $\Lambda_5$ must be negative in order to have a real solution for the function $A(y)$. A negative 5-D cosmological constant affects decisively the geometry of the spacetime between the two 3-branes. Particularly, the bulk spacetime of this scenario leads to an anti-de Sitter spacetime between the two branes, which is also denoted as $AdS_5$. Subsequently, we define the following constant
\eq$\label{rs19}
\boxed{k^2\equiv -\frac{\kfv\Lambda_5}{6}}$
The specific function of $A(y)$ can be easily deduced from Eq.\eqref{rs17}. Combining equations \eqref{rs17} and \eqref{rs19}, we get
\eq$\label{rs20}
A'^2(y)=k^2\Ra A'(y)=\pm k\Ra A(y)=\pm ky\Ra\boxed{A(y)=-k|y|}$
The reason that we kept the minus sign in Eq.\eqref{rs20} will be understood later. In addition, it is necessary to preserve the orbifold symmetry $S_1/\mathbb{Z}_2$ for the extra dimension $y$. Hence, we are obliged to express the function $A(y)$ in terms of $|y|$. Substituting Eq.\eqref{rs20} into Eq.\eqref{rs9} we obtain the geometrical background of the RS model:
\eq$\label{rs21}
\boxed{ds^2=e^{-2k|y|}\eta_{\mu\nu}dx^\mu dx^\nu+dy^2}$
\par We can relate the constants $\sig_1$ and $\sig_2$ with $k$ by evaluating the second derivative of $A(y)$ as given by Eq.\eqref{rs20} and then equate the result with Eq.\eqref{rs18}.
\eq$\label{rs22}
A'(y)=-k(|y|)'=-k\ sgn(y)=-k\left[H(y)-H(-y)\right]$
where
\eq$\label{rs23}
H(y)=\left\{\begin{array}{cr}
0, & -L\leq y< 0\\ \\
1, & 0\leq y \leq L
\end{array}\right\}\hspace{0.5em},\hspace{1.5em} H(-y)=\left\{\begin{array}{cr}
1, & -L\leq y< 0\\ \\
0, & 0\leq y \leq L
\end{array}\right\}$\vspace{2em}
\eq$\label{rs24}
A''(y)=-k\left[H'(y)-H'(-y)\right]=-2k\left[\del(y)-\del(y-L)\right]$
where\\
\eq$\label{rs25}
\left\{\begin{gathered}
H'(y)=\del(y)-\del(y-L)\\ \\
H'(-y)=\del[-(y-L)]-\del(-y)=-\del(y)+\del(y-L)
\end{gathered}\right\}$\\
\begin{gather}
\eqref{rs18}\xRightarrow{\eqref{rs25}}-\frac{\kfv}{3}\left[\sig_1\ \del(y)+\sig_2\ \del(y-L)\right]=-2k\left[\del(y)-\del(y-L)\right]\Ra\nonum\\
\label{rs26}
\boxed{\sig_1=-\sig_2=\frac{6k}{\kfv}}
\end{gather}\\
\par Let us now focus on the 3-brane at $y=L$ and include the Higgs field in the 4-dimensional action. We are going to evaluate the Vacuum Expectation Value (VEV) of the Higgs field on the brane which determines the physical masses in the Standard Model. The action will have the form:
$$S_H=\int d^4x\int_{-L}^{L} dy\sqrt{-\gfv}\left[g^{MN}D_M H^\dagger D_N H-\lam(H^\dagger H-v_0^2)^2\right]\del(y-L)\Ra$$
$$S_H=\int d^4x\sqrt{-g_2}\left[g_2^{\mu\nu}
D_\mu H^\dagger D_\nu H-\lam(|H|^2-v_0^2)^2\right]\xRightarrow[\sqrt{-g_2}=e^{4A(L)}=e^{-4kL}]{g_2^{\mu\nu}=e^{-2A(L)}\eta^{\mu\nu}=e^{2kL}\eta^{\mu\nu}}$$
$$S_H=\int d^4x\ e^{-4kL}\left[e^{2kL}\eta^{\mu\nu}D_\mu H^\dagger D_\nu H-\lam(|H|^2-v_0^2)^2\right]\Ra$$
\eq$\label{rs27}
S_H=\int d^4x\left[\eta^{\mu\nu}D_\mu\tilde{H}^\dagger D_\nu\tilde{H}-\lam(|\tilde{H}|^2-v^2)^2\right]$
In order to obtain a canonically normalized action we defined
\begin{gather}
\label{rs28}
\tilde{H}\equiv e^{-kL}H\\ \nonum\\
\label{rs29}
v\equiv e^{-kL}v_0
\end{gather}
The action of Eq.\eqref{rs27} depicts the ordinary action of the Higgs field. The corresponding VEV of the renormalized Higgs scalar $\tilde{H}$ is $v$ and is given by Eq.\eqref{rs29}, while the Higgs scalar $H$ is the bare Higgs with VEV $v_0$. As we mentioned earlier, the VEV of the Higgs field determines all the mass parameters in the context of SM, thus we can safely conclude that
\eq$\label{rs30}
\boxed{m=e^{-kL}m_0}$
where $m$ constitutes the physical mass as it is measured on the 3-brane at $y=L$. Eq.\eqref{rs30} is a simple and ``powerful" result. It can be considered as a necessary condition for the solution of the hierarchy problem, because it does not demand a huge discrepancy between parameters $k$ and $L$ in order to achieve that. It is easy to verify the previous statement by setting $m_0$ equal to the Planck mass $M_P\sim 10^{18}\ GeV$ and $m$ equal to the electroweak scale $m_{EW}\sim 1\ TeV$. Then:
\begin{gather}
10^{3}\ GeV=e^{-kL}\ 10^{18}\ GeV\Ra e^{-kL}=10^{-15}\Ra kL=15\ln(10)\Ra\nonum\\
\label{rs31}
\boxed{kL\approx 35}
\end{gather}
\par In order to confidently state that the type of exponential suppression of Eq.\eqref{rs30} successfully addresses the hierarchy problem, it is also necessary to examine the dependence of the effective scale (4-D scale) of gravity on the size of the extra dimension $y$. For this purpose, we need to perturb the four-dimensional part of the five-dimensional line-element of Eq.\eqref{rs21} and then extract the 4-D gravitational action from the original 5-D action which is shown at Eq.\eqref{rs2}. The form of the perturbative line-element is given by the following relation:
\begin{align}\label{rs32}
ds^2&=e^{-2k|y|}[\eta_{\mu\nu}+h_{\mu\nu}(x)]dx^\mu dx^\nu+dy^2\nonum\\
&=e^{-2k|y|}g^{per}_{\mu\nu}dx^\mu dx^\nu+dy^2
\end{align}
where $|h_{\mu\nu}|\ll 1$. Using the metric that follows from Eq.\eqref{rs32} into Eq.\eqref{rs2} and focusing on the term which includes the Ricci scalar, we obtain
\eq$\label{rs33}
S_{eff}= \int d^4x \int_{-L}^{L}dy\ \frac{e^{-2k|y|}}{2\kfv}\sqrt{-g_{per}}\ R^{4D}(g^{per}_{\mu\nu})$
The $S_{eff}$ should be also equal to
\eq$\label{rs34}
S_{eff}=\int d^4x \sqrt{-g_{per}}\ \frac{R^{4D}}{2\kappa}$
where $\kappa=8\pi G=M_P^{-2}$. Consequently, by equating the last two relations we obtain the following expression for the effective scale of gravity.
$$\frac{1}{\kappa}=\frac{1}{\kfv}\int_{-L}^L dy\ e^{-2k|y|}\Ra M_P^2=M_{P(5)}^3\left(\int_{-L}^0 dy\ e^{2ky}+\int_{0}^{L}dy\ e^{-2ky}\right)\Ra$$
$$M_P^2=M_{P(5)}^3\left(\frac{1}{2k}-\frac{e^{-2kL}}{2k}-\frac{e^{-2kL}}{2k}+\frac{1}{2k}\right)\Ra M_P^2=\frac{M_{P(5)}^3}{k}\left(1-e^{-2kL}\right)\Ra$$
\eq$\label{rs35}
\boxed{M_P^2=\frac{M_{P(5)}^3}{k}\left(1-e^{-2kL}\right)}$
Substituting now the value of the product $kL$ given by Eq.\eqref{rs31} into Eq.\eqref{rs35} we get
\eq$\label{rs36}
M_P^2=\frac{M_{P(5)}^3}{k}(1-e^{-70})\simeq \frac{M_{P(5)}^3}{k}$
\par It is clear from Eq.\eqref{rs35} and Eq.\eqref{rs36} that gravity is essentially independent of the size of the extra dimension. Surprisingly, even if we infinitely extend the length $L$ of the extra dimension in Eq.\eqref{rs35}, the four-dimensional Planck scale $M_P$ remains finite. This particular observation was the central point of the RS2 model which is going to be analyzed in the section that follows.
\par Conclusively, it was shown that in the context of the RS model the hierarchy problem has an extremely simple and clear solution. Simultaneously, the RS model does not introduce new huge hierarchies (in contrast with the ADD model) between its fundamental parameters ($k$, $\kfv$ or $M_{P(5)}$, $v_0$, $L$ or $r_c$). The only constraint that is required by the model is that $kL\approx 35$. Of course, a stabilizing mechanism (\emph{Goldberger-Wise mechanism} \cite{PhysRevLett.83.4922,GOLDBERGER2000275}) must be included in the model for this purpose but this is beyond the context of the present analysis.

\section{RS2 Model}
\par In their second paper, Lisa Randall and Raman Sundrum proved that is in fact possible to extend the length $L$ of the extra dimension to an infinite value and nevertheless get an effectively four-dimensional gravity. The verification of the previous statement is derived from the fact that the 5-D graviton is localized near the 3-brane at $y=0$. The setup of the RS2 model is similar to the RS1 model with the difference that the 3-brane at $y=L$ is practically removed from the original picture, by taking $L$ to infinity. Thus, the action of the RS2 model is
\begin{gather}\label{rs37}
S=S_{grav}+S_1\\ \nonum\\
\label{rs38}
S_{grav}=\int d^4x\int dy\ \sqrt{-\gfv}\left(\frac{R}{2\kfv}-\Lambda_5\right)\\ \nonum\\
\label{rs39}
S_1=\int d^4x\sqrt{-g_1}\left(\lagr_1-\sig_1\right)
\end{gather}
The line-element of the RS2 model is given by Eq.\eqref{rs11} as well. Let us now proceed to the derivation of the graviton modes.
\par Gravitons correspond to small fluctuations in the spacetime ``fabric". Therefore, in the context of RS2 model we have
\eq$\label{rs40}
ds^2=e^{-2k|y|}\left[\eta_{\mu\nu}+h_{\mu\nu}(x,y)\right]dx^\mu dx^\nu+dy^2$
In the last equation it was chosen $h_{M4}=0$. It is always possible to find a set of coordinates with this particular property in the region of the 3-brane. We now seek a new variable $z$ for the description of the extra dimension in order to construct a metric tensor which will be more convenient for future calculations. For this purpose, we demand from the new variable $z$ to satisfy the following relation:
\eq$\label{rs41}
dy^2\equiv e^{-2k|y|}dz^2$
Performing the integration of the previous equation we find
\eq$\label{rs42}
k|z|=e^{k|y|}-1$
where we chose the integration constant appropriately in order to get $z=0$ when $y=0$. Eq.\eqref{rs42} leads to
\eq$\label{rs43}
e^{-2k|y|}=\frac{1}{(k|z|+1)^2}$
\par Combining equations \eqref{rs40} and \eqref{rs43} we obtain
\begin{gather}
ds^2=\frac{1}{(k|z|+1)^2}\left\{\left[\eta_{\mu\nu}+h_{\mu\nu}(x,z)\right]dx^\mu dx^\nu+dz^2\right\}\xRightarrow{h_{4M}=0}\nonum\\ \nonum\\
ds^2=e^{2A(z)}\left[\eta_{MN}+h_{MN}(x,z)\right]dx^Mdx^N\Ra\nonum\\ \nonum\\
\label{rs44}
\boxed{ds^2=e^{2A(z)}\bar{g}_{MN}(x,z)dx^Mdx^N}
\end{gather}
where
\begin{gather}\label{rs45}
\boxed{e^{2A(z)}\equiv \frac{1}{(k|z|+1)^2}}\\ \nonum\\
\label{rs46}
\boxed{\bar{g}_{MN}(x,z)\equiv \eta_{MN}+h_{MN}(x,z)}
\end{gather}
From Eq.\eqref{rs45} we get
\begin{gather}\label{rs47}
A(z)=-\ln(k|z|+1)\\ \nonum\\
\label{rs48}
A'(z)=-\frac{k\ sgn(z)}{k|z|+1}=-\frac{k[H(z)-H(-z)]}{k|z|+1}\\ \nonum\\
\label{rs49}
A''(z)=-\frac{2k\del(z)}{k|z|+1}+\frac{k^2}{(k|z|+1)^2}
\end{gather}
\par Under the above transformation of the extra dimension (from $y$ to $z$), Eq.\eqref{rs38} is also going to be transformed into
\eq$\label{rs50}
S_{grav}=\int d^4x\int_{-L_z}^{L_z} dz\sqrt{-\gfv}\left(\frac{R}{2\kfv}-\Lambda_5\right)$
where $L_z=(e^{kL}-1)/k$. The field equations which are yielded by the variation of Eq.\eqref{rs37} with respect to the components of the 5-D metric are depicted below ($\lagr_1$ in Eq.\eqref{rs39} is set to be 0). 
\eq$\label{rs51}
\sqrt{-\gfv}\ G_{MN}=-\kfv\left[\sqrt{-\gfv}\ g_{MN}\ \Lambda_5+\sqrt{-g_1}\ g^1_{\mu\nu}\ \del^\mu_M\ \del^\nu_N\ \del(z)\ \sig_1\right]$
where
\begin{gather}\label{rs52}
g^1_{\mu\nu}\ \del(z)=e^{2A(z)}(\eta_{\mu\nu}+h_{\mu\nu})\ \del(z)\\ \nonum\\
\label{rs53}
g_1\ \del(z)=\gfv\ e^{-2A(z)}\ \del(z)
\end{gather}
Hence, the combination of equations \eqref{rs51}-\eqref{rs53} results to
\eq$\label{rs54}
 G_{MN}=-\kfv\left[g_{MN}\ \Lambda_5+e^{A(z)}(\eta_{\mu\nu}+h_{\mu\nu})\del^\mu_M\ \del^\nu_N\ \del(z)\ \sig_1\right]$
\par The calculation of the Einstein tensor components $G_{MN}$ using ``brute force" is a difficult task in this model. Hence, we are going to use a conformal transformation\footnote{For more details about conformal transformations see \cite{Dabrowski:2008kx}.} in order to obtain the components of the Einstein tensor more easily. Particularly, we mention (without proof) that if $\tilde{g}_{MN}$ is the conformally transformed metric of the metric $g_{MN}$ and the two metrics are connected through the relation
\eq$\label{rs55}
\tilde{g}_{MN}=\Omega^2(x,z)\ g_{MN}$
then the corresponding components of the Einstein tensor are connected as follows:
\eq$\label{rs56}
\tilde{G}_{MN}=G_{MN}+\frac{D-2}{2\Omega^2}\left[4\Omega_{,M}\Omega_{,N}+(D-5)\Omega_{,K}\Omega^{,K}\ g_{MN}\right]-\frac{D-2}{\Omega}\left(\Omega_{;MN}-g_{MN}\square \Omega\right)$
where $D$ indicates the total number of spacetime's dimensions. The adjustment of equations \eqref{rs55} and \eqref{rs56} in our case is straightforward. We simply execute the following substitutions: \{$\tilde{g}_{MN}\ra g_{MN}$, $\Omega(x,z)\ra e^{A(z)}$, $g_{MN}\ra \bar{g}_{MN}$ and $D=5$\}. Hence, we obtain the expression:
$$G_{MN}=\bar{G}_{MN}+3\left[\pa_M A\ \pa_N A-\nabla_M\nabla_NA+\bar{g}_{MN}(\square A+\pa_LA\ \pa^LA)\right]\Ra$$
\begin{align}\label{rs57}
G_{MN}=\bar{G}_{MN}+3[\pa_M A\ \pa_N A&-\pa_M\pa_NA+\bar{\Gam}^L{}_{MN}\pa_LA\nonum\\
&+\bar{g}_{MN}(\pa_L\pa^LA+\bar{\Gam}^L{}_{LK}\pa^KA+\pa_LA\ \pa^LA)]
\end{align}
For all the subsequent calculations in this section, the terms which contain fluctuations $(h_{MN})$ of order higher than the first will be neglected. Additionally, it is possible and extremely convenient to perform appropriate coordinate transformations\footnote{A complete analysis of the legitimacy and derivation of this particular gauge is presented in \cite{Ivanov:1999mt,Myung:2000hu}. See \hyperref[gnc]{Appendix B} as well.} in which the fluctuations satisfy the following properties:
\eq$\label{rs58}
\left\{\begin{array}{c}
h_{4M}=0\\ \\
h=h^{\mu}{}_\mu=0\\ \\
\pa_\mu h^\mu{}_\nu=0\end{array}\right\}$
In this gauge and with the use of equations \eqref{ln3} and \eqref{ln8} it is quite trivial to show that the Christoffel symbols $\bar{\Gam}^L{}_{MN}$ and the components of Einstein tensor $\bar{G}_{MN}$ are expressed through the following equations: 
\eq$\label{rs59}
\bar{\Gam}^L{}_{MN}=\frac{1}{2}\left(\pa_Nh^L{}_M+\pa_Mh^L{}_N-\pa^Lh_{MN}\right)$
\eq$\label{rs60}
\bar{G}_{MN}=-\frac{1}{2}\pa_L\pa^Lh_{MN}$
Using equations \eqref{rs57}-\eqref{rs60} we are led to
\eq$\label{rs61}
\left\{\begin{array}{c}
G_{44}=6A'^2\\ \\
G_{4\mu}=0\\ \\
G_{\mu\nu}=-\frac{1}{2}\pa_L\pa^Lh_{\mu\nu}-\frac{3}{2}\ \pa_4h_{\mu\nu}\ A'+3(\eta_{\mu\nu}+h_{\mu\nu})(A''+A'^2)
\end{array}\right\}$\\
The combination of equations \eqref{rs54} and \eqref{rs61} for $M=N=4$ results to Eq.\eqref{rs47}, namely
$$6A'^2=-\kfv\Lambda_5\ g_{44}=-\kfv\Lambda_5\ e^{2A(z)}(\underbrace{\eta_{44}}_1+\underbrace{h_{44}}_0)\xRightarrow{\eqref{rs19}} A'^2(z)=k^2e^{2A(z)}\Ra$$
$$A(z)=-\ln(k|z|+1)$$
where we fixed again the integration constant appropriately (the reason has already been discussed in the sentence before equation \eqref{rs43}). Correspondingly, for $M=\mu$ and $N=\nu$ we have:
\eq$\label{rs62}
-\frac{1}{2}\pa_L\pa^Lh_{\mu\nu}-\frac{3}{2}\ \pa_4h_{\mu\nu}\ A'+3(\eta_{\mu\nu}+h_{\mu\nu})(A''+A'^2)=-\kfv(\eta_{\mu\nu}+h_{\mu\nu})[e^{2A}\Lambda_5+e^{A}\del(z)\ \sig_1]$
It is not hard to combine equations \eqref{rs19}, \eqref{rs26}, \eqref{rs45}, \eqref{rs48} and \eqref{rs49} in order to prove the following expressions (these are going to simplify Eq.\eqref{rs62} afterwards):
\eq$\label{rs63}
-\kfv\Lambda_5\ e^{2A(z)}=-\frac{\kfv\Lambda_5}{6}\ 6\ e^{2A(z)}=6\ k^2\ e^{2A(z)}=6A'^2(z)$
\eq$\label{rs64}
-\kfv\ \sig_1\ e^{A(z)}\ \del(z)=-6\ k\ e^{A(z)}\ \del(z)=3[A''(z)-A'^2(z)]$
Consequently, from equations \eqref{rs62}-\eqref{rs64} we get
\eq$\label{rs65}
\boxed{\pa_L\pa^Lh_{\mu\nu}+3\ \pa_4h_{\mu\nu}\ A'=0}$
\par We now rescale the fluctuations $h_{\mu\nu}$ as follows.
\eq$\label{rs66}
h_{\mu\nu}\ra e^{\lam A(z)}h_{\mu\nu}$
Then, Eq.\eqref{rs65} takes the form:
\eq$\label{rs67}
h_{\mu\nu}\left[(3\lam+\lam^2)A'^2+\lam\ A''\right]+(2\lam+3)A'\ \pa_4h_{\mu\nu}+\pa_L\pa^Lh_{\mu\nu}=0$
It is now obvious that we can nullify the term which contains the quantity $\pa_4h_{\mu\nu}$ by choosing $\lam=-3/2$. Furthermore, we perform a Kaluza-Klein decomposition on the fluctuations $h_{\mu\nu}(x,z)$:
\eq$\label{rs68}
h_{\mu\nu}(x,z)=\sum_{n=0}^\infty h_{\mu\nu}^n(x)\psi_n(z)$
where
\eq$\label{rs69}
\left\{\begin{array}{c}
h^n_{\mu\nu}(x)=e^{i p_n  x}=e^{i p^n_\lam x^\lam}\\ \\
\pa_\sig\pa^\sig h_{\mu\nu}^n(x)=m_n^2\ h_{\mu\nu}^n(x)\end{array}\right\}$\\
Substituting equations \eqref{rs68} and \eqref{rs69} into Eq.\eqref{rs67} and setting $\lam=-3/2$, we obtain
$$\pa_L\pa^L\left[\sum_{n=0}^\infty h^n_{\mu\nu}(x)\psi_n(z)\right]-\left(\frac{3}{2}A''+\frac{9}{4}A'^2\right)\sum_{n=0}^\infty h^n_{\mu\nu}(x)\psi_n(z)=0\Ra$$
$$\pa_4\pa^4\left[\sum_{n=0}^\infty h^n_{\mu\nu}(x)\psi_n(z)\right]+\pa_\sig\pa^\sig\left[\sum_{n=0}^\infty h^n_{\mu\nu}(x)\psi_n(z)\right]-\left(\frac{3}{2}A''+\frac{9}{4}A'^2\right)\sum_{n=0}^\infty h^n_{\mu\nu}(x)\psi_n(z)=0\Ra$$
$$\sum_{n=0}^\infty\left\{h^n_{\mu\nu}(x)\left[\psi_n''(z)+m_n^2\ \psi_n(z)-\left(\frac{3}{2}A''+\frac{9}{4}A'^2\right)\psi_n(z)\right]\right\}=0\Ra$$
\eq$\label{rs70}
\boxed{-\psi_n''(z)+V(z)\ \psi_n(z)=m_n^2\ \psi_n(z)}$
where\\
\eq$\label{rs71}
\boxed{V(z)=\frac{3}{2}A''+\frac{9}{4}A'^2=-\frac{3k\del(z)}{k|z|+1}+\frac{15k^2}{4(k|z|+1)^2}=\frac{15k^2}{4(k|z|+1)^2}-3k\del(z)}$
\par The boundary condition at $z=0$ can be found as follows:
$$\eqref{rs70}\Ra \int_{0^-}^{0^+}dz\left[-\psi_n''(z)+V(z)\psi(z)\right]=\int_{0^-}^{0^+}dz\ m_n^2\ \psi_n(z)\Ra$$
$$-\psi_n'(0^+)+\psi_n'(0^-)-3k\psi_n(0)=0\xRightarrow[\psi'_n(z)=-\psi'_n(-z)]{\psi_n(z)=\psi_n(-z)}$$
\eq$\label{rs72}
\boxed{\psi'_n(0^+)=-\frac{3}{2}\ k\ \psi_n(0)}$
\par Although we have focused on the RS2 model, we may develop a unified analysis for the study of gravitons in both RS models. Thus, if we reintroduce the second brane at $y=L=\pi r_c$ or at $z=(e^{k\pi r_c}-1)/k\equiv L_z$ (as it is indicated by Eq.\eqref{rs42}), then the potential of Eq.\eqref{rs71} will be modified as follows and a new boundary condition at $z=L_z$ will be added as well:
\eq$\label{rs73}
V_{new}(z)=\frac{15k^2}{4(k|z|+1)^2}-\frac{3k[\del(z)-\del(z-L_z)]}{k|z|+1}$
The boundary condition at $z=L_z$ can be deduced with the use of equations \eqref{rs70} and \eqref{rs73}.
$$\int_{L_z^-}^{L_z^+}dz[-\psi_n''(z)+V_{new}(z)\psi_n(z)]=\int_{L_z^-}^{L_z^+}dz\ m_n^2\ \psi_n(z)\Ra$$
$$-\psi_n'(L_z^+)+\psi_n(L_z^-)+\frac{3k\psi_n(L_z)}{k|L_z|+1}=0\xRightarrow{\psi_n(L_z^+)=-\psi_n(L_z^-)}$$
\eq$\label{rs74}
\boxed{\psi_n'(L_z)=-\frac{3k\psi_n(L_z)}{2(kL_z+1)}}$
\begin{figure}[htp]
\centering
\includegraphics[width=5.9in]{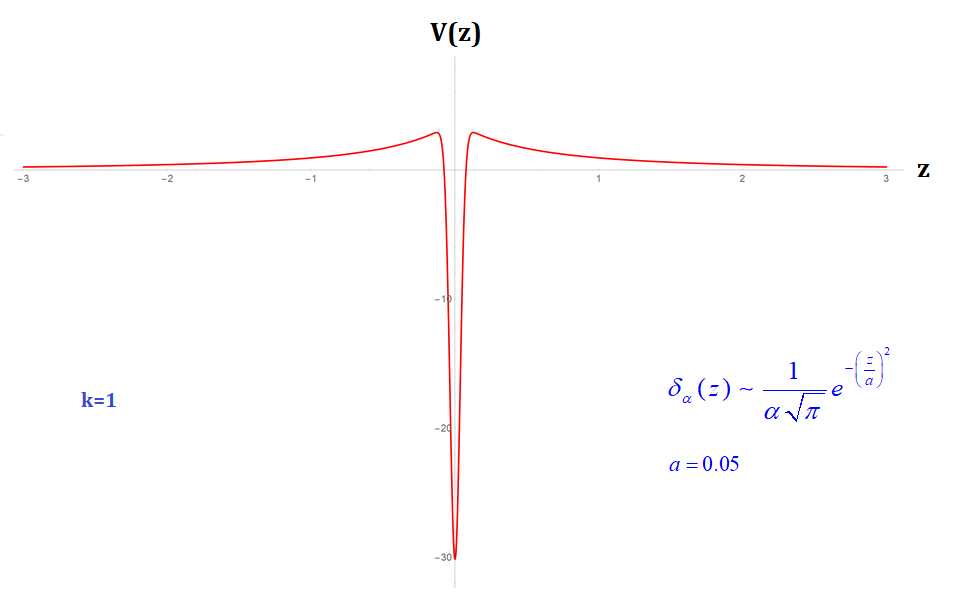}
\caption{Qualitative graph of the function $V(z)$.}
\end{figure}

\subsection{Kaluza-Klein Modes}
\par \vspace{1em}The \textbf{zero-mode} $\psi_0(z)$ for $m_0=0$ can be easily evaluated from Eq.\eqref{rs70}. Thus, it is
$$-\psi''_0(z)+\left[\frac{3}{2}A''(z)+\frac{9}{4}A'^2(z)\right]\psi_0(z)=0\Ra\psi_0(z)=N_0\ e^{\frac{3}{2}A(z)}\xRightarrow{\eqref{rs47}}$$
\eq$\label{rs75}
\psi_0(z)=N_0(k|z|+1)^{-3/2}$
where $N_0$ is a normalization constant which is going to be evaluated followingly.
$$\int_{-L_z}^{L_z}dz\ |\psi_0(z)|^2=1\Ra (N_0)^2\int^{L_z}_{-L_z}dz(k|z|+1)^{-3}=2(N_0)^2\int_0^{L_z}dz(kz+1)^{-3}=1\Ra$$
\eq$\label{rs75a}
(N_0)^2\left[-\frac{1}{k(kL_z+1)^2}+\frac{1}{k}\right]=1\xRightarrow{kL_z=e^{kL}-1\gg 1}(N_0)^2\frac{1}{k}=1\Ra N_0=\sqrt{k}$
From equations \eqref{rs75} and \eqref{rs75a} we are led to
\eq$\label{75b}
\boxed{\psi_0(z)=\frac{1}{k}\left(|z|+\frac{1}{k}\right)^{-3/2}}$
\par The figure above depicts the plot of the potential $V(z)$ as it is given by equation \eqref{rs71}, where it was used the approximation $\del(z)\sim(1/a\sqrt{\pi})e^{-(z/a)^2}$ with $a=0.05$. This particular approximation helps us to visualize better the behaviour of the potential.
\par The boundary conditions are both satisfied. The figure in the next page depicts the plot of the graviton zero-mode at the region of the 3-brane which is located at $z=0$.
\begin{figure}[htp]
\centering
\includegraphics[width=4.7in]{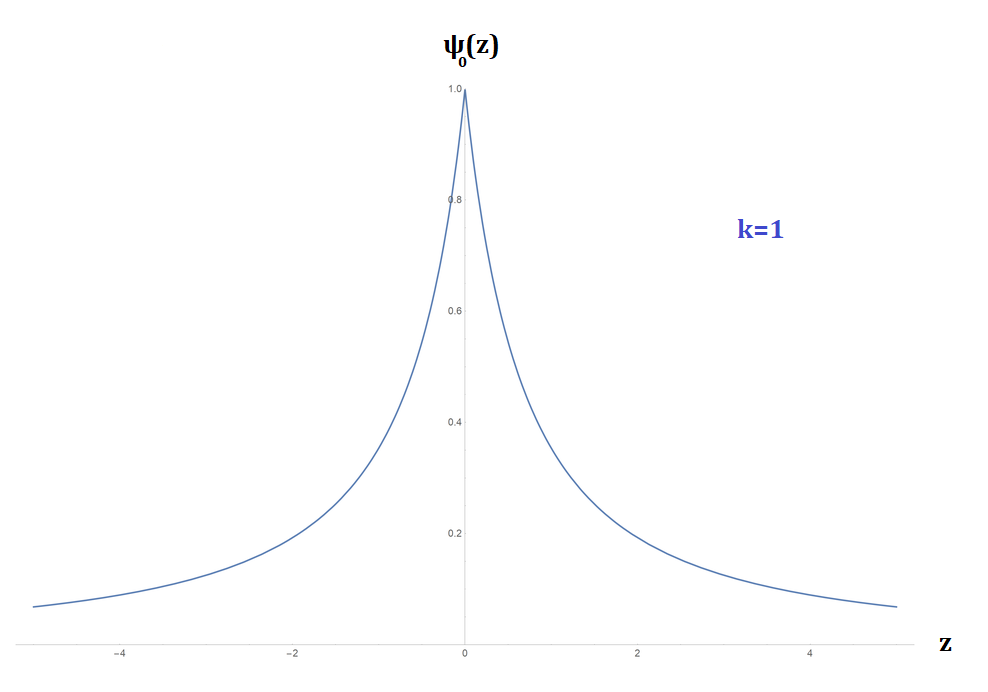}
\caption{Qualitative graph of the zero-mode around the 3-brane at $z=0$.}
\end{figure}
\par The \textbf{Kaluza-Klein (KK) modes} for $n>0$ can be provided from the general equation \eqref{rs70}.
\eq$\label{rs76}
\psi_n''(z)+\left[m_n^2-\frac{15k^2}{4(k|z|+1)^2}\right]\psi_n(z)=0$
while for $z=0$ and $z=L_z$ equations \eqref{rs72} and \eqref{rs74} should be satisfied respectively. The above differential equation has the following general solution:
\eq$\label{rs77}
\psi_n(z)=\left(|z|+\frac{1}{k}\right)^{1/2}\left\{a_n J_2\left[m_n\left(|z|+\frac{1}{k}\right)\right]+b_nY_2\left[m_n\left(|z|+\frac{1}{k}\right)\right]\right\}$
where $a_n$, $b_n$ are constant coefficients and $J_2(z)$, $Y_2(z)$ are the Bessel functions of the first and second kind respectively. The derivative of the function $\psi_n(z)$ is given by
\begin{align}\label{rs78}
\psi_n'(z)=&m_n\left(|z|+\frac{1}{k}\right)^{1/2}\left\{a_n J_1\left[m_n\left(|z|+\frac{1}{k}\right)\right]+b_nY_1\left[m_n\left(|z|+\frac{1}{k}\right)\right]\right\}\nonum\\
&-\frac{3}{2}\left(|z|+\frac{1}{k}\right)^{-1/2}\left\{a_n J_2\left[m_n\left(|z|+\frac{1}{k}\right)\right]+b_nY_2\left[m_n\left(|z|+\frac{1}{k}\right)\right]\right\}
\end{align}
We can relate the constants $a_n$ and $b_n$ by simply applying the boundary condition at $z=0$. Thus, from equations \eqref{rs72}, \eqref{rs77} and \eqref{rs78} we get:
\eq$\label{rs78a}
a_n=-b_n\frac{Y_1\left(\frac{m_n}{k}\right)}{J_1\left(\frac{m_n}{k}\right)}$
Hence, it is:
\eq$\label{rs78b}
\boxed{\psi_n(z)=N_n\left(|z|+\frac{1}{k}\right)^{1/2}\left\{-Y_1\left(\frac{m_n}{k}\right)J_2\left[m_n\left(|z|+\frac{1}{k}\right)\right]+J_1\left(\frac{m_n}{k}\right)Y_2\left[m_n\left(|z|+\frac{1}{k}\right)\right]\right\}}$
where we defined	
\eq$\label{rs78c}N_n=\frac{b_n}{J_1\left(\frac{m_n}{k}\right)}$
\par Since the massless graviton mode
is accompanied by a tower of massive Kaluza-Klein states, one may wonder about the gravitational potential that an observer on the visible brane will feel. In order to answer this, we will now consider each RS model separately.\vspace{1em}
\begin{center}
\underline{\textbf{The limit $\mathbold{m_n/k\gg 1}$}}\vspace{1em}
\end{center}
\par Combining equations \eqref{rs72}, \eqref{rs74}, \eqref{rs77} and \eqref{rs78} we get
\begin{gather}
\label{rs79}
a_n J_1\left(\frac{m_n}{k}\right)+b_nY_1\left(\frac{m_n}{k}\right)=0\\ \nonum\\
\label{rs80}
a_n J_1\left[m_n\left(L_z+\frac{1}{k}\right)\right]+b_nY_1\left[m_n\left(L_z+\frac{1}{k}\right)\right]=0
\end{gather}
It is well-known that a homogeneous system of equations in order to have a non-trivial solution it is necessary to have a vanishing determinant. Thus, for the previous system of equations we demand that\\
$$J_1\left(\frac{m_n}{k}\right)Y_1\left[m_n\left(L_z+\frac{1}{k}\right)\right]-J_1\left[m_n\left(L_z+\frac{1}{k}\right)\right]Y_1\left(\frac{m_n}{k}\right)=0\xRightarrow{kL_z=e^{kL}-1}$$
\eq$\label{rs81}
J_1\left(\frac{m_n}{k}\right)Y_1\left(\frac{m_n}{k}e^{kL}\right)-J_1\left(\frac{m_n}{k}e^{kL}\right)Y_1\left(\frac{m_n}{k}\right)=0$

\newpage
In this limit, the large value of the quantity $m_n/k$ results to the following expressions for the Bessel functions:
\eq$\label{rs94a}
\left\{\begin{array}{cc}
J_1\left(\frac{m_n}{k}\right)\sim\sqrt{\frac{2k}{\pi m_n}}\cos\left(\frac{m_n}{k}-\frac{3\pi}{4}\right),&\hspace{1em}
J_1\left(\frac{m_n}{k}e^{kL}\right)\sim\sqrt{\frac{2k}{\pi m_n}e^{-kL}}\cos\left(\frac{m_n}{k}e^{kL}-\frac{3\pi}{4}\right)\\ \\
Y_1\left(\frac{m_n}{k}\right)\sim\sqrt{\frac{2k}{\pi m_n}}\sin\left(\frac{m_n}{k}-\frac{3\pi}{4}\right),& \hspace{1em}
Y_1\left(\frac{m_n}{k}e^{kL}\right)\sim\sqrt{\frac{2k}{\pi m_n}e^{-kL}}\sin\left(\frac{m_n}{k}e^{kL}-\frac{3\pi}{4}\right)\end{array}\right\}$
Therefore, Eq.\eqref{rs81} takes the form
$$\cos\left(\frac{m_n}{k}-\frac{3\pi}{4}\right)\sin\left(\frac{m_n}{k}e^{kL}-\frac{3\pi}{4}\right)-\cos\left(\frac{m_n}{k}e^{kL}-\frac{3\pi}{4}\right)\sin\left(\frac{m_n}{k}-\frac{3\pi}{4}\right)=0\Ra$$
$$\sin\left[\frac{m_n}{k}\left(e^{kL}-1\right)\right]=0\xRightarrow{kL_z=e^{kL}-1}\sin\left(m_n L_z\right)=0\Ra$$
\eq$\label{rs94b}
\boxed{m_n=\frac{s\pi}{L_z},\hspace{1em}s=1,2,\ldots}$
Correspondingly, Eq.\eqref{rs78b} is given by
\begin{align}
\psi_n(z)\sim N_n\sqrt{\frac{2k}{\pi m_n}}\sqrt{|z|+\frac{1}{k}}&\left\{-\sin\left(\frac{m_n}{k}-\frac{3\pi}{4}\right)J_2\left[m_n\left(|z|+\frac{1}{k}\right)\right]\right.\nonum\\
&\hspace{0.5em}\left.+\cos\left(\frac{m_n}{k}-\frac{3\pi}{4}\right)Y_2\left[m_n\left(|z|+\frac{1}{k}\right)\right]\right\}\xRightarrow{m_n(|z|+1/k)\gg 1}\nonum
\end{align}
\begin{align}
\psi_n(z)\sim N_n\frac{2\sqrt{k}}{\pi m_n}&\left\{-\sin\left(\frac{m_n}{k}-\frac{3\pi}{4}\right)\cos\left[m_n\left(|z|+\frac{1}{k}\right)-\frac{5\pi}{4}\right]\right.\nonum\\
&\hspace{0.5em}\left.+\cos\left(\frac{m_n}{k}-\frac{3\pi}{4}\right)\sin\left[m_n\left(|z|+\frac{1}{k}\right)-\frac{5\pi}{4}\right]\right\}\Ra\nonum
\end{align}
\eq$\label{rs94c}
\psi_n(z)\sim N_n\frac{2\sqrt{k}}{\pi m_n}\sin\left(m_n|z|-\frac{\pi}{2}\right)$
where we have used the following expressions for the Bessel function at the limit $m_n(|z|+1/k)\gg 1$.
\eq$\label{rs91}
\left(|z|+\frac{1}{k}\right)^{1/2}J_2\left[m_n\left(|z|+\frac{1}{k}\right)\right]\sim \sqrt{\frac{2}{\pi m_n}}\cos\left[m_n\left(|z|+\frac{1}{k}\right)-\frac{5\pi}{4}\right]$
\eq$\label{rs92}
\left(|z|+\frac{1}{k}\right)^{1/2}Y_2\left[m_n\left(|z|+\frac{1}{k}\right)\right]\sim \sqrt{\frac{2}{\pi m_n}}\sin\left[m_n\left(|z|+\frac{1}{k}\right)-\frac{5\pi}{4}\right]$
It is now trivial to evaluate the normalization constant $N_n$.
$$\int_{-L_z}^{L_z}dz\ \psi_n^2(z)=1\Ra N_n^2\frac{4k}{\pi^2m_n^2}\int_{-L_z}^{L_z}dz\ \sin^2\left(m_n|z|-\frac{\pi}{2}\right)=1\Ra$$
\eq$\label{rs94d}
N_n^2\frac{4k}{\pi^2m_n^2}\ 2\ \underbrace{\int_{0}^{L_z}dz\ \sin^2\left(m_nz-\frac{\pi}{2}\right)}_{L_z/2\ (for\ L_z\gg 1)}=1\Ra \boxed{N_n=\frac{\pi m_n}{2\sqrt{kL_z}}}$\\
\eq$\label{rs94e}
\eqref{rs94c}\xRightarrow{\eqref{rs94d}}\boxed{\psi_n(z)\sim \frac{1}{\sqrt{L_z}}\sin\left(m_n|z|-\frac{\pi}{2}\right)}$
For the higher graviton modes to be very heavy, as assumed above, we should have a very small inter-brane distance according to Eq. \eqref{rs94b}. Then, this case applies to the RS1 model. Each graviton mode will contribute to the gravitational potential on the brane through a Yukawa-like potential, thus
$$U(r) \propto \sum_n \frac{G_{(4)} m_1 m_2 e^{-m_n r}}{r}$$
But since all the KK masses are very heavy, it is only the zero graviton that will create a $1/r$ potential while the contributions of the higher modes will be very much suppressed.

\begin{center}
\underline{\textbf{The limit $\mathbold{m_n/k\ll 1}$}}\vspace{1em}
\end{center}
\par At the approximation of $m_n/k\ll 1$, it is
\eq$\label{rs81a}
J_1\left(\frac{m_n}{k}\right)\sim\frac{m_n}{2k},\hspace{3em}
Y_1\left(\frac{m_n}{k}\right)\sim -\frac{2k}{\pi m_n}$
Hence, we can safely state that $-Y_1\left(\frac{m_n}{k}\right)\gg J_1\left(\frac{m_n}{k}\right)$. Subsequently, Eq.\eqref{rs81} results to the following simple requirement in order to have a vanishing determinant:
\eq$\label{rs82}
J_1\left(\frac{m_n}{k}\ e^{kL}\right)=0\Ra\boxed{m_n=k e^{-kL} R^{J_1}_n\simeq\frac{R^{J_1}_n}{L_z}}$
where $R^{J_1}_n$ are the roots of the function $J_1(x)$, namely $J_1(R^{J_1}_n)=0$. The masses $m_n$ of the massive graviton modes constitute the graviton spectrum (or Kaluza Klein spectrum). The substitution of the approximate relations from equation \eqref{rs81a} into \eqref{rs78b} leads to
$$\psi_n(z)\sim N_n\left(|z|+\frac{1}{k}\right)^{1/2}\left\{\frac{2k}{\pi m_n}J_2\left[m_n\left(|z|+\frac{1}{k}\right)\right]+\frac{m_n}{2k}Y_2\left[m_n\left(|z|+\frac{1}{k}\right)\right]\right\}\xRightarrow{N_n\ra N_n\frac{m_n}{2k}}$$
\eq$\label{rs85}
\boxed{\psi_n(z)\sim N_n\left(|z|+\frac{1}{k}\right)^{1/2}\left\{\frac{4k^2}{\pi m_n^2}J_2\left[m_n\left(|z|+\frac{1}{k}\right)\right]+Y_2\left[m_n\left(|z|+\frac{1}{k}\right)\right]\right\}}$\\
The normalization constant $N_n$ can be determined by the integral
$$\int_{-L_z}^{L_z}dz\ \psi_n^2(z)=1$$
The fact that $L_z\ra \infty$ allows us to perform the calculation of $N_n$ in the approximation of $m_n|z|\gg 1$. Using also the fact that $\frac{m_n}{k}\ll 1\Ra\frac{k}{m_n}\gg 1$, we can ignore the second term of Eq.\eqref{rs85} as negligible. Thus, the substitution of Eq.\eqref{rs91} into Eq.\eqref{rs85} leads to
\eq$\label{rs93}
\psi_n(z)\sim N_n\frac{4k^2}{\pi m_n^2}\sqrt{\frac{2}{\pi m_n}}\cos\left(m_n|z|-\frac{5\pi}{4}\right)$
Thus, from Eq.\eqref{rs93} we have:
$$\int_{-L_z}^{L_z}dz\ \psi_n^2(z)=1\Ra N^2_n\ \frac{32k^4}{\pi^3m^5_n}\int^{L_z}_{-L_z}\cos^2\left(m_n|z|-\frac{5\pi}{4}\right)=1\Ra$$
\eq$\label{rs94}
N^2_n\ \frac{32k^4}{\pi^3m^5_n}\ 2\  \underbrace{\int^{L_z}_{0}\cos^2\left(m_n\  z-\frac{5\pi}{4}\right)}_{L_z/2\ \ (for\ L_z\gg 1)}=1\Ra \boxed{N_n=\frac{\pi^{3/2}m_n^{5/2}}{4k^2\sqrt{2L_z}}}$
Consequently, the graviton modes $\psi_n(z)$ for $n>0$ and $m_n|z|\gg 1$ are given by 
\eq$
\boxed{\psi_n(z)=\frac{1}{\sqrt{L_z}}\ \cos\left(m_n|z|-\frac{5\pi}{4}\right)}$\vspace{1em}
\par In this case, the KK masses are assumed to be small and this can be realised only for a large inter-brane distance. Therefore, this analysis applies to the RS1 model, where resolving the hierarchy problem is not an objective any more, or to the RS2 model. Especially, in the latter case where $L\ra \infty$, even the higher graviton modes become massless. It is thus even more important to verify if the Newtonian limit of gravity is indeed recovered on the visible brane. Here, the sum over the KK graviton states changes to an integral that must be carefully evaluated - we refer the interested reader to \cite{PhysRevLett.84.2778,1126-6708-2000-03-023} for more details on this. At the end, we present the result from \cite{PhysRevLett.84.2778} which is related to the gravitational potential that is generated by a massive object with mass $M$ on the 3-brane at $z=0$. The energy-momentum tensor for a point-mass at rest on the brane at $\vec{r}=0$ is the following:
$$T_{\mu\nu}=M\del^{(3)}(\vec{r})\ \del^0_\mu\ \del^0_\nu$$
For $r=|\vec{r}|\gg 1/k$ we get:
\eq$\label{rs96}
h_{\mu\nu}=\frac{2G_{(5)}kM}{r}\left[\left(1+\frac{1}{3k^2r^2}\right)\eta_{\mu\nu}+\left(2+\frac{1}{k^2r^2}\right)\del^0_\mu \del^0_\nu \right]$
Consequently, from equations \eqref{rs36} and \eqref{rs96} we obtain
\begin{gather}\label{rs97}
h_{00}=\frac{2G_{(4)}M}{r}\left(1+\frac{2}{3k^2r^2}\right)\\ \nonum\\
\label{rs98}
h_{ij}=\frac{2G_{(4)}M}{r}\left(1+\frac{1}{3k^2r^2}\right)\eta_{ij}
\end{gather}
The gravitational potential $V(r)$ is given by $h_{00}$ as follows:
\eq$\label{rs99}
\boxed{V(r)=\frac{1}{2}h_{00}=\frac{G_{(4)}M}{r}\left(1+\frac{2}{3k^2r^2}\right)\xrightarrow{kr\gg 1}V(r)\sim \frac{G_{(4)}M}{r}}$
Obviously, the desirable behaviour of an effective 4-dimensional gravity at the brane is produced by the model as $kr\gg 1$. On the other hand, if we consider the case of $kr\ll 1$, then the gravitational potential is proportional to $1/r^2$, i.e. $V(r)\propto 1/r^2$. This simply means that at short distances -compared to the scale $k$ which is also associated with the AdS curvature- gravity becomes 5-dimensional. The possibility to observe deviations from the Newton's law of gravitation depends entirely on the value of $k$.

\newpage
\blankpage

\newpage

\chapter{Localization of a 5-Dimensional Brane-World Black Hole}

\par The preceding analysis of RS models made clear that there is indeed a possibility to live at the boundaries of a higher dimensional cosmos (specifically a five-dimensional one) and the effective gravitational interaction between massive objects on our four-dimensional brane can be the same (in some limit) as the gravitational interaction which is provided by the established and well-tested gravitational theory of General Relativity. Hence, it is absolutely natural to wonder about the behaviour of 5-dimensional black holes in the context of RS models. 
\par The first attempt to find a black hole solution in the aforementioned scenario was in \cite{PhysRevD.61.065007}. The line-element which was considered is of the following form:
\eq$\label{loc1}
ds^2=e^{2A(y)}\left[-\left(1-\frac{2M}{r}\right)dt^2+\left(1-\frac{2M}{r}\right)^{-1}dr^2+r^2\left(d\theta^2+\sin^2\theta\ d\varphi^2\right)\right]+dy^2$
where $M$ is a constant quantity and represents the mass of the black hole. As stated in the previous chapter, the coordinate $y$ is used for the extra dimension and $A(y)$ denotes the warp factor. For $A(y)=-k|y|$ we get the warp factor of the RS model. It is quite obvious that for $y=0$ the induced 4-dimensional line-element is identified with the well-known Schwarzschild spacetime geometry. However, a 5-dimensional observer will not be able to associate the complete 5-dimensional line-element of Eq.\eqref{loc1} with a regular black hole. The previous statement can be easily verified by the Kretschmann scalar $K\equiv R_{MNKL}R^{MNKL}$ that emanates from  the line-element \eqref{loc1}.
\eq$\label{loc1a}
K\equiv R_{MNKL}R^{MNKL}=8 \left[2 A''(y)^2+5 A'(y)^4+4 A'(y)^2 A''(y)+\frac{6 M^2 e^{-4 A(y)}}{r^6}\right]$
The last term of Eq.\eqref{loc1a} reveals the existence of a singularity at $r=0$ which extends from $-\infty$ to $+\infty$ along the extra dimension $y$. Hence, the line-element of Eq.\eqref{loc1} generates a \emph{black string} rather than a black hole. Of course, for any descending function $A(y)$ the last term of Eq.\eqref{loc1a} becomes even more problematic, because when $y\ra \infty$ the quantity $e^{-4A(y)}$ goes to infinity as well. This particular behaviour is in complete contrast to the purpose of the RS model, which intends to keep gravity localized near the brane at $y=0$.
\par There is a plethora of attempts in the literature that try to derive a 5-dimensional black hole which is localized close to the brane (we mention some of them: \cite{DADHICH20001,PhysRevD.66.104028,PhysRevD.65.084010,PhysRevD.68.124001,PhysRevD.65.084040,PhysRevD.70.064007,PhysRevD.73.064014,Creek:2006je,CASADIO2012251,PhysRevD.90.047902,Kuerten:2016cho,0264-9381-33-1-015003}. Additionally, there are numerical solutions which describe both mini and large brane-world black holes in the context of RS models and beyond (i.e. 6-dimensional black holes) \cite{PhysRevD.68.024035,PhysRevD.69.104019,PhysRevLett.107.081101,Abdolrahimi:2012pb}. However, an analytical solution which can be written in closed form has not been found, thus, the investigation of such a solution is still incomplete. Finally, it is important to refer that in \cite{Emparan:1999wa,Emparan:1999fd,Anber:2008qu} black hole solutions were found which are localized on a 2-brane embedded in a (3+1)-dimensional bulk. Therefore, the motivation to search for a corresponding solution for the higher dimensional problem (3-brane in a (4+1)-dimensional bulk) is completely justified.
\par This Chapter and the rest of the thesis is entirely focused on the examination of the existence of a localized black hole solution in the context of a brane-world model which is similar to the RS2 model. The word ``similar" was used because the warp factor $A(y)$ is allowed to differ from the expression $-k|y|$ (but it is necessary to be a descending function, as it is in the RS models) in order to increase the chances of finding a localized 5-dimensional black hole solution. In the following section we present the geometrical background that is going to be used in the framework of this thesis.

\section{The Geometry of the 5-Dimensional Spacetime}
\par We assume that the 4-dimensional part of the general 5-dimensional geometrical background is a Vaidya metric\footnote{See \hyperref[vaid]{Appendix D} for more information about Vaidya metric.}. We also consider that mass is a function of three variables $m=m(v,r,y)$ (motivated by \cite{0264-9381-33-1-015003}). The non-trivial dependence of the mass from the extra dimension $y$ is needed in order to counter the anomalous effect of the quantity $e^{-4A(y)}$ in the last term of equation \eqref{loc1a}. An appropriate mass function that decreases faster than $e^{-4A(y)}$ increases and has also the necessary r-dependence in order to describe a black hole, would solve the localization problem. The drawback of this non-trivial assumption of the mass parameter is that it demands the existence of a bulk matter distribution in order to get consistent field equations. The bulk matter distribution that it is going to be considered in the context of this thesis will be presented later in this Chapter. Thus, the line element is of the form
\eq$\label{loc.1}
ds^2=e^{2A(y)}\left[-\left(1-\frac{2m(v,r,y)}{r}\right)dv^2+2dvdr+r^2(d\theta^2+\sin^2\theta\  d\pphi^2)\right]+dy^2$
It is easily deducible that the covariant components of the metric tensor in matrix form are:
\eq$\label{loc.2}
(g_{MN})=\left(
\begin{array}{ccccc}
 -e^{2 A(y)} \left(1-\frac{2 m(u,r,y)}{r}\right) & e^{2 A(y)} & 0 & 0 & 0 \\
 e^{2 A(y)} & 0 & 0 & 0 & 0 \\
 0 & 0 & e^{2 A(y)} r^2 & 0 & 0 \\
 0 & 0 & 0 & e^{2 A(y)} r^2 \sin ^2(\theta ) & 0 \\
 0 & 0 & 0 & 0 & 1 \\
\end{array}
\right)$
Thus, the contravariant components of the metric tensor are of the following form:
\eq$\label{loc.3}
(g^{MN})=\left(
\begin{array}{ccccc}
 0 & e^{-2 A(y)} & 0 & 0 & 0 \\
 e^{-2 A(y)} & e^{-2 A(y)}\left(1-\frac{2 m(u,r,y)}{r}\right) & 0 & 0 & 0 \\
 0 & 0 & \frac{e^{-2 A(y)}}{r^2} & 0 & 0 \\
 0 & 0 & 0 & \frac{e^{-2 A(y)}}{r^2\sin^2\theta} & 0 \\
 0 & 0 & 0 & 0 & 1 \\
\end{array}
\right)$ 
\par For the metric ansatz \eqref{loc.1}, the quantities that are invariant and also contain the complete information of the 5-dimensional curvature, are $R=g_{MN}R^{MN}$, $R_{MN}R^{MN}$ and $R_{ABCD}R^{ABCD}$. As it is mentioned earlier, an acceptable solution for the mass function $m=m(v,r,y)$ should have an appropriate dependence on the y-coordinate in order to expunge any singularities in the bulk. Hence, the desirable solution should yield to a regular Anti-de Sitter space-time at a finite distance from the brane and a 5-dimensional black hole localized on the brane, otherwise the solution is rejected.
\par The Christoffel symbols can be evaluated with the use of equations \eqref{loc.2}, \eqref{loc.3} and the following relation:
\eq$\label{loc3a}
\Gam^K{}_{MN}=\frac{1}{2}g^{KL}\left(g_{ML,N}+g_{NL,M}-g_{MN,L}\right)$
In the table that follows are depicted the non-zero Christoffel symbols.
\begin{empheq}[box=\mymath]{equation}\label{loc3b}
\begin{array}{ccc}
\Gam^0{}_{00}=\frac{m-r \pa_r m}{r^2} & \Gam^0{}_{04}=\Gam^0{}_{40}=A' & \Gam^0{}_{22}=-r \\ \\ \Gam^0{}_{33}=-r \sin ^2\theta & \Gam^1{}_{00}=\frac{r^2 \pa_v m-(r-2 m) \left(r \pa_r m-m\right)}{r^3} & \Gam^1{}_{01}=\Gam^1{}_{10}=\frac{r \pa_rm-m}{r^2}\\ \\
\Gam^1{}_{04}=\Gam^1{}_{40}=\frac{\pa_ym}{r} & \Gam^1{}_{14}=\Gam^1{}_{41}=A' & \Gam^1{}_{22}=2 m-r \\ \\
\Gam^1{}_{33}=(2 m-r)\sin ^2\theta & \Gam^2{}_{12}=\Gam^2{}_{21}=\frac{1}{r} & \Gam^2{}_{24}=\Gam^2{}_{42}=A'\\ \\
\Gam^2{}_{33}=-\sin \theta \cos \theta & \Gam^3{}_{13}=\Gam^3{}_{31}=\frac{1}{r} & \Gam^3{}_{23}=\Gam^3{}_{32}=\cot \theta \\ \\
\Gam^3{}_{34}=\Gam^3{}_{43}=A' & \Gam^4{}_{00}=\frac{e^{2 A} \left(A' (r-2 m)-\pa_ym\right)}{r} & \Gam^4{}_{01}=\Gam^4{}_{10}=-e^{2 A} A'\\ \\
\Gam^4{}_{22}=-r^2 e^{2 A} A' & \Gam^4{}_{33}=-r^2A' e^{2 A} \sin ^2\theta & \end{array}
\end{empheq}
Subsequently, we present the 5-dimensional curvature invariant quantities.
\begin{empheq}[box=\mymath]{equation}\label{loc.4}
R=R_{MN}g^{MN}=-20 A'^2-8 A''+\frac{2 e^{-2 A}}{r} \left(\pa_r^2m+\frac{2\pa_rm}{r}\right)
\end{empheq}
\begin{empheq}[box=\mymath]{equation}\label{loc.5}
\begin{array}{l}R_{MN}R^{MN}=80A'^2+64A'^2 A''+20A''^2-\displaystyle{\frac{4 e^{-2 A}}{r}}\left(\pa_r^2m+\frac{2\pa_rm}{r}\right)(4A'^2+A'')\\
\hspace{5em}+\displaystyle{\frac{2e^{-4A}}{r^2}}\left[(\pa_r^2m)^2+\frac{4 (\pa_rm)^2}{r^2}\right]
\end{array}
\end{empheq}
\begin{empheq}[box=\mymath]{equation}\label{loc.6}
\begin{array}{c}\scriptstyle{R^{ABCD}R_{ABCD}}=\scriptstyle{40A'^4+32A'^2A''+16A''^2+}\scriptstyle{\frac{48e^{-4A}m^2}{r^6}}-\frac{8e^{-2A}A'^2}{r}\left(\pa_r^2m+\frac{2\pa_rm}{r}\right)\\ \\
\hspace{6em}\scriptstyle{+\frac{4e^{-4A}}{r^2}}\left[(\pa_r^2m)^2+\frac{4m}{r^2}\left(\pa_r^2m-\frac{4\pa_rm}{r}\right)-\frac{4\pa_rm\pa_r^2m}{r}+\frac{8(\pa_r m)^2}{r}\right]
\end{array}
\end{empheq}\vspace{0.5em}
\par The non-zero covariant components of the Einstein tensor $G_{MN}$ are:\vspace{0.5em}
\begin{empheq}[box=\mymath]{equation}\label{loc.8}
\begin{array}{c}
G_{00}=\frac{-e^{2 A} \left[3 (r-2 m) \left(2 A'^2+A''\right)+4 A' \pa_ym+\pa_y^2m\right] r^2+2 r\pa_vm +2(r-2 m)\pa_r m}{r^3}\\ \\
G_{01}=G_{10}=\displaystyle{3 e^{2 A} \left(2 A'^2+A''\right)-\frac{2 \pa_rm}{r^2}}\\ \\
G_{04}=G_{40}=\displaystyle{\frac{\pa_ym+r\pa_r\pa_y m}{r^2}} \\ \\
G_{22}=\displaystyle{r \left[3 e^{2 A} r \left(2 A'^2+A''\right)-\pa_r^2m\right]} \\ \\
G_{33}=\displaystyle{r \sin ^2\theta \left[3 e^{2 A} r \left(2 A'^2+A''\right)-\pa_r^2m\right]} \\ \\
G_{44}=\displaystyle{6 A'^2-\frac{e^{-2 A} \left(2 \pa_rm+r\pa_r^2 m\right)}{r^2}}
\end{array}
\end{empheq}
while the mixed components of the Einstein tensor $G^M{}_N=g^{MK}G_{KN}$ are the following:
\begin{empheq}[box=\mymath]{equation}\label{loc.7}
\begin{array}{c}
G^0{}_0=\displaystyle{6 A'^2+3 A''-\frac{2 e^{-2 A} \pa_rm}{r^2}} \\ \\
G^1{}_0=\displaystyle{-\frac{4 r A' \pa_ym+r \pa_y^2m-2 e^{-2 A} \pa_vm}{r^2}}\\ \\
G^1{}_1=\displaystyle{6 A'^2+3 A''-\frac{2 e^{-2 A}\pa_r m}{r^2}} \\ \\
G^1{}_4=\displaystyle{\frac{e^{-2 A} \left(\pa_ym+r \pa_r\pa_ym\right)}{r^2}} \\ \\
G^2{}_2=\displaystyle{6 A'^2+3 A''-\frac{e^{-2 A}\pa_r^2 m}{r}}\\ \\
G^3{}_3=\displaystyle{6 A'^2+3 A''-\frac{e^{-2 A} \pa_r^2m}{r}}\\ \\
G^4{}_0=\displaystyle{\frac{\pa_ym+r \pa_r\pa_ym}{r^2}} \\ \\
G^4{}_4=\displaystyle{6 A'^2-\frac{e^{-2 A} \left(2 \pa_rm+r\pa_r^2 m\right)}{r^2}}
\end{array}
\end{empheq}
\par A detailed presentation of all the geometrical quantities i.e. Christoffel symbols, Riemann tensor, Ricci tensor and Einstein tensor, takes place in \hyperref[geom]{Appendix E}. Moreover, in the same Chapter of Appendices, one can find the proper mathematica commands in order to verify the validity of all the aforementioned quantities.
\par In the following section, we introduce the field theory model in the context of which the existence of a viable solution to the localization problem of a 5-dimensional black hole will be examined.

\section{Two Non-Minimally Coupled and Interacting Scalar Fields}

\par We consider the most general case of two interacting scalar fields $\phi,\chi$ which are not minimally coupled to gravity. These scalar fields can freely propagate into the bulk. Subsequently, the action of this model is the following:
\eq$\label{loc.9}
S=\int d^4xdy\sqrt{-\gfv}\left[\frac{f(\phi,\chi)}{2\kfv}R-\frac{1}{2}(\nabla\phi)^2-\frac{1}{2}(\nabla\chi)^2-V(\phi,\chi)-\Lambda_B\right]$
where $\gfv=\det[(g_{MN})]$. The scalar fields $\phi$, $\chi$ are both functions of the variables $(v,r,y)$ and $\Lambda_B$ denotes the cosmological constant of the bulk, it is the same constant as $\Lambda_5$ which was used in Chapter 2. The field equations of this model can be derived by varying the action with respect to the metric tensor and also with respect to the fields $\phi,\chi$. The steps that are going to be followed for the variation of the above action are the same as the steps which are followed in \hyperref[var]{Appendix F}, where it is thoroughly presented the variation of a single scalar field which is non-minimally coupled to gravity. Thus, we have:
\begin{align}\label{loc.10}
\del S=0&=\int d^4xdy\ \del\left(\sqrt{-\gfv}\right)\left[\frac{f(\phi,\chi)}{2\kfv}R-\frac{1}{2}(\nabla\phi)^2-\frac{1}{2}(\nabla\chi)^2-V(\phi,\chi)-\Lambda_B\right]\nonum \\
&\ \ \ \ +\int d^4xdy\sqrt{-\gfv}\left\{\frac{f(\phi,\chi)}{2\kfv}\del R-\frac{1}{2}\del[(\nabla\phi)^2]-\frac{1}{2}\del[(\nabla\chi)^2]\right\}
\end{align}
Moreover, it is
\eq$\label{loc.11}
\del[(\nabla\phi)^2]=\del(\nabla^M\phi\nabla_M\phi)=\del(g^{MN}\nabla_M\phi\nabla_N\phi)=\nabla_M\phi\nabla_N\phi\ \del g^{MN}$
Using equations \eqref{loc.11}, \eqref{C.13}, \eqref{C.28} and \eqref{C.36} into \eqref{loc.10} we get:
\begin{align}
0&=\int d^4xdy\sqrt{-\gfv}\ \del g^{MN}\left\{-\frac{1}{2}g_{MN}\left[\frac{f(\phi,\chi)}{2\kfv}R-\frac{1}{2}(\nabla\phi)^2-\frac{1}{2}(\nabla\chi)^2-V(\phi,\chi)-\Lambda_B\right]\right.\nonum\\
&\ \ \ \ \left.+\frac{1}{2\kfv}f(\phi,\chi)R_{MN}-\frac{1}{2}\nabla_M\nabla_Nf(\phi,\chi)+\frac{1}{2}g_{MN}\square f(\phi,\chi)-\frac{1}{2}\nabla_M\phi\nabla_N\phi-\frac{1}{2}\nabla_M\chi\nabla_N\chi\right\}\Ra\nonum \\ \nonum\\
0&=\frac{f(\phi,\chi)}{\kfv}\left(R_{MN}-\frac{1}{2}g_{MN}R\right)+g_{MN}\left[\frac{(\nabla\phi)^2}{2}+\frac{(\nabla\chi)^2}{2}+V(\phi,\chi)\right]+g_{MN}\Lambda_B-\nabla_M\phi\nabla_N\phi\nonum\\
&\ \ \ \ -\nabla_M\chi\nabla_N\chi-\nabla_M\nabla_Nf(\phi,\chi)+g_{MN}\square f(\phi,\chi)\Ra\nonum
\end{align}\\
\eq$\label{loc.12}
\boxed{f(\phi,\chi)\left(R_{MN}-\frac{1}{2}g_{MN}R\right)=f(\phi,\chi)\ G_{MN}=\kfv\left(T_{MN}-g_{MN}\Lambda_B\right)}$
where
\eq$\label{loc.13}
\boxed{T_{MN}=\nabla_M\phi\nabla_N\phi+\nabla_M\chi\nabla_N\chi-g_{MN}\left[\frac{(\nabla\phi)^2}{2}+\frac{(\nabla\chi)^2}{2}+V\right]+\nabla_M\nabla_Nf-g_{MN}\square f}$
\par The above field equations are more convenient for calculations when they are expressed in terms of mixed tensor components. Hence, we have
\begin{empheq}[box=\mymath]{equation}\label{loc.14}
f(\phi,\chi)\ G^M{}_N=T^M{}_N-\del^M{}_N\Lambda_B
\end{empheq}
where we have absorbed the constant $\kfv$ inside the function $f(\phi,\chi)$ and
\begin{empheq}[box=\mymath]{equation}\label{loc.15}
\begin{array}{l}
T^M{}_N=\nabla^M\phi\nabla_N\phi+\nabla^M\chi\nabla_N\chi-\del^M{}_N\left[\frac{(\nabla\phi)^2}{2}+\frac{(\nabla\chi)^2}{2}+V\right]+\nabla^M\nabla_Nf-\del^M{}_N\square f\\ \\
\hspace{2.55em}=\pa^M\phi\pa_N\phi+\pa^M\chi\pa_N\chi-\del^M{}_N\left[\frac{(\pa\phi)^2}{2}+\frac{(\pa\chi)^2}{2}+V\right]+\nabla^M\nabla_Nf-\del^M{}_N\square f
\end{array}
\end{empheq}
\par The variation of the action \eqref{loc.9} with respect to the fields $\phi, \chi$ (as it is done in the second section of \hyperref[var]{Appendix F}) provides us with two additional equations that should be satisfied in order to have an acceptable solution for the mass function and the scalar fields. In complete analogy to the variation method of section F.2, it is straightforward to derive the following expressions:
\begin{empheq}[box=\mymath]{equation}\label{loc20}
\sqrt{-\gfv}\left(\frac{1}{2}\frac{\pa f}{\pa\phi}R-\frac{\pa V}{\pa\phi}\right)=-\pa_M\left(\sqrt{-\gfv}\ g^{MN}\pa_N\phi\right)
\end{empheq}
\begin{empheq}[box=\mymath]{equation}\label{loc21}
\sqrt{-\gfv}\left(\frac{1}{2}\frac{\pa f}{\pa\chi}R-\frac{\pa V}{\pa\chi}\right)=-\pa_M\left(\sqrt{-\gfv}\ g^{MN}\pa_N\chi\right)
\end{empheq}

\par In order to derive the independent field equations that are resulting from relations \eqref{loc.14} and \eqref{loc.15} we firstly need to evaluate the mixed components of the energy-momentum tensor $T^M{}_N$. In \hyperref[enmom]{Appendix G} there is a comprehensive evaluation of the components of the energy-momentum tensor given by Eq.\eqref{loc.15}, thus, it is redundant to repeat any of these calculations. Subsequently, the non-zero components of the energy-momentum tensor are depicted in their compact form.
\begin{empheq}[box=\mymath]{align}\label{loc.93}
T^0{}_1&=e^{-2A}[(\pa_1\phi)^2+(\pa_1\chi)^2+\pa_1^2f]\nonum\\ \nonum\\
T^1{}_0&=e^{-2A}\left[(\pa_0\phi)^2+(\pa_0\chi)^2+\pa_0^2f+\left(1-\frac{2m}{r}\right)(\pa_1\pa_0\phi+\pa_1\chi\pa_0\chi+\pa_1\pa_0f)\right.\nonum\\
&\hspace{5em}\left. +\pa_1\left(\frac{m}{r}\right)\pa_0f-\pa_0\left(\frac{m}{r}\right)\pa_1f+e^{2A}\pa_4\left(\frac{m}{r}\right)\pa_4f\right]\nonum\\ \nonum\\
T^4{}_0&=\pa_4\phi\pa_0\phi+\pa_4\chi\pa_0\chi+\pa_4\pa_0f-A'\pa_0f-\frac{\pa_4m}{r}\pa_1f\nonum\\ \nonum\\
T^0{}_4&=e^{-2A}(\pa_1\phi\pa_4\phi+\pa_1\chi\pa_4\chi+\pa_1\pa_4f-A'\pa_1f)\nonum\\ \nonum\\
T^4{}_1&=e^{2A}T^0{}_4\nonum\\ \nonum\\
T^1{}_4&=e^{-2A}T^4{}_0+\left(1-\frac{2m}{r}\right)T^0{}_4\nonum\\ \nonum\\
T^0{}_0&=e^{-2A}\left[\pa_1\phi\pa_0\phi+\pa_1\chi\pa_0\chi+\pa_1\pa_0f-\pa_1\left(\frac{m}{r}\right)\pa_1f\right]+A'\pa_4f+\lagr-\square f\\ \nonum\\
T^1{}_1&=T^0{}_0+\left(1-\frac{2m}{r}\right)T^0{}_1\nonum\\ \nonum\\
T^2{}_2&=\frac{e^{-2A}}{r}\left[\pa_0f+\left(1-\frac{2m}{r}\right)\pa_1f\right]+A'\pa_4f+\lagr-\square f\nonum\\ \nonum\\
T^3{}_3&=T^2{}_2\nonum\\ \nonum\\
T^4{}_4&=(\pa_4\phi)^2+(\pa_4\chi)^2+\pa^2_4f+\lagr-\square f\nonum\\ \nonum\\
\lagr&=-\frac{e^{-2A}}{2}\left\{2(\pa_1\phi\pa_0\phi+\pa_1\chi\pa_0\chi)+\left(1-\frac{2m}{r}\right)[(\pa_1\phi)^2+(\pa_1\chi)^2]\right\}\nonum\\
&\hspace{5em}-\frac{1}{2}[(\pa_4\phi)^2+(\pa_4\chi)^2]-V(\phi,\chi)\nonum\\ \nonum\\
\square f&=e^{-2A}\pa_0\pa_1f+\frac{e^{-2A}}{r^2}\pa_1\left[r^2\pa_0f+r^2\left(1-\frac{2m}{r}\right)\pa_1f\right]+e^{-4A}\pa_4\left(e^{4A}\pa_4f\right)\nonum
\end{empheq}

\newpage
\section{The Field Equations of the Theory}
\par Having in our disposal both the components of the Einstein tensor from equation \eqref{loc.7} and the components of the energy-momentum tensor given by equation \eqref{loc.93}, the independent field equations can immediately ensue with the use of equation \eqref{loc.14}. It is clear that the independent field equations are the following.
\begin{center}
\underline{\large{Equation $(^0{}_1)$:}}\\
\end{center}
\eq$\label{loc.95}
\eqref{loc.14}\xRightarrow[\eqref{loc.93}]{\eqref{loc.7}}(\pa_1\phi)^2+(\pa_1\chi)^2+\pa_1^2f=0$
\par Substituting $\pa_1^2f$ from equation \eqref{loc.18} into \eqref{loc.95} we get\\
\eq$\label{loc.96}
(1+\pa_\phi^2f)(\pa_1\phi)^2+(1+\pa_\chi^2f)(\pa_1\chi)^2+2\pa_\chi\pa_\phi f\pa_1\chi\pa_1\phi+\pa_\phi f\pa_1^2\phi+\pa_\chi f\pa_1^2\chi=0$\\
\begin{center}
\underline{\large{Equation $(^1{}_0)$:}}\\
\end{center}
\begin{gather}
\eqref{loc.14}\xRightarrow[\eqref{loc.93}]{\eqref{loc.7}}(\pa_0\phi)^2+(\pa_0\chi)^2+\pa_0^2f+\left(1-\frac{2m}{r}\right)(\pa_1\phi\pa_0\phi+\pa_1\chi\pa_0\chi+\pa_1\pa_0f)+\pa_1\left(\frac{m}{r}\right)\pa_0f\nonum\\
\label{loc.97}
-\pa_0\left(\frac{m}{r}\right)\pa_1f+e^{2A}\pa_4\left(\frac{m}{r}\right)\pa_4f=f\left[\frac{2}{r^2}\pa_0m-\frac{e^{2A}}{r}(\pa_4^2m+4A'\pa_4m)\right]
\end{gather}
\par The substitution of quantities $\pa_0^2f$ and $\pa_1\pa_0f$ from equations \eqref{loc.24} and \eqref{loc.25} respectively into equation \eqref{loc.97} yields to
\begin{gather}
e^{-2A}\left\{(1+\pa_\phi^2f)(\pa_0\phi)^2+(1+\pa_\chi^2f)(\pa_0\chi)^2+2\pa_\chi\pa_\phi f\pa_0\phi\pa_0\chi+\pa_\phi f\pa_0^2\phi+\pa_\chi f\pa_0^2\chi \right.\nonum\\
\hspace{2em}+\left(1-\frac{2m}{r}\right)[(1+\pa_\phi^2f)\pa_1\phi\pa_0\phi+(1+\pa_\chi^2f)\pa_1\chi\pa_0\chi+\pa_\phi\pa_\chi f(\pa_1\phi\pa_0\chi+\pa_1\chi\pa_0\phi)+\pa_\phi f\pa_1\pa_0\phi+\pa_\chi f\pa_1\pa_0\chi]\nonum\\
\left.+\left(\frac{\pa_1m}{r}-\frac{m}{r^2}\right)(\pa_\phi f\pa_0\phi+\pa_\chi f\pa_0\chi)-\frac{\pa_0m}{r}(\pa_\phi f\pa_1\phi+\pa_\chi f\pa_1\chi)+e^{2A}\frac{\pa_4m}{r}(\pa_\phi f\pa_4\phi+\pa_\chi f\pa_4\chi)\right\}=\nonum\\
\label{loc.98}
=f\left[\frac{2}{r^2}\pa_0m-\frac{e^{2A}}{r}(\pa_4^2m+4A'\pa_4m)\right]
\end{gather}\\

\begin{center}
\underline{\large{Equation $(^4{}_0)$:}}\\
\end{center}
\eq$\eqref{loc.14}\xRightarrow[\eqref{loc.93}]{\eqref{loc.7}}\label{loc.99}
\pa_4\phi\pa_0\phi+\pa_4\chi\pa_0\chi+\pa_4\pa_0f-A'\pa_0f-\frac{\pa_4m}{r}\pa_1f=\frac{f}{r}\left(\frac{\pa_4m}{r}+\pa_1\pa_4m\right)$
The last equation in its extended form is:
\begin{gather}
(1+\pa_\phi^2f)\pa_4\phi\pa_0\phi+(1+\pa_\chi^2f)\pa_4\chi\pa_0\chi+\pa_\chi\pa_\phi f(\pa_4\chi\pa_0\phi+\pa_4\phi\pa_0\chi)+\pa_\phi f\pa_4\pa_0\phi+\pa_\chi f\pa_4\pa_0\chi\nonum\\
\label{loc.100}
-A'\pa_0f-\frac{\pa_4m}{r}(\pa_\phi f\pa_1\phi+\pa_\chi f\pa_1\chi)=\frac{f}{r}\left(\frac{\pa_4m}{r}+\pa_1\pa_4m\right)
\end{gather}\\

\begin{center}
\underline{\large{Equation $(^0{}_4)$:}}\\
\end{center}
\eq$\label{loc.101}
\eqref{loc.14}\xRightarrow[\eqref{loc.93}]{\eqref{loc.7}}\pa_1\phi\pa_4\phi+\pa_1\chi\pa_4\chi+\pa_1\pa_4f-A'\pa_1f=0$
\par Using equation \eqref{loc.35} into equation \eqref{loc.101} we obtain
\begin{gather}
(1+\pa_\phi^2f)\pa_1\phi\pa_4\phi+(1+\pa_\chi^2f)\pa_1\chi\pa_4\chi+\pa_\phi\pa_\chi f(\pa_1\chi\pa_4\phi+\pa_1\phi\pa_4\chi)+\pa_\phi f\pa_1\pa_4\phi\nonum\\
\label{loc.102}+\pa_\chi f\pa_1\pa_4\chi-A'(\pa_\phi f\pa_1\phi+\pa_\chi f\pa_1\chi)=0
\end{gather}\\
\begin{center}
\underline{\large{Equation $(^0{}_0)$:}}\\
\end{center}
\begin{gather}
\eqref{loc.14}\xRightarrow[\eqref{loc.93}]{\eqref{loc.7}}e^{-2A}\left[\pa_1\phi\pa_0\phi+\pa_1\chi\pa_0\chi+\pa_1\pa_0f-\pa_1\left(\frac{m}{r}\right)\pa_1f\right]+A'\pa_4f+\lagr-\square f-\Lambda_B=\nonum\\
\label{loc.103}
=f\left(6A'^2+3A''-\frac{2e^{-2A}}{r^2}\pa_1m\right)\end{gather}
\par Equation \eqref{loc.103} is equivalent to
\begin{gather}
e^{-2A}\left[(1+\pa_\phi^2f)\pa_1\phi\pa_0\phi+(1+\pa_\chi^2f)\pa_1\chi\pa_0\chi+\pa_\chi\pa_\phi f(\pa_1\chi\pa_0\phi+\pa_1\phi\pa_0\chi)+\pa_\phi f\pa_1\pa_0\phi+\pa_\chi f\pa_1\pa_0\chi\right.\nonum\\
\left.+\frac{\pa_\phi f\pa_1\phi+\pa_\chi f\pa_1\chi}{r}\left(\frac{m}{r}-\pa_1m\right)\right]+A'(\pa_\phi f\pa_4\phi+\pa_\chi f\pa_4\chi)+\lagr-\square f-\Lambda_B=\nonum\\
\label{loc.104}
=f\left(6A'^2+3A''-\frac{2e^{-2A}}{r^2}\pa_1m\right)
\end{gather}\\
\begin{center}
\underline{\large{Equation $(^2{}_2)$:}}\\
\end{center}
\begin{gather}
\eqref{loc.14}\xRightarrow[\eqref{loc.93}]{\eqref{loc.7}}\frac{e^{-2A}}{r}\left[\pa_0f+\left(1-\frac{2m}{r}\right)\pa_1f\right]+A'\pa_4f+\lagr-\square f-\Lambda_B=\nonum\\
\label{loc.105}
=f\left(6A'^2+3A''-\frac{e^{-2A}}{r}\pa_1^2m\right)
\end{gather}
\par Expanding the partial derivatives of the function $f=f(\phi,\chi)$ we get
\begin{gather}
\frac{e^{-2A}}{r}\left[(\pa_\phi f\pa_0\phi+\pa_\chi f\pa_0\chi)+\left(1-\frac{2m}{r}\right)(\pa_\phi f\pa_1\phi+\pa_\chi f\pa_1\chi)\right]+A'(\pa_\phi f\pa_4\phi+\pa_\chi f\pa_4\chi)\nonum\\
\label{loc.106}
+\lagr-\square f-\Lambda_B=f\left(6A'^2+3A''-\frac{e^{-2A}}{r}\pa_1^2m\right)\end{gather}\\
\begin{center}
\underline{\large{Equation $(^4{}_4)$:}}\\
\end{center}
\eq$\label{loc.107}
\eqref{loc.14}\xRightarrow[\eqref{loc.93}]{\eqref{loc.7}}(\pa_4\phi)^2+(\pa_4\chi)^2+\pa^2_4f+\lagr-\square f-\Lambda_B=f\left(6A'^2-\frac{e^{-2A}}{r}\pa_1^2m-\frac{2e^{-2A}}{r^2}\pa_1m\right)$\\
The combination of equations \eqref{loc.63} and \eqref{loc.107} gives
\begin{gather}
(1+\pa_\phi^2f)(\pa_4\phi)^2+(1+\pa_\chi^2f)(\pa_4\chi)^2+2\pa_\chi\pa_\phi f\pa_4\chi\pa_4\phi+\pa_\phi f\pa_4^2\phi+\pa_\chi f\pa_4^2\chi+\lagr-\square f-\Lambda_B=\nonum\\
\label{loc.108}
=f\left(6A'^2-\frac{e^{-2A}}{r}\pa_1^2m-\frac{2e^{-2A}}{r^2}\pa_1m\right)\end{gather}
\par Instead of using equations \eqref{loc.103}, \eqref{loc.105}, \eqref{loc.107} and the corresponding extended equations \eqref{loc.104}, \eqref{loc.106} and \eqref{loc.108}, which contain the terms $\lagr$ and $\square f$ that add extra complexity, we can be exempted from these terms by combining the aforementioned equations with each other.\\
\begin{center}
\underline{\large{Equation $(^0{}_0)-$Equation $(^2{}_2)$:}}\\
\end{center}
\par Subtracting equation \eqref{loc.105} from \eqref{loc.103} we get:\\
\eq$\label{loc.109}
r(\pa_1\phi\pa_0\phi+\pa_1\chi\pa_0\chi+\pa_1\pa_0f)-\pa_0f-\pa_1f\left(\pa_1m+1-\frac{3m}{r}\right)=f\left(\pa_1^2m-\frac{2}{r}\pa_1m\right)$
\par Correspondingly, from equations \eqref{loc.106} and \eqref{loc.104} one obtains
\begin{gather}
r[(1+\pa_\phi^2f)\pa_1\phi\pa_0\phi+(1+\pa_\chi^2f)\pa_1\chi\pa_0\chi+\pa_\phi\pa_\chi f(\pa_1\chi\pa_0\phi+\pa_1\phi\pa_0\chi)+\pa_\phi f\pa_1\pa_0\phi+\pa_\chi f\pa_1\pa_0\chi] \nonum\\
\label{loc.110}
-(\pa_\phi f\pa_0\phi+\pa_\chi f\pa_0\chi)-\left(\pa_1m+1-\frac{3m}{r}\right)(\pa_\phi f\pa_1\phi+\pa_\chi f\pa_1\chi)=f\left(\pa_1^2m-\frac{2}{r}\pa_1m\right)
\end{gather}\\
\begin{center}
\underline{\large{Equation $(^0{}_0)-$Equation $(^4{}_4)$:}}\\
\end{center}
\par In the same way, subtracting equation \eqref{loc.107} from \eqref{loc.103} and \eqref{loc.108} from \eqref{loc.104} we respectively have:
\begin{gather}
e^{-2A}\left[\pa_1\phi\pa_0\phi+\pa_1\chi\pa_0\chi+\pa_1\pa_0f-\pa_1\left(\frac{m}{r}\right)\pa_1f\right]+A'\pa_4f-(\pa_4\phi)^2-(\pa_4\chi)^2-\pa_4^2f=\nonum\\
\label{loc.111}
=f\left(3A''+\frac{e^{-2A}}{r}\pa_1^2m\right)
\end{gather}
\begin{gather}
e^{-2A}[(1+\pa_\phi^2f)\pa_1\phi\pa_0\phi+(1+\pa_\chi^2f)\pa_1\chi\pa_0\chi+\pa_\phi\pa_\chi f(\pa_1\chi\pa_0\phi+\pa_1\phi\pa_0\chi)+\pa_\phi f\pa_1\pa_0\phi+\pa_\chi f\pa_1\pa_0\chi\nonum\\
-\pa_1\left(\frac{m}{r}\right)(\pa_\phi f\pa_1\phi+\pa_\chi f\pa_1\chi)]+A'(\pa_\phi f\pa_4\phi+\pa_\chi f\pa_4\chi)-(1+\pa_\phi^2f)(\pa_4\phi)^2-(1+\pa_\chi^2f)(\pa_4\chi)^2\nonum\\
\label{loc.112}
-2\pa_\phi\pa_\chi f\pa_4\phi\pa_4\chi-\pa_\phi f\pa_4^2\phi-\pa_\chi f\pa_4^2\chi=f\left(3A''+\frac{e^{-2A}}{r}\pa_1^2m\right)
\end{gather}\\
\begin{center}
\underline{\large{Equation $(^2{}_2)-$Equation $(^4{}_4)$:}}\\
\end{center}
\par Finally, the last field equation that has no dependence on the quantities $\lagr$ and $\square f$ can be provided by equations \eqref{loc.105} and \eqref{loc.107} or equations \eqref{loc.106} and \eqref{loc.108}. We respectively obtain
\eq$\label{loc.113}
\frac{e^{-2A}}{r}\left[\pa_0f+\left(1-\frac{2m}{r}\right)\pa_1f\right]+A'\pa_4f-(\pa_4\phi)^2-(\pa_4\chi)^2-\pa_4^2f=f\left(3A''+\frac{2e^{-2A}}{r^2}\pa_1m\right)$\\
\begin{gather}
\frac{e^{-2A}}{r}\left[\pa_\phi f\pa_0\phi+\pa_\chi f\pa_0\chi+\left(1-\frac{2m}{r}\right)(\pa_\phi f\pa_1\phi+\pa_\chi f\pa_1\chi)\right]+A'(\pa_\phi f\pa_4\phi+\pa_\chi f\pa_4\chi)\nonum\\
-(1+\pa_\phi^2f)(\pa_4\phi)^2-(1+\pa_\chi^2f)(\pa_4\chi)^2-2\pa_\phi\pa_\chi f\pa_4\phi\pa_4\chi-\pa_\phi f\pa_4^2\phi\nonum\\
\label{loc.114}
-\pa_\chi f\pa_4^2\chi=f\left(3A''+\frac{e^{-2A}}{r}\pa_1^2m\right)\end{gather}\\
\par Of course, equation \eqref{loc.113} is not independent from equations \eqref{loc.109} and \eqref{loc.111}, the subtraction of Eq.\eqref{loc.109} from \eqref{loc.111} gives Eq.\eqref{loc.113}. Correspondingly, the subtraction of Eq.\eqref{loc.110} from Eq.\eqref{loc.112} gives Eq.\eqref{loc.114}. The reason that equations \eqref{loc.113} and \eqref{loc.114} were presented is that in some of the cases -which are going to be investigated in the next Chapter- these equations might be easier to be solved or they might provide more directly useful information about the mass function $m(v,r,y)$. In the context of the scalar field theory model, which was introduced at the beginning of this section, a solution to the localization problem -except from an appropriate expression for the mass function- requires also appropriate expressions for the scalar fields. A solution of the field equations that achieves to produce a mass function $m(v,r,y)$ which describes a 5-dimensional black hole and it is also localized close to the 3-brane at $y=0$, with the cost of producing functions for the scalar fields $\phi(v,r,y), \chi(v,r,y)$ that have an infinite value at $y\ra \infty$, cannot be accepted. Subsequently, we sum up the field equations. The field equations in their compact form are given by equations \eqref{loc.95}, \eqref{loc.97}, \eqref{loc.99}, \eqref{loc.101}, \eqref{loc.107}, \eqref{loc.109}, \eqref{loc.111} and \eqref{loc.113} (the last equation is not independent), while the field equations in their extended form are given by equations \eqref{loc.96}, \eqref{loc.98}, \eqref{loc.100}, \eqref{loc.102}, \eqref{loc.108}, \eqref{loc.110}, \eqref{loc.112} and \eqref{loc.114} (of course, the last equation is also not independent).
\begin{empheq}[box=\mymath]{gather}
\text{\textbf{\Large{Compact Form}}}\nonum\\ \nonum\\ 
\label{loc.115}
(\pa_r\phi)^2+(\pa_r\chi)^2+\pa_r^2f=0\\ \vspace{1em} \nonum\\
(\pa_v\phi)^2+(\pa_v\chi)^2+\pa_v^2f+\left(1-\frac{2m}{r}\right)(\pa_r\phi\pa_v\phi+\pa_r\chi\pa_v\chi+\pa_r\pa_vf)+\pa_r\left(\frac{m}{r}\right)\pa_vf\nonum\\
\label{loc.116}
-\pa_v\left(\frac{m}{r}\right)\pa_rf+e^{2A}\pa_y\left(\frac{m}{r}\right)\pa_yf=f\left[\frac{2}{r^2}\pa_vm-\frac{e^{2A}}{r}(\pa_y^2m+4A'\pa_ym)\right]\\ \vspace{1em} \nonum\\
\label{loc.117}
\pa_y\phi\pa_v\phi+\pa_y\chi\pa_v\chi+\pa_y\pa_vf-A'\pa_vf-\frac{\pa_ym}{r}\pa_rf=\frac{f}{r}\left(\frac{\pa_ym}{r}+\pa_r\pa_ym\right)\\ \vspace{1em}  \nonum\\
\label{loc.118}
\pa_r\phi\pa_y\phi+\pa_r\chi\pa_y\chi+\pa_r\pa_yf-A'\pa_rf=0\\ \vspace{1em}  \nonum\\
\label{loc118}
(\pa_y\phi)^2+(\pa_y\chi)^2+\pa^2_yf+\lagr-\square f-\Lambda_B=f\left(6A'^2-\frac{e^{-2A}}{r}\pa_r^2m-\frac{2e^{-2A}}{r^2}\pa_rm\right)\\ \vspace{1em}  \nonum\\
\label{loc.119}
r(\pa_r\phi\pa_v\phi+\pa_r\chi\pa_v\chi+\pa_r\pa_vf)-\pa_vf-\pa_rf\left(\pa_rm+1-\frac{3m}{r}\right)=f\left(\pa_r^2m-\frac{2}{r}\pa_rm\right)\\ \vspace{1em}  \nonum\\
e^{-2A}\left[\pa_r\phi\pa_v\phi+\pa_r\chi\pa_v\chi+\pa_r\pa_vf-\pa_r\left(\frac{m}{r}\right)\pa_rf\right]+A'\pa_yf-(\pa_y\phi)^2\nonum\\ 
\label{loc.120}
-(\pa_y\chi)^2-\pa_y^2f=f\left(3A''+\frac{e^{-2A}}{r}\pa_r^2m\right)\\ \vspace{1em}  \nonum\\
\frac{e^{-2A}}{r}\left[\pa_vf+\left(1-\frac{2m}{r}\right)\pa_rf\right]+A'\pa_yf-(\pa_y\phi)^2-(\pa_y\chi)^2\nonum\\
\label{loc.121}
-\pa_y^2f=f\left(3A''+\frac{2e^{-2A}}{r^2}\pa_rm\right)
\end{empheq}
Last but not least, there are also the following two equations for the scalar fields that are necessary to be satisfied:
\begin{empheq}[box=\mymath]{equation}\label{loc129}
\sqrt{-\gfv}\left(\frac{1}{2}\frac{\pa f}{\pa\phi}R-\frac{\pa V}{\pa\phi}\right)=-\pa_M\left(\sqrt{-\gfv}\ g^{MN}\pa_N\phi\right)
\end{empheq}
\begin{empheq}[box=\mymath]{equation}\label{loc130}
\sqrt{-\gfv}\left(\frac{1}{2}\frac{\pa f}{\pa\chi}R-\frac{\pa V}{\pa\chi}\right)=-\pa_M\left(\sqrt{-\gfv}\ g^{MN}\pa_N\chi\right)
\end{empheq}\\
The last two equations are going to bother us only in the case that we find appropriate functions for the quantities $A(y),\ m(v,r,y),\ \phi(v,r,y)$ and $\chi(v,r,y)$ which satisfy equations \eqref{loc.115}-\eqref{loc.121}. In this hypothetical scenario it would be necessary to verify if these two equations are satisfied as well.

\newpage
\blankpage

\newpage

\chapter{Solving the Field Equations}

\par In this Chapter, we will try to solve the field equations, meaning that we will try to determine from the field equations the mass function $m=m(v,r,y)$ and then the functions of the scalar fields $\phi(v,r,y)$, $\chi(v,r,y)$. As we already mentioned in the previous Chapter, the mass function should have a suitable dependence on the extra dimension $y$ in order to be able to constitute a 5-dimensional black hole that is localized on our 4-dimensional 3-brane (namely our universe). In order to achieve that, we are going to use equations \eqref{loc.115}-\eqref{loc.121}.
\par The aforementioned field equations resulted from the assumption that the scalar fields $\phi$, $\chi$ and consequently the coupling function $f=f(\phi,\chi)$ depend on the coordinates $(v,r,y)$. However, it is possible to consider simpler cases where one or both of the scalar fields depend on just one or two of the $(v,r,y)$ coordinates. It is also possible to consider cases in which either $\pa_\phi f$ or $\pa_\chi f$ equals to zero but not both of them simultaneously, because it is important to preserve the non-minimal coupling. As will be clear from the next pages of this Chapter, it is extremely difficult to find a suitable solution of the field equations within the framework of our field theory which could yield to a mass function $m=m(v,r,y)$ that has the desirable dependence on the extra dimension $y$. On the contrary, in most of these cases the field equations are not consistent with our assumptions.

\section{All Possible Cases}
\par We now present all the possible cases that were mentioned previously, starting from the simplest cases and ending to the most complicated ones.
\begin{enumerate}[I)]
\item \begin{center}\textbf{Both scalar fields depend on one coordinate} \end{center}
$$\begin{array}{r}
1)\ \ \{\phi=\phi(v),\ \chi=\chi(v)\}\\ \\ 
2)\ \ \{\phi=\phi(v),\ \chi=\chi(r)\}\\ \\ 
3)\ \ \{\phi=\phi(v),\ \chi=\chi(y)\}\\ \\
4)\ \ \{\phi=\phi(r),\ \chi=\chi(r)\}\\ \\
5)\ \ \{\phi=\phi(r),\ \chi=\chi(y)\}\\ \\
6)\ \ \{\phi=\phi(y),\ \chi=\chi(y)\}\\ \\
\end{array}$$
The field equations are manifestly symmetrical under the exchange of $\phi$ and $\chi$. Therefore, the case $\{\phi=\phi(v),\ \chi=\chi(r)\}$ is the same as $\{\phi=\phi(r),\ \chi=\chi(v)\}$. This property reduces significantly the number of independent cases.

\newpage
\item \begin{center}\textbf{One scalar field depends on two coordinates and the other one depends on one}  \end{center}
$$\begin{array}{r}
7)\ \ \{\phi=\phi(v,r),\ \chi=\chi(v)\}\\ \\ 
8)\ \ \{\phi=\phi(v,r),\ \chi=\chi(r)\}\\ \\ 
9)\ \ \{\phi=\phi(v,r),\ \chi=\chi(y)\}\\ \\
10)\ \ \{\phi=\phi(v,y),\ \chi=\chi(v)\}\\ \\
11)\ \ \{\phi=\phi(v,y),\ \chi=\chi(r)\}\\ \\
12)\ \ \{\phi=\phi(v,y),\ \chi=\chi(y)\}\\ \\
13)\ \ \{\phi=\phi(r,y),\ \chi=\chi(v)\}\\ \\
14)\ \ \{\phi=\phi(r,y),\ \chi=\chi(r)\}\\ \\
15)\ \ \{\phi=\phi(r,y),\ \chi=\chi(y)\}\\ \\
\end{array}$$

\item \begin{center}\textbf{Both scalar fields depend on two coordinates}\end{center}
$$\begin{array}{r}
16)\ \ \{\phi=\phi(v,r),\ \chi=\chi(v,r)\}\\ \\
17)\ \ \{\phi=\phi(v,r),\ \chi=\chi(v,y)\}\\ \\
18)\ \ \{\phi=\phi(v,r),\ \chi=\chi(r,y)\}\\ \\
19)\ \ \{\phi=\phi(v,y),\ \chi=\chi(v,y)\}\\ \\
20)\ \ \{\phi=\phi(v,y),\ \chi=\chi(r,y)\}\\ \\
21)\ \ \{\phi=\phi(r,y),\ \chi=\chi(r,y)\}\\ \\
\end{array}$$
\item \begin{center}\textbf{One scalar field depends on all three coordinates and the other one depends on one}\end{center}
$$\begin{array}{r}
22)\ \ \{\phi=\phi(v,r,y),\ \chi=\chi(v)\}\\ \\
23)\ \ \{\phi=\phi(v,r,y),\ \chi=\chi(r)\}\\ \\
24)\ \ \{\phi=\phi(v,r,y),\ \chi=\chi(y)\}\\ \\
\end{array}$$
\item \begin{center}\textbf{One scalar field depends on all three coordinates and the other one depends on two}\end{center}
$$\begin{array}{r}
25)\ \ \{\phi=\phi(v,r,y),\ \chi=\chi(v,r)\}\\ \\
26)\ \ \{\phi=\phi(v,r,y),\ \chi=\chi(v,y)\}\\ \\
27)\ \ \{\phi=\phi(v,r,y),\ \chi=\chi(r,y)\}\\ \\
\end{array}$$

\newpage
\item \begin{center}\textbf{Both scalar fields depend on all three coordinates}\end{center}
$$28)\ \ \{\phi=\phi(v,r,y),\ \chi=\chi(v,r,y)\}$$
\end{enumerate}
\vspace{0.5em}
\par We henceforth start our quest for a valid solution of the field equations examining one by one the cases that were presented previously. The examination of the aforementioned cases will be split in two large categories. The first category includes the cases from 1 to 21 which can be studied (and excluded as it is shown    below) without further assumptions. The cases from 22 to 28 belong to the second category in which it is necessary to introduce an expression for the coupling function $f(\phi,\chi)$ in order to be able to proceed to the solution of the field equations. In these last cases, if we do not fix the coupling function $f(\phi,\chi)$ the field equations are unapproachable.

\section{Explicitly Rejected Cases}
\vspace*{1em}
\begin{center}
\underline{\textbf{1)}\hspace{1.5em}$\{\phi=\phi(v),\ \chi=\chi(v),\ f=f(\phi,\chi)=f(v)\}$}
\end{center}
$$\eqref{loc.120}\Ra 0=f\left(3A''+\frac{e^{-2A}}{r}\pa_r^2m\right)\Ra \pa_r^2m=-3A''e^{2A} r\Ra$$
\eq$\label{loc.129}
\boxed{\pa_rm=-\frac{3}{2}A''e^{2A}r^2+m_0(v,y)}$
\begin{align}
\eqref{loc.119}&\Ra -\pa_vf=f\left(\pa_r^2m-\frac{2}{r}\pa_rm\right)\xRightarrow{f\neq0} -\frac{\pa_vf}{f}=\pa_r^2m-\frac{2}{r}\pa_rm\xRightarrow{\eqref{loc.129}} \nonum\\
&\Ra -\frac{\pa_vf}{f}=-3A''e^{2A}r-\frac{2}{r}\left(-\frac{3}{2}A''e^{2A}r^2+m_0(v,y)\right)\Ra \nonum\\
&\Ra -\frac{\pa_vf}{f}=-\bcancel{3A''e^{2A}r}+\bcancel{3A''e^{2A}r}-\frac{2}{r}m_0(v,y)\Ra\nonum
\end{align}
\eq$\label{loc.130}
\boxed{\frac{\pa_vf}{f}=\frac{2}{r}m_0(v,y)}$\\
\par Obviously, the last equation is totally inconsistent, the left hand side (LHS) depends only on the $v-$coordinate while the right hand side (RHS) depends on the coordinates $(v,r,y)$. Even if we demand $\pa_ym_0=0$ in order to eliminate the $y-$dependence, we definitely cannot cancel the factor $1/r$ in the RHS. Therefore, this case is rejected.\\
\begin{center}
\underline{\textbf{2)}\hspace{1.5em}$\{\phi=\phi(v),\ \chi=\chi(r),\ f=f(\phi,\chi)=f(v,r)\}$}
\end{center}
\eq$\label{loc.131}
\eqref{loc.118}\Ra -A'\pa_rf=0\Ra A'\pa_rf=0\Ra\left\{\begin{array}{c} A'=\pa_yA=0\vspace{0.5em}\\ or\vspace{0.5em}\\ \pa_rf=\pa_\chi f\pa_r\chi=0\xRightarrow{\pa_r\chi\neq0} \pa_\chi f=0\end{array}\right\}$\\
\par From equation \eqref{loc.131} it is easily deducible that this case is rejected as well. The possibility $A'(y)=0$ is immediately rejected, while the constraint $\pa_\chi f=0$ leads to $\pa_r\chi=0$ if we use equation \eqref{loc.115}. Hence, none of the possibilities ($A'=0$ or $\pa_\chi f=0$) can be valid because they contradict with our assumptions about the functions $A(y)$ and $\chi(r)$.\\
\begin{center}
\underline{\textbf{3)}\hspace{1.5em}$\{\phi=\phi(v),\ \chi=\chi(y),\ f=f(\phi,\chi)=f(v,y)\}$}
\end{center}
\eq$\label{loc.132}
\eqref{loc.120}\Ra A'\pa_yf-(\pa_y\chi)^2-\pa_y^2f=f\left(3A''+\frac{e^{-2A}}{r}\pa_r^2m\right)$
\par The LHS of equation \eqref{loc.132} has $(v,y)-$dependence, while the RHS of the same equation is depended on $(v,r,y)$, hence, we are led to the following constraint:
$$\pa_r\left[f\left(3A''+\frac{e^{-2A}}{r}\pa_r^2m\right)\right]=0\Ra \underbrace{\pa_rf}_{0}\left(3A''+\frac{e^{-2A}}{r}\pa_r^2m\right)+fe^{-2A}\pa_r\left(\frac{\pa_r^2m}{r}\right)=0\xRightarrow{f\neq0}$$
\eq$\label{loc.133}
\Ra\pa_r\left(\frac{\pa_r^2m}{r}\right)=0\Ra \frac{\pa_r^3m}{r}-\frac{\pa_r^2m}{r^2}=0\Ra \boxed{\pa_r^3m-\frac{\pa_r^2m}{r}=0}$
\vspace{0.5em}
\par The differential equation \eqref{loc.133} can be solved easily assuming that the function of mass $m=m(v,r,y)$ can be written as a power series expansion with respect to $r-$coordinate, containing either positive or negative powers of $r$. The same assumption has been used as well in \cite{0264-9381-33-1-015003}. Thus, it is
\eq$\label{loc.134}
\boxed{m=m(v,r,y)=\sum_n a_n(v,y)r^n}$
\par The combination of equations \eqref{loc.133} and \eqref{loc.134} lead to
$$\sum_n n(n-1)(n-2)a_n(v,y)r^{n-3}-\sum_n n(n-1)a_n(v,y)r^{n-3}=0\Ra$$
$$\Ra\sum_nn(n-1)(n-3)a_n(v,y)r^{n-3}=0\Ra\left\{\begin{array}{c}a_n=0\ \ \forall n\neq\{0,1,3\}\\ \\ a_0,a_1,a_3\ \ arbitrary\ \ functions\end{array}\right\}\Ra$$\\
\eq$\label{loc.135}
\boxed{m(v,r,y)=a_0(v,y)+a_1(v,y)r+a_3(v,y)r^3}$\\
\begin{align}
\eqref{loc.119}&\Ra -\pa_vf=f\left(\pa_r^2m-\frac{2}{r}\pa_rm\right)\xRightarrow{f\neq0}-\frac{\pa_vf}{f}=\pa_r^2m-\frac{2}{r}\pa_rm\xRightarrow{\eqref{loc.135}}\nonum\\
&\Ra -\frac{\pa_vf}{f}=6a_3(v,y)r-\frac{2}{r}[a_1(v,y)+3a_3(v,y)r^2]=\bcancel{6a_3(v,y)r}-\bcancel{6a_3(v,y)r}-\frac{2}{r}a_1(v,y)\Ra\nonum
\end{align}
\eq$\label{loc.136}
\boxed{\frac{\pa_vf}{f}=\frac{2}{r}a_1(v,y)}$\\
\par Similarly to the case 1, the LHS of equation \eqref{loc.136} depends on $(v,y)-$coordinates while the RHS has the factor $1/r$ which cannot be cancelled, therefore this case is inconsistent as well.\\
\begin{center}
\underline{\textbf{4)}\hspace{1.5em}$\{\phi=\phi(r),\ \chi=\chi(r),\ f=f(\phi,\chi)=f(r)\}$}
\end{center}
$$\eqref{loc.118}\Ra-A'\pa_rf=0\Ra A'\pa_rf=0\Ra \left\{\begin{array}{c} A'=\pa_yA=0\vspace{0.5em}\\ or\vspace{0.5em}\\ \pa_rf=0\end{array}\right\}$$
\par In this case, we are also led to a contradiction to our primary assumption. Either $A'(y)=0$ or $\pa_r f=0$ cannot be true. Especially, the constraint $\pa_r f=0$ implies that the coupling function $f(\phi,\chi)$ is completely independent from both scalar fields $\phi$ and $\chi$, thus, the non-minimal coupling of the scalar fields to gravity is entirely vanished. Therefore, this case is also excluded from the list of possible solutions.\\
\begin{center}
\underline{\textbf{5)}\hspace{1.5em}$\{\phi=\phi(r),\ \chi=\chi(y),\ f=f(\phi,\chi)=f(r,y)\}$}
\end{center}
$$\eqref{loc.115}\Ra (\pa_r\phi)^2+\pa_r^2f=0\Ra \pa_r^2f=-(\pa_r\phi)^2\Ra \frac{\pa(\pa_rf)}{\pa r}=-(\pa_r\phi)^2\Ra$$
$$\Ra \int dr\frac{\pa(\pa_rf)}{\pa r}=-\int (\pa_r\phi)^2dr\Ra \pa_rf-f_0(y)=-\int (\pa_r\phi)^2dr\Ra$$
\eq$\label{loc.137}
\boxed{\pa_rf=-\int (\pa_r\phi)^2dr+f_0(y)}$\\
$$\eqref{loc.118}\Ra \pa_y(\pa_rf)-A'\pa_rf=0\xRightarrow{\eqref{loc.137}}\pa_y\left[-\int (\pa_r\phi)^2dr+f_0(y)\right]-A'\left[-\int (\pa_r\phi)^2dr+f_0(y)\right]=0\Ra$$
$$\Ra\underbrace{-\pa_y\left[\int(\pa_r\phi)^2dr\right]}_0+\pa_yf_0+A'\int(\pa_r\phi)^2dr-A'f_0=0\Ra$$
\eq$\label{loc.138}
A'f_0-\pa_yf_0=A'\int(\pa_r\phi)^2dr$
\par The LHS of equation \eqref{loc.138} depends only on $y-$coordinate, while the RHS depends on $(r,y)-$co\-or\-di\-nates. Hence, the constraint that is derived is the following:
\eq$\label{loc.139}
\pa_r\left[A'\int (\pa_r\phi)^2dr\right]=0\Ra A'(\pa_r\phi)^2=0\Ra\left\{\begin{array}{c}A'=\pa_yA=0\vspace{0.5em}\\ or\vspace{0.5em}\\ \pa_r\phi=0\end{array}\right\}$
\par According to our assumptions, both functions $A'$ and $\pa_r\phi$ cannot be zero, so none of the constraints which are demanded by equation \eqref{loc.139} is able to be fulfilled. Neither $\pa_\phi f=0$ nor $\pa_\chi f=0$ are able to be assumed in the context of this case. The constraint $\pa_\phi f=0$ leads to $\pa_r\phi=0$ through equation \eqref{loc.115} and $\pa_\chi f=0$ leads to the same analysis and the same negative result as the original case 5 with the only difference that in the case of $\pa_\chi f=0$ $f_0$ is a constant and not a function of $y$.\\
\begin{center}
\underline{\textbf{6)}\hspace{1.5em}$\{\phi=\phi(y),\ \chi=\chi(y),\ f=f(\phi,\chi)=f(y)\}$}
\end{center}
\begin{align}\eqref{loc.119}\Ra &\ \ 0=f\left(\pa_r^2m-\frac{2}{r}\pa_rm\right)\xRightarrow{f\neq0}\pa_r^2m-\frac{2}{r}\pa_rm=0\Ra\frac{\pa(\pa_rm)}{\pa r}=\frac{2}{r}\pa_rm\Ra\nonum\\
\Ra&\frac{1}{\pa_rm}\frac{\pa(\pa_rm)}{\pa r}=\frac{2}{r}\Ra\frac{\pa[\ln(\pa_rm)]}{\pa r}=\frac{2}{r}\Ra\int dr\frac{\pa[\ln(\pa_rm)]}{\pa r}=\int\frac{2}{r}dr\Ra\nonum\\
\Ra&\ln(\pa_rm)-m_0(v,y)=2\ln r\Ra\pa_rm=e^{m_0(v,y)+\ln r^2}\xRightarrow{e^{m_0(v,y)}\ra m_0(v,y)}\nonum\\
\Ra&\pa_rm=m_0(v,y)r^2\Ra \int dr\frac{\pa m}{\pa r}=m_0(v,y)\int r^2dr\Ra\nonum
\end{align}
\eq$\label{loc.140}
m(v,r,y)=m_0(v,y)\frac{r^3}{3}+m_1(v,y)$\\
\begin{align}\label{loc.141}
\eqref{loc.117}\Ra&\ \ 0=\frac{f}{r}\left(\frac{\pa_ym}{r}+\pa_r\pa_ym\right)\xRightarrow{f\neq0}\frac{\pa_ym}{r}+\pa_r\pa_ym=0\xRightarrow{\eqref{loc.140}}\pa_ym_0\frac{r^2}{3}+\pa_ym_1+\pa_ym_0r^2=0\Ra\nonum\\
\Ra& 2r^2\pa_ym_0+\pa_ym_1=0\Ra\left\{\begin{array}{c}\pa_ym_0=0\Ra m_0=m_0(v)\vspace{0.5em}\\and \vspace{0.5em}\\ \pa_ym_1=0\Ra m_1=m_1(v)\end{array}\right\}
\end{align}\\
\eq$\label{loc.142}
\eqref{loc.140}\xRightarrow{\eqref{loc.141}}\boxed{m=m(v,r)=m_0(v)\frac{r^3}{3}+m_1(v)}$\\
\par It is clear from equation \eqref{loc.142} that in order to satisfy simultaneously both field equations \eqref{loc.119} and \eqref{loc.117}, we are led to a mass function which does not depend on the $y-$coordinate. A mass function which is $y-$independent is impossible to describe a localized black hole. Therefore, this case is rejected without second thought.
\begin{center}
\underline{\textbf{7)}\hspace{1.5em}$\{\phi=\phi(v,r),\ \chi=\chi(v),\ f=f(\phi,\chi)=f(v,r)\}$}
\end{center}
$$\eqref{loc.118}\Ra-A'\pa_rf=0\Ra \left\{\begin{array}{c} A'=\pa_yA=0\vspace{0.5em}\\ or\vspace{0.5em}\\ \pa_rf=0\Ra \pa_\phi f\pa_r\phi=0\Ra \pa_\phi f=0\end{array}\right\}$$
\par Of course, $A'(y)\neq 0$. Thus, if $\pa_\phi f=0$ then equation \eqref{loc.115} leads to $\pa_r \phi=0$. Hence, we reject this case as well.\\
\begin{center}
\underline{\textbf{8)}\hspace{1.5em}$\{\phi=\phi(v,r),\ \chi=\chi(r),\ f=f(\phi,\chi)=f(v,r)\}$}
\end{center}
$$\eqref{loc.118}\Ra-A'\pa_rf=0\Ra \left\{\begin{array}{c} A'=\pa_yA=0\vspace{0.5em}\\ or\vspace{0.5em}\\ \pa_rf=0\Ra \pa_\phi f=\pa_\chi f=0 \end{array}\right\}$$
\par Clearly, this case is also excluded.\\
\begin{center}
\underline{\textbf{9)}\hspace{1.5em}$\{\phi=\phi(v,r),\ \chi=\chi(y),\ f=f(\phi,\chi)=f(v,r,y)\}$}
\end{center}
$$\eqref{loc.115}\Ra (\pa_r\phi)^2+\pa_r^2f=0\Ra \pa_r^2f=-(\pa_r\phi)^2\Ra$$
\eq$\label{loc.143}
\boxed{\pa_rf=-\int(\pa_r\phi)^2dr+f_0(v,y)}$\\
$$\eqref{loc.118}\Ra \pa_y\pa_rf-A'\pa_rf=0\xRightarrow{\eqref{loc.143}}\pa_y\left[-\int(\pa_r\phi)^2dr+f_0\right]-A'\left[-\int(\pa_r\phi)^2dr+f_0\right]=0\Ra$$
$$\Ra -\underbrace{\pa_y\int(\pa_r\phi)^2dr}_0+\pa_yf_0+A'\int(\pa_r\phi)^2dr-A'f_0=0\Ra$$
\eq$\label{loc.144}
A'f_0-\pa_yf_0=A'\int(\pa_r\phi)^2dr$
\par The LHS of equation \eqref{loc.144} depends on $(v,y)-$coordinates while the RHS depends on $(v,r,y)-$co\-or\-di\-nates. Hence, equation \eqref{loc.144} leads to the following constraint.
\eq$\label{loc.145}
\pa_r\left[A'\int(\pa_r\phi)^2dr\right]=0\Ra A'(\pa_r\phi)^2=0\Ra\left\{\begin{array}{c}A'=\pa_yA=0\vspace{0.5em}\\ or\vspace{0.5em}\\ \pa_r\phi=0\end{array}\right\}$
\par Equation \eqref{loc.145} is not possible to be satisfied because it contradicts with the original assumptions. We cannot assume $\pa_\phi f=0$ because equation \eqref{loc.115} results to $\pa_r\phi=0$ as well. Moreover, $\pa_\chi f=0$ is not helpful either, because the only difference with the previous analysis is that $f_0(v,y)\ra f_0(v)$. 
\begin{center}
\underline{\textbf{10)}\hspace{1.5em}$\{\phi=\phi(v,y),\ \chi=\chi(v),\ f=f(\phi,\chi)=f(v,y)\}$}
\end{center}
$$\eqref{loc.120}\Ra \underbrace{A'\pa_yf-(\pa_y\phi)^2-\pa_y^2f}_{(v,y)-\text{dependent}}=\underbrace{f\left(3A''+\frac{e^{-2A}}{r}\pa_r^2m\right)}_{(v,r,y)-\text{dependent}}\xRightarrow{\pa_r(RHS)=0}$$
$$\Ra\pa_r\left[f\left(3A''+\frac{e^{-2A}}{r}\pa_r^2m\right)\right]=0\Ra fe^{-2A}\pa_r\left(\frac{\pa_r^2m}{r}\right)=0\xRightarrow{f\neq0}\pa_r\left(\frac{\pa_r^2m}{r}\right)=0\Ra$$
\eq$\label{loc.146}
\pa_r^3m-\frac{\pa_r^2m}{r}=0$
\vspace{0.4em}
\par Equation \eqref{loc.146} is identical to equation \eqref{loc.133}. Hence, combining \eqref{loc.146} with \eqref{loc.134} as we did in case 3, we obtain the mass function which is given by equation \eqref{loc.135}. Subsequently, substituting equation \eqref{loc.135} into \eqref{loc.119} we are led to the same result as in case 3, namely equation \eqref{loc.136} which is inconsistent in this case as well.\\
\begin{center}
\underline{\textbf{11)}\hspace{1.5em}$\{\phi=\phi(v,y),\ \chi=\chi(r),\ f=f(\phi,\chi)=f(v,r,y)\}$}
\end{center}
$$\eqref{loc.115}\Ra (\pa_r\chi)^2+\pa_r^2f=0\Ra \pa_r^2f=-(\pa_r\chi)^2\Ra$$
\eq$\label{loc.147}
\boxed{\pa_rf=-\int (\pa_r\chi)^2dr+f_0(v,y)}$\\
$$\eqref{loc.118}\Ra \pa_r\pa_yf-A'\pa_rf=0\xRightarrow{\eqref{loc.147}}\pa_y\left[-\int (\pa_r\chi)^2dr+f_0(v,y)\right]-A'\left[-\int (\pa_r\chi)^2dr+f_0(v,y)\right]=0\Ra$$
$$\Ra\pa_yf_0+A'\int (\pa_r\chi)^2dr-A'f_0=0\Ra \underbrace{A'f_0-\pa_yf_0}_{(v,y)-\text{dependent}}=\underbrace{A'\int (\pa_r\chi)^2dr}_{(r,y)-\text{dependent}}\xRightarrow{\pa_r(RHS)=0}$$
\eq$\label{loc.148}
\pa_r\left[A'\int (\pa_r\chi)^2dr\right]=0\Ra A'(\pa_r\chi)^2=0\Ra \left\{\begin{array}{c}A'=\pa_yA=0\vspace{0.5em}\\ or\vspace{0.5em}\\ \pa_r\chi=0\end{array}\right\}$
\par None of the two choices of equation \eqref{loc.148} can be satisfied. In addition, demanding either the constraint $\pa_\phi f=0$ or $\pa_\chi f=0$, nothing changes. Both sub-cases lead to the inconsistent result $\pa_r\chi=0$ as well.\\
\begin{center}
\underline{\textbf{12)}\hspace{1.5em}$\{\phi=\phi(v,y),\ \chi=\chi(y),\ f=f(\phi,\chi)=f(v,y)\}$}
\end{center}
\vspace{1em}
\par Using equations \eqref{loc.119}, \eqref{loc.120} and the expansion of the mass function given by equation \eqref{loc.134} we obtain the same differential equations and constraints as in case 10. Therefore, this case results to an inconsistency as well.

\begin{center}
\underline{\textbf{13)}\hspace{1.5em}$\{\phi=\phi(r,y),\ \chi=\chi(v),\ f=f(\phi,\chi)=f(v,r,y)\}$}
\end{center}
$$\eqref{loc.115}\Ra (\pa_r\phi)^2+\pa_r^2f=0\Ra \pa_r^2f=-(\pa_r\phi)^2\Ra \pa_rf=-\int(\pa_r\phi)^2dr+B(v,y)\Ra$$
\eq$\label{loc.149}
\boxed{f(v,r,y)=-\int\left[\int(\pa_r\phi)^2dr\right]dr+B(v,y)r+C(v,y)}$\\
\begin{align}\label{loc.150}
\eqref{loc.118}\Ra&\ \pa_y\phi\pa_r\phi+\pa_r\pa_yf-A'\pa_rf=0\xRightarrow{\eqref{loc.149}}\nonum\\
\Ra&\ \pa_r\phi\pa_y\phi+\pa_y\left[-\int(\pa_r\phi)^2dr+B(v,y)\right]-A'\left[-\int(\pa_r\phi)^2dr+B(v,y)\right]=0\Ra\nonum\\
\Ra&\ \pa_r\phi\pa_y\phi-2\int\pa_r\phi\ \pa_y\pa_r\phi\ dr+\pa_yB+A'\int(\pa_r\phi)^2dr-A'B=0\Ra\nonum\\
\Ra&\ \underbrace{A'B-\pa_yB}_{(v,y)-\text{dependent}}=\underbrace{\pa_r\phi\pa_y\phi+\int\pa_r\phi(A'\pa_r\phi-2\pa_y\pa_r\phi)dr}_{(r,y)-\text{dependent}}
\end{align}\\
\par As it is indicated by the last equation \eqref{loc.150}, its LHS depends on $(v,y)-$coordinates, while its RHS depends on $(r,y)-$coordinates. Consequently, we can derive two new constrains by demanding the partial derivative of the LHS with respect to $v$ to be zero and likewise the partial derivative of the RHS with respect to $r$ to be zero. Thus, we have:
\eq$\label{loc.151}
\eqref{loc.150}\xRightarrow{\pa_v(LHS)=0}\pa_v(A'B-\pa_yB)=0\Ra \boxed{\pa_v\pa_yB=A'\pa_vB}$\\
\begin{align}\label{loc.152}
\eqref{loc.150}&\xRightarrow{\pa_r(RHS)=0}\pa_r\left[\pa_r\phi\pa_y\phi+\int\pa_r\phi(A'\pa_r\phi-2\pa_y\pa_r\phi)dr\right]=0\Ra\nonum\\
&\Ra\pa_r^2\phi\pa_y\phi+\pa_r\phi\pa_r\pa_y\phi+\pa_r\phi(A'\pa_r\phi-2\pa_r\pa_y\phi)=0\Ra\nonum\\
&\Ra\pa_r^2\phi\pa_y\phi-\pa_r\phi\pa_r\pa_y\phi+A'(\pa_r\phi)^2=0\Ra\nonum\\
&\Ra-(\pa_r\phi\pa_r\pa_y\phi-\pa_r^2\phi\pa_y\phi)+A'(\pa_r\phi)^2=0\xRightarrow{\pa_r\phi\neq0}\nonum\\
&\Ra-\frac{\pa_r\phi\pa_r\pa_y\phi-\pa_r^2\phi\pa_y\phi}{(\pa_r\phi)^2}(\pa_r\phi)^2+A'(\pa_r\phi)^2=0\Ra\nonum\\
&\Ra-\pa_r\left(\frac{\pa_y\phi}{\pa_r\phi}\right)(\pa_r\phi)^2+A'(\pa_r\phi)^2=0\Ra A'=\pa_r\left(\frac{\pa_y\phi}{\pa_r\phi}\right)\Ra\nonum\\
&\Ra \frac{\pa_y\phi}{\pa_r\phi}=A'r+F(y)\Ra \boxed{\pa_y\phi=\pa_r\phi[A'(y)r+F(y)]}
\end{align}\\
\begin{align}
\eqref{loc.117}&\Ra \pa_y\pa_vf-A'\pa_vf-\frac{\pa_ym}{r}\pa_rf=\frac{f}{r}\left(\frac{\pa_ym}{r}+\pa_r\pa_ym\right)\Ra\nonum\\
&\Ra \pa_y\pa_vf-A'\pa_vf=\frac{f\pa_ym}{r^2}+\frac{f\pa_r\pa_ym}{r}+\frac{\pa_rf\pa_ym}{r}\xRightarrow{\eqref{loc.149}}\nonum\\
&\Ra r\pa_y\pa_vB+\pa_y\pa_vC-A'(r\pa_vB+\pa_vC)=\frac{f\pa_ym}{r^2}+\frac{\pa_r(f\pa_ym)}{r}\xRightarrow{\eqref{loc.151}}\nonum\\
&\Ra \bcancel{rA'\pa_vB}+\pa_v\pa_yC-\bcancel{rA'\pa_vB}-A'\pa_vC=\frac{f\pa_ym}{r^2}+\frac{\pa_r(f\pa_ym)}{r}\Ra\nonum
\end{align}
\begin{align}\label{loc.153}
&\Ra\underbrace{\pa_v\pa_yC-A'\pa_vC}_{(v,y)-\text{dependent}}=\underbrace{\frac{f\pa_ym}{r^2}+\frac{\pa_r(f\pa_ym)}{r}}_{(v,r,y)-\text{dependent}}\xRightarrow{\pa_r(RHS)=0}\pa_r\left[\frac{f\pa_ym}{r^2}+\frac{\pa_r(f\pa_ym)}{r}\right]=0\Ra\nonum\\ \nonum\\
&\Ra \bcancel{\frac{\pa_r(f\pa_ym)}{r^2}}-2\frac{f\pa_ym}{r^3}+\frac{\pa_r^2(f\pa_ym)}{r}-\bcancel{\frac{\pa_r(f\pa_ym)}{r^2}}=0\Ra \boxed{\pa_r^2(f\pa_ym)-2\frac{f\pa_ym}{r^2}=0}
\end{align}\\
\par The last differential equation, with respect to the function $f\pa_ym$, leads us to seek out the solution in the following form:\\
\eq$\label{loc.154}
\boxed{f\pa_ym=\sum_n b_n(v,y)r^n}$
\par Substituting equation \eqref{loc.154} into equation \eqref{loc.153} we get\\
$$\sum_n n(n-1)b_n(v,y)r^{n-2}-2\sum_n b_n(v,y)r^{n-2}=0\Ra$$
$$\Ra\sum_n \underbrace{(n^2-n-2)}_{\text{roots:\ }\{-1,2\}}b_n(v,y)r^{n-2}=0\Ra\left\{\begin{array}{c}b_n=0\ \ \forall n\neq\{-1,2\}\\ \\ b_{-1},b_2\ \ arbitrary\ \ functions\end{array}\right\}\Ra$$
$$f\pa_ym=\frac{b_{-1}(v,y)}{r}+b_2(v,y)r^2\xRightarrow[b_{-1}(v,y)\ra D(v,y)]{b_2(v,y)\ra E(v,y)}f\pa_ym=\frac{D(v,y)}{r}+E(v,y)r^2\xRightarrow[\substack{\text{we need the dependence}\\ \text{on the extra dimension}}]{\pa_ym\neq 0}$$\\
\eq$\label{loc.155}
\boxed{f(v,r,y)=\frac{1}{\pa_y[m(v,r,y)]}\left[\frac{D(v,y)}{r}+E(v,y)r^2\right]}$\\
\eq$\label{loc.156}
\eqref{loc.155}\Ra \boxed{\pa_vf=-\frac{\pa_v\pa_ym}{(\pa_ym)^2}\left(\frac{D}{r}+E\ r^2\right)+\frac{1}{\pa_ym}\left(\frac{\pa_vD}{r}+\pa_vE\ r^2\right)}$\\
\eq$\label{loc.157}
\eqref{loc.149}\Ra \boxed{\pa_vf=\pa_vB\ r+\pa_vC}$\\
\begin{align}\label{loc.158}
\eqref{loc.156},\eqref{loc.157}&\Ra \pa_vB\ r+\pa_vC=-\frac{\pa_v\pa_ym}{(\pa_ym)^2}\left(\frac{D}{r}+E\ r^2\right)+\frac{1}{\pa_ym}\left(\frac{\pa_vD}{r}+\pa_vE\ r^2\right)\Ra\nonum\\ \nonum\\
&\Ra\boxed{r(\pa_ym)^2(r\pa_vB+\pa_vC)=-\pa_v\pa_ym(D+E\ r^3)+\pa_ym(\pa_vD+\pa_vE\ r^3)}
\end{align}
\par The combination of equations \eqref{loc.158} and \eqref{loc.134} leads to the relation\\
$$r\left(\sum_n\pa_ya_n\ r^n\right)^2(r\pa_vB+\pa_vC)=-\left(\sum_n\pa_v\pa_ya_n\ r^n\right)(D+E\ r^3)+\left(\sum_n\pa_ya_n\ r^n\right)(\pa_vD+\pa_vE\ r^3)\Ra$$
\eq$\label{loc.159}
\boxed{\left(\sum_n\pa_ya_n\ r^{n+1}\right)^2\pa_vB+r\left(\sum_n\pa_ya_n\ r^n\right)^2\pa_vC=\sum_n\pa_v\left(\frac{D}{\pa_ya_n}\right)(\pa_ya_n)^2r^n+\sum_n\pa_v\left(\frac{E}{\pa_ya_n}\right)(\pa_ya_n)^2r^{n+3}}$
\par Equation \eqref{loc.159} can be mathematically consistent if and only if each one of its terms vanish. Concentrating our attention on the LHS, we can immediately deduce that in order to nullify both terms, it should be either $\{\pa_ya_n=0\ \ \forall n\}$ or $\{\pa_vB=0$ and $\pa_vC=0\}$. If $\{\pa_ya_n=0\ \ \forall n\}$, then we lose the desirable dependence of the mass function on the extra dimension. On the other hand, if $\{\pa_vB=0$ and $\pa_vC=0\}$ then equation \eqref{loc.149} becomes\\
$$f=-\int\left[\int(\pa_r\phi)^2dr\right]dr+rB(y)+C(y)\Ra \pa_vf=0\Ra \underbrace{\pa_\chi f}_{\neq0}\pa_v\chi=0\Ra \pa_v\chi=0$$\\
which is inconsistent with our assumption about the field $\chi=\chi(v)$. Therefore, this case also fails to offer us a solution to the problem. If we consider $\pa_\phi f=0$ then equation \eqref{loc.115} gives $\pa_r\phi=0$, which is inconsistent with the assumption about $\phi=\phi(r,y)$. Finally, we consider the sub-case in which $\pa_\chi f=0$. Hence, we have:
\eq$\label{loc159a}
\pa_\chi f=0\Ra \boxed{\pa_vf=0\xRightarrow{\eqref{loc.149}}\pa_v B=\pa_C=0}$
Moreover, the substitution of equation \eqref{loc159a} into \eqref{loc.117} leads to
\eq$\label{loc159b}
0=\frac{f\pa_ym}{r^2}+\frac{\pa_r(f\pa_ym)}{r}\Ra \frac{\pa_r(f\pa_ym)}{f\pa_ym}=-\frac{1}{r}\Ra \boxed{f\pa_ym=\frac{D(v,y)}{r}}$
From equation \eqref{loc.119} and \eqref{loc159a} we get
\begin{gather}-\pa_rf\left(\pa_rm+1-\frac{3m}{r}\right)=f\left(\pa_r^2m-2\frac{\pa_rm}{r}\right)\xRightarrow{\eqref{loc159b}}\nonum\\
\frac{D(\pa_r\pa_ym\ r+\pa_ym)}{(r\pa_ym)^2}\left(\pa_rm+1-\frac{3m}{r}\right)=\frac{D}{r\pa_ym}\left(\pa_r^2m-2\frac{\pa_rm}{r}\right)\Ra\nonum\\
(r\pa_r\pa_ym+\pa_ym)\left(\pa_rm+1-\frac{3m}{r}\right)=r\pa_ym\left(\pa_r^2m-2\frac{\pa_rm}{r}\right)\xRightarrow{\eqref{loc.134}}\nonum\\
\sum_n(\pa_ya_n)(n+1)r^n\left(1+\sum_\ell a_\ell(\ell-3)r^{\ell-1}\right)=\sum_n(\pa_ya_n)r^n\sum_\ell a_\ell\underbrace{\left[\ell(\ell-1)-2\ell\right]}_{\ell(\ell-3)}r^{\ell-1}\Ra\nonum\\
\sum_n(\pa_ya_n)r^n\left[(n+1)+\sum_\ell a_\ell(n+1)(\ell-3)r^{\ell-1}-\sum_\ell a_\ell \ell(\ell-3)r^{\ell-1}\right]\Ra\nonum\\
\label{loc159c}
\boxed{\sum_n(\pa_ya_n)r^n\left[(n+1)+\sum_\ell a_\ell(\ell-3)(n+1-\ell)r^{\ell-1}\right]=0}
\end{gather}
We demand that $\pa_ya_n\neq0\ \forall n$, thus equation \eqref{loc159c} can only hold if $n,\ell$ take the values 1 and 3. Consequently, we obtain:
\begin{gather}
\sum_{n=1,3}(\pa_ya_n)r^n\left[(n+1)+a_1(-2)(n+2)\right]=0\Ra\nonum\\
(\pa_ya_1)r(2-6a_1)+(\pa_ya_3)r^3(4-10a_1)=0\Ra\left\{\begin{array}{c}
a_1=\frac{1}{3}\vspace{0.5em}\\ 
\text{and}\vspace{0.5em}\\  
a_1=\frac{5}{2} 
\end{array}\right\}\Ra\ \text{rejected}
\end{gather}
If on the other hand, we assume that $n,\ell$ take only the value 1. Then we have:
$$(\pa_ya_1)r(2-6a_1)=0\Ra a_1=\frac{1}{3}\Ra\pa_ya_1=0$$
which is also an undesirable result.

\begin{center}
\underline{\textbf{14)}\hspace{1.5em}$\{\phi=\phi(r,y),\ \chi=\chi(r),\ f=f(\phi,\chi)=f(r,y)\}$}
\end{center}
\begin{align}\label{loc.160}
\eqref{loc.117}&\Ra -\frac{\pa_ym}{r}\pa_rf=\frac{f}{r}\left(\frac{\pa_ym}{r}+\pa_r\pa_ym\right)\Ra \frac{f\pa_ym}{r^2}+\frac{\pa_r(f\pa_ym)}{r}=0\Ra\nonum\\
&\Ra \pa_r(f\pa_ym)=-\frac{f\pa_ym}{r}\xRightarrow[\pa_ym\neq 0]{f\neq0} \frac{1}{f\pa_ym}\frac{\pa(f\pa_ym)}{\pa r}=-\frac{1}{r}\Ra \nonum\\
&\Ra \frac{\pa[\ln(f\pa_ym)]}{\pa r}=-\frac{1}{r}\Ra \ln(f\pa_ym)=-\ln r+B(v,y)\xRightarrow{e^{B(v,y)}\ra C(v,y)}\nonum\\
&\Ra f\pa_ym=\frac{C(v,y)}{r}\xRightarrow{\pa_ym\neq 0} \boxed{f=f(r,y)=\frac{C(v,y)}{\pa_ym(v,r,y)\ r}}
\end{align}
\par Using now equation \eqref{loc.119} together with \eqref{loc.160} we are led to the same result as in the sub-case $\pa_\chi f=0$ of case 13, which was presented before.\\
\begin{enumerate}[(i)]
\item \begin{flushleft}
\underline{$\pa_\phi f=0$:}
\end{flushleft}
From equation \eqref{loc.118} we obtain:
$$-A'\pa_r f=0\Ra -A'\pa_\chi f\underbrace{\pa_r\chi}_{\neq 0}=0\Ra\left\{\begin{array}{c}A'=0\Ra \pa_yA=0\vspace{0.5em}\\ or \vspace{0.5em}\\ \pa_\chi f=0\Ra \pa_\chi f=\pa_\phi f=0\end{array}\right\}\Ra \text{rejected}$$
\item
\begin{flushleft}
\underline{$\pa_\chi f=0$:}
\end{flushleft}
Using the same equations and performing the same steps as in the original case, we can show that this sub-case provides us with exactly the same negative result as the initial case 14.\\
\end{enumerate}

\begin{center}
\underline{\textbf{15)}\hspace{1.5em}$\{\phi=\phi(r,y),\ \chi=\chi(y),\ f=f(\phi,\chi)=f(r,y)\}$}
\end{center}
\par In this case, similarly to the case 14, the coupling function $f$ depends on $(r,y)-$coordinates. Therefore, it is straightforward to verify that equations \eqref{loc.117} and \eqref{loc.119} result exactly to the same differential equations as in case 14 and subsequently to the same inappropriate form of the mass function. Consequently, case 15 is rejected as well.\\
\begin{center}
\underline{\textbf{16)}\hspace{1.5em}$\{\phi=\phi(v,r),\ \chi=\chi(v,r),\ f=f(\phi,\chi)=f(v,r)\}$}
\end{center}
$$\eqref{loc.118}\Ra -A'\pa_rf=0\Ra\left\{\begin{array}{c} A'=\pa_yA=0\vspace{0.5em}\\ or\vspace{0.5em}\\ \pa_rf=0\Ra \pa_\phi f=\pa_\chi f=0\end{array}\right\}$$
\par Obviously, this is also not a viable solution to the problem.\\
\begin{center}
\underline{\textbf{17)}\hspace{1.5em}$\{\phi=\phi(v,r),\ \chi=\chi(v,y),\ f=f(\phi,\chi)=f(v,r,y)\}$}
\end{center}
$$\eqref{loc.115}\Ra (\pa_r\phi)^2+\pa_r^2f=0\Ra \pa_rf=-\int(\pa_r\phi)^2\ dr+B(v,y)\Ra $$
\eq$\label{loc.171}
\boxed{f=-\int \left[\int(\pa_r\phi)^2\ dr\right]dr+r\ B(v,y)+C(v,y)}$
\begin{align}
\eqref{loc.118}&\Ra \pa_y\pa_rf-A'\pa_rf=0\xRightarrow{\eqref{loc.171}}\pa_y\left[-\int(\pa_r\phi)^2\ dr+B\right]-A'\left[-\int(\pa_r\phi)^2\ dr+B\right]=0\Ra\nonum\\
&\Ra -2\int \pa_r\phi\ \underbrace{\pa_y\pa_r\phi}_{0}\ dr+\pa_yB+A'\int(\pa_r\phi)^2\ dr-A'B=0\Ra\nonum\\
&\Ra \underbrace{A'B-\pa_yB}_{(v,y)-\text{dependent}}=\underbrace{A'\int(\pa_r\phi)^2\ dr}_{(v,r,y)-\text{dependent}}\xRightarrow{\pa_r(RHS)=0}\pa_r\left[A'\int(\pa_r\phi)^2\ dr\right]=0\Ra\nonum\\
&\Ra A'(\pa_r\phi)^2=0\Ra\left\{\begin{array}{c}A'=0\vspace{0.5em}\\ or\vspace{0.5em}\\ \pa_r\phi=0\end{array}\right\}\nonum
\end{align}
\par None of the choices which are depicted above is in agreement with our assumptions, hence, in this case the field equation \eqref{loc.118} is inconsistent.\\
\begin{enumerate}[(i)]
\item
\begin{flushleft}
\underline{$\pa_\chi f=0$:}
\end{flushleft}
$$\eqref{loc.118}\Ra -A'\pa_r f=0\Ra \left\{\begin{array}{c}A'=0\vspace{0.5em}\\ or\vspace{0.5em}\\ \pa_r f=0\Ra \pa_\phi f \underbrace{\pa_r\phi}_{\neq 0}=0\Ra \pa_\phi f=0=\pa_\chi f\end{array}\right\}\Ra \text{rejected}$$
\item
\begin{flushleft}
\underline{$\pa_\phi f=0$:}
\end{flushleft}
\begin{gather}
\eqref{loc.119}\Ra-\pa_v f=f\left(\pa_r^2m-\frac{2}{r}\pa_rm\right)\xRightarrow{f\neq 0}\underbrace{-\frac{\pa_vf}{f}}_{(v,y)-dependent}=\underbrace{\pa_r^2m-\frac{2}{r}\pa_rm}_{(v,r,y)-dependent}\xRightarrow{\pa_r(RHS)=0}\nonum\\
\pa_r\left(\pa_r^2m-\frac{2}{r}\pa_rm\right)=0\Ra \pa_r^3m-\frac{2}{r}\pa_r^2m+\frac{2}{r^2}\pa_rm=0\xRightarrow{\eqref{loc.134}}\nonum\\
\sum_n a_nn(n-1)(n-2)r^{n-3}-2\sum_na_nn(n-1)r^{n-3}+2\sum_na_nnr^{n-3}=0\Ra\nonum\\
\sum_na_n\underbrace{\left[n(n-1)(n-2)+2n-2n(n-1)\right]}_{n(n-2)(n-3)}r^{n-3}=0\Ra\sum_na_nn(n-2)(n-3)r^{n-3}=0\Ra\nonum\\
\label{loc171a}
\boxed{m(v,,r,y)=a_0(v,y)+a_2(v,y)r^2+a_3(v,y)r^3}
\end{gather}
Combining equations \eqref{loc.117} and \eqref{loc171a} we have:
\begin{gather}
\pa_y\chi\pa_v\chi+\pa_y\pa_vf-A'\pa_vf=\frac{f}{r}\left(\frac{\pa_ya_0}{r}+\pa_ya_2\ r+\pa_ya_3\ r^2+2r\pa_ya_2+3r^2\pa_ya_3\right)\Ra\nonum\\
\underbrace{\frac{\pa_y\chi\pa_v\chi+\pa_y\pa_vf-A'\pa_vf}{f}}_{(v,y)-dependent}=\underbrace{\frac{\pa_ya_0}{r^2}+3\pa_ya_2+4r\pa_ya_3}_{(v,r,y)-dependent}\xRightarrow[\text{consistency}]{\text{for}}\nonum\\
\left\{\begin{array}{c}\pa_ya_0=0\vspace{0.5em}\\ \pa_ya_2=0\vspace{0.5em}\\ \pa_ya_3=0\end{array}\right\}\Ra \boxed{\pa_ym=0}\nonum
\end{gather}
This sub-case should also be excluded, because with $\pa_ym=0$ the localization is not possible.
\end{enumerate}

\begin{center}
\underline{\textbf{18)}\hspace{1.5em}$\{\phi=\phi(v,r),\ \chi=\chi(r,y),\ f=f(\phi,\chi)=f(v,r,y)\}$}
\end{center}
$$\eqref{loc.115}\Ra (\pa_r\phi)^2+(\pa_r\chi)^2+\pa_r^2f=0\Ra \pa_rf=-\int\left[(\pa_r\phi)^2+(\pa_r\chi)^2\right]dr+B(v,y)\Ra$$
\eq$\label{loc.172}
\boxed{f=-\int\left\{\int\left[(\pa_r\phi)^2+(\pa_r\chi)^2\right]dr\right\}dr+rB(v,y)+C(v,y)}$\\
\begin{align}\label{loc.173}
\eqref{loc.118}&\Ra \pa_r\chi\pa_y\chi+\pa_y\pa_rf-A'\pa_rf=0\nonum\\
&\Ra \pa_r\chi\pa_y\chi+\pa_y\left\{-\int\left[(\pa_r\phi)^2+(\pa_r\chi)^2\right]dr+B\right\}-A'\left\{-\int\left[(\pa_r\phi)^2+(\pa_r\chi)^2\right]dr+B\right\}=0\nonum\\
&\Ra \pa_r\chi\pa_y\chi-\int\left(2\ \pa_r\phi\ \underbrace{\pa_y\pa_r\phi}_{0}+2\ \pa_r\chi\ \pa_y\pa_r\chi\right)dr+\pa_yB+A'\int\left[(\pa_r\phi)^2+(\pa_r\chi)^2\right]dr-A'B=0\nonum\\
&\Ra \underbrace{A'B-\pa_yB}_{(v,y)-\text{dependent}}=\underbrace{\pa_r\chi\pa_y\chi+\int\left[\pa_r\chi(A'\pa_r\chi-2\ \pa_y\pa_r\chi)+A'(\pa_r\phi)^2\right]dr}_{(v,r,y)-\text{dependent}}
\end{align}
\begin{align}\label{loc.174}
\eqref{loc.173}&\xRightarrow{\pa_r(RHS)=0}\pa_r\left\{\pa_r\chi\pa_y\chi+\int\left[\pa_r\chi(A'\pa_r\chi-2\ \pa_y\pa_r\chi)+A'(\pa_r\phi)^2\right]dr\right\}=0\nonum\\
&\Ra \pa_r^2\chi\ \pa_y\chi+\pa_r\chi\ \pa_r\pa_y\chi+A'(\pa_r\chi)^2-2\ \pa_r\chi\ \pa_y\pa_r\chi+A'(\pa_r\phi)^2=0\nonum\\
&\Ra \underbrace{\pa_r^2\chi\ \pa_y\chi-\pa_r\chi\ \pa_r\pa_y\chi+A'(\pa_r\chi)^2}_{(r,y)-\text{dependent}}=\underbrace{-A'(\pa_r\phi)^2}_{(v,r,y)-\text{dependent}}\xRightarrow{\pa_v(RHS)=0} \pa_v\left[A'(\pa_r\phi)^2\right]=0\nonum\\
&\Ra 2A'\pa_r\phi\ \pa_v\pa_r\phi=0\xRightarrow[\pa_r\phi\neq0]{A'\neq0}\pa_v\pa_r\phi=0\Ra \pa_v[\pa_r\phi(v,r)]=0\Ra\pa_r\phi(v,r)=C_1(r)\nonum\\
&\Ra \phi(v,r)=\int C_1(r)\ dr+C_2(v)\Ra\boxed{\phi(v,r)=\phi_1(r)+\phi_2(v)}
\end{align}\\
$$\eqref{loc.172}\Ra\pa_vf=\bigintsss \left[\bigintsss\left(2\ \pa_r\phi\ \underbrace{\pa_v\pa_r\phi}_{0}+2\ \pa_r\chi\ \underbrace{\pa_v\pa_r\chi}_{0}\right)dr\right]dr+r\pa_vB+\pa_vC\Ra$$
\eq$\label{loc.175}
\boxed{\pa_vf=r\pa_vB+\pa_vC}$\\
$$\eqref{loc.173}\xRightarrow{\eqref{loc.174}}\underbrace{A'B-\pa_yB}_{(v,y)-\text{dependent}}=\underbrace{\pa_r\chi\pa_y\chi+\bigintsss\left[\pa_r\chi(A'\pa_r\chi-2\ \pa_y\pa_r\chi)+A'\overbrace{(\pa_r\phi)^2}^{r-\text{dependent}}\right]dr}_{(r,y)-\text{dependent}}\xRightarrow{\pa_v(LHS)=0}$$
\eq$\label{loc.176}
\Ra\pa_v\left[A'B-\pa_yB\right]=0\Ra \boxed{A'\pa_vB=\pa_v\pa_yB}$\\
\begin{align}\label{loc.177}
\eqref{loc.117}&\Ra \pa_y\pa_vf-A'\pa_vf-\frac{\pa_ym}{r}\pa_rf=\frac{f}{r}\left(\frac{\pa_ym}{r}+\pa_r\pa_ym\right)\xRightarrow{\eqref{loc.175}}\nonum\\
&\Ra r\pa_y\pa_vB+\pa_y\pa_vC-A'(r\pa_vB+\pa_vC)=\frac{f\pa_ym}{r^2}+\frac{\pa_r(f\pa_ym)}{r}\xRightarrow{\eqref{loc.176}}\nonum\\
&\Ra\underbrace{\pa_y\pa_vC-A'\pa_vC}_{(v,y)-\text{dependent}}=\underbrace{\frac{f\pa_ym}{r^2}+\frac{\pa_r(f\pa_ym)}{r}}_{(v,r,y)-\text{dependent}}
\end{align}

$$\eqref{loc.177}\xRightarrow{\pa_r(RHS)=0}\pa_r\left[\frac{f\pa_ym}{r^2}+\frac{\pa_r(f\pa_ym)}{r}\right]=0\Ra \bcancel{\frac{\pa_r(f\pa_ym)}{r^2}}-\frac{2f\pa_ym}{r^3}+\frac{\pa_r^2(f\pa_ym)}{r}-\bcancel{\frac{\pa_r(f\pa_ym)}{r^2}}=0\Ra$$
\eq$\label{loc.178}
\boxed{\pa_r^2(f\pa_ym)-\frac{2}{r^2}(f\pa_ym)=0}$\\
\par Equation \eqref{loc.178} is identical to equation \eqref{loc.153} and equation \eqref{loc.175} is identical to \eqref{loc.157} of case 13; we can henceforth follow the same steps as in case 13 in order to show that this case should also be rejected.\\
\begin{enumerate}[(i)]
\item
\begin{flushleft}
\underline{$\pa_\phi f=0$:}
\end{flushleft}
$$\eqref{loc.115}\Ra(\pa_r\phi)^2+(\pa_r\chi)^2+\pa_r^2f=0\Ra \pa_rf=-\int\left[(\pa_r\phi)^2+(\pa_r\chi)^2\right]dr+B(y)\Ra$$
\eq$\label{loc181}
\boxed{f(r,y)=-\int\left\{\int\left[(\pa_r\phi)^2+(\pa_r\chi)^2\right]dr\right\}dr+r\ B(y)+C(y)}$\\
$$\eqref{loc181}\Ra\pa_vf=0\Ra-\int\left(\int 2\pa_r\phi\pa_v\pa_r\phi\ dr\right)dr=0\Ra \pa_v\pa_r\phi=0\Ra$$
\eq$\label{loc182}
\boxed{\phi(v,r)=\phi_1(v)+\phi_2(r)}$\\
\eq$\label{loc183}
\eqref{loc.117}\Ra 0=\frac{f\pa_ym}{r^2}+\frac{\pa_r(f\pa_ym)}{r}=0\xRightarrow{\pa_ym\neq0} \boxed{f(r,y)=\frac{D(v,y)}{r\ \pa_y[m(v,r,y)]}}$
where the following constraint is necessary to be satisfied:
\eq$\label{loc184}
\pa_vf=0\Ra\boxed{\pa_v\left(\frac{D}{\pa_ym}\right)=0}$\\
$$\eqref{loc.121}\Ra \frac{e^{-2A}}{r}\left(1-\frac{2m}{r}\right)\pa_rf+A'\pa_yf-(\pa_y\chi)^2-\pa_y^2f=f\left(3A''+\frac{2e^{-2A}}{r^2}\pa_rm\right)\xRightarrow{\pa_v(\ )}$$
\eq$\label{loc185}
-\frac{2e^{-2A}}{r^2}\pa_vm\ \pa_rf=f\frac{2e^{-2A}}{r^2}\pa_v\pa_rm\Ra \boxed{-\pa_vm\ \pa_rf=f\pa_v\pa_rm }$\\
\begin{gather}
\eqref{loc185}\xRightarrow[\eqref{loc183},\eqref{loc184}]{\eqref{loc.134}}-\pa_vm\left[-\frac{D(\pa_r\pa_ym\ r+\pa_ym)}{(r\pa_ym)^2}\right]=\frac{D}{r\pa_ym}\pa_v\pa_rm\Ra\nonum\\ \nonum\\
\pa_vm(r\pa_r\pa_ym+\pa_ym)=r\pa_ym\ \pa_v\pa_rm\Ra \nonum\\ \nonum\\
\left(\sum_n(\pa_va_n)r^n\right)\left(\sum_\ell(\pa_ya_\ell)(\ell+1)r^\ell\right)=\left(\sum_n(\pa_ya_n)r^n\right)\left(\sum_\ell(\pa_va_\ell) \ell\ r^\ell\right)\Ra\nonum\\ \nonum\\
\label{loc186}
\sum_{n,\ell}(\pa_va_n)(\pa_ya_\ell)(\ell-n+1)r^{n+\ell}=0\Ra\{\pa_va_n=0\ \forall n\}\Ra \boxed{\pa_vm=0}
\end{gather}
Thus, the combination of equation \eqref{loc184} and \eqref{loc186} results to
\eq$\label{loc187}
\boxed{\pa_vD=0}$
Hence, we obtain:
\eq$\label{loc188}
\eqref{loc183}\xRightarrow[\eqref{loc187}]{\eqref{loc186}}\boxed{f(r,y)=\frac{D(y)}{r\ \pa_y\left[m(r,y)\right]}}$\\
$$\eqref{loc.119}\Ra r\pa_r\phi\pa_v\phi-\pa_rf\left(\pa_rm-\frac{3m}{r}+1\right)=f\left(\pa_r^2m-\frac{2}{r}\pa_rm\right)\xRightarrow{\pa_v(\ )}r\underbrace{\pa_v\pa_r\phi}_0\pa_v\phi+r\pa_r\phi\pa_v^2\phi=0\Ra$$
\eq$\label{loc189}
\pa_v^2\phi=0\Ra\pa_v^2\phi_1=0\Ra\boxed{\phi_1(v)=\ome v+\xi}$
where $\ome$ and $\xi$ are constants.
\eq$\label{loc1810}
\pa_r\phi\pa_v\phi=\frac{\pa_rf}{r}\left(\pa_rm-\frac{3m}{r}+1\right)+\frac{f}{r}\left(\pa_r^2m-\frac{2}{r}\pa_rm\right)$
\begin{align}\label{loc1811}
\eqref{loc.116}\Ra(\pa_v\phi)^2+\left(1-\frac{2m}{r}\right)\pa_r\phi\pa_v\phi&=-\frac{e^{-2A}}{r}\left(\pa_ym\ \pa_yf+f\pa_y^2m+4A'\ f\pa_ym\right)\nonum\\
&=-\frac{e^{-2A}}{r}\left(e^{4A}\pa_ym\ \pa_yf+e^{4A}f\pa_y^2m+4A'e^{4A}f\pa_ym\right)\nonum\\
&=-\frac{e^{-2A}}{r}\pa_y\left(e^{4A}f\pa_ym\right)=-\frac{e^{-2A}}{r}\pa_y\left(e^{4A}\frac{D}{r}\right)\nonum\\
&=-\frac{e^{-2A}}{r^2}\pa_y\left(De^{4A}\right)
\end{align}
Substituting equation \eqref{loc1810} into \eqref{loc1811} and using also equations \eqref{loc.134}, \eqref{loc182} and \eqref{loc189}, we obtain:
\begin{gather}
\ome^2+\frac{e^{-2A}}{r^2}\pa_y\left(De^{4A}\right)+\left(1-\frac{2m}{r}\right)\left[-D\left(\frac{r\ \pa_r\pa_ym+\pa_ym}{(r\pa_ym)^2}\right)\left(\pa_rm+1-\frac{3m}{r}\right)\right.\nonum\\
\left.+\frac{D}{r^2\pa_ym}\left(\pa_r^2m-\frac{2}{r}\pa_rm\right)\right]=0\Ra
\end{gather}
\begin{gather}
\ome^2+\frac{e^{-2A}}{r^2}\pa_y\left(De^{4A}\right)+\left(1-\frac{2m}{r}\right)\frac{D}{(r\pa_ym)^2}\left[-\left(r\ \pa_r\pa_ym+\pa_ym\right)\left(\pa_rm+1-\frac{3m}{r}\right)\right.\nonum\\
\left.\pa_ym\left(\pa_r^2m-\frac{2}{r}\pa_rm\right)\right]=0\Ra\nonum\\ \nonum\\
\ome^2(r\pa_ym)^2+e^{-2A}\pa_y(De^{4A})(\pa_ym)^2-(\pa_r\pa_ym\ r+\pa_ym)\left(\pa_rm+1-\frac{3m}{r}\right)\nonum\\
+\frac{2m}{r}(\pa_r\pa_ym\ r+\pa_ym)\left(\pa_rm+1-\frac{3m}{r}\right)
\pa_ym\left(\pa_r^2m-\frac{2}{r}\pa_rm\right)
-\frac{2m}{r}\pa_ym\left(\pa_r^2m-\frac{2}{r}\pa_rm\right)=0\Ra\nonum
\end{gather}
\begin{gather}
\sum_{n,\ell,k}\pa_ya_n\left\{(n+1)+\left[(\pa_ya_\ell)\left(\ome^2\ r^2+e^{-2A}\pa_y(e^{4A}D)\right)+a_\ell\left((n+1)(5-\ell)r^{-1}+\ell(\ell-3)r^{-2}\right)\right.\right.\nonum\\
\label{loc1812}
\left.\left.+2a_\ell\ a_k(\ell-3)\left((n+1)r^{k-2}-\ell\ r^{k-3}\right)\right]r^\ell\right\}r^n=0
\end{gather}
Demanding $\pa_ya_n\neq0$ the above equation is impossible to be satisfied. Therefore, in order to have a consistent field equation we need to allow $\pa_ya_n=0$ at least for some values of $n$, but this is catastrophic for the localization of the 5-dimensional black hole.\\
\item
\begin{flushleft}
\underline{$\pa_\chi f=0$:}
\end{flushleft}
$$\eqref{loc.115}\Ra (\pa_r\phi)^2+(\pa_r\chi)^2+\pa_r^2f=0\xRightarrow{\pa_y(\ )}2\pa_r\phi\underbrace{\pa_y\pa_r\phi}_0+2\pa_r\chi\ \pa_y\pa_r\chi+\underbrace{\pa_y\pa_r^2f}_0=0\Ra\pa_y\pa_r\chi=0\Ra$$
\eq$\label{loc1813}
\boxed{\chi(r,y)=\chi_1(r)+\chi(y)}$\\
$$\eqref{loc.118}\Ra\pa_r\chi\pa_y\chi-A'\pa_rf=0\xRightarrow{\pa_v(\ )}-A'\pa_v\pa_rf=0\Ra$$
\eq$\label{loc1814}
\boxed{\pa_v\pa_rf=0\Ra f(v,r)=f_1(v)+f_2(r)}$\\
$$\eqref{loc.118}\Ra \pa_r\chi\pa_y\chi-A'\pa_rf=0\Ra\chi_1'(r)\chi_2'(y)-A'(y)f_2'(r)=0\Ra\frac{\chi_2'(y)}{A'(y)}=\frac{f_2'(r)}{\chi_1'(r)}=\lam\Ra$$
\begin{gather}\label{loc1815}
\boxed{\chi_2(y)=\lam\ A(y)+\kappa}\\ \nonum\\
\label{loc1816}
\boxed{f_2(r)=\lam\ \chi_1(r)+\tilde{\kappa}}
\end{gather}\\
\eq$\label{loc1817}
\eqref{loc.115}\xRightarrow[\eqref{loc1816}]{\pa_v(\ )}2\pa_r\phi\ \pa_v\pa_r\phi=0\Ra \boxed{\pa_v\pa_r\phi=0\Ra \phi(v,r)=\phi_1(v)+\phi_2(r)}$\\
$$\eqref{loc.117}\Ra \underbrace{-A'\pa_vf}_{(v,y)-dependent}=\underbrace{\frac{f\pa_ym}{r^2}+\frac{\pa_r(f\pa_ym)}{r}}_{(v,r,y)-dependent}\xRightarrow{\pa_r(RHS)=0}\pa_r^2(f\pa_ym)-\frac{2f\pa_ym}{r^2}=0\xRightarrow{\pa_ym\neq0}$$
\eq$\label{loc1818}
\boxed{f(v,r)=\frac{1}{\pa_y[m(v,r,y)]}\left[\frac{B(v,y)}{r}+C(v,y)\ r^2\right]}$\\
\eq$\label{loc1819}
\eqref{loc1818}\xRightarrow{\pa_yf=0}\left\{\begin{array}{c}\pa_y\left(\frac{B}{\pa_ym}\right)=0\\ \\
\pa_y\left(\frac{C}{\pa_ym}\right)=0\end{array}\right\}$\\
$$\pa_y\left(\frac{B}{\pa_ym}\right)=0\Ra\pa_B\pa_ym-B\pa_y^2m=0\Ra\sum_n\left(\pa_yB\pa_ya_n-B\pa_y^2a_n\right)r^n=0\Ra$$
\eq$\label{loc1820}
\frac{\pa_yB}{B}=\frac{\pa_y(\pa_ya_n)}{\pa_ya_n}\Ra\boxed{\frac{\pa_y[a_n(v,y)]}{B(v,y)}=b_n(v)}$
Similarly, we have:
\eq$\label{loc1821}
\boxed{\frac{\pa_y[a_n(v,y)]}{C(v,y)}=c_n(v)}$
Equation \eqref{loc1818} with the use of equations \eqref{loc1820} and \eqref{loc1821} gets the following form:
\eq$\label{l22}
\boxed{f(v,r)=\frac{1}{\sum_n d_n(v)r^{n+1}}+\frac{1}{\sum_n c_n(v)r^{n-2}}}$
Finally, from equation \eqref{loc1814} the relation $\pa_v\pa_rf=0$ should be satisfied. Thus, it is:
\begin{align}
\pa_v\pa_rf=0\xRightarrow{\eqref{l22}}&\ \frac{\sum_n d_n'(n+1)r^{n}}{\left(\sum_n d_nr^{n}\right)^2}-\frac{2\sum_n d_n(n+1)r^{n}\sum_\ell d_\ell'\ r^{\ell+1}}{\left(\sum_n d_nr^{n+1}\right)^3}\nonum\\
&+\frac{\sum_n c_n'(n-2)r^{n-3}}{\left(\sum_n c_nr^{n-2}\right)^2}-\frac{2\sum_n c_n(n-2)r^{n-3}\sum_\ell c_\ell'\ r^{\ell-2}}{\left(\sum_n c_nr^{n-2}\right)^3}=0\Ra\nonum
\end{align}\\
\eq$\label{l23}
\left\{\begin{array}{c}d_n'(v)=0\ \forall n\vspace{0.5em}\\and\vspace{0.5em} \\ c_n'(v)=0\ \forall n\end{array}\right\}\Ra \boxed{\pa_vf=0}$
Therefore, this sub-case should be rejected as well. In the previous analysis we silently assumed that both $B(v,y)$ and $C(v,y)$ were not equal to zero. In case that one of these functions is zero, we are led to exactly the same result, namely $\pa_vf=0$. This happens because even if we nullify one of the functions $B(v,y)$ or $C(v,y)$ in equation \eqref{loc1818}, none of the following steps is going to be changed.\\
\end{enumerate}
\begin{center}
\underline{\textbf{19)}\hspace{1.5em}$\{\phi=\phi(v,y),\ \chi=\chi(v,y),\ f=f(\phi,\chi)=f(v,y)\}$}
\end{center}
$$\eqref{loc.120}\Ra \underbrace{A'\pa_yf-(\pa_y\phi)^2-(\pa_y\chi)^2+\pa_y^2f}_{(u,y)-dependent}=\underbrace{f\left(3A''+\frac{e^{-2A}}{r}\pa_r^2m\right)}_{(v,r,y)-dependent}\xRightarrow{\pa_r(RHS)=0}fe^{-2A}\pa_r\left(\frac{\pa_r^2m}{r}\right)=0\Ra$$
\eq$\label{l24}
\pa_r^3m-\frac{\pa_r^2m}{r}=0\Ra \boxed{m(v,r,y)=a_0(v,y)+a_1(v,y)r+a_3(v,y)r^3}$\\
$$\eqref{loc.119}\Ra-\pa_vf=f\left(\pa_r^2m-\frac{2}{r}\pa_rm\right)\xRightarrow{\eqref{l24}}-\pa_vf=-f\frac{2a_1}{r}\Ra\boxed{\frac{\pa_v[f(v,y)]}{f(v,y)}=\frac{2a_1(v,y)}{r}}$$
\par It is clear that the last equation cannot be true, because of the factor $1/r$ in the RHS. For the sub-cases $\pa_\chi f=0$ and $\pa_\phi f=0$ nothing changes. The previous equations have exactly the same form, thus, the sub-cases are rejected as well.\\
\begin{center}
\underline{\textbf{20)}\hspace{1.5em}$\{\phi=\phi(v,y),\ \chi=\chi(r,y),\ f=f(\phi,\chi)=f(v,r,y)\}$}
\end{center}
$$\eqref{loc.115}\Ra (\pa_r\chi)^2+\pa_r^2f=0\Ra \pa_rf=-\int (\pa_r\chi)^2dr+B(v,y)\Ra$$
\eq$\label{loc.179}
\boxed{f=-\int \left[\int(\pa_r\chi)^2dr\right]dr+rB(v,y)+C(v,y)}$\\
$$\eqref{loc.118}\Ra \pa_r\chi\pa_y\chi+\pa_r\pa_yf-A'\pa_rf=0\xRightarrow{\eqref{loc.179}}$$
$$\pa_r\chi\pa_r\chi-2\int\pa_r\chi\ \pa_y\pa_r\chi\ dr+B-A'\left[-\int(\pa_r\chi)^2dr+\pa_yB\right]=0\Ra$$
\eq$\label{loc.180}
\underbrace{A'B-\pa_yB}_{(v,y)-\text{dependent}}=\underbrace{\pa_r\chi\pa_y\chi+\int\pa_r\chi(A'\pa_r\chi-2\ \pa_y\pa_r\chi)dr}_{(r,y)-\text{dependent}}$\\
\eq$\label{loc.181}
\eqref{loc.180}\xRightarrow{\pa_v(LHS)=0}\pa_v(A'B-\pa_yB)=0\Ra \boxed{A'\pa_vB=\pa_v\pa_yB}$\\
$$\eqref{loc.180}\xRightarrow{\pa_r(RHS)=0}\pa_r^2\chi\pa_y\chi+\pa_r\chi\ \pa_r\pa_y\chi+\pa_r\chi(A'\pa_r\chi-2\ \pa_r\pa_y\chi)=0\Ra$$
$$\pa_r^2\chi\pa_y\chi-\pa_r\chi\ \pa_r\pa_y\chi+A'(\pa_r\chi)^2=0\xRightarrow{\pa_r\phi\ne\chi0}-\pa_r\left(\frac{\pa_y\chi}{\pa_r\chi}\right)(\pa_r\chi)^2+A'(\pa_r\chi)^2=0\Ra$$
\eq$\label{loc.182}
\pa_r\left(\frac{\pa_y\chi}{\pa_r\chi}\right)=A'\Ra \boxed{\pa_y\chi=\pa_r\chi[A'\ r+F(y)]}$

\begin{align}\label{loc.183}
\eqref{loc.117}&\Ra \pa_v\phi\ \pa_y\phi+\pa_y\pa_vf-A'\pa_vf-\frac{\pa_ym\ \pa_rf}{r}=\frac{f}{r}\left(\frac{\pa_ym}{r}+\pa_r\pa_ym\right)\nonum\\
&\Ra\pa_v\phi\ \pa_y\phi+r\pa_y\pa_vB+\pa_y\pa_vC-A'(r\pa_vB+\pa_vC)=\frac{f\pa_ym}{r^2}+\frac{\pa_r(f\pa_ym)}{r}\xRightarrow{\eqref{loc.181}}\nonum\\
&\Ra \underbrace{\pa_v\phi\ \pa_y\phi+\pa_y\pa_vC-A'\pa_vC}_{(v,y)-\text{dependent}}=\underbrace{\frac{f\pa_ym}{r^2}+\frac{\pa_r(f\pa_ym)}{r}}_{(v,r,y)-\text{dependent}}\xRightarrow{\pa_r(RHS)=0}\nonum\\
&\Ra \pa_r\left[\frac{f\pa_ym}{r^2}+\frac{\pa_r(f\pa_ym)}{r}\right]=0\Ra \pa_r^2(f\pa_ym)-\frac{2}{r^2}(f\pa_ym)=0\nonum\\
&\Ra \boxed{f(v,r,y)=\frac{1}{\pa_y[m(v,r,y)]}\left[\frac{D(v,y)}{r}+E(v,y)r^2\right]}
\end{align}
\par We have already shown in previous cases (see case 13) that the last related combined with equation $\pa_vf=r\pa_vB+\pa_vC$ and \eqref{loc.134} leads to an inconsistency. If we assume $\pa_\phi f=0$ then equation \eqref{loc.115} gives $\pa_r\chi=0$ which is not accepted. If on the other hand assume $\pa_\chi f=0$ then using equations \eqref{loc.115}, \eqref{loc.118} and \eqref{loc.117} as before, we obtain again the relation \eqref{loc.183} (but now $\pa_vf=0$). Finally, with the use of \eqref{loc.119} it is:
$$-\pa_rf\left(\pa_rm+1-\frac{3m}{r}\right)=f\left(\pa_r^2m-\frac{2}{r}\pa_rm\right)$$
Using equations \eqref{loc.134}, \eqref{loc.183} into the previous expressions and after some algebra we get
$$\boxed{\sum_n(\pa_ya_n)\left\{\frac{D}{r^2}(n+1)+E\ r(n-2)+\sum_\ell a_\ell(\ell-3)\left[\frac{D}{r^3}r^\ell(\ell+n+1)+E\ r^\ell(\ell+n-2)\right]\right\}=0}$$
In the above equation, if we demand $\pa_ya_n\neq0$ then it is also restrictive to nullify both functions $D(y)$ and $E(y)$ in order for the relation to be consistent. However, there is the possibility to fix $n,\ell=1$ and $E(y)=0$. Then, we can calculate the value of the function $D(y)$ that manages to make the LHS of the above expression to vanish. The problem in this sub-sub-case is that for $n=1$ the metric tensor does not describe a black hole. Hence, this complete case does not give a viable solution to the problem.\\
\begin{center}
\underline{\textbf{21)}\hspace{1.5em}$\{\phi=\phi(r,y),\ \chi=\chi(r,y),\ f=f(\phi,\chi)=f(r,y)\}$}
\end{center}
\par The field equation \eqref{loc.117}, in this case, leads to the same expression for the coupling function $f(r,y)$ as in case 14, namely \eqref{loc.160}. Additionally, if we combine equations \eqref{loc.119}, \eqref{loc.134} and \eqref{loc.160}, as it has already been done in case 13, we are led to the same negative results. The sub-cases $\pa_\phi f=0$ and $\pa_\chi f=0$ do not solve the problem either. They produce the same results as the original case 21.\vspace{1em}

\section{$\mathbold{f(\phi,\chi)=a\ \phi+b\ \chi}$}
\par The rest of the cases are extremely difficult to be checked and even more difficult to be solved by using the field equations \eqref{loc.115}-\eqref{loc.121}; the complexity of the differential equations that emerge by the field equations increases dramatically with each variable that we add into the fields. Hence, we are going to examine the behaviour of some special coupling functions $f(\phi,\chi)=a\phi+b\chi$ in each one of the remaining cases (22-28).\\
\begin{center}
\underline{\textbf{22)}\hspace{1.5em}$\{\phi=\phi(v,r,y),\ \chi=\chi(v)\}$}
\end{center}\vspace{0.4em}
We consider the general case where $a,b\in \Re\wedge a\neq0\wedge b\neq0$.\vspace{0.6em}
\begin{align}\label{loc.184}
\eqref{loc.115}&\Ra (\pa_r\phi)^2+a\ \pa_r^2\phi=0\Ra -\frac{\pa(\pa_r\phi)}{(\pa_r\phi)^2}=\frac{dr}{a}\Ra (\pa_r\phi)^{-1}=\frac{r}{a}+B(v,y)\nonum\\
&\Ra\pa_r\phi=\frac{1}{\frac{r}{a}+B(v,y)}\Ra \pa_r\phi=\frac{a}{r+a B(v,y)}\xRightarrow{a B(v,y)\ra B(v,y)}\nonum\\
&\Ra \pa_r\phi=\frac{a}{r+B(v,y)}\Ra \boxed{\phi(v,r,y)=a\ln[r+B(v,y)]+C(v,y)}
\end{align}
\begin{align}\label{loc.185}
\eqref{loc.118}&\Ra \pa_r\phi\ \pa_y\phi+a\ \pa_r\pa_y\phi-A'a\ \pa_r\phi=0\xRightarrow{\eqref{loc.184}}\nonum\\
&\Ra\frac{a}{r+B}\left(\frac{\pa_yB}{r+B}+\pa_yC\right)-\frac{a^2\pa_yB}{(r+B)^2}-\frac{A'\ a^2}{r+B}=0\nonum\\
&\Ra \frac{a\pa_yC-A'\ a^2}{r+B}=0\Ra \pa_yC=a\ A'\Ra \boxed{C(v,y)=a A(y)+D(v)}
\end{align}\\
\eq$\label{loc.186}
\eqref{loc.184}\xRightarrow{\eqref{loc.185}}\boxed{\phi(v,r,y)=a\{\ln[r+B(v,y)]+A(y)\}+D(v)}$
\begin{align}
\eqref{loc.117}&\Ra \pa_y\phi\pa_v\phi+a\ \pa_y\pa_v\phi-A'( a\pa_v\phi+b\pa_v\chi)-\frac{\pa_ym}{r}a\ \pa_r\phi=\frac{f}{r}\left(\frac{\pa_ym}{r}+\pa_r\pa_ym\right)\xRightarrow{\eqref{loc.186}}\nonum\\
&\Ra\left(a\frac{\pa_yB}{r+B}+aA'\right)\left(a\ \frac{\pa_vB}{r+B}+\pa_vD\right)+a\left(\frac{\pa_y\pa_vB}{r+B}-\frac{a\pa_vB\pa_yB}{(r+B)^2}\right)\nonum\\
&\hspace{4em}-A'\left(\frac{a^2\pa_vB}{r+B}+aD'+b\chi'\right)-\frac{\pa_ym}{r}\frac{a^2}{r+B}=\frac{a\phi+b\chi}{r}\left(\frac{\pa_ym}{r}+\pa_r\pa_ym\right)\xRightarrow{\times(r+B)}\nonum
\end{align}
\eq$\label{sc1}
\underbrace{aD'\pa_yB+a^2\pa_v\pa_yB-bBA'\chi'}_{(v,y)-dependent}=\underbrace{(a\phi+b\chi)(r+B)\left(\frac{\pa_ym}{r^2}+\frac{\pa_r\pa_ym}{r}\right)+A'b\chi'r+\frac{a^2\pa_ym}{r}}_{(v,r,y)-dependent}$\\
Demanding the consistency of the previous equation, we are led to demand $\pa_r(RHS)=0$. Thus, it is:
\begin{gather}
\eqref{sc1}\xRightarrow{\pa_r(RHS)=0}A'b\chi'+a^2\left(\frac{\pa_r\pa_ym}{r}-\frac{\pa_ym}{r^2}\right)+a^2\left(\frac{\pa_ym}{r^2}+\frac{\pa_r\pa_ym}{r}\right)+(a\phi+b\chi)\left(\frac{\pa_ym}{r^2}+\frac{\pa_r\pa_ym}{r}\right)\nonum\\
+(a\phi+b\chi)(r+B)\left(-2\frac{\pa_ym}{r^3}+\frac{\pa_r^2\pa_ym}{r}\right)=0\Ra\nonum\\ \nonum\\
A'b\chi'+2a^2\frac{\pa_r\pa_ym}{r}+(a\phi+b\chi)\left(\frac{\pa_r\pa_ym}{r}-\frac{\pa_ym}{r^2}+\pa_r^2\pa_ym-2B\frac{\pa_ym}{r^3}+B\frac{\pa_r^2\pa_ym}{r}\right)=0\xRightarrow{\eqref{loc.134}}\nonum\\ \nonum\\
A'b\chi'+2a^2\sum_n(\pa_ya_n)nr^{n-2}+(a\phi+b\chi)\left[\sum_n(\pa_ya_n)nr^{n-2}-\sum_n(\pa_ya_n)r^{n-2}+\right.\nonum\\\left.+\sum_n(\pa_ya_n)n(n-1)r^{n-2}-2B\sum_n(\pa_ya_n)r^{n-3}+B\sum_n(\pa_ya_n)n(n-1)r^{n-3}\right]=0\Ra\nonum\\ \nonum\\
\label{sc2}
\boxed{A'b\chi'+\sum_n(\pa_ya_n)\left\{2a^2nr+(a\phi+b\chi)(n+1)\left[(n-1)r+B(n-2)\right]\right\}r^{n-3}}
\end{gather}
Both for $b=0$ and $b\neq0$ the above equation cannot have an overall value of zero, without allowing some of the function $a_n(v,y)$ to be $y-$independent. This is easy to be understood by writing the equation in the following expanded form.
\begin{gather}
A'b\chi'+\sum_n(\pa_ya_n)[2a^2n+b\chi(n+1)]r^{n-2}+Bb\chi\sum_n(\pa_ya_n)(n+1)(n-2)r^{n-3}+\nonum\\
\label{sc3}
+a\phi\sum_n(\pa_ya_n)(n+1)(n-1)r^{n-2}+Ba\phi\sum_n(\pa_ya_n)(n+1)(n-2)r^{n-3}=0
\end{gather}
Demanding finite number of values for $n$\footnote{The parameter $n$ is directly connected to the behaviour of the mass function $m(v,r,y)$ with respect to the variable $r$ via equation \eqref{loc.134}. Thus, the parameter $n$ should take some specific values in order to lead to a mass function that describes a black hole, or even a modified black hole. In any case, we cannot allow $n$ to take infinite number of values.}, it is impossible to find a set of values that could allow the last two terms of equation \eqref{sc3} to cancel each other (any other combination of terms in order to have a vanishing LHS either results to $\pa_r\phi=0$ or to an inconsistency in the expression of $\phi$ with respect to $r$). Hence, the consistency of the equation can only be achieved by setting $\pa_ya_n=0$ for some values of $n$. However, this would be devastating for the localization. Consequently, this case should also be excluded. Lastly, for $a=0$, equation \eqref{loc.115} results to $\pa_r\phi=0$, meaning that we need to investigate the case $\{\phi(v,y),\ \chi=\chi(v)\}$, but this case has already been excluded.\vspace{1em}
\begin{center}
\underline{\textbf{23)}\hspace{1.5em}$\{\phi=\phi(v,r,y),\ \chi=\chi(r)\}$}
\end{center}
$$\eqref{loc.115}\Ra(\pa_r\phi)^2+(\chi')^2+a\ \pa_r^2\phi+b\ \chi''=0\Ra (\pa_r\phi)^2+a\ \pa_r^2\phi=-(\chi')^2-b\ \chi''\Ra$$
\eq$\label{sc4}
\boxed{(\pa_r\phi)^2+a\ \pa_r^2\phi=h(r)=-(\chi')^2-b\ \chi''}$\\
With the use of mathematica or any other software that can solve differential equations, one can verify that the above differential equations (with respect to $\phi$ and $\chi$) have extremely complicated solutions even in case that $h(r)=r$. One can also examine different functions for $h(r)$ in order to be completely convinced. As an example we present the solutions of the differential equations with respect to the scalar fields $\phi, \chi$ that emanate from the relation \eqref{sc4} in case of $h(r)=r$.
\begin{align}
\phi(v,r,y)=&-\bigintss_1^r \frac{a \left\{\sqrt{-\frac{1}{a}}
   \sqrt{\frac{1}{a}} q^{3/2} \left[-c_1 J_{-\frac{4}{3}}\left(\frac{2}{3}
   \sqrt{-\frac{1}{a}} \sqrt{\frac{1}{a}} q^{3/2}\right)+c_1
   J_{\frac{2}{3}}\left(\frac{2}{3} \sqrt{-\frac{1}{a}} \sqrt{\frac{1}{a}}
   q^{3/2}\right)\right]\right\}}{2 q \left(c_1
   J_{-\frac{1}{3}}\left(\frac{2}{3} \sqrt{-\frac{1}{a}} \sqrt{\frac{1}{a}}
   q^{3/2}\right)+J_{\frac{1}{3}}\left(\frac{2}{3} \sqrt{-\frac{1}{a}} \sqrt{\frac{1}{a}}
   q^{3/2}\right)\right)} \, dq\nonum\\
   &+\bigintss_1^r \frac{a\left[2 \sqrt{-\frac{1}{a}}
   \sqrt{\frac{1}{a}} q^{3/2}J_\frac{2}{3}\left(\frac{2}{3} \sqrt{-\frac{1}{a}}
   \sqrt{\frac{1}{a}} q^{3/2}\right)+c_1 J_{-\frac{1}{3}}\left(\frac{2}{3}
   \sqrt{-\frac{1}{a}} \sqrt{\frac{1}{a}} q^{3/2}\right)\right]}{2 q \left(c_1
   J_{-\frac{1}{3}}\left(\frac{2}{3} \sqrt{-\frac{1}{a}} \sqrt{\frac{1}{a}}
   q^{3/2}\right)+J_{\frac{1}{3}}\left(\frac{2}{3} \sqrt{-\frac{1}{a}} \sqrt{\frac{1}{a}}
   q^{3/2}\right)\right)} \, dq+c_2\nonum
\end{align}\\
$$\chi(r)=\bigintss_1^r \frac{ q^{3/2} \left[-c_1
   J_{-\frac{4}{3}}\left(-\frac{2 q^{3/2}}{3 b}\right)+c_1 J_{\frac{2}{3}}\left(-\frac{2
   q^{3/2}}{3 b}\right)-2 J_{-\frac{2}{3}}\left(-\frac{2 q^{3/2}}{3
   b}\right)\right]+b\ c_1 J_{-\frac{1}{3}}\left(-\frac{2 q^{3/2}}{3 b}\right)}{2
   q \left(c_1 J_{-\frac{1}{3}}\left(-\frac{2 q^{3/2}}{3
   b}\right)+J_{\frac{1}{3}}\left(-\frac{2 q^{3/2}}{3 b}\right)\right)} \,
   dq+c_2$$
Thus, it would be practically impossible to use these complicated expression of the fields $\phi,\ \chi$ into the other field equations in order to examine their consistency. Consequently, we are essentially obliged to consider only the cases in which the function $h(r)$ is simply a constant (negative or positive) and zero.
\begin{enumerate}[(i)]
\item \underline{$h(r)=C=Q^2>0\wedge a\neq 0\wedge b\neq0$:}
\eq$\label{sc5}
\eqref{sc4}\Ra(\pa_r\phi)^2+a\ \pa_r^2\phi=Q^2=-(\chi')^2-b\ \chi''\Ra\left\{\begin{array}{c}(\pa_r\phi)^2+a\ \pa_r^2\phi=Q^2\\ \\ (\chi')^2+b\ \chi''=-Q^2\end{array}\right\}$
From equation \eqref{sc5} we obtain the following functions for the scalar fields:
\begin{gather}\label{sc6}
\boxed{\phi(v,r,y)=a\ \ln\left[\cosh\left(\frac{Q}{a}r+B(v,y)\right)\right]+D(v,y)}\\ \nonum\\
\label{sc7}
\boxed{\chi(r)=b \ln\left[\cos\left(\frac{Q}{b}r+E\right)\right]+F}
\end{gather}
We have assumed that $C>0$. In case that $C<0$, we would get the same expressions for the fields with the difference that now the hyperbolic cosine would describe the field $\chi$, while the cosine would correspond to the field $\phi$. The subsequent analysis in each case is the same.
$$\eqref{loc.118}\Ra\pa_r\phi\pa_y\phi+a\pa_r\pa_y\phi-A'(a\pa_r\phi+b\pa_r\chi)=0\xRightarrow[\eqref{sc7}]{\eqref{sc6}}$$
\begin{gather}
Q\tanh\left(\frac{Q}{a}r+B\right)\left[a\tanh\left(\frac{Q}{a}r+B\right)\pa_yB+\pa_yD\right]+aQ\frac{\pa_yB}{\cosh^2\left(\frac{Q}{a}r+B\right)}\nonum\\
-A'aQ\tanh\left(\frac{Q}{a}r+B\right)+A'bQ\tan\left(\frac{Q}{b}r+E\right)=0\Ra\nonum\\ \nonum\\
a\pa_yB+\tanh\left(\frac{Q}{a}r+B\right)(\pa_yD-aA')+A'b\tan\left(\frac{Q}{b}r+E\right)=0\Ra\left\{\begin{array}{c}
\pa_yB=0\vspace{0.5em}\\ and\vspace{0.5em}\\ 
\pa_yD-aA'=0\vspace{0.5em}\\ and\vspace{0.5em}\\
A'=0\end{array}\right\}\xRightarrow{A'\neq0}\text{rejected}\nonum
\end{gather}
\item \underline{$h(r)=C=-Q^2<0\wedge a\neq 0\wedge b=0$:}\\
\eq$\label{sc8}\eqref{sc4}\Ra(\pa_r \phi)^2+a\ \pa_r^2\phi=-Q^2=-(\chi')^2\Ra\left\{\begin{array}{c}(\pa_r\phi)^2+a\ \pa_r^2\phi=-Q^2\\ \\ (\chi')^2=Q^2\end{array}\right\}$
From equation \eqref{sc8} we get:
\eq$\label{sc9}
\boxed{\phi(v,r,y)=a\ \ln\left[\cos\left(\frac{Q}{a}r+B(v,y)\right)\right]+D(v,y)}$
\eq$\label{sc10}
\boxed{\chi(r)=\pm Qr+E}$
\begin{gather}
\eqref{loc.118}\Ra \pa_r\phi\pa_y\phi+a\pa_r\pa_y\phi-A'a\pa_r\phi=0\xRightarrow{\eqref{sc9}}\nonum\\ \nonum\\ 
-Q\tan\left(\frac{Q}{a}r+B\right)\left[-a\tan\left(\frac{Q}{a}r+B\right)\pa_yB+\pa_yD\right]-aQ\frac{\pa_yB}{\cos^2\left(\frac{Q}{a}r+B\right)}+A'aQ\tan\left(\frac{Q}{a}r+B\right)=0\Ra\nonum\\ \nonum\\
-aQ\pa_yB-Q\tan\left(\frac{Q}{a}r+B\right)(\pa_yD-aA')=0\Ra\nonum\\ \nonum\\
\label{sc11}
\left\{\begin{array}{c}
\pa_yB=0\Ra \boxed{B=B(v)}\vspace{0.5em}\\ and\vspace{0.5em}\\ \pa_yD=aA'\Ra \boxed{D(v,y)=aA(y)+F(v)}\end{array}\right\} 
\end{gather}
Thus, equations \eqref{sc9} and \eqref{sc11} lead to
\eq$\label{sc12}
\boxed{\phi(v,r,y)=a\ \ln\left[\cos\left(\frac{Q}{a}r+B(v)\right)\right]+a\ A(y)+F(v)}$
\begin{gather}
\eqref{loc.117}\Ra\pa_y\phi\pa_v\phi+a\underbrace{\pa_y\pa_v\phi}_0-A'a\pa_v\phi=\frac{a\phi}{r}\left(\frac{\pa_ym}{r}+\pa_r\pa_ym\right)+\frac{\pa_ym}{r}a\pa_r\phi\Ra\nonum\\ \nonum\\
aA'\left[-\tan\left(\frac{Q}{a}r+B\right)\pa_vB+\pa_vF\right]-A'a\left[-a\tan\left(\frac{Q}{a}r+B\right)\pa_vB+\pa_vF\right]=\frac{a\phi\pa_ym}{r^2}+\frac{\pa_r(a\phi\pa_ym)}{r}\Ra\nonum\\ \nonum\\
\label{sc13}
\frac{a\phi\pa_ym}{r^2}+\frac{\pa_r(a\phi\pa_ym)}{r}=0\Ra\frac{\pa(\phi\pa_ym)}{\phi\pa_ym}=-\frac{dr}{r}\Ra\boxed{\pa_ym=\frac{G(v,y)}{r\phi}}
\end{gather}
The substitution of equation \eqref{sc12} into \eqref{sc13} leads to
\eq$\label{sc14}
\boxed{\pa_ym=G(v,y)r^{-1}\left\{a\ \ln\left[\cos\left(\frac{Q}{a}r+B(v)\right)\right]a\ A(y)+F(v)\right\}^{-1}}$
An expansion of the form $m(v,r,y)=\sum_na_n(v,y)r^n$ and a finite number of values for the index $n$ cannot produce the above equation for the mass function. Hence, this sub-case does not result to a viable solution for the localization problem.\\
\item \underline{$h(r)=C=Q^2>0\wedge a=0\wedge b\neq0$:}\\
\par Using again equation \eqref{sc4} in the same way as in the previous sub-case, we obtain:
\eq$\label{sc15}
\boxed{\phi(v,r,y)=\pm Qr+B(v,y)}$
\eq$\label{sc16}
\boxed{\chi(r)=b\ \ln\left[\cos\left(\frac{Q}{b}r+D\right)\right]+E}$
\begin{gather}
\eqref{loc.118}\Ra\pa_r\phi\pa_y\phi-A'b\ \pa_r\chi=0\xRightarrow[\eqref{sc16}]{\eqref{sc15}}\pm Q\pa_yB+A'Q\tan\left(\frac{Q}{b}r+D\right)=0\Ra\nonum\\ \nonum\\ 
\pm \pa_yB+A'\tan\left(\frac{Q}{b}r+D\right)=0\Ra\left\{\begin{array}{c}\pa_yB=0\vspace{0.5em}\\ and\vspace{0.5em}\\ A'=0\end{array}\right\}\xRightarrow{A'\neq0}\text{rejected}\nonum
\end{gather}
\item \underline{$h(r)=0\wedge a\neq 0\wedge b\neq0$:}\\
\par In this sub-case, it is not possible to assume either $a=0$ or $b=0$, because from equation \eqref{sc4} we are led to $\pa_r\phi=0$ or $\pa_r\chi=0$ respectively, but these results are inconsistent with our primary assumption for the fields. Consequently, from equation \eqref{sc4} we get:
$$(\pa_r\phi)^2+a\ \pa_r^2\phi=0=(\chi')^2+b\ \chi''\Ra\left\{\begin{array}{c}(\pa_r\phi)^2+a\ \pa_r^2\phi=0\\ \\ (\chi')^2+b\ \chi''=0\end{array}\right\}$$
The solutions that emanate from the above differential equations are:
\eq$\label{sc17}
\boxed{\phi(v,r,y)=a\ \ln\left[r+B(v,y)\right]+C(v,y)}$
\eq$\label{sc18}
\boxed{\chi(r)=b\ \ln(r+D)+E}$
\begin{gather}
\eqref{loc.118}\Ra \pa_r\phi\pa_y\phi+a\ \pa_r\pa_y\phi-A'(a\ \pa_r\phi+b\ \pa_r\chi)=0\xRightarrow[\eqref{sc17}]{\eqref{sc18}}\nonum\\ \nonum\\
\frac{a}{r+B}\left(\frac{a\pa_yB}{r+B}+\pa_yC\right)-\frac{a^2\pa_yB}{(r+B)^2}-A'\left(\frac{a^2}{r+B}+\frac{b^2}{r+D}\right)=0\Ra\nonum\\ \nonum\\
\label{sc19}
\frac{a\pa_yC}{r+B}-\frac{A'a^2}{r+B}-\frac{b^2A'}{r+D}=0\Ra \underbrace{\pa_y[C(v,y)]-aA'(y)}_{(v,y)-dependent}=\underbrace{A'(y)\frac{b^2}{a}\ \frac{r+B(v,y)}{r+D}}_{(v,r,y)-dependent}
\end{gather}
Demanding the consistency of equation \eqref{sc19}, we get the following constraint.
\eq$\label{sc20}
\eqref{sc19}\xRightarrow{\pa_r(RHS)=0}\pa_r\left(\frac{r+B(v,y)}{r+D}\right)=0\Ra \boxed{B(v,y)=D}$
Substituting equation \eqref{sc20} into \eqref{sc19}, we obtain:
\eq$\label{sc21}
\pa_y[C(v,y)]-aA'(y)=A'(y)\frac{b^2}{a}\Ra \boxed{C(v,y)=\frac{a^2+b^2}{a}A(y)+F(v)}$
Hence, from equations \eqref{sc17}, \eqref{sc20} and \eqref{sc21} it is:
\eq$\label{sc22}
\boxed{\phi(v,r,y)=a\ \ln(r+D)+\frac{a^2+b^2}{a}A(y)+F(v)}$
\begin{gather}
\eqref{loc.117}\Ra\pa_y\phi\pa_v\phi+a\underbrace{ \pa_y\pa_v\phi}_0-A'a\ \pa_v\phi-\frac{\pa_ym}{r}\pa_rf=\frac{f}{r}\left(\frac{\pa_ym}{r}+\pa_r\pa_ym\right)\Ra\nonum\\ \nonum\\ 
\underbrace{\frac{a^2+b^2}{a}A'F'-aA'F'}_{(v,y)-dependent}=\underbrace{\frac{f\pa_ym}{r^2}+\frac{\pa_r(f\pa_ym)}{r}}_{(v,r,y)-dependent}\xRightarrow{\pa_r(RHS)=0}\nonum\\ \nonum\\ 
\label{sc23}
\pa_r^2(f\pa_ym)-\frac{2f\pa_ym}{r^2}=0\xRightarrow{\pa_ym\neq0}\boxed{f(v,r,y)=\frac{1}{\pa_y[m(v,r,y)]}\left[\frac{G(v,y)}{r}+H(v,y)r^2\right]}
\end{gather}
However, the coupling function $f(v,r,y)$ is also given by the primary expression $f=a\ \phi+b\ \chi$. Thus, the substitution of equations \eqref{sc18} and \eqref{sc22}
into the previous expression and then equating the result with the RHS of equation \eqref{sc23}, we obtain:
$$(a^2+b^2)\ln(r+D)+(a^2+b^2)A(y)+aF(v)+bE=\frac{1}{\pa_ym}\left[\frac{G(v,y)}{r}+H(v,y)r^2\right]\Ra$$
\eq$\label{sc24}
\boxed{\pa_ym=\frac{\frac{G(v,y)}{r}+H(v,y)r^2}{(a^2+b^2)\ln(r+D)+(a^2+b^2)A(y)+aF(v)+bE}}$
Similarly to sub-case (ii), the above equation cannot be consistent with the expansion of equation \eqref{loc.134} and the condition of having a finite number of values for the index $n$. Hence, this sub-case is also rejected.\\
\end{enumerate}

\newpage
\begin{center}
\underline{\textbf{24)}\hspace{1.5em}$\{\phi=\phi(v,r,y),\ \chi=\chi(y)\}$}
\end{center}
\eq$\label{sc25}
\eqref{loc.115}\Ra (\pa_r\phi)^2+a\ \pa_r^2\phi=0\Ra \boxed{\phi(v,r,y)=a\ \ln[r+B(v,y)]+C(v,y)}$\\
$$\eqref{loc.118}\Ra\pa_r\phi\pa_y\phi+a\ \pa_r\pa_y\phi-A'a\ \pa_r\phi=0\Ra\frac{a}{r+B}\left(\frac{a\pa_yB}{r+B}+\pa_yC\right)-\frac{a^2\pa_yB}{(r+B)^2}-\frac{a^2A'}{r+B}=0\Ra$$
\eq$\label{sc26}
\pa_yC=aA'\Ra \boxed{C(v,y)=aA(y)+D(v)}$\\
\eq$\label{sc27}
\eqref{sc25}\xRightarrow{\eqref{sc26}}\boxed{\phi(v,r,y)=a\ \ln[r+B(v,y)]+aA(y)+D(v)}$
\begin{gather}
\eqref{loc.117}\Ra \pa_y\phi\pa_v\phi+a\ \pa_v\pa_y\phi-A'a\ \pa_v\phi-\frac{\pa_ym}{r}a\ \pa_r\phi=\frac{a\phi+b\chi}{r}\left(\frac{\pa_ym}{r}+\pa_r\pa_ym\right)\xRightarrow{\eqref{sc27}}\nonum\\ \nonum\\
\label{sc28}
\underbrace{a\ \pa_yB\ D'+a^2\pa_v\pa_yB}_{(v,y)-dependent}=\underbrace{(a\phi+b\chi)(r+B)\left(\frac{\pa_ym}{r}+\pa_r\pa_ym\right)+a^2\frac{\pa_ym}{r}}_{(v,r,y)-dependent}
\end{gather}
The RHS of equation \eqref{sc28} is similar to the RHS of equation \eqref{sc1} except for the missing term $A'b\chi'r$. Thus, demanding $\pa_r(RHS)=0$ of equation \eqref{sc28} (as in case 22) and also using equation \eqref{loc.134}, we get:
\eq$\label{sc29}
\boxed{\sum_n(\pa_ya_n)\left\{2a^2nr+(a\phi+b\chi)(n+1)[(n-1)r+B(n-2)]\right\}r^{n-3}=0}$
The above equation cannot be consistent if we demand a finite number of values for $n$. The arguments here are the same as the arguments which were presented in case 22.\\

\begin{center}
\underline{\textbf{25)}\hspace{1.5em}$\{\phi=\phi(v,r,y),\ \chi=\chi(v,r)\}$}
\end{center}
$$\eqref{loc.115}\Ra (\pa_r\phi)^2+(\pa_r\chi)^2+a\ \pa_r^2\phi+b\ \pa_r^2\chi=0\Ra (\pa_r\phi)^2+a\ \pa_r^2\phi=-(\pa_r\chi)^2-b\ \pa_r^2\chi$$
\eq$\label{sc30}
\boxed{(\pa_r\phi)^2+a\ \pa_r^2\phi=h(v,r)=-(\pa_r\chi)^2-b\ \pa_r^2\chi}$\\
As it was justified in case 23, the field equation are essentially unapproachable if we do not demand $\pa_r[h(v,r)]=0$. Therefore, in this case, we are going to consider that $h=h(v)$. In complete analogy to the case 23, four different sub-cases are going to be investigated in the context of the original case.\\
\begin{enumerate}[(i)]
\item \underline{$h(v,r)=[C(v)]^2>0\wedge a\neq 0\wedge b\neq0$:}\\
\par From equation \eqref{sc30} we are led to the following functions:
\eq$\label{sc31}
\boxed{\phi(v,r,y)=a\ \ln\left[\cosh\left(\frac{C(v)}{a}r+B(v,y)\right)\right]+D(v,y)}$
\eq$\label{sc32}
\boxed{\chi(v,r)=b\ \ln\left[\cos\left(\frac{C(v)}{b}r+E(v)\right)\right]+F(v)}$\\
$$\eqref{loc.118}\Ra\pa_r\phi\pa_y\phi+a\ \pa_r\pa_y\phi-A'\left(a\ \pa_r\phi+b\ \pa_r\chi\right)=0\xRightarrow[\eqref{sc32}]{\eqref{sc31}}$$
$$a\pa_yB+\pa_yD\tanh\left(\frac{C}{a}r+B\right)-A'a\tanh\left(\frac{C}{a}r+B\right)+A'b\tan\left(\frac{C}{b}r+E\right)=0\Ra$$
$$\left\{\begin{array}{c}
\pa_yB=0\vspace{0.5em}\\ and\vspace{0.5em}\\ 
\pa_yD-aA'=0\vspace{0.5em}\\ and\vspace{0.5em}\\
A'=0\end{array}\right\}\xRightarrow{A'\neq0}\text{rejected}$$\\
\item \underline{$h(v,r)=-[C(v)]^2<0\wedge a\neq 0\wedge b=0$:}\\
\par In this sub-case equation \eqref{sc30} leads to
\eq$\label{sc33}
\boxed{\phi(v,r,y)=a\ \ln\left[\cos\left(\frac{C(v)}{a}r+B(v,y)\right)\right]+D(v,y)}$
\eq$\label{sc34}
\boxed{\chi(v,r)=\pm C(v)r+E(v)}$\\
$$\eqref{loc.118}\Ra \pa_r\phi\pa_y\phi+a\ \pa_r\pa_y\phi-A'a\pa_r\phi=0\xRightarrow{\eqref{sc33}}$$
$$-a\pa_yB-\pa_yD\tan\left(\frac{C}{a}r+B\right)+A'a\tan\left(\frac{C}{a}r+B\right)=0\Ra$$
\eq$\label{sc35}
\left\{\begin{array}{c}
\pa_yB=0\Ra \boxed{B=B(v)}\vspace{0.5em}\\ and\vspace{0.5em}\\ \pa_yD=aA'\Ra \boxed{D(v,y)=aA(y)+F(v)}\end{array}\right\}$\\
Thus, equations \eqref{sc33} and \eqref{sc35} give:
\eq$\label{sc36}
\boxed{\phi(v,r,y)=a\ \ln\left[\cos\left(\frac{C(v)}{a}r+B(v)\right)\right]+aA(y)+E(v)}$
Consequently, we have:
$$\eqref{loc.117}\Ra \pa_y\phi\pa_v\phi+a\ \underbrace{\pa_v\pa_y\phi}_0-A'a\pa_v\phi=\frac{a\phi\pa_ym}{r^2}+\frac{\pa_r(a\phi\pa_ym)}{r}\Ra$$
$$\frac{\phi\pa_ym}{r}+\pa_r(\phi\pa_ym)=0\Ra\phi\pa_ym=\frac{G(v,y)}{r}\xRightarrow{\phi\neq0}\pa_ym=\frac{G(v,y)}{r\phi}\Ra$$
\eq$\label{sc37}
\boxed{\pa_ym=G(v,y)r^{-1}\left\{a\ \ln\left[\cos\left(\frac{C(v)}{a}r+B(v)\right)\right]+aA(y)+E(v)\right\}^{-1}}$\\
The reason that we reject this sub-case is exactly the same as in sub-case (ii) of case 23.
\end{enumerate}
\par One can easily verify that the remaining two sub-cases of case 25, namely $h(v,r)=[C(v)]^2>0\wedge a=0\wedge b\neq0$ and $h(v,r)=0\wedge a\neq 0\wedge b\neq0$, can be excluded as well. The reason is that the complete analysis of case 25 can be simply deduced by the analysis of case 23 if we just replace the constant $Q$ with the function $C(v)$ (To clarify this statement, compare the analysis of the previous two sub-cases -which were performed extensively- to the corresponding analysis of case 23). This particular substitution does not change the negative results that was found in each sub-case of case 23. Therefore, we proceed to the following case.\\

\begin{center}
\underline{\textbf{26)}\hspace{1.5em}$\{\phi=\phi(v,r,y),\ \chi=\chi(v,y)\}$}
\end{center}
\eq$\label{sc38}
\eqref{loc.115}\Ra (\pa_r\phi)^2+a\ \pa_r^2\phi=0\Ra \boxed{\phi(v,r,y)=a\ \ln[r+B(v,y)]+C(v,y)}$\\
$$\eqref{loc.118}\Ra\pa_r\phi\pa_y\phi+a\ \pa_r\pa_y\phi-A'a\ \pa_r\phi=0\Ra\frac{a}{r+B}\left(\frac{a\pa_yB}{r+B}+\pa_yC\right)-\frac{a^2\pa_yB}{(r+B)^2}-\frac{a^2A'}{r+B}=0\Ra$$
\eq$\label{sc39}
\pa_yC=aA'\Ra \boxed{C(v,y)=aA(y)+D(v)}$\\
\eq$\label{sc40}
\eqref{sc38}\xRightarrow{\eqref{sc39}}\boxed{\phi(v,r,y)=a\ \ln[r+B(v,y)]+aA(y)+D(v)}$
\begin{gather}
\eqref{loc.117}\Ra \pa_y\phi\pa_v\phi+\pa_y\chi\pa_v\chi+a\ \pa_v\pa_y\phi+b\pa_v\pa_y\chi-A'(a\ \pa_v\phi+b\pa_v\chi)=\frac{f\pa_ym}{r^2}+\frac{\pa_r(f\pa_ym)}{r}\xRightarrow{\eqref{sc40}}\nonum\\ \nonum\\
\underbrace{a\ \pa_yB\ D'+\pa_y\chi\pa_v\chi B+a^2\pa_v\pa_yB-A'bB\pa_v\chi }_{(v,y)-dependent}=\nonum\\
\label{sc41}
=\underbrace{(a\phi+b\chi)(r+B)\left(\frac{\pa_ym}{r^2}+\frac{\pa_r\pa_ym}{r}\right)-\pa_y\chi\pa_v\chi\ r-b\pa_y\pa_v\chi\ r+A'b\pa_v\chi\ r+\frac{a^2\pa_ym}{r}}_{(v,r,y)-dependent}
\end{gather}
The RHS of equation \eqref{sc41} is similar to the RHS of equation \eqref{sc1} except from the additional terms $-\pa_y\chi\pa_v\chi\ r$ and $-b\pa_y\pa_v\chi\ r$. Thus, demanding $\pa_r(RHS)=0$ of equation \eqref{sc41} (as in case 22) and also using equation \eqref{loc.134}, we get:
\eq$\label{sc42}
\boxed{A'b\pa_v\chi-\pa_y\chi\pa_v\chi-b\pa_v\pa_y\chi+\sum_n(\pa_ya_n)\left\{2a^2nr+(a\phi+b\chi)(n+1)[(n-1)r+B(n-2)]\right\}r^{n-3}=0}$\\
The above equation cannot be consistent if we demand a finite number of values for $n$. The arguments here are the same as the arguments which were presented in case 22 and also used in case 24.\\

\begin{center}
\underline{\textbf{27)}\hspace{1.5em}$\{\phi=\phi(v,r,y),\ \chi=\chi(r,y)\}$}
\end{center}
$$\eqref{loc.115}\Ra (\pa_r\phi)^2+(\pa_r\chi)^2+a\ \pa_r^2\phi+b\ \pa_r^2\chi=0\Ra (\pa_r\phi)^2+a\ \pa_r^2\phi=-(\pa_r\chi)^2-b\ \pa_r^2\chi$$
\eq$\label{sc43}
\boxed{(\pa_r\phi)^2+a\ \pa_r^2\phi=h(r,y)=-(\pa_r\chi)^2-b\ \pa_r^2\chi}$\\
The same argument as it was presented in case 23 leads us to the following sub-cases:
\begin{enumerate}[(i)]
\item \underline{$h(r,y)=C(y)>0\wedge a\neq 0\wedge b\neq0$:}\\
\par From euqation \eqref{sc43} we obtain:
\eq$\label{sc44}
\boxed{\phi(v,r,y)=a\ \ln\left[\cosh\left(\frac{\sqrt{C(y)}}{a}r+B(v,y)\right)\right]+D(v,y)}$
\eq$\label{sc45}
\boxed{\chi(r,y)=b\ \ln\left[\cos\left(\frac{\sqrt{C(y)}}{b}r+E(y)\right)\right]+F(y)}$\\
$$\eqref{loc.118}\Ra\pa_r\phi\pa_y\phi+\pa_r\chi\pa_y\chi+a\pa_r\pa_y\phi+b\pa_r\pa_y\chi-A'(a\pa_r\phi+b\pa_r\chi)=0\xRightarrow[\eqref{sc45}]{\eqref{sc44}}$$
$$a\pa_yB-bE'+\tanh\left(\frac{\sqrt{C}}{a}r+B\right)\left(\pa_yD+\frac{aC'}{2C}-aA'\right)-\tan\left(\frac{\sqrt{C}}{b}r+E\right)\left(F'+\frac{bC'}{2C}-bA'\right)=0\Ra$$
\eq$\label{sc46}
\left\{\begin{array}{c}
\pa_yB=\frac{b}{a}E'\Ra \boxed{B(v,y)=\frac{b}{a}E(y)+G(v)}\vspace{0.5em}\\ and\vspace{0.5em}\\ 
\pa_yD-aA'+\frac{a}{2}\pa_y[\ln C(y)]=0\Ra \boxed{D(v,y)=aA(y)-\frac{a}{2}\ln[C(y)]+H(v)}\vspace{0.5em}\\ and\vspace{0.5em}\\
F'-bA'+\frac{b}{2}\pa_y[\ln C(y)]=0\Ra\boxed{F(y)=bA(y)-\frac{b}{2}\ln[C(y)]+\underbrace{J}_{const.}}\end{array}\right\}$\\
Thus, equations \eqref{sc44} and \eqref{sc45} take the following form:
\eq$\label{sc47}
\boxed{\phi(v,r,y)=a\ \ln\left[\cosh\left(\frac{\sqrt{C(y)}}{a}r+\frac{b}{a}E(y)+G(v)\right)\right]+aA(y)-\frac{a}{2}\ln[C(y)]+H(v)}$
\eq$\label{sc48}
\boxed{\chi(r,y)=b\ \ln\left[\cos\left(\frac{\sqrt{C(y)}}{b}r+E(y)\right)\right]+bA(y)-\frac{b}{2}\ln[C(y)]+J}$\\
$$\eqref{loc.117}\Ra \pa_y\phi\pa_v\phi+a\pa_v\pa_y\phi-A'a\pa_v\phi=\frac{f\pa_ym}{r^2}+\frac{\pa_r(f\pa_ym)}{r}\Ra$$
\eq$\label{sc49}
\boxed{\scriptstyle{\frac{f\pa_ym}{r^2}+\frac{\pa_r(f\pa_ym)}{r}=\frac{arC'G'}{2\sqrt{C}}+abE'G'+\tanh\left(\frac{\sqrt{C}}{a}r+\frac{b}{a}E+G\right)\left(\frac{C'H'r}{2\sqrt{C}}+bE'H'-\frac{aC'G'}{2C}\right)-\frac{aC'H'}{2C}}}$\\
Moreover, it is:
\eq$\label{sc50}
\boxed{\scriptstyle{f=a\phi+b\chi=a^2 \ln\left[\cosh\left(\frac{\sqrt{C(y)}}{a}r+\frac{b}{a}E(y)+G(v)\right)\right]+b^2 \ln\left[\cos\left(\frac{\sqrt{C(y)}}{b}r+E(y)\right)\right]+(a^2+b^2)A(y)-\frac{a^2+b^2}{2}\ln[C(y)]+aH(v)+bJ}}$\\
Obviously, the substitution of equation \eqref{sc50} into \eqref{sc49} leads to a very complicated equation which is not possible to be satisfied using equation \eqref{loc.134} for the mass function and demanding $n$ to take a finite number of values. The previous statement holds even in case of $C'=0$.\\

\item \underline{$h(r,y)=C(y)=-[K(y)]^2<0\wedge a\neq 0\wedge b=0$:}\\
\par Equation \eqref{sc43} results to the following functions for the fields:
\eq$\label{sc51}
\boxed{\phi(v,r,y)=a\ \ln\left[\cos\left(\frac{\sqrt{C(y)}}{a}r+B(v,y)\right)\right]+D(v,y)}$
\eq$\label{sc52}
\boxed{\chi(r,y)=\pm \sqrt{C(y)}\ r+E(y)}$
Using equation \eqref{loc.118} we obtain:
$$\pa_r\phi\pa_y\phi+\pa_r\chi\pa_y\chi+a\pa_r\pa_y\phi-A'a\pa_r\phi=0\Ra$$
$$-a\pa_yB-\tan\left(\frac{\sqrt{C}}{a}r+B\right)\left(\pa_yD+\frac{aC'}{2C}-aA'\right)\pm E'=0\Ra$$
\eq$\label{sc53}
\left\{\begin{array}{c}
\pa_yB=\pm\frac{E'}{a}\Ra \boxed{B(v,y)=\pm\frac{E(y)}{a}+F(v)}\vspace{0.5em}\\ and\vspace{0.5em}\\ 
\pa_yD-aA'+\frac{a}{2}\pa_y[\ln C(y)]=0\Ra \boxed{D(v,y)=aA(y)-\frac{a}{2}\ln[C(y)]+G(v)}\vspace{0.5em}\end{array}\right\}$\\
Hence, we have:
\eq$\label{sc54}
\boxed{\phi(v,r,y)=a\ \ln\left[\cos\left(\frac{\sqrt{C(y)}}{a}r\pm\frac{E(y)}{a}+F(v)\right)\right]+aA(y)-\frac{a}{2}\ln[C(y)]+G(v)}$\\
$$\eqref{loc.117}\Ra\pa_y\phi\pa_v\phi+a\pa_y\pa_v\phi-A'a\pa_v\phi=\frac{f\pa_ym}{r^2}+\frac{\pa_r(f\pa_ym)}{r}\Ra$$
\eq$\label{sc55}
\boxed{\scriptstyle{\frac{f\pa_ym}{r^2}+\frac{\pa_r(f\pa_ym)}{r}=-a^2E'\left(\frac{C'r}{2\sqrt{C}a}\pm\frac{E'}{a}\right)-a\tan\left(\frac{\sqrt{C}}{a}r\pm\frac{E}{a}+F\right)\left(\frac{G'C'r}{2\sqrt{C}a}\pm\frac{G'E'}{a}-\frac{aF'C'}{2C}\right)-\frac{aC'G'}{2C}}}$
For $\pa_ym\neq0$, it is necessary to nullify the quantity $\frac{\pa_ym}{r^2}+\frac{\pa_r\pa_ym}{r}$ which is the factor of the term which includes the coupling function $f$ in the LHS of the above equation. This term cannot be equated with another term, so it must vanish, otherwise the field equation is not consistent. Thus, we get:
\eq$\label{sc56}
\frac{\pa_ym}{r^2}+\frac{\pa_r\pa_ym}{r}=0\Ra \boxed{m(v,r,y)=\frac{\lam(v,y)}{r}}$
For the same reason the quantity $\pa_r^2m-\frac{2\pa_rm}{r}$ should also be zero in the RHS of equation \eqref{loc.119}. Hence, it is:
\eq$\label{sc57}
\pa_r^2m-\frac{2}{r}\pa_rm=0\Ra \boxed{m(v,r,y)=\kappa(v,y)r^3+\xi(v,y)}$
It is clear that equations \eqref{sc56} and \eqref{sc57} cannot be simultaneously true. Consequently, we should reject this sub-case as well, because it leads to an inconsistency.\\

\item \underline{$h(r,y)=C(y)>0\wedge a=0\wedge b\neq0$:}\\
\par This sub-case is almost identical to the previous one, but the functions of $\phi$ and $\chi$ are alternated. Here, equation \eqref{sc43} leads to
\eq$\label{sc58}
\boxed{\phi(v,r,y)=\pm \sqrt{C(y)}\ r+B(v,y)}$
\eq$\label{sc59}
\boxed{\chi(r,y)=b\ln\left[\cos\left(\frac{\sqrt{C(y)}}{b}r+D(y)\right)\right]+E(y)}$
Following the same steps as in sub-case (ii), we obtain again equations \eqref{sc56} and \eqref{sc57}. Therefore, this sub-case is also not capable of providing a viable solution to the localization problem.

\item \underline{$h(r,y)=0\wedge a\neq 0\wedge b\neq0$:}\\
\par From equation \eqref{sc43} we have:
\eq$\label{sc60}
\boxed{\phi(v,r,y)=a\ln[r+B(v,y)]+C(v,y)}$
\eq$\label{sc61}
\boxed{\chi(r,y)=b\ln[r+D(y)]+E(y)}$\\
\eq$\label{sc62}
\eqref{loc.118}\xRightarrow[[\eqref{sc61}]{\eqref{sc60}}
\underbrace{\pa_yC-aA'}_{(v,y)-dependent}=\underbrace{-\left(\frac{b}{a}E'-\frac{b^2}{a}A'\right)\frac{r+B}{r+D}}_{(v,r,y)-dependent}$
\eq$\label{sc63}
\eqref{sc62}\xRightarrow{\pa_r(RHS)=0}\boxed{B(v,y)=D(y)}$\\
Thus:
\eq$\label{sc64}
\eqref{sc62}\xRightarrow{\eqref{sc63}}
\boxed{C(v,y)=\frac{a^2+b^2}{a}A(y)-\frac{b}{a}E(y)+F(v)}\\
$\eq$\label{sc65}
\eqref{sc60}\xRightarrow[\eqref{sc64}]{\eqref{sc63}}\boxed{\phi(v,r,y)=a\ln\left[r+D(y)\right]+\frac{a^2+b^2}{a}A(y)-\frac{b}{a}E(y)+F(v)}$\\
Using equations \eqref{loc.117} and \eqref{loc.119} as they were used in sub-cases (ii) and (iii), we obtain again the equations \eqref{sc56} and \eqref{sc57}. Hence, the case 27 is complete excluded.\\

\begin{center}
\underline{\textbf{28)}\hspace{1.5em}$\{\phi=\phi(v,r,y),\ \chi=\chi(v,r,y)\}$}
\end{center}
\par Finally, the most general case 28 can be investigated and excluded in exactly the same way as case 27. The only difference is that in this case, we have to replace the functions $h(r,y)$ and $C(y)$ with $h(v,r,y)$ and $C(v,y)$ respectively.
\end{enumerate}

\newpage

\chapter{Conclusions and Discussion}

\par  In summary, the study that was preceded in the framework of this thesis had the following structure. In Chapter 1 (Introduction) took place a concise presentation of the biggest mysteries in physics that remain unsolved until these days i.e. the nature of Dark Matter and Dark Energy and the existence of a Unified Theory. The attempts that were made in order to resolve these problems led to various extra-dimensional theories, in which String Theory plays a starring role. The Hierarchy Problem led to the formulation of the Randall-Sundrum models (RS1 and RS2), which were analyzed in detail in Chapter 2. Briefly, RS1 model manages to resolve the Hierarchy Problem by assuming the existence of a compact extra dimension, which is finite and also bounded by two 3-branes. However, in the context of the RS2 model, the extra dimension is allowed to be infinite and therefore the second brane is essentially removed from the model. The astonishing result of the RS2 model is that although we have an infinite extra dimension, the gravity on the remaining 3-brane is effectively 4-dimensional. Although very popular, these models face the problem of the absence of an analytic solution describing a regular, 5-dimensional black hole solution localised close to the brane. Therefore, in Chapter 3, the geometrical framework and the scalar field theory -which were used in the context of the thesis in an attempt to solve black hole localization problem- were presented and the field equations of our theory were derived. Subsequently, these field equations were used in Chapter 4 in order to find a localized black hole solution in the context of a generalized RS2 brane-world model.
\par The main purpose of this thesis was to find a localized 5-dimensional black hole solution close to our 3-brane (or our 4-dimensional universe) by using an RS2-type geometrical background and a scalar field theory which consists of two scalar fields $\phi,\ \chi$ that interact with each other and they are also non-minimally coupled to gravity. This complicated choice of a scalar field theory was made because simpler types of scalar field theories failed to provide a localized black hole solution. Therefore, we assumed the existence of an extra scalar field $\chi$. Despite the additional degree of freedom that was provided by the second scalar field $\chi$, we were not able to find a viable configuration thus the analytical solution to the localization problem remains still an open problem. Although we were able to exclude mathematically all considered cases, the complexity of the equations did not allow us to formulate a no-go theorem that would exclude altogether the existence of a viable configuration. Our negative result is only one in a series of failed analytical attempts over a period of almost 20 years, and this certainly creates well-founded concerns about the compatibility of brane-world models with the predictions of General Relativity. It is also crucial to indicate that some of the solutions rejected may comprise instead novel black-string solutions, whose study we will undertake in the near future. If this proves to be true, it will refuel the existing debate in the literature regarding the question of why black-string solutions are so much easier to find in the context of brane-world models compared to black-hole solutions.
\par In the context of the scalar field theory that was considered in this thesis, there are some extra coupling functions that is possible to be examined, though they are more complicated:
\begin{center}
\large{$f(\phi,\chi)=a\ \phi^2+b\ \chi^2+c\ \phi\ \chi$}\\ \vspace{1em}
\large{$f(\phi,\chi)=a\ \phi^\kappa+b\ \chi^\lam$}\\\vspace{1em}
\large{$f(\phi,\chi)=e^{a \phi+b \chi}$}\\ \vspace{0.5em}
\end{center}
Except for the most complicated case where $a\neq0\wedge b\neq0\wedge c\neq0$, it is possible to set $a=0$ and/or $b=0$ and/or $c=0$. The previous analysis for the coupling function $f(\phi,\chi)=a\phi+b\chi$ have made clear that the most valuable information about the form of the functions of the scalar fields $\phi$ and $\chi$ emanates from equation \eqref{loc.115}. The reason is that compared to the other field equations, equation \eqref{loc.115} is the simplest one. If we therefore consider a coupling function, which is complicated enough to make the differential equation \eqref{loc.115} unsolvable, then we are doomed to give up the attempt to solve the localization problem. Consequently, one should be careful about the choice of the coupling function $f(\phi,\chi)$. If the coupling function is too simple, then probably the localization of a 5-dimensional black hole would not be possible, because the simple cases have already been investigated and excluded. On the other hand, if the coupling function is too complicated, then, even if there is a solution to the problem, we might not be able to find it.
\par After about 20 years of research in the direction of finding a closed-form, analytical 5-dimensional localized braneworld black hole solution, there is still work to do. Fortunately, there are numerical solutions to the problem, thus, the research for an analytical solution is not completely in vain.

\appendix

\chapter{Black Hole Solutions in General Relativity}
\label{bhs}

\par \textbf{Schwarzschild Solution} $(M\neq 0,\ Q=0,\ J=0)$:
\eq$\label{bhs1}
\boxed{ds^2=-\left(1-\frac{r_S}{r}\right)(cdt)^2+\left(1-\frac{r_S}{r}\right)^{-1}dr^2+r^2(d\theta^2+\sin^2\theta\ d\varphi^2)}$
where
$$r_S=\frac{2GM}{c^2}$$\vspace{1em}
\par \textbf{Reissner-Nordstr\"{o}m Solution} $(M\neq0,\ Q\neq0,\ J=0)$:
\eq$\label{bhs2}
\boxed{ds^2=-\Delta\ (cdt)^2+\Delta^{-1}\ dr^2+r^2(d\theta^2+\sin^2\theta\ \varphi^2)}$
where
$$\Delta=1-\frac{r_S}{r}+\frac{r_Q^2}{r^2}$$
and
$$r_Q^2=\frac{GQ^2}{4\pi\epsilon_0c^4}$$\vspace{1em}
\par \textbf{Kerr Solution} $(M\neq0,\ Q=0,\ J\neq0)$:
\eq$\label{bhs3}
\boxed{\begin{gathered}
ds^2=-\left(1-\frac{r_S\ r}{\rho^2}\right)(cdt)^2-\frac{2\ r_S\ r a \sin^2\theta}{\rho^2}cdtd\varphi+\frac{\rho^2}{\Delta}dr^2+\rho^2 d\theta^2\\
\hspace{4em}+\frac{\sin^2\theta}{\rho^2}\left[(r^2+a^2)^2-a^2\Delta\sin^2\theta\right]d\varphi^2
\end{gathered}}$
where
\begin{gather}
a=\frac{J}{Mc}\nonum\\ \nonum\\
\Delta=r^2-r_S\ r+a^2\nonum\\ \nonum\\
\rho^2=r^2+a^2\cos^2\theta\nonum
\end{gather}
\par The \textbf{Kerr-Newman} line element $(M\neq0,Q\neq0,J\neq0)$ results from Eq.\eqref{bhs3} by replacing\  $r_S\ r$\ with\ $r_S\ r-r_Q^2$.

\newpage
\blankpage

\newpage

\chapter{Gaussian Normal Coordinates}
\label{gnc}

\par Let us consider a Lorentzian \emph{(n+1)-manifold} $\mathcal{M}$ with coordinates $\{x^M\}=\{x^0,x^1\ldots, x^n\}$, where $M$ runs from $0$ to $n$. A \emph{(d+1)-submanifold} $\mathcal{S}$ of the original (n+1)-dimensional manifold $\mathcal{M}$ with intrinsic coordinates $\{y^\mu\}=\{y^0,y^1,\ldots,y^d\}$ ($d<n$) can be defined via the following set of $n+1$ parametric equations:
\eq$\label{gnc1}
x^M=x^M(y^0,y^1,\ldots,x^d)\Ra\left\{\begin{array}{c}
x^0=x^0(y^0,y^1,\ldots,x^d)\\ 
x^1=x^1(y^0,y^1,\ldots,x^d)\\ 
\vdots\\
x^n=x^n(y^0,y^1,\ldots,x^d)
\end{array}\right\}$
For $d=n-1$ we have a hypersurface. In this case, it is possible to define the hypersurface without using the above parametric equations. The way that this can be done, is the following:
\eq$\label{gnc2}
F(x^0,x^1,\ldots,x^n)=c$
where $c$ is a constant. In every point $p$ of the hypersurface $\mathcal{S}$  there is a tangent plane $\bar{T}_p$ which can be thought as an n-dimensional subspace of the tangent plane $T_p$ of the (n+1)-dimensional manifold $\mathcal{M}$. It is intuitively clear that there is always an \emph{(n+1)-vector} $\vec{n}\in T_p$ (unique up to scaling) which is orthogonal to all vectors in $\bar T_p$. This vector $\vec{n}$ is said to be normal to the hypersurface $\mathcal{S}$. The unit normal vector $\hat{n}$ in any point on $\mathcal{S}$ is given by
\eq$\label{gnc3}
\hat{n}=n^M\vec{e}_M=\frac{\pa^M F}{\sqrt{|g^{AB}\pa_AF\pa_BF|}} \vec{e}_M$
In order to clarify the unitarity of $\hat{n}$ one can perform the scalar product of $\hat{n}$ with itself and substitute $\vec{e}_M\cdot \vec{e}_N$\ for\ $g_{MN}$. We silently assumed that $g_{AB}\ n^An^B\neq 0$, in the opposite case we have a null-hypersurface and therefore is not possible to normalize the normal vector as we did in Eq.\eqref{gnc3}. Let us now consider two events on the hypersurface $\mathcal{S}$ which are separated by the \emph{n-vector} $d\vec{r}=dx^A\vec{e}_A$. These events are so close that we can safely assume that the vector $d\vec{r}$ lies on the tangent plane $\bar T_p$. Hence, the scalar product between $d\vec{r}$ and $\hat{n}$ is zero.
\eq$\label{gnc4}
\hat{n}\cdot d\vec{r}=0\Ra(n^A\vec{e}_A)\cdot(dx^B\vec{e}_B)=0\Ra n^Adx^Bg_{AB}=n^A\ dx_A=n_A\ dx^A=0$
\par The Gaussian Normal Coordinates for any non-null hypersurface can be constructed by using the following steps. Firstly, for each point $p\in \mathcal{S}$ we find the unique geodesic curve that passes by the point $p$ and its tangent vector is $\hat n$. Thereafter, we choose a coordinate system $\{x^0,x^1,\ldots,x^{n-1}\}$ on $\mathcal{S}$ and then we characterize each point in a neighborhood of the hypersurface $\mathcal{S}$ by using these coordinates and the parameter $y$ which is along the geodesic curve that emanates from the point $p\in\mathcal{S}$. Therefore, it is always possible to find a local coordinate system $\{x^0,x^1,\ldots,x^n,y\}$ in a neighborhood of a point $p\in\mathcal{S}$ where a vector along coordinate $y$ is perpendicular to the hypersurface $\mathcal{S}$.

\newpage
\blankpage

\newpage

\chapter{Linearized Gravity}
\label{ln}

\par Linearized gravity is simply an approximation which is used to describe weak gravitational fields. The spacetime in the context of this approximation is considered nearly flat, thus, the metric tensor is expressed by the following relation:
\eq$\label{ln1}
g_{MN}=\eta_{MN}+h_{MN}\left(x^\lam,y\right)$
where
\eq$\label{ln2}
|h_{MN}|\ll 1$
The use of capital Latin characters indicates that we take into account extra spatial dimensions. In this chapter particularly, we consider a total of (4+1)-dimensions $\{x^0,x^1,\ldots,x^3,y\}$. In the calculations that follow we ignore as negligible any term that contains non-linear orders of $h_{MN}$. Consequently, the Christoffel symbols are calculated as follows:
\begin{gather}
\Gam^L{}_{MN}=\frac{1}{2}g^{LR}(g_{MR,N}+g_{NR,M}-g_{MN,R})\Ra\nonum\\ 
\Gam^L{}_{MN}=\frac{1}{2}(\eta^{LR}+h^{LR})\left[\pa_N(\eta_{MR}+h_{MR})+\pa_M(\eta_{NR}+h_{NR})-\pa_R(\eta_{MN}+h_{MN})\right]\Ra\nonum\\ 
\label{ln3}
\boxed{\Gam^L{}_{MN}=\frac{1}{2}\eta^{LR}(h_{MR,N}+h_{NR,M}-h_{MN,R})=\frac{1}{2}\left(h^L{}_{M,N}+h^L{}_{N,M}-h_{MN}{}^{,L}\right)}
\end{gather}
The components of the Riemann tensor can be computed very easily as well.
\begin{gather}
R^L{}_{MRN}=\Gam^L{}_{MN,R}-\Gam^L{}_{MR,N}+\underbrace{\Gam^L{}_{RK}\Gam^K{}_{MN}}_{h^2\ra0}-\underbrace{\Gam^L{}_{NK}\Gam^K{}_{MR}}_{h^2\ra0}\Ra\nonum\\
R^L{}_{MRN}=\frac{1}{2}(\cancel{h^L{}_{M,NR}}+h^L{}_{N,MR}-h_{MN,R}{}^L-\cancel{h^L{}_{M,RN}}-h^L{}_{R,MN}+h_{MR,N}{}^L)\Ra\nonum\\
\label{ln4}
\boxed{R^L{}_{MRN}=\frac{1}{2}\left(h^L{}_{N,MR}-h_{MN,R}{}^L-h^L{}_{R,MN}+h_{MR,N}{}^L\right)}
\end{gather}
Using Eq.\eqref{ln4} and contracting indices $L$ and $R$ we obtain the components of the Ricci tensor.
\begin{gather}
R_{MN}=\frac{1}{2}\left(h^L{}_{N,ML}-h_{MN,L}{}^L-h^L{}_{L,MN}+h_{ML,N}{}^L\right)\Ra\nonum\\
\label{ln5}
\boxed{R_{MN}=\frac{1}{2}\left(-\square h_{MN}-h_{,MN}+h_{ML,N}{}^L+h^L{}_{N,ML}\right)}
\end{gather}
where 
\eq$\label{ln6}
h\equiv h^L{}_L$
Subsequently, the Ricci scalar is going to be evaluated, which emanates directly from Eq.\eqref{ln5}.
\begin{gather}
R=g^{MN}R_{MN}=\eta^{MN}R_{MN}+\underbrace{h^{MN}R_{MN}}_{h^2\ra0}=\frac{1}{2}(-\square h-\square h+h^{ML}{}_{,ML}+ h^{ML}{}_{,ML})\Ra\nonum\\
\label{ln7}
\boxed{R=-\square h+h^{ML}{}_{,ML}}
\end{gather}
Finally, the components of the Einstein tensor are given by
$$G_{MN}=R_{MN}-\frac{1}{2}R\ g_{MN}=R_{MN}-\frac{1}{2}R\ \eta_{MN}\Ra$$
\eq$\label{ln8}
\boxed{G_{MN}=\frac{1}{2}\left(-\square h_{MN}-h_{,MN}+h_{ML,N}{}^L+h^L{}_{N,ML}-\eta_{MN}\square h+\eta_{MN}h^{RL}{}_{,RL}\right)}$

\newpage

\chapter{From Schwarzschild to Vaidya}
\label{vaid}

\par First of all, we clarify that for all the mathematical expressions that are depicted below, the Planck units have been used. The Schwarzschild metric as the static and spherically symmetric solution to \-Einstein's equations of gravity is given by the following line element
\eq$\label{A.1}
ds^2=-\left(1-\frac{2M}{r}\right)dt^2+\left(1-\frac{2M}{r}\right)^{-1}dr^2+r^2(d\theta^2+\sin^2\theta\ d\varphi^2)$
\par Someone could switch to \emph{Eddington-Finkelstein coordinates} in order to remove \-the \-undesirable \-coordinate singularity of the metric at $r=2M$. The new coordinates are using the \-null \-coordinate $v$ with the following definition.
\eq$\label{A.2}
t=v-r-2M\ln\left(\frac{r}{2M}-1\right)\ \Ra \ \boxed{dt=dv-\left(1-\frac{2M}{r}\right)^{-1}dr}$
Subsequently, using equation \eqref{B.2} into \eqref{B.1} the line element takes the form
\eq$\label{A.3}
ds^2=-\left(1-\frac{2M}{r}\right)dv^2+2dvdr+r^2(d\theta^2+\sin^2\theta\ d\varphi^2)$
\par If we now extend the mass parameter M from a constant to a function of $v$ we get the \emph{Vaidya metric}.
\eq$\label{A.4}
\boxed{ds^2=-\left(1-\frac{2M(v)}{r}\right)dv^2+2dvdr+r^2(d\theta^2+\sin^2\theta\ d\varphi^2)}$

\newpage
\blankpage	

\newpage

\chapter{Complete Information about the 5-D Geometrical Background}
\label{geom}

\section{Christoffel Symbols}
\par Christoffel symbols can be calculated by equation
\eq$\label{B.1}
\Gam^K{}_{MN}=\frac{1}{2}g^{KL}\left(g_{ML,N}+g_{NL,M}-g_{MN,L}\right)$
Hence, combining equations \eqref{B.1}, \eqref{loc.2} and \eqref{loc.3}, we obtain

\begin{empheq}[box=\mymath]{equation}\label{B.2}
\begin{array}{ccc}
\Gam^0{}_{00}=\frac{m-r \pa_r m}{r^2} & \Gam^0{}_{04}=\Gam^0{}_{40}=A' & \Gam^0{}_{22}=-r \\ \\ \Gam^0{}_{33}=-r \sin ^2\theta & \Gam^1{}_{00}=\frac{r^2 \pa_v m-(r-2 m) \left(r \pa_r m-m\right)}{r^3} & \Gam^1{}_{01}=\Gam^1{}_{10}=\frac{r \pa_rm-m}{r^2}\\ \\
\Gam^1{}_{04}=\Gam^1{}_{40}=\frac{\pa_ym}{r} & \Gam^1{}_{14}=\Gam^1{}_{41}=A' & \Gam^1{}_{22}=2 m-r \\ \\
\Gam^1{}_{33}=(2 m-r)\sin ^2\theta & \Gam^2{}_{12}=\Gam^2{}_{21}=\frac{1}{r} & \Gam^2{}_{24}=\Gam^2{}_{42}=A'\\ \\
\Gam^2{}_{33}=-\sin \theta \cos \theta & \Gam^3{}_{13}=\Gam^3{}_{31}=\frac{1}{r} & \Gam^3{}_{23}=\Gam^3{}_{32}=\cot \theta \\ \\
\Gam^3{}_{34}=\Gam^3{}_{43}=A' & \Gam^4{}_{00}=\frac{e^{2 A} \left(A' (r-2 m)-\pa_ym\right)}{r} & \Gam^4{}_{01}=\Gam^4{}_{10}=-e^{2 A} A'\\ \\
\Gam^4{}_{22}=-r^2 e^{2 A} A' & \Gam^4{}_{33}=-r^2A' e^{2 A} \sin ^2\theta & \end{array}
\end{empheq}

\section{Riemann Tensor}
\par Riemann tensor's components are defined by
\eq$\label{B.3}
R^L{}_{KMN}=\Gam^L{}_{NK,M}-\Gam^L{}_{MK,N}+\Gam^L{}_{MJ}\Gam^J{}_{NK}-\Gam^L{}_{NJ}\Gam^J{}_{MK}$
Thus, having in our disposal the Christoffel symbols from equation \eqref{B.2} it is possible to compute the components of the Riemann tensor, but also one would need an eternity to make calculations. Therefore, it will be presented a less time-consuming method that provides the Riemann tensor's components.

\newpage
\par First and foremost, the main ingredients of this method are the matrices $\Gam_M$ and $B_{MN}$.
\begin{itemize}
\item The $\Gam_M$ matrices are resulting from Christoffel symbols $\Gam^K{}_{ML}$ as follows. \emph{The component $\Gam^K{}_{ML}$ is defined to be the element of the K-th row and L-th column of the matrix $\Gam_M$.} So, there are 5 matrices $\Gam_M$ of size $5\times 5$.
\item The $B_{MN}$ matrices are resulting from the components of Riemann tensor $R^L{}_{KMN}$. \emph{The component $R^L{}_{KMN}$ is defined to be the element of the L-th row and K-th column of the matrix $B_{MN}$.} Thus, there are 25 matrices $B_{MN}$ of size $5\times 5$.
\end{itemize}
\par It is now obvious that making use of the previous definitions, equation \eqref{B.3} can be written in the following form.
\eq$\label{B.4}
B_{MN}=\Gam_{N,M}-\Gam_{M,N}+\Gam_{M}\Gam_{N}-\Gam_{N}\Gam_{M}$
\par It is necessary to mention that the matrices $B_{MN}$ have the useful property $B_{MN}=-B_{NM}$. This property arise from the antisymmetry of Riemann tensor $R^L{}_{KMN}=-R^L{}_{KNM}$. Therefore, this property ensures us that $B_{MM}=0$.
\par We can easily extract the $\Gamma_M$ matrices from equation \eqref{B.2}. Hence, we have
\eq$\label{B.5}
\boxed{\Gam_0=\left(
\begin{array}{ccccc}
 \frac{m-r \pa_rm}{r^2} & 0 & 0 & 0 & A' \\
 \frac{r^2 \pa_vm-(r-2 m) \left(r \pa_rm-m\right)}{r^3} & \frac{r \pa_rm-m}{r^2} & 0 & 0 &
   \frac{\pa_ym}{r} \\
 0 & 0 & 0 & 0 & 0 \\
 0 & 0 & 0 & 0 & 0 \\
 \frac{e^{2 A} \left[(r-2 m) A'-\pa_ym\right]}{r} & -e^{2 A} A' & 0 & 0 & 0 \\
\end{array}\right)}$
\eq$\label{B.6}
\boxed{\Gam_1=\left(
\begin{array}{ccccc}
 0 & 0 & 0 & 0 & 0 \\
 \frac{r \pa_rm-m}{r^2} & 0 & 0 & 0 & A' \\
 0 & 0 & \frac{1}{r} & 0 & 0 \\
 0 & 0 & 0 & \frac{1}{r} & 0 \\
 -e^{2 A} A' & 0 & 0 & 0 & 0 \\
\end{array}
\right)}$
\eq$\label{B.7}
\boxed{\Gam_2=\left(
\begin{array}{ccccc}
 0 & 0 & -r & 0 & 0 \\
 0 & 0 & 2 m-r & 0 & 0 \\
 0 & \frac{1}{r} & 0 & 0 & A' \\
 0 & 0 & 0 & \cot \theta & 0 \\
 0 & 0 & -e^{2 A} r^2 A' & 0 & 0 \\
\end{array}
\right)}$
\eq$\label{B.8}
\boxed{\Gam_3=\left(
\begin{array}{ccccc}
 0 & 0 & 0 & -r \sin ^2\theta & 0 \\
 0 & 0 & 0 & (2 m-r) \sin ^2\theta  & 0 \\
 0 & 0 & 0 & -\cos \theta  \sin \theta  & 0 \\
 0 & \frac{1}{r} & \cot \theta  & 0 & A' \\
 0 & 0 & 0 & -e^{2 A} r^2 \sin ^2\theta  A' & 0 \\
\end{array}
\right)}$
\eq$\label{B.9}
\boxed{\Gam_4=\left(
\begin{array}{ccccc}
 A' & 0 & 0 & 0 & 0 \\
 \frac{\pa_ym}{r} & A' & 0 & 0 & 0 \\
 0 & 0 & A' & 0 & 0 \\
 0 & 0 & 0 & A' & 0 \\
 0 & 0 & 0 & 0 & 0 \\
\end{array}
\right)}$

\newpage
\par The combination of equations \eqref{B.4}-\eqref{B.9} leads to the following $B_{MN}$ matrices.
\begin{center}
\underline{$\mathbf{B_{01}=-B_{10}}$}
\end{center}\vspace{1em}
$$B_{01}=\Gam_{1,0}-\Gam_{0,1}+\Gam_0\Gam_1-\Gam_1\Gam_0\Ra$$
\eq$\label{B.10}
\boxed{B_{01}=\left(
\begin{array}{ccccc}
 -e^{2 A} A'^2+\frac{2 \left(m-r \pa_rm\right)}{r^3}+\frac{\pa_r^2m}{r} & 0 & 0 & 0 & 0 \\
 -\frac{(r-2 m) \left[r \left(e^{2 A} r^2 A'^2+2 \pa_rm-r \pa_r^2m\right)-2 m\right]}{r^4} & e^{2 A}
   A'^2-\frac{r^2\pa_r^2m -2 r\pa_rm +2 m}{r^3} & 0 & 0 & \frac{\pa_ym-r \pa_r\pa_ym}{r^2} \\
 0 & 0 & 0 & 0 & 0 \\
 0 & 0 & 0 & 0 & 0 \\
 \frac{e^{2 A} \left(r \pa_r\pa_ym-\pa_ym\right)}{r^2} & 0 & 0 & 0 & 0 \\
\end{array}
\right)}$
\begin{center}
\underline{$\mathbf{B_{02}=-B_{20}}$}
\end{center}\vspace{1em}
$$B_{02}=\Gam_{2,0}-\Gam_{0,2}+\Gam_0\Gam_2-\Gam_2\Gam_0\Ra$$
\eq$\label{B.11}
\boxed{B_{02}=\left(
\begin{array}{ccccc}
 0 & 0 & \pa_rm-\frac{e^{2 A} A'^2 r^3+m}{r} & 0 & 0 \\
 0 & 0 & \pa_vm-e^{2 A} r A' \pa_ym & 0 & 0 \\
 \frac{-e^{2 A} A' r^3\left[(r-2 m) A'-\pa_ym\right] -r^2\pa_vm +(r-2 m) \left(r
   \pa_rm-m\right)}{r^4} & \frac{e^{2 A} A'^2 r^3-r\pa_rm +m}{r^3} & 0 & 0 & -\frac{\pa_ym}{r^2}
   \\
 0 & 0 & 0 & 0 & 0 \\
 0 & 0 & e^{2 A} \pa_ym & 0 & 0 \\
\end{array}
\right)}$
\begin{center}
\underline{$\mathbf{B_{03}=-B_{30}}$}
\end{center}\vspace{1em}
$$B_{03}=\Gam_{3,0}-\Gam_{0,3}+\Gam_0\Gam_3-\Gam_3\Gam_0\Ra$$
\eq$\label{B.12}
\boxed{B_{03}=\left(
\begin{array}{ccccc}
 0 & 0 & 0 & -\frac{\sin ^2\theta  \left(e^{2 A} A'^2 r^3-r\pa_rm +m\right)}{r} & 0 \\
 0 & 0 & 0 & \scriptstyle{-\sin ^2\theta \left(e^{2 A} r A' \pa_ym-\pa_vm\right)} & 0 \\
 0 & 0 & 0 & 0 & 0 \\
 \scriptscriptstyle{\frac{-e^{2 A} A' r^3\left[(r-2 m) A'-\pa_ym\right] -r^2\pa_vm +(r-2 m) \left(r
   \pa_rm-m\right)}{r^4}} & \frac{e^{2 A} A'^2 r^3-r\pa_rm +m}{r^3} & 0 & 0 & -\frac{\pa_ym}{r^2}
   \\
 0 & 0 & 0 & e^{2 A} \sin ^2\theta \pa_ym & 0 \\
\end{array}
\right)}$
\begin{center}
\underline{$\mathbf{B_{04}=-B_{40}}$}
\end{center}\vspace{1em}
$$B_{04}=\Gam_{4,0}-\Gam_{0,4}+\Gam_0\Gam_4-\Gam_4\Gam_0\Ra$$
\eq$\label{B.13}
\boxed{B_{04}=\left(
\begin{array}{ccccc}
 \frac{r \pa_r\pa_ym-\pa_ym}{r^2} & 0 & 0 & 0 & -A'^2-A'' \\
 \frac{(r-2 m) \left(r \pa_r\pa_ym-\pa_ym\right)}{r^3} & \frac{\pa_ym-r \pa_r\pa_ym}{r^2} & 0 & 0 & -\frac{2A' \pa_ym+\pa_y^2m}{r} \\
 0 & 0 & 0 & 0 & 0 \\
 0 & 0 & 0 & 0 & 0 \\
 \frac{e^{2 A} \left[-(r-2 m) \left(A'^2+A''\right)+2 A' \pa_ym+\pa_y^2m\right]}{r} & e^{2 A}
   \left(A'^2+A''\right) & 0 & 0 & 0 \\
\end{array}
\right)}$

\newpage
\begin{center}
\underline{$\mathbf{B_{12}=-B_{21}}$}
\end{center}
\eq$\label{B.14}
\boxed{B_{12}=\Gam_{2,1}-\Gam_{1,2}+\Gam_1\Gam_2-\Gam_2\Gam_1=\left(
\begin{array}{ccccc}
 0 & 0 & 0 & 0 & 0 \\
 0 & 0 & \pa_rm-\frac{e^{2 A} A'^2 r^3+m}{r} & 0 & 0 \\
 \frac{e^{2 A} A'^2 r^3-r\pa_rm +m}{r^3} & 0 & 0 & 0 & 0 \\
 0 & 0 & 0 & 0 & 0 \\
 0 & 0 & 0 & 0 & 0 \\
\end{array}
\right)}$
\begin{center}
\underline{$\mathbf{B_{13}=-B_{31}}$}
\end{center}
\eq$\label{B.15}
\boxed{B_{13}=\Gam_{3,1}-\Gam_{1,3}+\Gam_1\Gam_3-\Gam_3\Gam_1=\left(
\begin{array}{ccccc}
 0 & 0 & 0 & 0 & 0 \\
 0 & 0 & 0 & -\frac{\sin ^2\theta \left(e^{2 A} A'^2 r^3-r\pa_rm +m\right)}{r} & 0 \\
 0 & 0 & 0 & 0 & 0 \\
 \frac{e^{2 A} A'^2 r^3-r\pa_rm +m}{r^3} & 0 & 0 & 0 & 0 \\
 0 & 0 & 0 & 0 & 0 \\
\end{array}
\right)}$
\begin{center}
\underline{$\mathbf{B_{14}=-B_{41}}$}
\end{center}
\eq$\label{B.16}
\boxed{B_{14}=\Gam_{4,1}-\Gam_{1,4}+\Gam_1\Gam_4-\Gam_4\Gam_1=\left(
\begin{array}{ccccc}
 0 & 0 & 0 & 0 & 0 \\
 0 & 0 & 0 & 0 & -A'^2-A'' \\
 0 & 0 & 0 & 0 & 0 \\
 0 & 0 & 0 & 0 & 0 \\
 e^{2 A} \left(A'^2+A''\right) & 0 & 0 & 0 & 0 \\
\end{array}
\right)}$
\begin{center}
\underline{$\mathbf{B_{23}=-B_{32}}$}
\end{center}
\eq$\label{B.17}
\boxed{B_{23}=\Gam_{3,2}-\Gam_{2,3}+\Gam_2\Gam_3-\Gam_3\Gam_2=\left(
\begin{array}{ccccc}
 0 & 0 & 0 & 0 & 0 \\
 0 & 0 & 0 & 0 & 0 \\
 0 & 0 & 0 & \frac{\sin ^2\theta\left(2 m-e^{2 A} r^3 A'^2\right)}{r} & 0 \\
 0 & 0 & e^{2 A} r^2 A'^2-\frac{2 m}{r} & 0 & 0 \\
 0 & 0 & 0 & 0 & 0 \\
\end{array}
\right)}$
\begin{center}
\underline{$\mathbf{B_{24}=-B_{42}}$}
\end{center}
\eq$\label{B.18}
\boxed{B_{24}=\Gam_{4,2}-\Gam_{2,4}+\Gam_2\Gam_4-\Gam_4\Gam_2=\left(
\begin{array}{ccccc}
 0 & 0 & 0 & 0 & 0 \\
 0 & 0 & -\pa_ym & 0 & 0 \\
 \frac{\pa_ym}{r^2} & 0 & 0 & 0 & -A'^2-A'' \\
 0 & 0 & 0 & 0 & 0 \\
 0 & 0 & e^{2 A} r^2 \left(A'^2+A''\right) & 0 & 0 \\
\end{array}
\right)}$
\begin{center}
\underline{$\mathbf{B_{34}=-B_{43}}$}
\end{center}
\eq$\label{B.19}
\boxed{B_{34}=\Gam_{4,3}-\Gam_{3,4}+\Gam_3\Gam_4-\Gam_4\Gam_3=\left(
\begin{array}{ccccc}
 0 & 0 & 0 & 0 & 0 \\
 0 & 0 & 0 & -\sin ^2\theta \pa_ym & 0 \\
 0 & 0 & 0 & 0 & 0 \\
 \frac{\pa_ym}{r^2} & 0 & 0 & 0 & -A'^2-A'' \\
 0 & 0 & 0 & e^{2 A} r^2 \sin ^2\theta \left(A'^2+A''\right) & 0 \\
\end{array}
\right)}$
\par Finally, knowing the matrices $B_{MN}$ from equations \eqref{B.10}-\eqref{B.19} and using the fact that the L-th row and K-th column of the matrix $B_{MN}$ gives the $R^L{}_{KMN}$ component of the Riemann tensor, it is quite easy to extract the components of the Riemann tensor, all non-zero components are depicted below. 
\begin{empheq}[box=\mymath]{equation}
\begin{array}{cc}
\scriptstyle{R^0{}_{001}=-R^0{}_{010}=\frac{2 \left(m-r \pa_rm\right)}{r^3}+\frac{\pa^2_rm}{r}-e^{2 A} A'^2}&\scriptstyle{R^0{}_{004}=-R^0{}_{040}=\frac{r \pa_r\pa_ym-\pa_ym}{r^2}}\\ \\
\scriptstyle{R^0{}_{202}=-R^0{}_{220}=\pa_rm-\frac{m+r^3 e^{2 A} A'^2}{r}}&\scriptstyle{R^0{}_{303}=-R^0{}_{330}=-\frac{\sin ^2\theta \left(-r \pa_rm+m+r^3 e^{2 A} A'^2\right)}{r}}\\ \\
\scriptstyle{R^0{}_{404}=-R^0{}_{440}=-A''(y)-A'(y)^2}&\scriptstyle{R^1{}_{004}=-R^1{}_{040}=\frac{(r-2 m) \left(r \pa_r\pa_ym-\pa_ym\right)}{r^3}}\\ \\
\scriptstyle{R^1{}_{001}=-R^1{}_{010}=-\frac{(r-2 m) \left[r \left(2 \pa_rm-r \pa^2_rm+r^2 e^{2 A} A'^2\right)-2 m\right]}{r^4}}&\scriptstyle{R^1{}_{101}=-R^1{}_{110}=e^{2 A} A'^2-\frac{r^2 \pa_r^2m-2 r \pa_rm+2 m}{r^3}}\\ \\
\scriptstyle{R^1{}_{104}=-R^1{}_{140}=\frac{\pa_ym-r \pa_r\pa_ym}{r^2}}&\scriptstyle{R^1{}_{202}=-R^1{}_{220}=\pa_vm-r e^{2 A} A' \pa_ym}\\ \\
\scriptstyle{R^1{}_{212}=-R^1{}_{221}=\pa_rm-\frac{m+r^3 e^{2 A} A'^2}{r}}&\scriptstyle{R^1{}_{224}=-R^1{}_{242}=-\pa_ym}\\ \\
\scriptstyle{R^1{}_{303}=-R^1{}_{330}=-\sin ^2\theta  \left(r e^{2 A} A' \pa_ym-\pa_vm\right)}&\scriptstyle{R^1{}_{313}=-R^1{}_{331}=-\frac{\sin ^2\theta \left(-r \pa_rm+m+r^3 e^{2 A} A'^2\right)}{r}}\\ \\
\scriptstyle{R^1{}_{334}=-R^1{}_{343}=-\sin ^2\theta  \pa_ym}&\scriptstyle{R^1{}_{401}=-R^1{}_{410}=\frac{\pa_ym-r \pa_r\pa_ym}{r^2}}\\ \\
\scriptstyle{R^1{}_{404}=-R^1{}_{440}=-\frac{2 A' \pa_ym+\pa^2_ym}{r}}&\scriptstyle{R^1{}_{414}=-R^1{}_{441}=-A''-A'^2}\\ \\
\scriptscriptstyle{R^2{}_{002}=-R^2{}_{020}=\frac{-r^3e^{2 A} A' \left[A' (r-2 m)-\pa_ym\right]-r^2 \pa_vm+(r-2 m) \left(r \pa_rm-m\right)}{r^4}}&\scriptstyle{R^2{}_{012}=-R^2{}_{021}=\frac{-r \pa_rm+m+r^3 e^{2 A} A'^2}{r^3}}\\ \\
\scriptstyle{R^2{}_{024}=-R^2{}_{042}=\frac{\pa_ym}{r^2}}&\scriptstyle{R^2{}_{102}=-R^2{}_{120}=\frac{-r \pa_rm+m+r^3 e^{2 A} A'^2}{r^3}}\\ \\
\scriptstyle{R^2{}_{323}=-R^2{}_{332}=\frac{\sin ^2\theta \left(2 m-r^3 e^{2 A} A'^2\right)}{r}}&\scriptstyle{R^2{}_{402}=-R^2{}_{420}=-\frac{\pa_ym}{r^2}}\\ \\
\scriptstyle{R^2{}_{424}=-R^2{}_{442}=-A''-A'^2}&\scriptstyle{R^3{}_{013}=-R^3{}_{031}=\frac{-r \pa_rm+m+r^3 e^{2 A} A'^2}{r^3}}\\ \\
\scriptscriptstyle{R^3{}_{003}=-R^3{}_{030}=\frac{-r^3e^{2 A} A' \left[A' (r-2m)-\pa_ym\right]-r^2 \pa_vm+(r-2 m) \left(r \pa_rm-m\right)}{r^4}}&\scriptstyle{R^3{}_{034}=-R^3{}_{043}=\frac{\pa_ym}{r^2}}\\ \\
\scriptstyle{R^3{}_{103}=-R^3{}_{130}=\frac{-r \pa_rm+m+r^3 e^{2 A} A'^2}{r^3}}&\scriptstyle{R^3{}_{223}=-R^3{}_{232}=r^2 e^{2 A} A'^2-\frac{2 m}{r}}\\ \\
\scriptstyle{R^3{}_{403}=-R^3{}_{430}=-\frac{\pa_ym}{r^2}}&\scriptstyle{R^3{}_{434}=-R^3{}_{443}=-A''-A'^2}\\ \\
\scriptstyle{R^4{}_{001}=-R^4{}_{010}=\frac{e^{2 A} \left(r \pa_r\pa_ym-\pa_ym\right)}{r^2}}&\scriptstyle{R^4{}_{014}=-R^4{}_{041}=e^{2 A} \left(A''+A'^2\right)}\\ \\
\scriptstyle{R^4{}_{004}=-R^4{}_{040}=\frac{e^{2 A} \left[2 A'\pa_ym+\left(A''+A'^2\right) (2m-r)+\pa_y^2m\right]}{r}}&\scriptstyle{R^4{}_{104}=-R^4{}_{140}=e^{2 A} \left(A''+A'^2\right)}\\ \\
\scriptstyle{R^4{}_{202}=-R^4{}_{220}=e^{2 A} \pa_ym}&\scriptstyle{R^4{}_{224}=-R^4{}_{242}=r^2 e^{2 A} \left(A''+A'^2\right)} \\ \\
\scriptstyle{R^4{}_{303}=-R^4{}_{330}=e^{2 A} \sin ^2\theta \pa_ym}&\scriptstyle{R^4{}_{334}=-R^4{}_{343}=r^2 e^{2 A} \sin ^2\theta  \left(A''+A'^2\right)}
\end{array}\nonum
\end{empheq}

\newpage
\section{Riemann Scalar}
\par Riemann scalar is defined as
\eq$\label{B.20}
R^{ABCD}R_{ABCD}$
Therefore, it is necessary to calculate the four-times contravariant and four-times covariant components of Riemann tensor in order to be able to execute the above contraction.
\par Contravariant and covariant components of Riemann tensor have some very useful properties that reduce the number of independent components. Hence, we present these properties below in order to use them afterwards.
\begin{itemize}
\item $R^{ABCD}=-R^{BACD}=-R^{ABDC}$
\item $R^{ABCD}=R^{CDAB}$
\item $R_{ABCD}=-R_{BACD}=-R_{ABDC}$
\item $R_{ABCD}=R_{CDAB}$
\end{itemize}
\par Knowing the components $R^L{}_{KMN}$ of the Riemann tensor, the four-times contravariant components are given by the equation
\eq$\label{B.21}
R^{ABCD}=R^{A}{}_{KMN}g^{KB}g^{MC}g^{ND}$
and the non-zero four-times contravariant components are
\eq$\label{B.22}
\boxed{\begin{array}{c}
\scriptstyle{R^{0101}=-R^{1001}=-R^{0110}=R^{1010}=e^{-6 A} \left(e^{2 A} A'^2-\frac{r^2 \pa_r^2m-2 r \pa_rm+2 m}{r^3}\right)}\\ \\
\scriptstyle{R^{0114}=-R^{1014}=-R^{0141}=R^{1401}=-R^{4101}=-R^{1410}=R^{1041}=R^{4110}=\frac{e^{-4 A} \left(r \pa_r\pa_ym-\pa_ym\right)}{r^2}}\\ \\
\scriptstyle{R^{0212}=-R^{2012}=-R^{0221}=R^{1202}=-R^{2102}=-R^{1220}=R^{2021}=R^{2120}=-\frac{e^{-6 A} \left(-r \pa_rm+m+r^3 e^{2 A} A'^2\right)}{r^5}}\\ \\
\scriptstyle{R^{0313}=-R^{3013}=-R^{0331}=R^{1303}=-R^{3103}=-R^{1330}=R^{3031}=R^{3130}=-\frac{e^{-6 A} \csc ^2\theta \left(-r \pa_rm+m+r^3 e^{2 A} A'^2\right)}{r^5}}\\ \\
\scriptstyle{R^{0414}=-R^{4014}=-R^{0441}=R^{1404}=-R^{4104}=-R^{1440}=R^{4041}=R^{4140}=-e^{-2 A} \left(A''+A'^2\right)}\\ \\
\scriptstyle{R^{1212}=-R^{2112}=-R^{1221}=R^{2121}=\frac{e^{-6 A} \left[(r-2 m) \left(\pa_rm-\frac{m+r^3 e^{2 A} A'^2}{r}\right)+r \left(\pa_vm-r e^{2 A} A'\pa_ym\right)\right]}{r^5}}\\ \\
\scriptstyle{R^{1224}=-R^{2124}=-R^{1242}=R^{2412}=-R^{4212}=-R^{2421}=R^{2142}=R^{4241}=-\frac{e^{-4 A} \pa_ym}{r^4}}\\ \\
\scriptstyle{R^{1313}=-R^{3113}=-R^{1331}=R^{3131}=\frac{e^{-6 A} \csc ^2\theta \left[r^2\left(\pa_vm-r e^{2 A} A' \pa_ym\right)-(r-2 m) \left(-r \pa_rm+m+r^3 e^{2A} A'^2\right)\right]}{r^6}}\\ \\
\scriptstyle{R^{1334}=-R^{3134}=-R^{1343}=R^{3413}=-R^{4313}=-R^{3431}=R^{3143}=R^{4331}=-\frac{e^{-4 A} \csc ^2\theta \pa_ym}{r^4}}\\ \\
\scriptstyle{R^{2323}=-R^{3223}=-R^{2332}=R^{3232}=\frac{e^{-6 A} \csc ^2\theta\left(2 m-r^3 e^{2 A}A'^2\right)}{r^7}}\\ \\
\scriptstyle{R^{2424}=-R^{4224}=-R^{2442}=R^{4242}=-\frac{e^{-2 A} \left(A''+A'^2\right)}{r^2}}\\ \\
\scriptstyle{R^{3434}=-R^{4334}=-R^{3443}=R^{4343}=-\frac{e^{-2 A} \csc ^2\theta \left(A''+A'^2\right)}{r^2}}\\ \\
\scriptstyle{R^{1414}=-R^{4114}=-R^{1441}=R^{4141}=-\frac{e^{-2 A} \left[2 A' \pa_ym+\left(A''+A'^2\right) (r-2m)+\pa_y^2m\right]}{r}}\\ \\
\end{array}}$
\par  The four-times covariant components are given by the equation
\eq$\label{B.23}
R_{ABCD}=g_{AL}R^L{}_{BCD}$
and the non-zero four-times covariant components are
\eq$\label{B.24}
\boxed{\begin{array}{c}
\scriptstyle{R_{0101}=-R_{1001}=-R_{0110}=R_{1010}=e^{4 A} A'^2-\frac{e^{2 A} \left[r \left(r\pa_r^2m-2 \pa_rm\right)+2 m\right]}{r^3}}\\ \\
\scriptstyle{R_{0104}=-R_{1004}=-R_{0140}=R_{0401}=-R_{4001}=-R_{0410}=R_{1040}=R_{4010}=\frac{e^{2 A} \left(\pa_ym-r \pa_r\pa_ym\right)}{r^2}}\\ \\
\scriptstyle{R_{0202}=-R_{2002}=-R_{0220}=R_{2020}=e^{2 A} \left[\frac{(r-2 m) \left(-r \pa_rm+m+r^3 e^{2 A} A'^2\right)}{r^2}-re^{2 A} A' \pa_ym+\pa_vm\right]}\\ \\
\scriptstyle{R_{0212}=-R_{2012}=-R_{0221}=R_{1202}=-R_{2102}=-R_{1220}=R_{2021}=R_{2120}=e^{2 A} \left(\pa_rm-\frac{m+r^3 e^{2 A} A'^2}{r}\right)}\\ \\
\scriptstyle{R_{0224}=-R_{2024}=-R_{0242}=R_{2402}=-R_{4202}=-R_{2420}=R_{2042}=R_{4220}=-e^{2 A} \pa_ym}\\ \\
\scriptstyle{R_{0303}=-R_{3003}=-R_{0330}=R_{3030}=e^{2 A} \sin ^2\theta\left[\frac{(r-2 m) \left(-r \pa_rm+m+r^3 e^{2 A} A'^2\right)}{r^2}-re^{2 A} A' \pa_ym+\pa_vm\right]}\\ \\
\scriptstyle{R_{0313}=-R_{3013}=-R_{0331}=R_{1303}=-R_{3103}=-R_{1330}=R_{3031}=R_{3130}=-\frac{e^{2 A} \sin ^2\theta \left(-r\pa_rm+m+r^3 e^{2 A} A'^2\right)}{r}}\\ \\
\scriptstyle{R_{0334}=-R_{3034}=-R_{0343}=R_{3403}=-R_{4303}=-R_{3430}=R_{3043}=R_{4330}=-e^{2 A} \sin ^2\theta \pa_ym}\\ \\
\scriptstyle{R_{0404}=-R_{4004}=-R_{0440}=R_{4040}=\frac{e^{2 A} \left[-2 A'\pa_ym+\left(A''+A'^2\right) (r-2 m)-\pa_y^2m\right]}{r}}\\ \\
\scriptstyle{R_{0414}=-R_{4014}=-R_{0441}=R_{1404}=-R_{4104}=-R_{1440}=R_{4041}=R_{4140}=-e^{2 A} \left(A''+A'^2\right)}\\ \\
\scriptstyle{R_{2323}=-R_{3223}=-R_{2332}=R_{3232}=r e^{2 A} \sin ^2\theta\left(2 m-r^3 e^{2 A} A'^2\
\right)}\\ \\
\scriptstyle{R_{2424}=-R_{4224}=-R_{2442}=R_{4242}=-r^2e^{2 A}\left(A''+A'^2\right)}\\ \\
\scriptstyle{R_{3434}=-R_{4334}=-R_{3443}=R_{4343}=-r^2e^{2 A} \sin ^2\theta\left(A''+A'^2\right)}\\ \\
\end{array}}$
\par Possessing all non-zero contravariant and covariant components of Riemann tensor from equations \eqref{B.22} and \eqref{B.24} respectively, we are now able to calculate the Riemann scalar.
\eqarr$\label{B.25}
R^{ABCD}R_{ABCD}&=&4R^{0101}R_{0101}+8R^{0212}R_{0212}+8R^{0313}R_{0313}+8R^{0414}R_{0414}\nonum\\
&&+4R^{2323}R_{2323}+4R^{2424}R_{2424}+4R^{3434}R_{3434}$
All the other combinations of contraction of the indices $ABCD$ nullify either $R^{ABCD}$ or $R_{ABCD}$.
\begin{empheq}[box=\mymath]{equation}
\begin{array}{c}\scriptstyle{R^{ABCD}R_{ABCD}}=\displaystyle{40A'^4+32A'^2A''+16A''^2+}\displaystyle{\frac{48e^{-4A}m^2}{r^6}}-\frac{8e^{-2A}A'^2}{r}\left(\pa_r^2m+\frac{2\pa_rm}{r}\right)\\ \\
\hspace{6em}+\displaystyle{\frac{4e^{-4A}}{r^2}}\left[(\pa_r^2m)^2+\frac{4m}{r^2}\left(\pa_r^2m-\frac{4\pa_rm}{r}\right)-\frac{4\pa_rm\pa_r^2m}{r}+\frac{8(\pa_r m)^2}{r}\right]
\end{array}\nonum
\end{empheq}

\section{Ricci Tensor and Ricci Scalars}
\par The two-times covariant components of Ricci tensor is defined by
\eq$\label{B.26}
R_{MN}=R^L{}_{MLN}=R^0{}_{M0N}+R^1{}_{M1N}+R^2{}_{M2N}+R^3{}_{M3N}+R^4{}_{M4N}$
\par Subsequently, the non-zero components of the Ricci tensor are
\begin{empheq}[box=\mymath]{equation}\label{B.27}
\begin{array}{c}
 R_{00}=\frac{e^{2 A} r \left[(r-2 m) \left(4 A'^2+A''\right)-4 A' \pa_ym-\pa_y^2m\right]-(r-2 m)\pa_r^2m+2 \pa_vm}{r^2}\\ \\
 R_{01}=R_{10}=\frac{\pa_r^2m}{r}-e^{2 A} \left(4 A'^2+A''\right)\\ \\
 R_{04}=R_{40}=\frac{\pa_ym+r \pa_r\pa_ym}{r^2} \\ \\ 
 R_{22}=2 \pa_rm-e^{2 A} r^2 \left(4 A'^2+A''\right)\\ \\
R_{33}= \sin ^2\theta \left[2 \pa_rm-e^{2 A} r^2 \left(4 A'^2+A''\right)\right]\\ \\
R_{44}=-4 \left(A'^2+A''\right)
\end{array}\end{empheq}
\par There are two Ricci scalars that one can evaluate in order to extract information for the curvature of spacetime. The first one is defined as $R=R_{MN}g^{MN}$ and the second one is defined as $R_{MN}R^{MN}$. It is obvious that for the evaluation of the latter scalar the two-times contravariant components of Ricci tensor are necessary. The following table depicts all the non-zero contravariant components of Ricci tensor. The evaluation is made by the following equation
\eq$\label{B.28}
R^{MN}=R_{AB}g^{AM}g^{BN}$
and the in between analytical actions are skipped here as well. Therefore, we have
\begin{empheq}[box=\mymath]{equation}\label{B.29}
\begin{array}{c}
R^{01}=R^{10}=e^{-4 A} \left[\frac{\pa_r^2m}{r}-e^{2 A} \left(4 A'^2+A''\right)\right]\\ \\
 R^{11}=\frac{e^{-4 A} \left[-e^{2 A} r \left((r-2m) \left(4 A'^2+A''\right)+4 A'\pa_ym+\pa_y^2m\right)+(r-2 m) \pa_r^2m+2\pa_vm\right]}{r^2}\\ \\
 R^{14}=R^{41}=\frac{e^{-2 A} \left(\pa_ym+r \pa_r\pa_ym\right)}{r^2} \\ \\ 
R^{22}=\frac{e^{-4 A} \left[2 \pa_rm-e^{2 A} r^2 \left(4 A'^2+A''\right)\right]}{r^4}\\ \\
R^{33}=\frac{e^{-4 A} \csc ^2\theta \left[2 \pa_rm-e^{2 A} r^2 \left(4 A'^2+A''\right)\right]}{r^4}\\ \\
R^{44}= -4 \left(A'^2+A''\right)
\end{array}\end{empheq}
\par We now have everything that is needed for the evaluation of the two aforementioned Ricci scalars.
\begin{empheq}[box=\mymath]{equation}\label{B.30}
R=R_{MN}g^{MN}=-20 A'^2-8 A''+\frac{2 e^{-2 A}}{r} \left(\pa_r^2m+\frac{2 \pa_rm}{r}\right)
\end{empheq}
\begin{empheq}[box=\mymath]{equation}\label{B.31}
\scriptstyle{R_{MN}R^{MN}=80A'^2+64A'^2 A''+20A''^2-\frac{4 e^{-2 A}}{r}\left(\pa_r^2m+\frac{2\pa_rm}{r}\right)(4A'^2+A'')+ \frac{2e^{-4A}}{r^2}\left[(\pa_r^2m)^2+\frac{4 (\pa_rm)^2}{r^2}\right]}
\end{empheq}

\section{Einstein tensor}
\par Einstein tensor $G_{MN}$ is defined through Ricci tensor $R_{MN}$ and Ricci scalar $R$ as follows
\eq$\label{B.32}
G_{MN}=R_{MN}-\frac{1}{2}g_{MN}R$ 
The combination of equations \eqref{loc.2},\eqref{B.30} and \eqref{B.32} can be done easily, thus, after a bit of algebra we are led to the following non-zero components of Einstein tensor.
\begin{empheq}[box=\mymath]{equation}\label{B.33}
\begin{array}{c}
G_{00}=\frac{-e^{2 A} \left[3 (r-2 m) \left(2 A'^2+A''\right)+4 A' \pa_ym+\pa_y^2m\right] r^2+2 r\pa_vm +2(r-2 m)\pa_r m}{r^3}\\ \\
G_{01}=G_{10}=3 e^{2 A} \left(2 A'^2+A''\right)-\frac{2 \pa_rm}{r^2}\\ \\
G_{04}=G_{40}=\frac{\pa_ym+r\pa_r\pa_y m}{r^2} \\ \\
G_{22}=r \left[3 e^{2 A} r \left(2 A'^2+A''\right)-\pa_r^2m\right] \\ \\
G_{33}=r \sin ^2\theta \left[3 e^{2 A} r \left(2 A'^2+A''\right)-\pa_r^2m\right] \\ \\
G_{44}=6 A'^2-\frac{e^{-2 A} \left(2 \pa_rm+r\pa_r^2 m\right)}{r^2}
\end{array}
\end{empheq}
\par The mixed components of Einstein tensor $G^M{}_{N}$ are given by
\eq$\label{B.34}
G^M{}_N=g^{MA}G_{AN}$
Equations \eqref{loc.3},\eqref{B.33} and \eqref{B.34} yield to
\begin{empheq}[box=\mymath]{equation}\label{B.35}
\begin{array}{c}
G^0{}_0=6 A'^2+3 A''-\frac{2 e^{-2 A} \pa_rm}{r^2} \\ \\
G^1{}_0=-\frac{4 r A' \pa_ym+r \pa_y^2m-2 e^{-2 A} \pa_vm}{r^2}\\ \\
G^1{}_1=6 A'^2+3 A''-\frac{2 e^{-2 A}\pa_r m}{r^2} \\ \\
G^1{}_4=\frac{e^{-2 A} \left(\pa_ym+r \pa_r\pa_ym\right)}{r^2} \\ \\
G^2{}_2=6 A'^2+3 A''-\frac{e^{-2 A}\pa_r^2 m}{r}\\ \\
G^3{}_3=6 A'^2+3 A''-\frac{e^{-2 A} \pa_r^2m}{r}\\ \\
G^4{}_0=\frac{\pa_ym+r \pa_r\pa_ym}{r^2} \\ \\
G^4{}_4=6 A'^2-\frac{e^{-2 A} \left(2 \pa_rm+r\pa_r^2 m\right)}{r^2}
\end{array}
\end{empheq}

\newpage
\section{Mathematica Code}
\par The mathematica code that is used to verify all the above calculations is illustrated below.\\
\begin{center}
\textcolor{purple}{\textbf{\underline{Definition of coordinates and metric tensor}}}
\end{center}
\begin{doublespace}
\noindent\(\pmb{n=5;}\\
\pmb{}\\
\pmb{\text{coord}=\{v,r,\theta ,\varphi ,y\};}\\
\pmb{}\\
\pmb{g=\{\{-\text{Exp}[2A[y]](1-2m[v,r,y]/r),\text{Exp}[2A[y]],0,0,0\},\{\text{Exp}[2A[y]],0,0,0,0\},}\\
\pmb{\{0,0,\text{Exp}[2A[y]]*r{}^{\wedge}2,0,0\},\{0,0,0,\text{Exp}[2A[y]]*r{}^{\wedge}2*\text{Sin}[\theta ]{}^{\wedge}2,0\},\{0,0,0,0,1\}\};}\\
\pmb{}\\
\pmb{\text{StringJoin}\left[\text{Characters}\left[\text{$\texttt{"}$(}g_{\text{MN}}\text{)=$\texttt{"}$}\right]\right]\text{MatrixForm}[g]}\\
\pmb{}\\
\pmb{\text{invg}=\text{Simplify}[\text{Inverse}[g]];}\\
\pmb{}\\
\pmb{\text{StringJoin}\left[\text{Characters}\left[\text{$\texttt{"}$(}g^{\text{MN}}\text{)=$\texttt{"}$}\right]\right]\text{MatrixForm}[\text{invg}]}\)
\end{doublespace}
\begin{center}
\textcolor{purple}{\textbf{\underline{Christoffel Symbols:\ \ \ $\mathbf{\Gam^L{}_{MN}}$}}}
\end{center}
\begin{doublespace}
\noindent\(\pmb{\text{christoffel}=\text{FullSimplify}[\text{Table}[(1/2)*\text{Sum}[(\text{invg}[[\lambda ,\rho ]])*(D[g[[\mu ,\rho ]],\text{coord}[[\nu
]]]}\\
\pmb{\text{       }+D[g[[\nu ,\rho ]],\text{coord}[[\mu ]]]-D[g[[\mu ,\nu ]],\text{coord}[[\rho ]]]),\{\rho ,1,n\}],\{\lambda ,1,n\},\{\mu ,1,n\},\{\nu
,1,n\}]];}\\
\pmb{}\\
\pmb{\text{chr}=\text{Table}[\text{If}[\text{UnsameQ}[\text{christoffel}[[\lambda ,\mu ,\nu ]],0],\{\text{ToString}[\Gamma [\lambda -1,\mu -1,\nu
-1]],}\\
\pmb{\text{christoffel}[[\lambda ,\mu ,\nu ]]\}],\{\lambda ,1,n\},\{\mu ,1,n\},\{\nu ,1,n\}];}\\
\pmb{}\\
\pmb{\text{TableForm}[\text{Partition}[\text{DeleteCases}[\text{Flatten}[\text{chr}],\text{Null}],2],\text{TableSpacing}\to \{2,2\}]}\)
\end{doublespace}
\begin{center}
\textcolor{purple}{\textbf{\underline{Riemann Tensor's Components:\ \ \  $\mathbf{R^K{}_{LMN}}$}}}
\end{center}
\begin{doublespace}
\noindent\(\pmb{\text{riemann}=\text{FullSimplify}[\text{Table}[D[\text{christoffel}[[\alpha ,\mu ,\sigma ]],\text{coord}[[\rho ]]]-D[\text{christoffel}[[\alpha
,\mu ,\rho ]],\text{coord}[[\sigma ]]]}\\
\pmb{\text{       }+\text{Sum}[\text{christoffel}[[\alpha ,\rho ,\lambda ]]*\text{christoffel}[[\lambda ,\mu ,\sigma ]],\{\lambda ,1,n\}]-\text{Sum}[\text{christoffel}[[\alpha
,\sigma ,\lambda ]]}\\
\pmb{\text{   }*\text{christoffel}[[\lambda ,\mu ,\rho ]],\{\lambda ,1,n\}],\{\alpha ,1,n\},\{\mu ,1,n\},\{\rho ,1,n\},\{\sigma ,1,n\}]];}\vspace{7em}
\pmb{}\\
\pmb{\text{rie}=\text{Table}[\text{If}[\text{UnsameQ}[\text{riemann}[[\alpha ,\mu ,\rho ,\sigma ]],0],\{\text{ToString}[R[\alpha -1,\mu -1,\rho -1,\sigma
-1]],}\\
\pmb{\text{riemann}[[\alpha ,\mu ,\rho ,\sigma ]]\}],}\pmb{\{\alpha ,1,n\},\{\mu ,1,n\},\{\rho ,1,n\},\{\sigma ,1,n\}];}\\
\pmb{}\\
\pmb{\text{TableForm}[\text{Partition}[\text{DeleteCases}[\text{Flatten}[\text{rie}],\text{Null}],2],\text{TableSpacing}\to \{2,2\}]}\)
\end{doublespace}
\begin{center}
\textcolor{purple}{\textbf{\underline{Contravariant Components of Riemann Tensor:\ \ \  $\mathbf{R^{ABCD}}$}}}
\end{center}
\begin{doublespace}
\noindent\(\pmb{\text{riemanncon}=\text{FullSimplify}[\text{Table}[\text{Sum}[\text{Sum}[\text{Sum}[\text{riemann}[[\mu ,\alpha ,\beta ,\gamma ]]*\text{invg}[[\alpha
,\nu ]],\{\alpha ,1,n\}]*}\\
\pmb{\text{        }\text{invg}[[\beta ,\kappa ]],\{\beta ,1,n\}]*\text{invg}[[\gamma ,\lambda ]],\{\gamma ,1,n\}],\{\mu ,1,n\},\{\nu ,1,n\},\{\kappa
,1,n\},\{\lambda ,1,n\}]];}\\
\pmb{}\\
\pmb{\text{riecon}=\text{Table}[\text{If}[\text{UnsameQ}[\text{riemanncon}[[\alpha ,\mu ,\rho ,\sigma ]],0],\{\text{ToString}[\text{Rcon}[\alpha
-1,\mu -1,\rho -1,\sigma -1]],}\\
\pmb{\text{       }\text{riemanncon}[[\alpha ,\mu ,\rho ,\sigma ]]\}],\{\alpha ,1,n\},\{\mu ,1,n\},\{\rho ,1,n\},\{\sigma ,1,n\}];}\\
\pmb{}\\
\pmb{\text{TableForm}[\text{Partition}[\text{DeleteCases}[\text{Flatten}[\text{riecon}],\text{Null}],2],\text{TableSpacing}\to \{2,2\}]}\)
\end{doublespace}
\begin{center}
\textcolor{purple}{\textbf{\underline{Covariant Components of Riemann Tensor:\ \ \  $\mathbf{R_{ABCD}}$}}}
\end{center}
\begin{doublespace}
\noindent\(\pmb{\text{riemanncov}=\text{FullSimplify}[\text{Table}[\text{Sum}[g[[\mu ,\delta ]]*\text{riemann}[[\delta ,\nu ,\kappa ,\lambda ]],}\\
\pmb{\{\delta ,1,n\}],\{\mu ,1,n\},\{\nu ,1,n\},\{\kappa ,1,n\},\{\lambda ,1,n\}]];}\\
\pmb{}\\
\pmb{\text{riecov}=\text{Table}[\text{If}[\text{UnsameQ}[\text{riemanncov}[[\alpha ,\mu ,\rho ,\sigma ]],0],\{\text{ToString}[\text{Rcov}[\alpha
-1,\mu -1,\rho -1,\sigma -1]],}\\
\pmb{\text{riemanncov}[[\alpha ,\mu ,\rho ,\sigma ]]\}],\{\alpha ,1,n\},\{\mu ,1,n\},\{\rho ,1,n\},\{\sigma ,1,n\}];}\\
\pmb{}\\
\pmb{\text{TableForm}[\text{Partition}[\text{DeleteCases}[\text{Flatten}[\text{riecov}],\text{Null}],2],\text{TableSpacing}\to \{2,2\}]}\)
\end{doublespace}
\begin{center}
\textcolor{purple}{\textbf{\underline{Covariant Components of Ricci Tensor:\ \ \  $\mathbf{R_{MN}}$}}}
\end{center}
\begin{doublespace}
\noindent\(\pmb{\text{ricci}=\text{FullSimplify}[\text{Table}[\text{Sum}[\text{riemann}[[\mu ,\alpha ,\mu ,\beta ]],\{\mu ,1,n\}],\{\alpha ,1,n\},\{\beta
,1,n\}]];}\\
\pmb{}\\
\pmb{\text{StringJoin}\left[\text{Characters}\left[\text{$\texttt{"}$(}R_{\text{MN}}\text{)=$\texttt{"}$}\right]\right]\text{MatrixForm}[\text{ricci}]}\)
\end{doublespace}
\begin{center}
\textcolor{purple}{\textbf{\underline{Contravariant Components of Ricci Tensor:\ \ \  $\mathbf{R^{MN}}$}}}
\end{center}
\begin{doublespace}
\noindent\(\pmb{\text{riccicon}=\text{FullSimplify}[\text{Table}[\text{Sum}[\text{ricci}[[\alpha ,\beta ]]*\text{invg}[[\alpha ,\mu ]]*\text{invg}[[\beta
,\nu ]],\{\alpha ,1,n\},\{\beta ,1,n\}],\{\mu ,1,n\},}\\
\pmb{\{\nu ,1,n\}]];}\\
\pmb{}\\
\pmb{\text{StringJoin}\left[\text{Characters}\left[\text{$\texttt{"}$(}R^{\text{MN}}\text{)=$\texttt{"}$}\right]\right]\text{MatrixForm}[\text{riccicon}]}\)
\end{doublespace}
\begin{center}
\textcolor{purple}{\textbf{\underline{Ricci and Riemann Scalars}}}
\end{center}
\begin{doublespace}
\noindent\(\pmb{\text{scalarricci}=\text{FullSimplify}[\text{Sum}[\text{invg}[[\mu ,\nu ]]*\text{ricci}[[\mu ,\nu ]],\{\mu ,1,n\},\{\nu ,1,n\}]];}\\
\pmb{}\\
\pmb{\text{StringJoin}\left[\text{Characters}\left[\texttt{"}R_{\text{MN}}g^{\text{MN}}\text{=$\texttt{"}$}\right]\right].\text{  }\text{scalarricci}}\\
\pmb{}\\
\pmb{\text{scalarricci2}=\text{FullSimplify}[\text{Sum}[\text{ricci}[[\mu ,\nu ]]*\text{riccicon}[[\mu ,\nu ]],\{\mu ,1,n\},\{\nu ,1,n\}]];}\\
\pmb{}\\
\pmb{\text{StringJoin}\left[\text{Characters}\left[\texttt{"}R_{\text{MN}}R^{\text{MN}}\text{=$\texttt{"}$}\right]\right] .\text{scalarricci2}}\\
\pmb{}\\
\pmb{\text{StringJoin}\left[\text{Characters}\left[\texttt{"}R_{\text{ABCD}}R^{\text{ABCD}}\text{=$\texttt{"}$}\right]\right] .\text{FullSimplify}[\text{Sum}[\text{riemanncov}[[\mu
,\nu ,\kappa ,\lambda ]]*}\\
\pmb{\text{riemanncon}[[\mu ,\nu ,\kappa ,\lambda ]],\{\mu ,1,n\},\{\nu ,1,n\},\{\kappa ,1,n\},\{\lambda ,1,n\}]]}\\
\pmb{}\)
\end{doublespace}
\begin{center}
\textcolor{purple}{\textbf{\underline{Covariant Components of Einstein Tensor:\ \ \ $\mathbf{G_{MN}}$}}}
\end{center}
\begin{doublespace}
\noindent\(\pmb{\text{einstein}=\text{FullSimplify}[\text{Table}[\text{ricci}[[\mu ,\nu ]]-(1/2)g[[\mu ,\nu ]] *\text{scalarricci},\{\mu ,1,n\},\{\nu
,1,n\}]];}\\
\pmb{}\\
\pmb{\text{StringJoin}\left[\text{Characters}\left[\text{$\texttt{"}$(}G_{\text{MN}}\text{)=$\texttt{"}$}\right]\right]\text{MatrixForm}[\text{einstein}]}\)
\end{doublespace}
\begin{center}
\textcolor{purple}{\textbf{\underline{Mixed Components of Einstein Tensor:\ \ \ $\mathbf{G^M{}_N}$}}}
\end{center}
\begin{doublespace}
\noindent\(\pmb{\text{einsteinUD}=\text{FullSimplify}[\text{Table}[\text{Sum}[\text{invg}[[\mu ,\lambda ]]*\text{einstein}[[\lambda ,\nu ]],\{\lambda
,1,n\}],\{\mu ,1,n\},\{\nu ,1,n\}]];}\\
\pmb{}\\
\pmb{\text{StringJoin}\left[\text{Characters}\left[\text{$\texttt{"}$(}G_{\text{    }N}^M\text{)=$\texttt{"}$}\right]\right]\text{MatrixForm}[\text{einsteinUD}]}\)
\end{doublespace}

\newpage

\chapter{Non-minimal coupling and Variation of the action}
\label{var}

\section{Variation with respect to the metric tensor}

\par We consider the following general action for a non-minimally coupled scalar field.
\eq$\label{C.1}
S=\int d^4x\sqrt{-g}\left[\frac{f(\Phi)}{2\kappa} R-\frac{1}{2}\pa_\mu \Phi\pa^\mu\Phi-V(\Phi)-\Lambda_B\right]+\int d^4x\sqrt{-g}\ \lagr_m$
where $\lagr_m$ describes matter and/or radiation. It is obvious from the above action that we are restricted in a 4-D spacetime. Every calculation and proof in this section is going to be done in 4-D spacetime. However, the generalization for extra spatial dimensions is instantaneous, the only thing that changes in the final field equations are the indices. For a 4-D spacetime the indices are Greek letters while for (4+n)-D spacetime the indices are capital Latin letters. 
\par In order to deduce the field equations of this general theory, we apply the principle of least action to Eq.\eqref{C.1}. We vary the action with respect to the metric tensor. Thus, we have:
\begin{align}\label{C.2}
\del S=0&=\int d^4x\left\{(\del\sqrt{-g})\left[\frac{f(\Phi)}{2\kappa} R-\frac{1}{2}\pa_\mu \Phi\pa^\mu\Phi-V(\Phi)-\Lambda_B\right]+\sqrt{-g}\left[\frac{f(\Phi)}{2\kappa} \del R-\frac{1}{2}\del(\pa_\mu \Phi\pa^\mu\Phi)\right]\right\}\nonum\\
&\hspace{24em}+\int d^4x\ \del(\sqrt{-g}\ \lagr_{m})\Ra\nonum\\
0&=\int d^4x\left\{(\del\sqrt{-g})\left[\frac{f(\Phi)}{2\kappa} R-\frac{1}{2}\pa_\mu \Phi\pa^\mu\Phi-V(\Phi)-\Lambda_B\right]+\sqrt{-g}\left[\frac{f(\Phi)}{2\kappa} \del R-\frac{1}{2}(\pa_\mu \Phi\pa_\nu\Phi)\del g^{\mu\nu}\right]\right\}\nonum\\
&\hspace{24em}+\int d^4x\ \del(\sqrt{-g}\ \lagr_{m})
\end{align}
\par We will calculate one by one the varying terms of the previous equation. First of all, we will prove the Jacobi's formula
\begin{equation}\label{C.3}
\frac{d}{dt}\{det[A(t)]\}=tr\left\{adj[A(t)]\frac{dA(t)}{dt}\right\}
\end{equation} 
which is necessary in order to derive the desirable field equations. We will prove Eq.\eqref{C.3} in two steps. At first, we prove a preliminary lemma.
\begin{lemma} Let A and B be a pair of square matrices of the same dimension n. Then
\begin{equation}\label{C.4}
\sum_{i=1}^n \sum_{j=1}^n A_{ij}B_{ij}=tr(A^{T}B)
\end{equation}
\end{lemma}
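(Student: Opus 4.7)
The plan is to establish the lemma by direct index computation, expanding both sides using the definitions of the trace and of matrix multiplication and then rearranging the dummy indices. The identity is purely algebraic with no analytic content, so I expect the argument to occupy only a few lines and to serve purely as a bookkeeping step on the way to Jacobi's formula \eqref{C.3}.

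First, I would start from the right-hand side and apply the definition of the trace of a square matrix, $tr(M) = \sum_{k=1}^n M_{kk}$, with the choice $M = A^T B$. The standard formula for the matrix product then gives $(A^T B)_{kk} = \sum_{i=1}^n (A^T)_{ki}\, B_{ik}$, and using the defining property of the transpose, $(A^T)_{ki} = A_{ik}$, this simplifies to $(A^T B)_{kk} = \sum_{i=1}^n A_{ik} B_{ik}$.

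Substituting back into the trace yields $tr(A^T B) = \sum_{k=1}^n \sum_{i=1}^n A_{ik} B_{ik}$. Renaming the dummy index $k \to j$ (and freely exchanging the order of the two finite sums) produces exactly the left-hand side $\sum_{i=1}^n \sum_{j=1}^n A_{ij} B_{ij}$, which completes the argument.

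The only thing one has to be mindful of is the placement of indices on the transpose, namely the swap $(A^T)_{ki} = A_{ik}$, since this is precisely what makes the two factors $A$ and $B$ appear with the \emph{same} index pair $(i,j)$ on the left-hand side rather than with transposed pairs. There is no genuine obstacle; the real work will come in the next step, where this lemma is combined with the cofactor expansion $\det(A) = \sum_j A_{ij}\,[\mathrm{adj}(A)]_{ji}$ and with the chain rule $\tfrac{d}{dt}\det[A(t)] = \sum_{i,j}\tfrac{\partial \det A}{\partial A_{ij}}\tfrac{dA_{ij}}{dt}$ to recast $tr\!\left[\mathrm{adj}(A)\,\tfrac{dA}{dt}\right]$ in the form required by Jacobi's formula \eqref{C.3}.
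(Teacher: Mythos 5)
Your proposal is correct and follows essentially the same route as the paper: both expand $tr(A^TB)=\sum_k (A^TB)_{kk}=\sum_k\sum_i (A^T)_{ki}B_{ik}=\sum_k\sum_i A_{ik}B_{ik}$ and then relabel/reorder the finite sums to match the left-hand side. The only cosmetic difference is the choice of dummy index names.
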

\begin{proof}[\textbf{Proof:}]
\begin{align}
(AB)_{jk}&=\sum_{i=1}^n A_{ji}B_{ik} \nonumber \\ \nonumber \\
(A^{T}B)_{jk}&=\sum_{i=1}^n A^{T}{}_{ji}B_{ik}=\sum_{i=1}^n A_{ij}B_{ik} \nonumber \\ \nonumber \\
tr(A^{T}B)&=\sum_{j=1}^n (A^{T}B)_{jj}=\sum_{j=1}^n\sum_{i=1}^n A_{ij}B_{ij}=\sum_{i=1}^n\sum_{j=1}^nA_{ij}B_{ij} \nonumber
\end{align}
\end{proof}
\begin{theorem}(Jacobi's formula):
\begin{equation}\label{C.5}
d[det(A)]=tr[adj(A)dA]
\end{equation}
\end{theorem}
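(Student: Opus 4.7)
The plan is to prove Jacobi's formula via the cofactor expansion of the determinant, using the lemma just established as the final packaging step. First I would recall the standard Laplace expansion: for any fixed row index $i$, one has $\det(A) = \sum_{j=1}^n A_{ij} C_{ij}$, where $C_{ij} = (-1)^{i+j} M_{ij}$ is the $(i,j)$-cofactor and $M_{ij}$ is the determinant of the minor obtained by deleting row $i$ and column $j$ from $A$. The crucial observation is that the cofactor $C_{ij}$ depends only on the entries of $A$ other than $A_{ij}$ itself, since the minor excludes both row $i$ and column $j$.

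Next I would compute the partial derivative of the determinant with respect to a single entry. Because $C_{ij}$ does not contain $A_{ij}$, differentiating the Laplace expansion termwise gives
\begin{equation}
\frac{\partial \det(A)}{\partial A_{ij}} = C_{ij}.
\end{equation}
By the chain rule applied to $\det(A(t))$ viewed as a function of the independent entries $A_{ij}(t)$, this yields the total differential
\begin{equation}
d[\det(A)] = \sum_{i=1}^n \sum_{j=1}^n \frac{\partial \det(A)}{\partial A_{ij}}\, dA_{ij} = \sum_{i=1}^n \sum_{j=1}^n C_{ij}\, dA_{ij}.
\end{equation}

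To close the argument, I would apply the preliminary lemma \eqref{C.4} with the cofactor matrix $C = (C_{ij})$ playing the role of the first matrix and the differential matrix $dA$ playing the role of the second, obtaining $\sum_{i,j} C_{ij}\, dA_{ij} = \mathrm{tr}(C^T\, dA)$. It then suffices to identify $C^T$ with the adjugate: by definition $\mathrm{adj}(A)_{ij} = C_{ji}$, so $\mathrm{adj}(A) = C^T$, which immediately gives $d[\det(A)] = \mathrm{tr}[\mathrm{adj}(A)\, dA]$, establishing \eqref{C.5}.

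The argument is essentially routine, so I do not expect a serious obstacle; the only subtlety worth flagging is the independence of $C_{ij}$ from $A_{ij}$, which justifies treating the Laplace expansion as linear in $A_{ij}$ and is what makes the partial derivative come out cleanly as the cofactor. After Jacobi's formula is in hand, one recovers the familiar corollaries needed elsewhere in the thesis (for instance $\delta g = g\, g^{\mu\nu}\delta g_{\mu\nu}$ and hence $\delta \sqrt{-g} = -\tfrac{1}{2}\sqrt{-g}\, g_{\mu\nu}\delta g^{\mu\nu}$) by specializing $A$ to the metric tensor and using $\mathrm{adj}(g) = \det(g)\, g^{-1}$.
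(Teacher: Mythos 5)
Your proposal is correct and follows essentially the same route as the paper's proof: Laplace (cofactor) expansion along a row, the observation that $C_{ij}$ is independent of $A_{ij}$ so that $\partial\det(A)/\partial A_{ij}=C_{ij}=[adj^{T}(A)]_{ij}$, the chain rule, and then the preliminary lemma \eqref{C.4} to package the double sum as $tr[adj(A)\,dA]$. No meaningful differences to report.
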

\begin{proof}[\textbf{Proof:}]
Laplace's formula (or cofactor expansion) for the determinant of a matrix $A$ can be stated as
$$det(A)=\sum^{n}_{j=1} A_{ij}(-1)^{(i+j)}M_{ij}=\sum^{n}_{j=1} A_{ij}C_{ij}=\sum^{n}_{j=1} A_{ij}[adj(A)]_{ji}=\sum^{n}_{j=1} A_{ij}[adj^{T}(A)]_{ij}$$\\
where $M_{ij}$ is the i, j minor matrix of $A$, that is, the determinant of the $(n-1)\times(n-1)$ matrix that results from deleting the i-th row and j-th column of $A$. The summation is performed over some arbitrary row i of the matrix. The i, j cofactor of $A$ is the scalar $C_{ij}$ defined by $C_{ij}\equiv(-1)^{i+j}M_{ij}$. We note as well that the adjugate matrix of $A$ is the transpose of the cofactor matrix $C$ of $A$, which means that $adj(A)\equiv C^T\Ra [adj(A)]_{ij}=C_{ji}$.
\par The determinant of $A$ can also be considered to be a function of elements of $A$:
$$det(A)=F(A_{11},A_{12},...,A_{21},...,A_{nn})$$
so that, by the chain rule, its differential is
$$d[det(A)]=\sum_{i,j}\frac{\partial F}{\partial A_{ij}}dA_{ij}=\sum_{i,j}\frac{\partial det(A)}{\partial A_{ij}}dA_{ij}$$
\begin{align}
\frac{\partial det(A)}{\partial A_{ij}}&=\frac{\partial}{\partial A_{ij}}\left(\sum_{k} A_{ik}[adj^{T}(A)]_{ik}\right)=\sum_{k} \frac{\partial(A_{ik}[adj^{T}(A)]_{ik})}{\partial A_{ij}} \nonumber \\
&=\sum_{k} \underbrace{\frac{\partial A_{ik}}{\partial A_{ij}}}_{\del_{kj}}[adj^{T}(A)]_{ik}+\sum_{k} A_{ik}\underbrace{\frac{\partial [adj^{T}(A)]_{ik}}{\partial A_{ij}}}_{=0}=\sum_{k} \delta_{kj}[adj^{T}(A)]_{ik} \nonumber \\
&=[adj^{T}(A)]_{ij} \nonumber 
\end{align}
Hence, we have
$$d[det(A)]=\sum_{i,j}\frac{\partial det(A)}{\partial A_{ij}}dA_{ij}=\sum_{i,j} [adj^{T}(A)]_{ij}dA_{ij}=tr[adj^{T}dA]=tr[adj(A)dA]$$
Consequently, Eq.\eqref{C.3} is proved.
\end{proof}
\par We are now ready to calculate the variations of Eq.\eqref{C.2}.
\begin{itemize}
\item $\delta (\sqrt{-g})$:
\begin{equation}\label{C.6}
\delta (\sqrt{-g})=\frac{1}{2}\frac{1}{\sqrt{-g}}\delta (-g)=-\frac{1}{2}\frac{1}{\sqrt{-g}}\delta g
\end{equation}\\
It is important to explain the following notation
\eq$\label{C.7}
g\equiv det(g_{\mu\nu})$
where $(g_{\mu\nu})$ constitutes the matrix which depicts the components of metric tensor $g_{\mu\nu}$.
From equations \eqref{C.4} and \eqref{C.5} we obtain
\begin{equation}\label{C.8}
\delta g=\delta[det(g_{\mu \nu})]=\sum_{\alp,\beta}[adj^{T}(g_{\mu\nu})]_{\alp\beta}\ \delta g_{\alp\beta}
\end{equation}
In matrix calculus it is well known that $A^{-1}=\frac{1}{det(A)}adj(A)$, hence we have
$$(g_{\mu\nu})^{-1}=(g^{\mu\nu})=\frac{1}{det(g_{\mu \nu})}adj(g_{\mu\nu})\Ra$$
\eq$\label{C.9}
adj(g_{\mu\nu})=g(g^{\mu\nu})=adj^{T}(g_{\mu\nu})\Rightarrow [adj^{T}(g_{\mu\nu})]_{\alp \beta}=g\ g^{\alp \beta}$\\
Thus, from Eq.\eqref{C.8} and Eq.\eqref{C.9} we get
\begin{equation}\label{C.10}
\delta g=\sum_{\alp,\beta}g\ g^{\alp\beta}\delta g_{\alp\beta} \Rightarrow \delta g=g\ g^{\mu \nu}\delta g_{\mu \nu}
\end{equation}
It also obvious that
\eqarr$
\delta(g_{\mu \nu}g^{\nu \lambda})&=&\delta(\delta^{\lambda}{}_{\mu})=0 \Rightarrow \nonumber \\
\delta(g_{\mu \nu})g^{\nu \lambda}&=&-g_{\mu \nu}\delta (g^{\nu \lambda})\Rightarrow \nonumber$
\begin{equation}\label{C.11}
\boxed{\delta g^{\rho \sigma}=-g^{\rho \mu}g^{\sigma \nu}\delta g_{\mu \nu} \ , \ \delta g_{\alpha \beta}=-g_{\alpha \mu}g_{\beta \nu}\delta g^{\mu \nu}}
\end{equation}\\
Combining now equations \eqref{C.10} and \eqref{C.11} we have
\begin{equation}\label{C.12}
\boxed{ \delta g=g\ g^{\mu \nu}\delta g_{\mu \nu}=-g\ g_{\alpha \beta}\delta g^{\alpha \beta}}
\end{equation}
Therefore, we can now easily evaluate the quantity $\del(\sqrt{-g})$.
\eqarr$\label{C.13}
&&\delta (\sqrt{-g})=-\frac{1}{2}\frac{1}{\sqrt{-g}}g\ g^{\mu \nu}\delta g_{\mu \nu}=\frac{1}{2}\frac{1}{\sqrt{-g}}g\ g_{\alpha \beta}\delta g^{\alpha \beta} \nonumber \\ \nonumber \\
&&\boxed{\delta (\sqrt{-g})=\frac{\sqrt{-g}}{2} g^{\mu \nu}\delta g_{\mu \nu}=-\frac{\sqrt{-g}}{2} g_{\alpha \beta}\delta g^{\alpha \beta}}$

\item $\delta R$:
\\ \\In order to calculate $\delta R$ we need to prove first the Palatini's identity:
\begin{equation}\label{C.14}
\boxed{\delta R_{\mu \nu}=(\delta \Gamma^{\lambda}{}_{\mu \nu})_{;\lambda}-(\delta \Gamma^{\lambda}{}_{\lambda \mu})_{;\nu}=\nabla_\lam(\delta \Gamma^{\lambda}{}_{\mu \nu})-\nabla_\nu(\delta \Gamma^{\lambda}{}_{\lambda \mu})}
\end{equation}
\begin{proof}[\textbf{Proof:}]
$$R_{\mu \nu}=\Gamma^{\lambda}{}_{\mu \nu ,\lambda}-\Gamma^{\lambda}{}_{\mu \lambda ,\nu}+\Gamma^{\alpha}{}_{\mu \nu}\Gamma^{\lambda}{}_{\alpha \lambda}-\Gamma^{\lambda}{}_{\alpha \nu}\Gamma^{\alpha}{}_{\mu \lambda} \Rightarrow$$
\begin{align}\label{C.15}
\delta R_{\mu \nu}=\delta \Gamma^{\lambda}{}_{\mu \nu ,\lambda}&-\delta \Gamma^{\lambda}{}_{\mu \lambda,\nu}+\delta (\Gamma^{\alpha}{}_{\mu \nu})\Gamma^{\lambda}{}_{\alpha \lambda}+\Gamma^{\alpha}{}_{\mu \nu} \delta (\Gamma^{\lambda}{}_{\alpha \lambda})\nonum\\
&-\delta (\Gamma^{\lambda}{}_{\alpha \nu})\Gamma^{\alpha}{}_{\mu \lambda}-\Gamma^{\lambda}{}_{\alpha \nu}\delta (\Gamma^{\alpha}{}_{\mu \lambda}) 
\end{align}
Christoffel symbols of first kid $\Gam_{\mu\nu\sig}$ is defined by
\eqarr$\label{C.16}
\Gamma_{\mu \nu \sigma}&\equiv& g_{\mu \lambda}\Gamma^{\lambda}{}_{\nu \sigma}=\frac{1}{2}g_{\mu\lam}g^{\lam \rho}(g_{\nu\rho,\sig}+g_{\sig\rho,\nu}-g_{\nu\sig,\rho})=\frac{1}{2}\del^{\rho}{}_\mu(g_{\nu\rho,\sig}+g_{\sig\rho,\nu}-g_{\nu\sig,\rho})\nonum\\
&=&\frac{1}{2}(g_{\mu \nu ,\sigma}+g_{\mu \sigma ,\nu}-g_{\nu \sigma ,\mu})$
where we have used the expression of Christoffel symbols of second kid (or simply Christoffel symbols) which are well known. Using Eq.\eqref{C.16} we have
\begin{align}\label{C.17}
\delta(\Gamma^{\lambda}{}_{\nu \sigma})&=\delta(g^{\lambda \mu}\Gamma_{\mu \nu \sigma})=\delta(g^{\lambda \mu})\Gamma_{\mu \nu \sigma}+g^{\lambda \mu}\delta \Gamma_{\mu \nu \sigma}=-g^{\lambda \alpha}g^{\mu \beta}\delta (g_{\alpha \beta})\Gamma_{\mu \nu \sigma}+g^{\lambda \mu}\delta \Gamma_{\mu \nu \sigma} \nonumber \\
&=-g^{\lambda \alpha}\delta (g_{\alpha \beta})\Gamma^{\beta}{}_{\nu \sigma}+g^{\lambda \mu}\ \delta \left[\frac{1}{2}(g_{\mu \nu ,\sigma}+g_{\mu \sigma ,\nu}-g_{\nu \sigma ,\mu})\right] \nonumber \\
&=-g^{\lambda \mu}\delta (g_{\mu \beta})\Gamma^{\beta}{}_{\nu \sigma}+g^{\lambda \mu}\frac{1}{2}(\delta g_{\mu \nu ,\sigma}+\delta g_{\mu \sigma ,\nu}-\delta g_{\nu \sigma ,\mu}) \nonumber \\
&=\frac{1}{2}g^{\lambda \mu}(\delta g_{\mu \nu ,\sigma}+\delta g_{\mu \sigma ,\nu}-\delta g_{\nu \sigma ,\mu}-2\Gamma^{\beta}{}_{\nu \sigma}\delta g_{\mu \beta}) \nonumber \\
&=\frac{1}{2}g^{\lambda \mu}(\delta g_{\mu \nu ,\sigma}+\delta g_{\mu \sigma ,\nu}-\delta g_{\nu \sigma ,\mu}-2\Gamma^{\beta}{}_{\nu \sigma}\delta g_{\mu \beta}-\Gamma^{\beta}{}_{\nu \mu}\delta g_{\sigma \beta}\nonum\\
&\hspace{10em}+\Gamma^{\beta}{}_{\nu \mu}\delta g_{\sigma \beta}-\Gamma^{\beta}{}_{\sigma \mu}\delta g_{\nu \beta}+\Gamma^{\beta}{}_{\sigma \mu}\delta g_{\nu \beta}) \nonumber \\
&=\frac{1}{2}g^{\lambda \mu}[(\delta g_{\mu \nu ,\sigma}-\Gamma^{\beta}{}_{\sigma \mu}\delta g_{\nu \beta}-\Gamma^{\beta}{}_{\nu \sigma}\delta g_{\mu \beta})+(\delta g_{\mu \sigma ,\nu}-\Gamma^{\beta}{}_{\nu \mu}\delta g_{\sigma \beta}-\Gamma^{\beta}{}_{\nu \sigma}\delta g_{\mu \beta})\nonum\\
&\hspace{10em}-(\delta g_{\nu \sigma ,\mu}-\Gamma^{\beta}{}_{\nu \mu}\delta g_{\sigma \beta}-\Gamma^{\beta}{}_{\sigma \mu}\delta g_{\nu \beta})] \nonumber \\
&=\frac{1}{2}g^{\lam\mu}(\nabla_\sig\del g_{\mu\nu}+\nabla_\nu\del g_{\mu\sig}-\nabla_\mu\del g_{\nu\sig})
\end{align}\\
It is clear that elements like $\del g_{\mu\nu}$ and $\del g_{\mu\nu;\sig}$ are tensors. Thus, from Eq.\eqref{C.17} we can easily deduce that $\del\Gam^{\lam}{}_{\mu\nu}$ is also a tensor, because $\del\Gam^{\lam}{}_{\mu\nu}$ is expressed as a linear combination of tensors. Moreover, the quantity $(\del\Gam^{\lam}{}_{\mu\nu})_{;\lam}-(\del\Gam^{\lam}{}_{\lam\mu})_{;\nu}$ constitutes a tensor. Performing the expansion of the expression $(\del\Gam^{\lam}{}_{\mu\nu})_{;\lam}-(\del\Gam^{\lam}{}_{\lam\mu})_{;\nu}$ we obtain
\begin{align}\label{C.18}
(\delta \Gamma^{\lambda}{}_{\mu \nu})_{;\lambda}-(\delta \Gamma^{\lambda}{}_{\mu \lambda})_{;\nu}=&+(\delta \Gamma^{\lambda}{}_{\mu \nu ,\lambda}+\Gamma^{\lambda}{}_{\lambda \rho}\delta \Gamma^{\rho}{}_{\mu \nu}-\Gamma^{\rho}{}_{\mu \lambda}\delta \Gamma^{\lambda}{}_{\rho \nu}-\Gamma^{\rho}{}_{\nu \lambda}\delta \Gamma^{\lambda}{}_{\rho \mu}) \nonumber \\
&-(\delta \Gamma^{\lambda}{}_{\mu \lambda ,\nu}+\Gamma^{\lambda}{}_{\nu \rho}\delta \Gamma^{\rho}{}_{\mu \lambda}-\Gamma^{\rho}{}_{\mu \nu}\delta \Gamma^{\lambda}{}_{\rho \lambda}-\Gamma^{\rho}{}_{\nu \lambda}\delta \Gamma^{\lambda}{}_{\rho \mu}) \nonumber \\
=&+(\delta \Gamma^{\lambda}{}_{\mu \nu ,\lambda}+\Gamma^{\lambda}{}_{\lambda \rho}\delta \Gamma^{\rho}{}_{\mu \nu}-\Gamma^{\rho}{}_{\mu \lambda}\delta \Gamma^{\lambda}{}_{\rho \nu}) \nonumber \\
&-(\delta \Gamma^{\lambda}{}_{\mu \lambda ,\nu}+\Gamma^{\lambda}{}_{\nu \rho}\delta \Gamma^{\rho}{}_{\mu \lambda}-\Gamma^{\rho}{}_{\mu \nu}\delta \Gamma^{\lambda}{}_{\rho \lambda})
\end{align}
The right hand side of Eq.\eqref{C.15} is identical to the right hand side of Eq.\eqref{C.18}, so the left hand sides should be equal to each other as well. Hence, we derived the desirable Palatini's identity.
\end{proof}
\par The calculation of the quantity $\del R$ using the Eq.\eqref{C.14} is very simple.
\begin{align}\label{C.19}
\delta R&=\delta (g^{\mu \nu}R_{\mu \nu})=(\delta g^{\mu \nu})R_{\mu \nu}+g^{\mu \nu}\delta R_{\mu \nu} \nonumber \\
&=(\delta g^{\mu \nu})R_{\mu \nu}+g^{\mu \nu}\left(\nabla_\lam \delta \Gamma^{\lambda}{}_{\mu \nu}-\nabla_\nu \delta \Gamma^{\lambda}{}_{\mu \lambda}\right)
\end{align}
We have shown that 
\begin{align}\label{C.20}
\delta\Gamma^{\lambda}{}_{\nu \sigma}&=\frac{1}{2}g^{\lam\mu}(\nabla_\sig\del g_{\mu\nu}+\nabla_\nu\del g_{\mu\sig}-\nabla_\mu\del g_{\nu\sig})\nonum\\
&=\frac{1}{2}g^{\lam\rho}(\nabla_\sig\del g_{\rho\nu}+\nabla_\nu\del g_{\rho\sig}-\nabla_\rho\del g_{\nu\sig})
\end{align}
Using Eq.\eqref{C.20} we rename the index $\sig$ to $\mu$. Furthermore, using the symmetry $\del\Gam^{\lam}{}_{\nu\mu}=\del\Gam^{\lam}{}_{\mu\nu}$ we get
\eq$\label{C.21}
\delta\Gamma^{\lambda}{}_{\mu\nu}=\frac{1}{2}g^{\lam\rho}(\nabla_\mu\del g_{\rho\nu}+\nabla_\nu\del g_{\rho\mu}-\nabla_\rho\del g_{\nu\mu})$
Contracting the index $\lam$ with $\nu$ we have
\eqarr$\label{C.22}
\delta\Gamma^{\lambda}{}_{\mu\lam}=\delta\Gamma^{\lambda}{}_{\lam\mu}&=&\frac{1}{2}g^{\lam\rho}(\nabla_\mu\del g_{\rho\lam}+\nabla_\lam\del g_{\rho\mu}-\nabla_\rho\del g_{\lam\mu})\nonum\\
&=&\frac{1}{2}g^{\lam\rho}\nabla_\mu\del g_{\rho\lam}+\frac{1}{2}\left(g^{\lam\rho}\nabla_\lam\del g_{\rho\mu}-g^{\lam\rho}\nabla_\rho\del g_{\lam\mu}\right)\nonum\\
&=&\frac{1}{2}g^{\lam\rho}\nabla_\mu\del g_{\rho\lam}+\frac{1}{2}\left(\underbrace{g^{\rho\lam}\nabla_\rho\del g_{\lam\mu}-g^{\lam\rho}\nabla_\rho\del g_{\lam\mu}}_{0}\right)\nonum\\
&=&\frac{1}{2}g^{\lam\rho}\nabla_\mu\del g_{\rho\lam}$
We combine equations \eqref{C.14}, \eqref{C.21} and \eqref{C.22}. Thus, we get
\eqarr$\label{C.23}
\del R_{\mu\nu}&=&\nabla_\lam\left[\frac{1}{2}g^{\lam\rho}(\nabla_\mu\del g_{\rho\nu}+\nabla_\nu\del g_{\rho\mu}-\nabla_\rho\del g_{\nu\mu})\right]-\nabla_\nu\left(\frac{1}{2}g^{\lam\rho}\nabla_\mu\del g_{\rho\lam}\right)\nonum\\
&=&\frac{1}{2}g^{\lam\rho}(\nabla_\lam\nabla_\mu\del g_{\rho\nu}+\nabla_\lam\nabla_\nu\del g_{\rho\mu}-\nabla_\lam\nabla_\rho\del g_{\nu\mu})-\frac{1}{2}g^{\lam\rho}\nabla_\nu\nabla_\mu \del g_{\rho\lam}\nonum\\
&=&\frac{1}{2}g^{\lam\rho}(\nabla_\lam\nabla_\mu\del g_{\rho\nu}+\nabla_\lam\nabla_\nu\del g_{\rho\mu}-\nabla_\lam\nabla_\rho\del g_{\nu\mu}-\nabla_\nu\nabla_\mu \del g_{\rho\lam})$
Previously, we used the property $\nabla_\lam\ g^{\mu\nu}=0$. The proof of this property is presented below.\vspace{2em}
\begin{proof}[\textbf{Proof:}]
\begin{align}\label{C.24}
\nabla_\lam\ g^{\mu\nu}&=g^{\mu\nu}{}_{,\lam}+\Gam^\mu{}_{\lam\rho}\ g^{\rho\nu}+\Gam^{\nu}{}_{\lam\rho}\ g^{\rho\mu}\nonum\\
&=g^{\mu\nu}{}_{,\lam}+\frac{1}{2}g^{\mu\kappa}(g_{\lam\kappa,\rho}+g_{\rho\kappa,\lam}-g_{\lam\rho,\kappa})g^{\rho\nu}+\frac{1}{2}g^{\nu\kappa}(g_{\lam\kappa,\rho}+g_{\rho\kappa,\lam}-g_{\lam\rho,\kappa})g^{\rho\mu}\nonum\\
&=g^{\mu\nu}{}_{,\lam}+\frac{1}{2}g^{\mu\kappa}(g_{\lam\kappa,\rho}+g_{\rho\kappa,\lam}-g_{\lam\rho,\kappa})g^{\rho\nu}+\frac{1}{2}g^{\nu\rho}(g_{\lam\rho,\kappa}+g_{\kappa\rho,\lam}-g_{\lam\kappa,\rho})g^{\kappa\mu}\nonum\\
&=g^{\mu\nu}{}_{,\lam}+g^{\mu\kappa}\ g_{\kappa\rho,\lam}\ g^{\rho\nu}=g^{\mu\nu}{}_{,\lam}+\underbrace{(g^{\mu\kappa}\ g_{\kappa\rho})_{,\lam}}_{\del^\mu{}_{\rho,\lam}=0}\ g^{\rho\nu}-g^{\mu\kappa}{}_{,\lam}\ \underbrace{g_{\kappa\rho}\ g^{\rho\nu}}_{\del_\kappa{}^\nu}\nonum\\
&=g^{\mu\nu}{}_{,\lam}-g^{\mu\nu}{}_{,\lam}=0
\end{align}
\end{proof}
\eqarr$\label{C.25}
g^{\mu\nu}\del R_{\mu\nu}&=&\frac{1}{2}g^{\mu\nu}g^{\lam\rho}(\nabla_\lam\nabla_\mu\del g_{\rho\nu}+\nabla_\lam\nabla_\nu\del g_{\rho\mu}-\nabla_\lam\nabla_\rho\del g_{\nu\mu}-\nabla_\nu\nabla_\mu \del g_{\rho\lam})\nonum\\
&=&\frac{1}{2}\left(g^{\mu\nu}g^{\lam\rho}\nabla_\lam\nabla_\mu\del g_{\rho\nu}+g^{\mu\nu}g^{\lam\rho}\nabla_\lam\nabla_\nu\del g_{\rho\mu}-g^{\mu\nu}g^{\lam\rho}\nabla_\lam\nabla_\rho\del g_{\nu\mu}-g^{\mu\nu}g^{\lam\rho}\nabla_\nu\nabla_\mu \del g_{\rho\lam}\right)\nonum\\
&=&\frac{1}{2}\left(g^{\nu\mu}g^{\lam\rho}\nabla_\lam\nabla_\nu\del g_{\rho\mu}+g^{\mu\nu}g^{\lam\rho}\nabla_\lam\nabla_\nu\del g_{\rho\mu}-2g^{\mu\nu}g^{\lam\rho}\nabla_\lam\nabla_\rho\del g_{\nu\mu}\right)\nonum\\
&=&\frac{1}{2}\left(2g^{\lam\rho}g^{\nu\mu}\nabla_\lam\nabla_\nu\del g_{\rho\mu}-2g^{\mu\nu}g^{\lam\rho}\nabla_\lam\nabla_\rho\del g_{\nu\mu}\right)\nonum\\
&=&g^{\lam\rho}g^{\mu\nu}\nabla_\lam\nabla_\mu\del g_{\rho\nu}-g^{\mu\nu}\underbrace{g^{\lam\rho}\nabla_\lam\nabla_\rho}_{\square}\del g_{\mu\nu}=\nabla^\rho\nabla^\nu \del g_{\rho\nu}-g^{\mu\nu}\square \del g_{\mu\nu}$
Using Eq.\eqref{C.11} into Eq.\eqref{C.25} we obtain
\eqarr$\label{C.26}
g^{\mu\nu}\del R_{\mu\nu}&=&g^{\lam\rho}g^{\mu\nu}\nabla_\lam\nabla_\mu(-g_{\rho\alp}g_{\nu\beta}\del g^{\alp\beta})-g^{\mu\nu}g^{\lam\rho}\nabla_\lam\nabla_\rho(-g_{\mu\alp}g_{\nu\beta}\del g^{\alp\beta})\nonum\\
&=&-g^{\lam\rho}g^{\mu\nu}g_{\rho\alp}g_{\nu\beta}\nabla_\lam\nabla_\mu\del g^{\alp\beta}+g^{\mu\nu}g^{\lam\rho}g_{\mu\alp}g_{\nu\beta}\nabla_\lam\nabla_\rho\del g^{\alp\beta}\nonum\\
&=&-\del^{\lam}{}_{\alp}\del^{\mu}{}_{\beta}\nabla_\lam\nabla_\mu\del g^{\alp\beta}+g^{\lam\rho}\del^{\nu}{}_{\alp}g_{\nu\beta}\nabla_\lam\nabla_\rho\del g^{\alp\beta}\nonum\\
&=&-\nabla_\alp\nabla_\beta\del g^{\alp\beta}+g_{\alp\beta}\underbrace{g^{\lam\rho}\nabla_\lam\nabla_\rho}_{\square}\del g^{\alp\beta}\nonum\\
&=&-\nabla_\alp\nabla_\beta\del g^{\alp\beta}+g_{\alp\beta}\square\del g^{\alp\beta}$
Putting together Eq.\eqref{C.25} and Eq.\eqref{C.26} we have
\eq$\label{C.27}
\boxed{g^{\mu\nu}\del R_{\mu\nu}=\nabla^\mu\nabla^\nu \del g_{\mu\nu}-g^{\mu\nu}\square \del g_{\mu\nu}=-\nabla_\mu\nabla_\nu\del g^{\mu\nu}+g_{\mu\nu}\square\del g^{\mu\nu}}$
From Eq.\eqref{C.19} and Eq.\eqref{C.27} we conclude that
\begin{equation}\label{C.28}
\boxed{\delta R=\delta (g^{\mu \nu})R_{\mu \nu}-\nabla_\mu\nabla_\nu\del g^{\mu\nu}+g_{\mu\nu}\square\del g^{\mu\nu}}
\end{equation}
\item $\del(\sqrt{-g}\lagr_m)$:\\ \\
The energy-momentum tensor is defined by
\eq$\label{C.29}
T^{(m)}_{\mu\nu}\equiv\frac{-2}{\sqrt{-g}}\frac{\del(\sqrt{-g}\lagr_m)}{\del g^{\mu\nu}}$
Thus, we have
\eq$\label{C.30}
\boxed{\del(\sqrt{-g}\lagr_m)=-\frac{1}{2}\ T^{(m)}_{\mu\nu}\sqrt{-g}\ \del g^{\mu\nu}}$
\end{itemize}\vspace{1em}
\par Replacing the right hand sides from equations \eqref{C.13}, \eqref{C.28}, \eqref{C.30} into Eq.\eqref{C.2}. We get\vspace{1em}
\begin{align}
0=&\ -\frac{1}{2}\int d^4x \sqrt{-g}\ g_{\mu\nu}\ \del g^{\mu\nu}\left[\frac{f(\Phi)}{2\kappa} R-\frac{1}{2}\pa_\lam \Phi\pa^\lam\Phi-V(\Phi)-\Lambda_B\right]\nonum\\
&\ +\frac{1}{2}\int d^4x\sqrt{-g}\ \left\{\frac{1}{\kappa}\left[\del g^{\mu\nu} f(\Phi)R_{\mu\nu}-f(\Phi)\nabla_\mu \nabla_\nu\ \del g^{\mu\nu}+g_{\mu\nu}f(\Phi)\square\ \del g^{\mu\nu}\right]-\del g^{\mu\nu} \pa_\mu \Phi\pa_\nu\Phi\right\}\nonum\\
&\ -\frac{1}{2}\int d^4x\ T^{(m)}_{\mu\nu}\sqrt{-g}\ \del g^{\mu\nu}\Ra\nonum
\end{align}
\begin{align}\label{C.31}
0=& \int d^4x \sqrt{-g}\ \del g^{\mu\nu}\left\{g_{\mu\nu}\left[-\frac{f(\Phi)}{2\kappa} R+\frac{1}{2}\pa_\lam \Phi\pa^\lam\Phi+V(\Phi)+\Lambda_B\right]+\frac{f(\Phi)}{\kappa}R_{\mu\nu}-\pa_\mu\Phi\pa_\nu\Phi-T^{(m)}_{\mu\nu}\right\}\nonum\\
&\ +\frac{1}{\kappa}\int d^4x\sqrt{-g}\ \left(-f(\Phi)\nabla_\mu \nabla_\nu\ \del g^{\mu\nu}+g_{\mu\nu}f(\Phi)\square\ \del g^{\mu\nu}\right)
\end{align}
\par The integral in the second line of Eq.\eqref{C.31} can be modified to a more useful one, but first it is necessary to prove one general property that will help us doing the modification. \vspace{1em}
\begin{proof}[\textbf{Proof:}]
\eq$\nabla_\mu A^\mu=\pa_\mu A^\mu+\Gam^\mu{}_{\lam\mu}A^\lam
\label{C.32}$\\
\begin{align}
\Gam^\mu{}_{\lam\mu}&=\frac{1}{2}g^{\mu\rho}(g_{\lam\rho,\mu}+g_{\mu\rho,\lam}-g_{\lam\mu,\rho})=\frac{1}{2}g^{\mu\rho}g_{\lam\rho,\mu}+\frac{1}{2}g^{\mu\rho}g_{\mu\rho,\lam}-\frac{1}{2}g^{\mu\rho}g_{\lam\mu,\rho}\nonum\\
&=\frac{1}{2}g^{\mu\rho}g_{\lam\rho,\mu}+\frac{1}{2}g^{\mu\rho}g_{\mu\rho,\lam}-\frac{1}{2}g^{\rho\mu}g_{\lam\rho,\mu}=\frac{1}{2}g^{\mu\rho}g_{\mu\rho,\lam}
\label{C.33}
\end{align}\\
\begin{align}
\frac{1}{\sqrt{-g}}\frac{\pa\sqrt{-g}}{\pa x^\lam}&=\frac{1}{2g}\frac{\pa g}{\pa x^\lam}=\frac{1}{2}g^{-1}\frac{\pa g}{\pa x^\lam}=\frac{1}{2}\varepsilon_{\mu_0\ldots\mu_3}\ g^{0\mu_0}\cdots g^{3\mu_3}\frac{\pa}{\pa x^\lam}(\varepsilon^{\nu_0\ldots\nu_3}\ g_{0\nu_0}\cdots g_{3\nu_3})\nonum\\
&=\frac{1}{2}\varepsilon_{\mu_0\ldots\mu_3}\varepsilon^{\nu_0\ldots\nu_3}\ g^{0\mu_0}\cdots g^{3\mu_3}(g_{0\nu_0,\lam}\cdots g_{3\nu_3}+\cdots+g_{0\nu_0}\cdots g_{3\nu_3,\lam})\nonum\\
&=\frac{1}{2}\left|\begin{array}{cccc}
\del_{\mu_0}{}^{\nu_0} & \del_{\mu_0}{}^{\nu_1} & \del_{\mu_0}{}^{\nu_2} & \del_{\mu_0}{}^{\nu_3}\\
\del_{\mu_1}{}^{\nu_0} & \del_{\mu_1}{}^{\nu_1} & \del_{\mu_1}{}^{\nu_2} & \del_{\mu_1}{}^{\nu_3}\\
\del_{\mu_2}{}^{\nu_0} & \del_{\mu_2}{}^{\nu_1} & \del_{\mu_2}{}^{\nu_2} & \del_{\mu_2}{}^{\nu_3}\\
\del_{\mu_3}{}^{\nu_0} & \del_{\mu_3}{}^{\nu_1} & \del_{\mu_3}{}^{\nu_2} & \del_{\mu_3}{}^{\nu_3}
\end{array}\right|g^{0\mu_0}\cdots g^{3\mu_3}(g_{0\nu_0,\lam}\cdots g_{3\nu_3}+\cdots+g_{0\nu_0}\cdots g_{3\nu_3,\lam})\nonum\\
&=\frac{1}{2}\del_{\mu_0}{}^{\nu_0}\ \del_{\mu_1}{}^{\nu_1}\ \del_{\mu_2}{}^{\nu_2}\ \del_{\mu_3}{}^{\nu_3}\ g^{0\mu_0}\cdots g^{3\mu_3}(g_{0\nu_0,\lam}\cdots g_{3\nu_3}+\cdots+g_{0\nu_0}\cdots g_{3\nu_3,\lam})\nonum\\
&=\frac{1}{2}g^{0\nu_0}\ g^{1\nu_1}\ g^{2\nu_2}\ g^{3\nu_3}(g_{0\nu_0,\lam}\cdots g_{3\nu_3}+\cdots+g_{0\nu_0}\cdots g_{3\nu_3,\lam})\nonum\\
&=\frac{1}{2}(g^{0\nu_0}g_{0\nu_0,\lam}+g^{1\nu_1}g_{1\nu_1,\lam}+g^{2\nu_2}g_{2\nu_2,\lam}+g^{3\nu_3}g_{3\nu_3,\lam})\nonum\\
&=\frac{1}{2}(g^{0\rho}g_{0\rho,\lam}+g^{1\rho}g_{1\rho,\lam}+g^{2\rho}g_{2\rho,\lam}+g^{3\rho}g_{3\rho,\lam})=\frac{1}{2}g^{\mu\rho}g_{\mu\rho,\lam}
\label{C.34}
\end{align}\vspace{0.5em}
\par Combining equations \eqref{C.32}, \eqref{C.33}, \eqref{C.34} we get\vspace{1em}
\eq$\label{C.35}
\boxed{\nabla_\mu A^\mu=\pa_\mu A^\mu+\frac{1}{\sqrt{-g}}\frac{\pa\sqrt{-g}}{\pa x^\lam}A^\lam=\frac{1}{\sqrt{-g}}\pa_\mu(\sqrt{-g}\ A^\mu)}$
\end{proof}
\par Let us now modify the integral of the last term of Eq.\eqref{C.31} by using Eq.\eqref{C.24}, Eq.\eqref{C.35} and the fact that the variation of the metric $\del g^{\mu\nu}$ vanishes at infinity.
\begin{gather}
\int d^4x\sqrt{-g}\ \left(g_{\mu\nu}f(\Phi)\square\ \del g^{\mu\nu}-f(\Phi)\nabla_\mu \nabla_\nu\ \del g^{\mu\nu}\right)=\nonum\\
\int d^4x\sqrt{-g} \left(g_{\mu\nu} f g^{\rho\lam}\nabla_\rho\nabla_\lam \del g^{\mu\nu}-f \nabla_\mu \nabla_\nu \del g^{\mu\nu}\right)=\nonum\\
\int d^4x\sqrt{-g}\left[f \nabla_\rho\nabla_\lam(g_{\mu\nu} g^{\rho\lam} \del g^{\mu\nu})-\nabla_\mu(f \nabla_\nu\ \del g^{\mu\nu})+(\nabla_\mu f)\nabla_\nu \del g^{\mu\nu}\right]=\nonum\\
\int d^4x\sqrt{-g}\left[f \nabla_\rho\nabla_\lam(g_{\mu\nu} g^{\rho\lam} \del g^{\mu\nu})+(\nabla_\mu f)\nabla_\nu \del g^{\mu\nu}\right]-\underbrace{\int d^4x\ \pa_\mu(\sqrt{-g}\ f \nabla_\nu \del g^{\mu\nu})}_{0}=\nonum\\
\int d^4x\sqrt{-g}\left\{\nabla_\rho\left[f \nabla_\lam(g_{\mu\nu} g^{\rho\lam} \del g^{\mu\nu})\right]-(\nabla_\rho f)\nabla_\lam(g_{\mu\nu} g^{\rho\lam} \del g^{\mu\nu})+\nabla_\nu(\del g^{\mu\nu} \nabla_\mu f)-\del g^{\mu\nu} \nabla_\mu\nabla_\nu f \right\}=\nonum\\
\int d^4x\sqrt{-g}\left[-(\nabla_\rho f)\nabla_\lam(g_{\mu\nu} g^{\rho\lam} \del g^{\mu\nu})-\del g^{\mu\nu} \nabla_\mu\nabla_\nu f \right]+\underbrace{\scriptstyle{\int d^4x\ \pa_\rho\left[\sqrt{-g}\ f \nabla_\lam(g_{\mu\nu} g^{\rho\lam} \del g^{\mu\nu})\right]}}_{0}+\underbrace{\scriptstyle{\int d^4x\ \pa_\nu(\sqrt{-g}\ \del g^{\mu\nu}\nabla_\mu f)}}_0=\nonum\\
\int d^4x\sqrt{-g}\left[-\nabla_\lam(g_{\mu\nu} g^{\rho\lam} \del g^{\mu\nu}\nabla_\rho f)+g_{\mu\nu} g^{\rho\lam} \del g^{\mu\nu}\nabla_\lam\nabla_\lam f-\del g^{\mu\nu}\nabla_\mu\nabla_\nu f\right]=\nonum\\
\int d^4x\sqrt{-g}\ \del g^{\mu\nu}\left(g_{\mu\nu}\square f-\nabla_\mu\nabla_\nu f\right)-\underbrace{\int d^4x\ \pa_\lam(\sqrt{-g}\ g_{\mu\nu} g^{\rho\lam} \del g^{\mu\nu}\nabla_\rho f)}_0\Ra\nonum
\end{gather}
\eq$\label{C.36}
\int d^4x\sqrt{-g}\left(g_{\mu\nu}f(\Phi)\square\ \del g^{\mu\nu}-f(\Phi)\nabla_\mu \nabla_\nu\del g^{\mu\nu}\right)=\int d^4x\sqrt{-g}\ \del g^{\mu\nu}\left[g_{\mu\nu}\square f(\Phi)-\nabla_\mu\nabla_\nu f(\Phi)\right]$\vspace{0.5em}
\par The combination of equations \eqref{C.31} and \eqref{C.36} leads to\vspace{1em}
\begin{align}
0&=\int d^4x \sqrt{-g}\ \del g^{\mu\nu}\left\{g_{\mu\nu}\left[-\frac{f(\Phi)}{2\kappa} R+\frac{1}{2}\pa_\lam \Phi\pa^\lam\Phi+V(\Phi)+\Lambda_B\right]\right.\nonum\\
&\hspace{10em}\left.+\frac{1}{\kappa}\left[f(\Phi)R_{\mu\nu}-\nabla_\mu \nabla_\nu f(\Phi)+g_{\mu\nu}\square f(\Phi)\right]-\pa_\mu\Phi\pa_\nu\Phi-T^{(m)}_{\mu\nu}\right\}\Ra\nonum\\
0&=-g_{\mu\nu}\left[\frac{f(\Phi)}{2\kappa} R-\frac{1}{2}\pa_\lam \Phi\pa^\lam\Phi-V(\Phi)-\Lambda_B\right]+\frac{1}{\kappa}\left[f(\Phi)R_{\mu\nu}-\nabla_\mu \nabla_\nu f(\Phi)+g_{\mu\nu}\square f(\Phi)\right]\nonum\\
&\hspace{18.6em}-\pa_\mu \Phi\pa_\nu\Phi-T^{(m)}_{\mu\nu}\Ra\nonum
\end{align}
\eq$\label{C.37}
\boxed{\begin{gathered}(T^{(m)}_{\mu\nu}-g_{\mu\nu}\Lambda_B)=\frac{f(\Phi)}{\kappa}\left[R_{\mu\nu}-\frac{1}{2}g_{\mu\nu}R\right]+g_{\mu\nu}\left[\frac{\pa_\lam \Phi\pa^\lam\Phi}{2}+V(\Phi)\right]\\
\hspace{5.5em}+\frac{1}{\kappa}\left[-\nabla_\mu \nabla_\nu f(\Phi)+g_{\mu\nu}\square f(\Phi)\right]-\pa_\mu \Phi\pa_\nu\Phi\end{gathered}}$\\
We can define now the following tensor
$$-T^{(\Phi)}_{\mu\nu}=g_{\mu\nu}\left[\frac{\pa_\lam \Phi\pa^\lam\Phi}{2}+V(\Phi)\right]+\frac{1}{\kappa}\left[-\nabla_\mu \nabla_\nu f(\Phi)+g_{\mu\nu}\square f(\Phi)\right]-\pa_\mu \Phi\pa_\nu\Phi\Ra$$
\eq$\label{C.38}
\boxed{T^{(\Phi)}_{\mu\nu}=\pa_\mu \Phi\pa_\nu\Phi-g_{\mu\nu}\left[\frac{\pa_\lam \Phi\pa^\lam\Phi}{2}+V(\Phi)\right]+\frac{1}{\kappa}\left[\nabla_\mu \nabla_\nu f(\Phi)-g_{\mu\nu}\square f(\Phi)\right]}$
\par It is known that the covariant derivative of a scalar $\nabla_\mu \Phi$ equals to the simple derivative $\pa_\mu\Phi$. Thus, in the above equation $\square\equiv g_{\alp\beta}\nabla^\alp\nabla^\beta=\nabla_\alp\nabla^\alp=\nabla^2$ and $\pa_\lam\Phi\pa^\lam\Phi=\nabla_\lam\Phi\nabla^\lam\Phi=(\nabla\Phi)^2$. Hence, we can write Eq.\eqref{C.38} as
\eq$\label{C.39}
\boxed{T^{(\Phi)}_{\mu\nu}=\nabla_\mu \Phi\nabla_\nu\Phi-g_{\mu\nu}\left[\frac{(\nabla\Phi)^2}{2}+V(\Phi)\right]+\frac{1}{\kappa}\left[\nabla_\mu \nabla_\nu f(\Phi)-g_{\mu\nu}\nabla^2 f(\Phi)\right]}$
\par The combination of equations \eqref{C.37} and \eqref{C.38} yields to
$$(T^{(m)}_{\mu\nu}-g_{\mu\nu}\Lambda_B)=\frac{f(\Phi)}{\kappa}\left[R_{\mu\nu}-\frac{1}{2}g_{\mu\nu}R\right]-T^{(\Phi)}_{\mu\nu}\Ra$$
\eq$\label{C.40}
\boxed{\kappa\left(T^{(m)}_{\mu\nu}+T^{(\Phi)}_{\mu\nu}-g_{\mu\nu}\Lambda_B\right)=f(\Phi)\left(R_{\mu\nu}-\frac{1}{2}g_{\mu\nu}R\right)}$\vspace{1em}
\par Finally, we present the above equations in the case of (4+n)-dimensional spacetime. As we already mentioned the only difference is that Greek indices become capital Latin indices. Moreover, we should have in mind that in this case $\kappa\ra\kappa_{(4+n)}$ and $\Lambda_B$ is the higher dimensional cosmological constant. Therefore, it is
\begin{empheq}[box=\mymath]{equation}\label{C.41}
T_{MN}^{(\Phi)}=\nabla_M \Phi\nabla_N\Phi-g_{MN}\left[\frac{(\nabla\Phi)^2}{2}+V(\Phi)\right]+\frac{1}{\kappa_{(4+n)}}\left[\nabla_M \nabla_N f(\Phi)-g_{MN}\nabla^2 f(\Phi)\right]
\end{empheq}
where now it is\vspace{0.5em}
\begin{empheq}[box=\mymath]{equation}\label{C.42}
(\nabla\Phi)^2=\nabla_K\Phi\nabla^K\Phi,\hspace{2em} \square=\nabla^2=\nabla_K\nabla^K
\end{empheq}\\
The field equations are
\begin{empheq}[box=\mymath]{equation}\label{C.43}
\kappa_{(4+n)}\left(T^{(m)}_{MN}+T^{(\Phi)}_{MN}-g_{MN}\Lambda_B\right)=f(\Phi)\left(R_{MN}-\frac{1}{2}g_{MN}R\right)
\end{empheq}

\section{Variation with respect to the scalar field}
\par By varying the action of equation \eqref{C.1} with respect to the scalar field $\Phi$ and $\pa_\mu\Phi$, we obtain a new equation that relates functions $f(\Phi)$ and $V(\Phi)$ to the field $\Phi$ and its derivatives. The procedure that one should follow is depicted below. Firstly, we write the action \eqref{C.1} in the following form in order to be more convenient for evaluation.
\eq$\label{C.44}
S=\int d^4x\sqrt{-g}\ \lagr_{tot}$
where
\eq$\label{C.45}
\lagr_{tot}=\frac{f(\Phi)}{2\kappa}R-\frac{1}{2}\pa_\mu\Phi\pa^\mu\Phi-V(\Phi)-\Lambda_B+\lagr_m$
The variation of the action \eqref{C.44} with respect to $\Phi$ and $\pa_\mu\Phi$ will provide us the Euler-Lagrange equation.
$$\del S=0=\int d^4x\ \del(\sqrt{-g}\ \lagr_{tot})=\int d^4x\left[\frac{\pa(\sqrt{-g}\ \lagr_{tot})}{\pa\Phi}\del\Phi+\frac{\pa(\sqrt{-g}\ \lagr_{tot})}{\pa(\pa_\mu \Phi)}\del(\pa_\mu \Phi)\right]\xRightarrow{\del(\pa_\mu \Phi)=\pa_\mu(\del\Phi)}$$
\begin{align}\label{C.46}
0&=\int d^4x \left[\frac{\pa(\sqrt{-g}\ \lagr_{tot})}{\pa \Phi}\del\Phi+\frac{\pa(\sqrt{-g}\ \lagr_{tot})}{\pa(\pa_\mu \Phi)}\pa_\mu(\del\Phi)\right]\nonum\\ \nonum\\
&=\int d^4x \left[\frac{\pa(\sqrt{-g}\ \lagr_{tot})}{\pa \Phi}\del\Phi+\pa_\mu\left(\frac{\pa(\sqrt{-g}\ \lagr_{tot})}{\pa(\pa_\mu \Phi)}\del\Phi\right)-\pa_\mu\left(\frac{\pa(\sqrt{-g}\ \lagr_{tot})}{\pa(\pa_\mu \Phi)}\right)\del\Phi\right]
\end{align}\\
However
\eq$\label{C.47}
\int d^4x\ \pa_\mu\left(\frac{\pa(\sqrt{-g}\ \lagr_{tot})}{\pa(\pa_\mu \Phi)}\del\Phi\right)=0$\\
because at the limits of the integration $\del\Phi=0$. Thus, combining equations \eqref{C.46} and \eqref{C.47} we get
$$0=\int d^4x \left[\frac{\pa(\sqrt{-g}\ \lagr_{tot})}{\pa \Phi}-\pa_\mu\left(\frac{\pa(\sqrt{-g}\ \lagr_{tot})}{\pa(\pa_\mu \Phi)}\right)\right]\del\Phi\Ra$$
\eq$\label{C.48}
\boxed{\frac{\pa(\sqrt{-g}\ \lagr_{tot})}{\pa \Phi}=\pa_\mu\left[\frac{\pa(\sqrt{-g}\ \lagr_{tot})}{\pa(\pa_\mu \Phi)}\right]}$\\
Eq.\eqref{C.48} constitutes the \emph{Euler-Lagrange equation}. Substituting now the quantity $\lagr_{tot}$ from Eq.\eqref{C.45} into Eq.\eqref{C.48} we obtain
$$\sqrt{-g}\ \frac{\pa}{\pa\Phi}\left[\frac{f(\Phi)}{2\kappa}R-V(\Phi)\right]=\pa_\mu\left[\sqrt{-g}\ \frac{\pa}{\pa(\pa_\mu\Phi)}\left(-\frac{1}{2}\pa_\nu\Phi\pa^\nu\Phi\right)\right]\Ra$$
$$\sqrt{-g}\left(\frac{1}{2\kappa}\frac{df}{d\Phi}R-\frac{dV}{d\Phi}\right)=-\frac{1}{2}\pa_\mu\left[\sqrt{-g}\ \frac{\pa}{\pa(\pa_\mu\Phi)}(g^{\rho\nu}\pa_\nu\Phi\pa_\rho\Phi)\right]\Ra$$
$$\sqrt{-g}\left(\frac{1}{2\kappa}\frac{df}{d\Phi}R-\frac{dV}{d\Phi}\right)=-\frac{1}{2}\pa_\mu\left[\sqrt{-g}\left( g^{\rho\nu}\del^{\mu}{}_\nu\pa_\rho\Phi+g^{\rho\nu}\pa_\nu\Phi\del^\mu{}_\rho\right)\right]\Ra$$
$$\sqrt{-g}\left(\frac{1}{2\kappa}\frac{df}{d\Phi}R-\frac{dV}{d\Phi}\right)=-\frac{1}{2}\pa_\mu\left[\sqrt{-g}\underbrace{\left( g^{\rho\mu}\pa_\rho\Phi+g^{\mu\nu}\pa_\nu\Phi\right)}_{2g^{\mu\rho}\pa_\rho\Phi}\right]\Ra$$
\begin{empheq}[box=\mymath]{equation}\label{C.49}
\sqrt{-g}\left(\frac{1}{2\kappa}\frac{df}{d\Phi}R-\frac{dV}{d\Phi}\right)=-\pa_\mu\left(\sqrt{-g}\ g^{\mu\rho}\pa_\rho\Phi\right)
\end{empheq}

\newpage

\chapter{Energy-Momentum Tensor's Components}
\label{enmom}

\par Firstly, the non-diagonal and non-zero components of energy-momentum tensor $T^M{}_N$ are going to be calculated. Subsequently, the diagonal components will be calculated and finally the zero-valued  components will be presented.
\begin{center}
\larger{\underline{$\mathbf{T^0{}_1:}$}}
\end{center}
\begin{align}\label{loc.16}
\eqref{loc.15}\xRightarrow[N=1]{M=0} T^0{}_1&=\pa^0\phi\pa_1\phi+\pa^0\chi\pa_1\chi+\nabla^0\nabla_1f=g^{0K}\pa_K\phi\pa_1\phi+g^{0K}\pa_K\chi\pa_1\chi+g^{0K}\nabla_K\nabla_1f\nonum\\
&=g^{01}(\pa_1\phi)^2+g^{01}(\pa_1\chi)^2+g^{01}\nabla_1^2f=\underbrace{g^{01}}_{e^{-2A}}\left[(\pa_1\phi)^2+(\pa_1\chi)^2+\nabla_1^2f\right]\nonum\\
&=e^{-2A}\left[(\pa_1\phi)^2+(\pa_1\chi)^2+\nabla_1^2f\right]
\end{align}\\
\eq$\label{loc.17}
\nabla_1^2f=\nabla_1(\nabla_1f)=\nabla_1(\pa_1f)=\pa_1(\pa_1f)-\underbrace{\Gam^L{}_{11}}_{0}\pa_Lf=\pa_1^2f$
\begin{align}\label{loc.18}
\pa_1^2f&=\pa_1(\pa_\phi f\pa_1\phi+\pa_\chi f\pa_1\chi)\nonum\\
&=\pa_\phi^2f(\pa_1\phi)^2+\pa_\chi\pa_\phi f\pa_1\chi\pa_1\phi+\pa_\phi f\pa_1^2\phi+\pa_\chi^2f(\pa_1\chi)^2+\pa_\phi\pa_\chi f\pa_1\phi\pa_1\chi+\pa_\chi f\pa_1^2\chi\nonum\\
&=\pa_\phi^2f(\pa_1\phi)^2+2\pa_\chi\pa_\phi f\pa_1\chi\pa_1\phi+\pa_\phi f\pa_1^2\phi+\pa_\chi^2f(\pa_1\chi)^2+\pa_\chi f\pa_1^2\chi
\end{align}
Combining equations \eqref{loc.16}-\eqref{loc.18}, we obtain
\eq$\label{loc.19}
T^0{}_1=e^{-2A}[(\pa_1\phi)^2+(\pa_1\chi)^2+\pa_1^2f]$
or
\eq$\label{loc.20}
T^0{}_1=e^{-2A}[(1+\pa_\phi^2f)(\pa_1\phi)^2+(1+\pa_\chi^2f)(\pa_1\chi)^2+2\pa_\chi\pa_\phi f\pa_1\chi\pa_1\phi+\pa_\phi f\pa_1^2\phi+\pa_\chi f\pa_1^2\chi]$
\begin{center}\vspace{1em}
\larger{\underline{$\mathbf{T^1{}_0:}$}}
\end{center}
\begin{align}
\eqref{loc.15}\xRightarrow[N=0]{M=1} T^1{}_0&=\pa^1\phi\pa_0\phi+\pa^1\chi\pa_0\chi+\nabla^1\nabla_0f=g^{1K}\pa_K\phi\pa_0\phi+g^{1K}\pa_K\chi\pa_0\chi+g^{1K}\nabla_K\nabla_0f\nonum\\
&=g^{10}(\pa_0\phi)^2+g^{11}\pa_1\phi\pa_0\phi+g^{10}(\pa_0\chi)^2+g^{11}\pa_1\chi\pa_0\chi+g^{10}\nabla_0^2f+g^{11}\nabla_1\nabla_0f\nonum\\
&=\underbrace{g^{10}}_{e^{-2A}}[(\pa_0\phi)^2+(\pa_0\chi)^2+\nabla_0^2f]+\underbrace{g^{11}}_{e^{-2A}\left(1-\frac{2m}{r}\right)}[\pa_1\phi\pa_0\phi+\pa_1\chi\pa_0\chi+\nabla_1\nabla_0f]\Ra\nonum
\end{align}
\eq$\label{loc.21}
T^1{}_0=e^{-2A}\left\{[(\pa_0\phi)^2+(\pa_0\chi)^2+\nabla_0^2f]+\left(1-\frac{2m}{r}\right)[\pa_1\phi\pa_0\phi+\pa_1\chi\pa_0\chi+\nabla_1\nabla_0f]\right\}$
\begin{align}\label{loc.22}
\nabla_0\nabla_0f&=\pa_0^2f-\Gam^L{}_{00}\pa_Lf=\pa_0^2f-\Gam^0{}_{00}\pa_0f-\Gam^1{}_{00}\pa_1f-\Gam^4{}_{00}\pa_4f\nonum\\
&=\scriptstyle{\pa_0^2f-\left(\frac{m}{r^2}-\frac{\pa_1m}{r}\right)\pa_0f-\left(\frac{\pa_0m}{r}-\frac{\pa_1m}{r}+\frac{m}{r^2}+\frac{2m\pa_1m}{r^2}-\frac{2m^2}{r^3}\right)\pa_1f-e^{2A}\left(A'-\frac{2mA'}{r}-\frac{\pa_4m}{r}\right)\pa_4f}
\end{align}
\begin{align}\label{loc.23}
\nabla_1\nabla_0f&=\pa_1\pa_0f-\Gam^L{}_{10}\pa_Lf=\pa_1\pa_0f-\Gam^1{}_{10}\pa_1f-\Gam^4{}_{10}\pa_4f\nonum\\
&=\pa_1\pa_0f-\left(\frac{\pa_1m}{r}-\frac{m}{r^2}\right)\pa_1f+e^{2A}A'\pa_4f	
\end{align}
\begin{align}\label{loc.24}
\pa_0^2f&=\pa_0(\pa_\phi f\pa_0\phi+\pa_\chi f\pa_0\chi)\nonum\\
&=\pa_\phi^2f(\pa_0\phi)^2+\pa_\chi\pa_\phi f\pa_0\chi\pa_0\phi+\pa_\phi f\pa_0^2\phi+\pa_\chi^2f(\pa_0\chi)^2+\pa_\phi\pa_\chi f\pa_0\phi\pa_0\chi+\pa_\chi f\pa_0^2\chi\nonum\\
&=\pa_\phi^2f(\pa_0\phi)^2+2\pa_\chi\pa_\phi f\pa_0\chi\pa_0\phi+\pa_\chi^2f(\pa_0\chi)^2+\pa_\phi f\pa_0^2\phi+\pa_\chi f\pa_0^2\chi
\end{align}
\begin{align}\label{loc.25}
\pa_1\pa_0f&=\pa_1(\pa_\phi f\pa_0\phi+\pa_\chi f\pa_0\chi)\nonum\\
&=\pa_\phi^2f\pa_1\phi\pa_0\phi+\pa_\chi\pa_\phi f\pa_1\chi\pa_0\phi+\pa_\phi f\pa_1\pa_0\phi+\pa_\chi^2f\pa_1\chi\pa_0\chi+\pa_\phi\pa_\chi f\pa_1\phi\pa_0\chi+\pa_\chi f\pa_1\pa_0\chi\nonum\\
&=\pa_\phi^2f\pa_1\phi\pa_0\phi+\pa_\chi^2f\pa_1\chi\pa_0\chi+\pa_\chi\pa_\phi f(\pa_1\chi\pa_0\phi+\pa_1\phi\pa_0\chi)+\pa_\phi f\pa_1\pa_0\phi+\pa_\chi f\pa_1\pa_0\chi
\end{align}
From equations \eqref{loc.21}-\eqref{loc.25} we have
\begin{align}\label{loc.26}
T^1{}_0&=e^{-2A}\left[(\pa_0\phi)^2+(\pa_0\chi)^2+\pa_0^2f+\left(1-\frac{2m}{r}\right)(\pa_1\pa_0\phi+\pa_1\chi\pa_0\chi+\pa_1\pa_0f)\right.\nonum\\
&\hspace{5em}\left. +\pa_1\left(\frac{m}{r}\right)\pa_0f-\pa_0\left(\frac{m}{r}\right)\pa_1f+e^{2A}\pa_4\left(\frac{m}{r}\right)\pa_4f\right]
\end{align}
or
\begin{align}\label{loc.27}
T^1{}_0&=e^{-2A}\left\{(1+\pa_\phi^2f)(\pa_0\phi)^2+(1+\pa_\chi^2f)(\pa_0\chi)^2+2\pa_\chi\pa_\phi f\pa_0\phi\pa_0\chi+\pa_\phi f\pa_0^2\phi+\pa_\chi f\pa_0^2\chi \right.\nonum\\
&\hspace{4em}\scriptstyle{+\left(1-\frac{2m}{r}\right)[(1+\pa_\phi^2f)\pa_1\phi\pa_0\phi+(1+\pa_\chi^2f)\pa_1\chi\pa_0\chi+\pa_\phi\pa_\chi f(\pa_1\phi\pa_0\chi+\pa_1\chi\pa_0\phi)+\pa_\phi f\pa_1\pa_0\phi+\pa_\chi f\pa_1\pa_0\chi]}\nonum\\
&\hspace{4em}\left. \scriptstyle{+\left(\frac{\pa_1m}{r}-\frac{m}{r^2}\right)(\pa_\phi f\pa_0\phi+\pa_\chi f\pa_0\chi)-\frac{\pa_0m}{r}(\pa_\phi f\pa_1\phi+\pa_\chi f\pa_1\chi)+e^{2A}\frac{\pa_4m}{r}(\pa_\phi f\pa_4\phi+\pa_\chi f\pa_4\chi)}\right\}
\end{align}
\thispagestyle{plain}
\begin{center}\vspace{1em}
\larger{\underline{$\mathbf{T^4{}_0:}$}}
\end{center}
\begin{align}\label{loc.28}
\eqref{loc.15}\xRightarrow[N=0]{M=4}T^4{}_0&=\pa^4\phi\pa_0\phi+\pa^4\chi\pa_0\chi+\nabla^4\nabla_0f=g^{4K}(\pa_K\phi\pa_0\phi+\pa_K\chi\pa_0\chi+\nabla_K\nabla_0f)\nonum\\
&=\underbrace{g^{44}}_{1}(\pa_4\phi\pa_0\phi+\pa_4\chi\pa_0\chi+\nabla_4\nabla_0f)=\pa_4\phi\pa_0\phi+\pa_4\chi\pa_0\chi+\nabla_4\nabla_0f
\end{align}
\begin{align}\label{loc.29}
\nabla_4\nabla_0f&=\pa_4\pa_0f-\Gam^L{}_{40}\pa_Lf=\pa_4\pa_0f-\Gam^0{}_{40}\pa_0f-\Gam^1{}_{40}\pa_1f\nonum\\
&=\pa_4\pa_0f-A'\pa_0f-\frac{\pa_4m}{r}\pa_1f
\end{align}
\begin{align}\label{loc.30}
\pa_4\pa_0f&=\pa_4(\pa_\phi f\pa_0\phi+\pa_\chi f\pa_0\chi)\nonum\\
&=\pa_\phi^2f\pa_4\phi\pa_0\phi+\pa_\chi\pa_\phi f\pa_4\chi\pa_0\phi+\pa_\phi f\pa_4\pa_0\phi+\pa_\chi^2f\pa_4\chi\pa_0\chi+\pa_\phi\pa_\chi f\pa_4\phi\pa_0\chi+\pa_\chi f\pa_4\pa_0\chi\nonum\\
&=\pa_\phi^2f\pa_4\phi\pa_0\phi+\pa_\chi^2f\pa_4\chi\pa_0\chi+\pa_\phi\pa_\chi f(\pa_4\chi\pa_0\phi+\pa_4\phi\pa_0\chi)+\pa_\phi f\pa_4\pa_0\phi+\pa_\chi f\pa_4\pa_0\chi
\end{align}
Equations \eqref{loc.28}-\eqref{loc.30} yield to
\eq$\label{loc.31}
T^4{}_0=\pa_4\phi\pa_0\phi+\pa_4\chi\pa_0\chi+\pa_4\pa_0f-A'\pa_0f-\frac{\pa_4m}{r}\pa_1f$
or
\begin{align}\label{loc.32}
T^4{}_0=(1+\pa_\phi^2f)\pa_4\phi\pa_0\phi&+(1+\pa_\chi^2f)\pa_4\chi\pa_0\chi+\pa_\chi\pa_\phi f(\pa_4\chi\pa_0\phi+\pa_4\phi\pa_0\chi)\nonum\\
&+\pa_\phi f\pa_4\pa_0\phi+\pa_\chi f\pa_4\pa_0\chi-A'\pa_0f-\frac{\pa_4m}{r}\pa_1f
\end{align}
\thispagestyle{plain}
\begin{center}
\larger{\underline{$\mathbf{T^0{}_4:}$}}
\end{center}
\begin{align}\label{loc.33}
\eqref{loc.15}\xRightarrow[N=4]{M=0}T^0{}_4&=\pa^0\phi\pa_4\phi+\pa^0\chi\pa_4\chi+\nabla^0\nabla_4f=g^{0K}(\pa_K\phi\pa_4\phi+\pa_K\chi\pa_4\chi+\nabla_K\nabla_4f)\nonum\\
&=\underbrace{g^{01}}_{e^{-2A}}(\pa_1\phi\pa_4\phi+\pa_1\chi\pa_4\chi+\nabla_1\nabla_4f)=e^{-2A}(\pa_1\phi\pa_4\phi+\pa_1\chi\pa_4\chi+\nabla_1\nabla_4f)
\end{align}
\begin{align}\label{loc.34}
\nabla_1\nabla_4f&=\pa_1\pa_4f-\Gam^L{}_{14}\pa_Lf=\pa_1\pa_4f-\Gam^1{}_{14}\pa_1f=\pa_1\pa_4f-A'\pa_1f
\end{align}
\begin{align}\label{loc.35}
\pa_1\pa_4f=&=\pa_1(\pa_\phi f\pa_4\phi+\pa_\chi f\pa_4\chi)\nonum\\
&=\pa_\phi^2f\pa_1\phi\pa_4\phi+\pa_\chi\pa_\phi f\pa_1\chi\pa_4\phi+\pa_\phi f\pa_1\pa_4\phi+\pa_\chi^2f\pa_1\chi\pa_4\chi+\pa_\phi\pa_\chi f\pa_1\phi\pa_4\chi+\pa_\chi f\pa_1\pa_4\chi\nonum\\
&=\pa_\phi^2f\pa_1\phi\pa_4\phi+\pa_\chi^2f\pa_1\chi\pa_4\chi+\pa_\phi\pa_\chi f(\pa_1\chi\pa_4\phi+\pa_1\phi\pa_4\chi)+\pa_\phi f\pa_1\pa_4\phi+\pa_\chi f\pa_1\pa_4\chi
\end{align}\\
Inserting the expressions of the quantities $\nabla_1\nabla_4f$ and $\pa_1\pa_4f$ from equations \eqref{loc.34} and \eqref{loc.35} respectively, into equation \eqref{loc.33} we get
\eq$\label{loc.36}
T^0{}_4=e^{-2A}(\pa_1\phi\pa_4\phi+\pa_1\chi\pa_4\chi+\pa_1\pa_4f-A'\pa_1f)$
or
\begin{align}\label{loc.37}
T^0{}_4=e^{-2A}[(1+\pa_\phi^2f)\pa_1\phi\pa_4\phi&+(1+\pa_\chi^2f)\pa_1\chi\pa_4\chi+\pa_\phi\pa_\chi f(\pa_1\chi\pa_4\phi+\pa_1\phi\pa_4\chi) \nonum\\
&+\pa_\phi f\pa_1\pa_4\phi+\pa_\chi f\pa_1\pa_4\chi-A'(\pa_\phi f\pa_1\phi+\pa_\chi f\pa_1\chi)]
\end{align}
\begin{center}
\larger{\underline{$\mathbf{T^4{}_1:}$}}
\end{center}
\begin{align}\label{loc.38}
\eqref{loc.15}\xRightarrow[N=1]{M=4}T^4{}_1&=\pa^4\phi\pa_1\phi+\pa^4\chi\pa_1\chi+\nabla^4\nabla_1f=g^{4K}(\pa_K\phi\pa_1\phi+\pa_K\chi\pa_1\chi+\nabla_K\nabla_1f)\nonum\\
&=\underbrace{g^{44}}_{1}(\pa_4\phi\pa_1\phi+\pa_4\chi\pa_1\chi+\nabla_4\nabla_1f)=\pa_4\phi\pa_1\phi+\pa_4\chi\pa_1\chi+\nabla_4\nabla_1f
\end{align}
It is obvious from equations \eqref{loc.33} and \eqref{loc.38} that\\
\eq$\label{loc.39}
T^4{}_1=e^{2A}T^0{}_4$

\newpage
\begin{center}
\larger{\underline{$\mathbf{T^1{}_4:}$}}
\end{center}
\begin{align}\label{loc.40}
\eqref{loc.15}\xRightarrow[N=4]{M=1}T^1{}_4&=\pa^1\phi\pa_4\phi+\pa^1\chi\pa_4\chi+\nabla^1\nabla_4f=g^{1K}(\pa_K\phi\pa_4\phi+\pa_K\chi\pa_4\chi+\nabla_4\nabla_1f)\nonum\\
&=\underbrace{g^{10}}_{e^{-2A}}(\pa_0\phi\pa_4\phi+\pa_0\chi\pa_4\chi+\nabla_0\nabla_4f)+\underbrace{g^{11}}_{e^{-2A}\left(1-\frac{2m}{r}\right)}(\pa_1\phi\pa_4\phi+\pa_1\chi\pa_4\chi+\nabla_1\nabla_4f)\nonum\\
&=e^{-2A}(\pa_0\phi\pa_4\phi+\pa_0\chi\pa_4\chi+\nabla_0\nabla_4f)+e^{-2A}\left(1-\frac{2m}{r}\right)(\pa_1\phi\pa_4\phi+\pa_1\chi\pa_4\chi+\nabla_1\nabla_4f)
\end{align}
From equations \eqref{loc.28}, \eqref{loc.33} and \eqref{loc.40} we deduce that
\eq$\label{loc.41}
T^1{}_4=e^{-2A}T^4{}_0+\left(1-\frac{2m}{r}\right)T^0{}_4$
\thispagestyle{plain}
\begin{center}
\larger{\underline{$\mathbf{T^0{}_0:}$}}
\end{center}
\begin{align}\label{loc.42}
\eqref{loc.15}\xRightarrow[N=0]{M=0}T^0{}_0&=\pa^0\phi\pa_0\phi+\pa^0\chi\pa_0\chi+\nabla^0\nabla_0f+\lagr-\square f\nonum\\
&=g^{01}(\pa_1\phi\pa_0\phi+\pa_1\chi\pa_0\chi+\nabla_1\nabla_0f)+\lagr-\square f\nonum\\
&=e^{-2A}(\pa_1\phi\pa_0\phi+\pa_1\chi\pa_0\chi+\nabla_1\nabla_0f)+\lagr-\square f
\end{align}
where\\
\eq$\label{loc.43}
\lagr\equiv -\frac{(\pa\phi)^2+(\pa\chi)^2}{2}-V(\phi,\chi)=-\frac{\pa^L\phi\pa_L\phi+\pa^L\chi\pa_L\chi}{2}-V(\phi,\chi)$\\
\eq$\label{loc.44}
\square f\equiv \nabla^2f=\nabla^K\nabla_Kf$\\
We are going to calculate firstly the quantities $\lagr$ and $\square f$.
\begin{align}
\lagr&=-\frac{1}{2}(\pa^L\phi\pa_L\phi+\pa^L\chi\pa_L\chi)-V(\phi,\chi)\nonum\\
&=-\frac{1}{2}(\pa^0\phi\pa_0\phi+\pa^1\phi\pa_1\phi+\pa^4\phi\pa_4\phi+\pa^0\chi\pa_0\chi+\pa^1\chi\pa_1\chi+\pa^4\chi\pa_4\chi)-V(\phi,\chi)\nonum\\
&=-\frac{1}{2}(g^{0K}\pa_K\phi\pa_0\phi+g^{1K}\pa_K\phi\pa_1\phi+g^{4K}\pa_K\phi\pa_4\phi+g^{0K}\pa_K\chi\pa_0\chi+g^{1K}\pa_K\chi\pa_1\chi+g^{4K}\pa_K\chi\pa_4\chi)-V(\phi,\chi)\nonum\\
&=-\frac{1}{2}\left\{2\underbrace{g^{01}}_{e^{-2A}}(\pa_1\phi\pa_0\phi+\pa_1\chi\pa_0\chi)+\underbrace{g^{11}}_{e^{-2A}\left(1-\frac{2m}{r}\right)}[(\pa_1\phi)^2+(\pa_1\chi)^2]+\underbrace{g^{44}}_{1}[(\pa_4\phi)^2+(\pa_4\chi)^2]\right\}-V(\phi,\chi)\nonum\\
&=-\frac{e^{-2A}}{2}\left\{2(\pa_1\phi\pa_0\phi+\pa_1\chi\pa_0\chi)+\left(1-\frac{2m}{r}\right)[(\pa_1\phi)^2+(\pa_1\chi)^2]\right\}-\frac{1}{2}[(\pa_4\phi)^2+(\pa_4\chi)^2]-V(\phi,\chi)\Ra\nonum
\end{align}\\
\eq$\label{loc.45}
\lagr=-\frac{e^{-2A}}{2}\left\{2(\pa_1\phi\pa_0\phi+\pa_1\chi\pa_0\chi)+\left(1-\frac{2m}{r}\right)[(\pa_1\phi)^2+(\pa_1\chi)^2]\right\}-\frac{1}{2}[(\pa_4\phi)^2+(\pa_4\chi)^2]-V(\phi,\chi)$
\begin{align}\label{loc.46}
\square f&=\nabla^K\nabla_Kf=g^{KL}\nabla_L\nabla_Kf\nonum\\
&=2g^{01}\nabla_0\nabla_1f+g^{11}\nabla_1\nabla_1f+g^{22}\nabla_2\nabla_2f+g^{33}\nabla_3\nabla_3f+g^{44}\nabla_4\nabla_4f
\end{align}\\
$\nabla_1\nabla_1f$ and $\nabla_0\nabla_1f$ are known from equations \eqref{loc.17} and \eqref{loc.23} respectively.\\
\begin{align}\label{loc.47}
\nabla_2\nabla_2f&=\underbrace{\pa_2^2f}_{0}-\Gam^{L}{}_{22}\pa_Lf=-\Gam^{0}{}_{22}\pa_0f-\Gam^{1}{}_{22}\pa_1f-\Gam^{4}{}_{22}\pa_4f\nonum\\
&=r\pa_0f-(2m-r)\pa_1f+r^2e^{2A}A'\pa_4f
\end{align}
\begin{align}\label{loc.48}
\nabla_3\nabla_3f&=\underbrace{\pa_3^2f}_{0}-\Gam^{L}{}_{33}\pa_Lf=-\Gam^{0}{}_{33}\pa_0f-\Gam^{1}{}_{33}\pa_1f-\Gam^{4}{}_{33}\pa_4f\nonum\\
&=\left[r\pa_0f-(2m-r)\pa_1f+r^2e^{2A}A'\pa_4f\right]\sin^2\theta=(\nabla_2\nabla_2f)\sin^2\theta
\end{align}
\begin{align}\label{loc.49}
\nabla_4\nabla_4f&=\pa_4\pa_4f-\underbrace{\Gam^L{}_{44}}_{0}\pa_Lf=\pa_4^2f	
\end{align}
Combining equations \eqref{loc.17}, \eqref{loc.23}, \eqref{loc.46}-\eqref{loc.49} we obtain
\begin{align}
\square f&=2g^{01}\nabla_0\nabla_1f+g^{11}\nabla_1\nabla_1f+g^{22}\nabla_2\nabla_2f+g^{33}\nabla_3\nabla_3f+g^{44}\nabla_4\nabla_4f\nonum\\
&=2e^{-2A}\left[\pa_0\pa_1f+\left(\frac{m}{r^2}-\frac{\pa_1m}{r}\right)\pa_1f+e^{2A}A'\pa_4f\right]+e^{-2A}\left(1-\frac{2m}{r}\right)\pa_1^2f\nonum\\
&\hspace{1.2em}+\underbrace{g^{22}\nabla_2\nabla_2f+\frac{g^{22}}{\sin^2\theta}(\nabla_2\nabla_2f)\sin^2\theta}_{2g^{22}\nabla_2\nabla_2f}+\pa_4^2f\nonum\\
&=2e^{-2A}\pa_0\pa_1f+e^{-2A}\underbrace{2\left(\frac{m}{r^2}-\frac{\pa_1m}{r}\right)}_{\pa_1\left(1-\frac{2m}{r}\right)}\pa_1f+2A'\pa_4f+e^{-2A}\left(1-\frac{2m}{r}\right)\pa_1^2f\nonum\\
&\hspace{1.2em}+2\frac{e^{-2A}}{r^2}\left[r\pa_0f-(2m-r)\pa_1f+r^2e^{2A}A'\pa_4f\right]+\pa_4^2f\nonum\\
&=e^{-2A}\pa_0\pa_1f+e^{-2A}\pa_1(\pa_0f)+e^{-2A}\pa_1\left[\left(1-\frac{2m}{r}\right)\pa_1f\right]+2A'\pa_4f\nonum\\
&\hspace{1.2em}+\frac{e^{-2A}}{r^2}\underbrace{2r}_{\pa_1(r^2)}\pa_0f+\underbrace{2\frac{e^{-2A}}{r}}_{\frac{e^{-2A}}{r^2}\pa_1(r^2)}\left(1-\frac{2m}{r}\right)\pa_1f+2A'\pa_4f+\pa_4^2f\nonum\\
&=e^{-2A}\pa_0\pa_1f+\frac{e^{-2A}}{r^2}r^2\pa_1(\pa_0f)+\frac{e^{-2A}}{r^2}r^2\pa_1\left[\left(1-\frac{2m}{r}\right)\pa_1f\right]+4A'\pa_4f\nonum\\
&\hspace{1.2em}+\frac{e^{-2A}}{r^2}\pa_1(r^2)\pa_0f+\frac{e^{-2A}}{r^2}\pa_1(r^2)\left[\left(1-\frac{2m}{r}\right)\pa_1f\right]+\pa_4^2f\nonum\\
&=e^{-2A}\pa_0\pa_1f+\frac{e^{-2A}}{r^2}\pa_1\left[r^2\pa_0f+r^2\left(1-\frac{2m}{r}\right)\pa_1f\right]+e^{-4A}\pa_4\left(e^{4A}\pa_4f\right)\Ra\nonum
\end{align}\\
\eq$\label{loc.50}
\square f=e^{-2A}\pa_0\pa_1f+\frac{e^{-2A}}{r^2}\pa_1\left[r^2\pa_0f+r^2\left(1-\frac{2m}{r}\right)\pa_1f\right]+e^{-4A}\pa_4\left(e^{4A}\pa_4f\right)$\\
Substituting $\nabla_1\nabla_0f$ from equation \eqref{loc.23} into \eqref{loc.42} we get
\thispagestyle{plain}
$$T^0{}_0=e^{-2A}\left[\pa_1\phi\pa_0\phi+\pa_1\chi\pa_0\chi+\pa_1\pa_0f+\frac{\pa_1f}{r}\left(\frac{m}{r}-\pa_1m\right)\right]+A'\pa_4f+\lagr-\square f\Ra$$
\eq$\label{loc.51}
T^0{}_0=e^{-2A}\left[\pa_1\phi\pa_0\phi+\pa_1\chi\pa_0\chi+\pa_1\pa_0f-\pa_1\left(\frac{m}{r}\right)\pa_1f\right]+A'\pa_4f+\lagr-\square f$\\
Using equation \eqref{loc.25} as well, we obtain
\begin{align}\label{loc.52}
T^0{}_0&=e^{-2A}\left[(1+\pa_\phi^2f)\pa_1\phi\pa_0\phi+(1+\pa_\chi^2f)\pa_1\chi\pa_0\chi+\pa_\chi\pa_\phi f(\pa_1\chi\pa_0\phi+\pa_1\phi\pa_0\chi)+\pa_\phi f\pa_1\pa_0\phi\right. \nonum\\
&\hspace{4.1em}+\left.\pa_\chi f\pa_1\pa_0\chi+\frac{\pa_\phi f\pa_1\phi+\pa_\chi f\pa_1\chi}{r}\left(\frac{m}{r}-\pa_1m\right)\right]+A'(\pa_\phi f\pa_4\phi+\pa_\chi f\pa_4\chi)+\lagr-\square f
\end{align}
\thispagestyle{plain}
\begin{center}
\larger{\underline{$\mathbf{T^1{}_1:}$}}
\end{center}
\begin{align}\label{loc.53}
\eqref{loc.15}\xRightarrow[N=1]{M=1}T^1{}_1&=\pa^1\phi\pa_1\phi+\pa^1\chi\pa_1\chi+\nabla^1\nabla_1f+\lagr-\square f\nonum\\
&=g^{1K}(\pa_K\phi\pa_1\phi+\pa_K\chi\pa_1\chi+\nabla_K\nabla_1f)+\lagr-\square f\nonum\\
&=\underbrace{g^{10}}_{e^{-2A}}(\pa_0\phi\pa_1\phi+\pa_0\chi\pa_1\chi+\nabla_0\nabla_1f)+\underbrace{g^{11}}_{e^{-2A}\left(1-\frac{2m}{r}\right)}[(\pa_1\phi)^2+(\pa_1\chi)^2+\nabla_1^2f]+\lagr-\square f
\end{align}
Equations \eqref{loc.16} and \eqref{loc.42} combined with \eqref{loc.53} give
\eq$\label{loc.54}
T^1{}_1=T^0{}_0+\left(1-\frac{2m}{r}\right)T^0{}_1$
\begin{center}
\larger{\underline{$\mathbf{T^2{}_2:}$}}
\end{center}
\begin{align}\label{loc.55}
\eqref{loc.15}\xRightarrow[N=2]{M=2}T^2{}_2&=\pa^2\phi\pa_2\phi+\pa^2\chi\pa_2\chi+\nabla^2\nabla_2f+\lagr-\square f\nonum\\
&=g^{2K}(\pa_K\phi\pa_2\phi+\pa_K\chi\pa_2\chi+\nabla_K\nabla_2f)+\lagr-\square f\nonum\\
&=\underbrace{g^{22}}_{\frac{e^{-2A}}{r^2}}(\underbrace{\pa_2\phi\pa_2\phi}_{0}+\underbrace{\pa_2\chi\pa_2\chi}_{0}+\nabla_2\nabla_2f)+\lagr-\square f\nonum\\
&=\frac{e^{-2A}}{r^2}\nabla_2\nabla_2f+\lagr-\square f
\end{align}
\begin{align}\label{loc.56}
\eqref{loc.55}\xRightarrow{\eqref{loc.47}}T^2{}_2&=\frac{e^{-2A}}{r^2}[r\pa_0f-(2m-r)\pa_1f+r^2e^{2A}A'\pa_4f]+\lagr-\square f\nonum\\
&=\frac{e^{-2A}}{r}\left[\pa_0f+\left(1-\frac{2m}{r}\right)\pa_1f\right]+A'\pa_4f+\lagr-\square f
\end{align}
Thus, we have
\eq$\label{loc.57}
T^2{}_2=\frac{e^{-2A}}{r}\left[\pa_0f+\left(1-\frac{2m}{r}\right)\pa_1f\right]+A'\pa_4f+\lagr-\square f$
or
\begin{align}\label{loc.58}
T^2{}_2&=\frac{e^{-2A}}{r}\left[(\pa_\phi f\pa_0\phi+\pa_\chi f\pa_0\chi)+\left(1-\frac{2m}{r}\right)(\pa_\phi f\pa_1\phi+\pa_\chi f\pa_1\chi)\right]\nonum\\
&\hspace{4em}+A'(\pa_\phi f\pa_4\phi+\pa_\chi f\pa_4\chi)+\lagr-\square f
\end{align}
\begin{center}
\larger{\underline{$\mathbf{T^3{}_3:}$}}
\end{center}
\begin{align}\label{loc.59}
\eqref{loc.15}\xRightarrow[N=3]{M=3}T^3{}_3&=\pa^3\phi\pa_3\phi+\pa^3\chi\pa_3\chi+\nabla^3\nabla_3f-\lagr-\square f\nonum\\
&=g^{3K}(\pa_K\phi\pa_3\phi+\pa_K\chi\pa_3\chi+\nabla_K\nabla_3f)-\lagr-\square f\nonum\\
&=\underbrace{g^{33}}_{\frac{e^{-2A}}{r^2\sin^2\theta}}(\underbrace{\pa_3\phi\pa_3\phi}_{0}+\underbrace{\pa_3\chi\pa_3\chi}_{0}+\nabla_3\nabla_3f)-\lagr-\square f\nonum\\
&=\frac{e^{-2A}}{r^2\sin^2\theta}\nabla_3\nabla_3f+\lagr-\square f
\end{align}
\begin{align}
\eqref{loc.59}\xRightarrow{\eqref{loc.48}}T^3{}_3&=\frac{e^{-2A}}{r^2\sin^2\theta}(\nabla_2\nabla_2f)\sin^2\theta-\lagr-\square f=\frac{e^{-2A}}{r^2}\nabla_2\nabla_2f+\lagr-\square f\Ra\nonum
\end{align}
\eq$\label{loc.60}
T^3{}_3=T^2{}_2$
\thispagestyle{plain}
\begin{center}
\larger{\underline{$\mathbf{T^4{}_4:}$}}
\end{center}
\begin{align}\label{loc.61}
\eqref{loc.15}\xRightarrow[N=4]{M=4}T^4{}_4&=\pa^4\phi\pa_4\phi+\pa^4\chi\pa_4\chi+\nabla^4\nabla_4f+\lagr-\square f\nonum\\
&=g^{4K}(\pa_K\phi\pa_4\phi+\pa_K\chi\pa_4\chi+\nabla_K\nabla_4f)+\lagr-\square f\nonum\\
&=\underbrace{g^{44}}_{1}[(\pa_4\phi)^2+(\pa_4\chi)^2+\nabla_4\nabla_4f]+\lagr-\square f\nonum\\
&=(\pa_4\phi)^2+(\pa_4\chi)^2+\nabla_4\nabla_4f+\lagr-\square f
\end{align}\\
The combination of equations \eqref{loc.49} and \eqref{loc.61} gives
\eq$\label{loc.62}
T^4{}_4=(\pa_4\phi)^2+(\pa_4\chi)^2+\pa^2_4f+\lagr-\square f$
Moreover, it is
\begin{align}\label{loc.63}
\pa_4^2f&=\pa_4(\pa_\phi f\pa_4\phi+\pa_\chi f\pa_4\chi)\nonum\\
&=\pa_\phi^2f(\pa_4\phi)^2+\pa_\chi\pa_\phi f\pa_4\chi\pa_4\phi+\pa_\phi f\pa_4^2\phi+\pa_\chi^2f(\pa_4\chi)^2+\pa_\phi\pa_\chi f\pa_4\phi\pa_4\chi+\pa_\chi f\pa_4^2\chi\nonum\\
&=\pa_\phi^2f(\pa_4\phi)^2+\pa_\chi^2f(\pa_4\chi)^2+2\pa_\chi\pa_\phi f\pa_4\chi\pa_4\phi+\pa_\phi f\pa_4^2\phi+\pa_\chi f\pa_4^2\chi
\end{align}
Therefore, equations \eqref{loc.62} and \eqref{loc.63} are combined to give
\eq$\label{loc.64}
T^4{}_4=(1+\pa_\phi^2f)(\pa_4\phi)^2+(1+\pa_\chi^2f)(\pa_4\chi)^2+2\pa_\chi\pa_\phi f\pa_4\chi\pa_4\phi+\pa_\phi f\pa_4^2\phi+\pa_\chi f\pa_4^2\chi+\lagr-\square f$

\begin{center}
\larger{\underline{$\mathbf{T^0{}_2:}$}}
\end{center}
\begin{align}\label{loc.65}
\eqref{loc.15}\xRightarrow[N=2]{M=0} T^0{}_2&=\pa^0\phi\pa_2\phi+\pa^0\chi\pa_2\chi+\nabla^0\nabla_2f=g^{0K}\pa_K\phi\underbrace{\pa_2\phi}_{0}+g^{0K}\pa_K\chi\underbrace{\pa_2\chi}_{0}+g^{0K}\nabla_K\nabla_2f\nonum\\
&=\underbrace{g^{01}}_{e^{-2A}}\nabla_1\nabla_2f=e^{-2A}\nabla_1\nabla_2f
\end{align}\\
\eq$\label{loc.66}
\nabla_1\nabla_2f=\left(\underbrace{\pa_1\pa_2f}_{0}-\Gam^{L}{}_{12}\pa_Lf\right)=-\Gam^{L}{}_{12}\pa_Lf=-\Gam^2{}_{12}\underbrace{\pa_2f}_{0}=0$\\
Thus, equations \eqref{loc.65} and \eqref{loc.66} give\\
\eq$\label{loc.67}
T^0{}_2=0$
\begin{center}
\larger{\underline{$\mathbf{T^0{}_3:}$}}
\end{center}
\begin{align}\label{loc.68}
\eqref{loc.15}\xRightarrow[N=3]{M=0} T^0{}_3&=\pa^0\phi\pa_3\phi+\pa^0\chi\pa_3\chi+\nabla^0\nabla_3f=g^{0K}\pa_K\phi\underbrace{\pa_3\phi}_{0}+g^{0K}\pa_K\chi\underbrace{\pa_3\chi}_{0}+g^{0K}\nabla_K\nabla_3f\nonum\\
&=\underbrace{g^{01}}_{e^{-2A}}\nabla_1\nabla_3f=e^{-2A}\nabla_1\nabla_3f
\end{align}\\
\eq$\label{loc.69}
\nabla_1\nabla_3f=\left(\underbrace{\pa_1\pa_3f}_{0}-\Gam^{L}{}_{13}\pa_Lf\right)=-\Gam^{L}{}_{13}\pa_Lf=-\Gam^3{}_{13}\underbrace{\pa_3f}_{0}=0$\\
Equations \eqref{loc.68} and \eqref{loc.69} yield to\\
\eq$\label{loc.70}
T^0{}_3=0$
\thispagestyle{plain}
\begin{center}
\larger{\underline{$\mathbf{T^1{}_2:}$}}
\end{center}
\begin{align}\label{loc.71}
\eqref{loc.15}\xRightarrow[N=2]{M=1}T^1{}_2&=\pa^1\phi\underbrace{\pa_2\phi}_{0}+\pa^1\chi\underbrace{\pa_2\chi}_{0}+\nabla^1\nabla_2f=g^{1K}\nabla_K\nabla_2f=g^{10}\nabla_0\nabla_2f+g^{11}\nabla_1\nabla_2f\nonum\\
&=e^{-2A}\nabla_0\nabla_2f+e^{-2A}\left(1-\frac{2m}{r}\right)\nabla_1\nabla_2f
\end{align}\\
\eq$\label{loc.72}
\nabla_0\nabla_2f=\underbrace{\pa_0\pa_2f}_{0}-\underbrace{\Gam^L{}_{02}}_{0}\pa_Lf=0$\\
Combining equations \eqref{loc.66}, \eqref{loc.71}, \eqref{loc.72} we get\\
\eq$\label{loc.73}
T^{1}{}_2=0$
\begin{center}
\larger{\underline{$\mathbf{T^1{}_3:}$}}
\end{center}
\begin{align}\label{loc.74}
\eqref{loc.15}\xRightarrow[N=3]{M=1}T^1{}_3&=\pa^1\phi\underbrace{\pa_3\phi}_{0}+\pa^1\chi\underbrace{\pa_3\chi}_{0}+\nabla^1\nabla_3f=g^{1K}\nabla_K\nabla_3f=g^{10}\nabla_0\nabla_3f+g^{11}\nabla_1\nabla_3f\nonum\\
&=e^{-2A}\nabla_0\nabla_3f+e^{-2A}\left(1-\frac{2m}{r}\right)\nabla_1\nabla_3f
\end{align}\\
\eq$\label{loc.75}
\nabla_0\nabla_3f=\underbrace{\pa_0\pa_3f}_{0}-\underbrace{\Gam^L{}_{03}}_{0}\pa_Lf=0$\\
From equations \eqref{loc.69} and \eqref{loc.75} into \eqref{loc.74} we obtain\\
\eq$\label{loc.76}
T^{1}{}_3=0$
\begin{center}
\larger{\underline{$\mathbf{T^2{}_0:}$}}\vspace{1em}
\end{center}
\eq$\eqref{loc.15}\xRightarrow[N=0]{M=2}T^2{}_0=\underbrace{\pa^2\phi}_{0}\pa_0\phi+\underbrace{\pa^2\chi}_{0}\pa_0\chi+\nabla^2\nabla_0f=g^{2K}\nabla_K\nabla_0f=g^{22}\underbrace{\nabla_2\nabla_0f}_{0\ \eqref{loc.72}}=0\Ra\nonum$
\eq$\label{loc.77}
T^2{}_0=0$
\begin{center}
\larger{\underline{$\mathbf{T^2{}_1:}$}}\vspace{1em}
\end{center}
\eq$\eqref{loc.15}\xRightarrow[N=1]{M=2}T^2{}_1=\underbrace{\pa^2\phi}_{0}\pa_1\phi+\underbrace{\pa^2\chi}_{0}\pa_1\chi+\nabla^2\nabla_1f=g^{2K}\nabla_K\nabla_1f=g^{22}\underbrace{\nabla_2\nabla_1f}_{0\ \eqref{loc.66}}=0\Ra\nonum$
\eq$\label{loc.78}
T^2{}_1=0$
\thispagestyle{plain}
\begin{center}
\larger{\underline{$\mathbf{T^2{}_3:}$}}\vspace{1em}
\end{center}
\eq$\label{loc.79}
\eqref{loc.15}\xRightarrow[N=3]{M=2}T^2{}_3=\underbrace{\pa^2\phi}_{0}\pa_3\phi+\underbrace{\pa^2\chi}_{0}\pa_3\chi+\nabla^2\nabla_3f=g^{2K}\nabla_K\nabla_3f=g^{22}\nabla_2\nabla_3f$\\
\eq$\label{loc.80}
\nabla_2\nabla_3f=\underbrace{\pa_2\pa_3f}_{0}-\Gam^L{}_{23}\pa_Lf=-\Gam^3{}_{23}\underbrace{\pa_3f}_{0}=0$\\
From equations \eqref{loc.79} and \eqref{loc.80} it is obvious that\\
\eq$\label{loc.81}
T^2{}_3=0$
\begin{center}
\larger{\underline{$\mathbf{T^2{}_4:}$}}\vspace{1em}
\end{center}
\eq$\label{loc.82}
\eqref{loc.15}\xRightarrow[N=4]{M=2}T^2{}_4=\underbrace{\pa^2\phi}_{0}\pa_4\phi+\underbrace{\pa^2\chi}_{0}\pa_4\chi+\nabla^2\nabla_4f=g^{2K}\nabla_K\nabla_4f=g^{22}\nabla_2\nabla_4f$\\
\eq$\label{loc.83}
\nabla_2\nabla_4f=\underbrace{\pa_2\pa_4f}_{0}-\Gam^L{}_{24}\pa_Lf=-\Gam^2{}_{24}\underbrace{\pa_2f}_{0}=0$\\
Equations \eqref{loc.82} and \eqref{loc.83} are combined to give\\
\eq$\label{loc.84}
T^2{}_4=0$
\begin{center}
\larger{\underline{$\mathbf{T^3{}_0:}$}}\vspace{1em}
\end{center}
\eq$
\eqref{loc.15}\xRightarrow[N=0]{M=3}T^3{}_0=\underbrace{\pa^3\phi}_{0}\pa_0\phi+\underbrace{\pa^3\chi}_{0}\pa_0\chi+\nabla^3\nabla_0f=g^{3K}\nabla_K\nabla_0f=g^{33}\underbrace{\nabla_3\nabla_0f}_{0\ \eqref{loc.75}}\Ra\nonum$
\eq$\label{loc.85}
T^3{}_0=0$
\begin{center}
\larger{\underline{$\mathbf{T^3{}_1:}$}}\vspace{1em}
\end{center}
\eq$
\eqref{loc.15}\xRightarrow[N=1]{M=3}T^3{}_1=\underbrace{\pa^3\phi}_{0}\pa_1\phi+\underbrace{\pa^3\chi}_{0}\pa_1\chi+\nabla^3\nabla_1f=g^{3K}\nabla_K\nabla_1f=g^{33}\underbrace{\nabla_3\nabla_1f}_{0\ \eqref{loc.69}}\Ra\nonum$
\eq$\label{loc.86}
T^3{}_1=0$

\begin{center}
\larger{\underline{$\mathbf{T^3{}_2:}$}}\vspace{1em}
\end{center}
\eq$
\eqref{loc.15}\xRightarrow[N=2]{M=3}T^3{}_2=\underbrace{\pa^3\phi}_{0}\pa_2\phi+\underbrace{\pa^3\chi}_{0}\pa_2\chi+\nabla^3\nabla_2f=g^{3K}\nabla_K\nabla_2f=g^{33}\underbrace{\nabla_3\nabla_2f}_{0\ \eqref{loc.80}}\Ra\nonum$
\eq$\label{loc.87}
T^3{}_2=0$
\begin{center}
\larger{\underline{$\mathbf{T^3{}_4:}$}}\vspace{1em}
\end{center}
\eq$\label{loc.88}
\eqref{loc.15}\xRightarrow[N=4]{M=3}T^3{}_4=\underbrace{\pa^3\phi}_{0}\pa_4\phi+\underbrace{\pa^3\chi}_{0}\pa_4\chi+\nabla^3\nabla_4f=g^{3K}\nabla_K\nabla_4f=g^{33}\nabla_3\nabla_4f$\\
\eq$\label{loc.89}
\nabla_3\nabla_4f=\underbrace{\pa_3\pa_4f}_{0}-\Gam^L{}_{34}\pa_Lf=-\Gam^3{}_{34}\underbrace{\pa_3f}_{0}=0$\\
The combination of equations \eqref{loc.88} and \eqref{loc.89} yield to
\eq$\label{loc.90}
T^3{}_4=0$
\thispagestyle{plain}
\begin{center}
\larger{\underline{$\mathbf{T^4{}_2:}$}}\vspace{1em}
\end{center}
\eq$
\eqref{loc.15}\xRightarrow[N=2]{M=4}T^4{}_2=\underbrace{\pa^4\phi}_{0}\pa_2\phi+\underbrace{\pa^4\chi}_{0}\pa_2\chi+\nabla^4\nabla_2f=g^{4K}\nabla_K\nabla_2f=g^{44}\underbrace{\nabla_4\nabla_2f}_{0\ \eqref{loc.83}}\Ra\nonum$
\eq$\label{loc.91}
T^4{}_2=0$
\begin{center}
\larger{\underline{$\mathbf{T^4{}_3:}$}}\vspace{1em}
\end{center}
\eq$
\eqref{loc.15}\xRightarrow[N=3]{M=4}T^4{}_2=\underbrace{\pa^4\phi}_{0}\pa_3\phi+\underbrace{\pa^4\chi}_{0}\pa_3\chi+\nabla^4\nabla_3f=g^{4K}\nabla_K\nabla_3f=g^{44}\underbrace{\nabla_4\nabla_3f}_{0\ \eqref{loc.89}}\Ra\nonum$\\
\eq$\label{loc.92}
T^4{}_3=0$

\newpage

\nocite{*}
\bibliographystyle{unsrt}
\bibliography{Bibliography}
\addcontentsline{toc}{chapter}{\numberline{}References}

\end{document}